\newtheorem{theorem}{Theorem}[section]
\newtheorem{lemma}[theorem]{Lemma}
\newtheorem{proposition}[theorem]{Proposition}
\newtheorem{corollary}[theorem]{Corollary}
\newtheorem{remark}[theorem]{Remark}
\theoremstyle{definition}
\newtheorem*{theorem*}{Theorem}
\newtheorem*{corollary*}{Corollary}
\newtheorem*{acknowledgements}{Acknowledgements}
\DeclareMathOperator{\C}{\mathbb{C}}
\DeclareMathOperator{\Z}{\mathbb{Z}}
\DeclareMathOperator{\cA}{\mathcal{A}}
\DeclareMathOperator{\cB}{\mathcal{B}}
\DeclareMathOperator{\cC}{\mathcal{C}}
\DeclareMathOperator{\cE}{\mathcal{E}}
\DeclareMathOperator{\cG}{\mathcal{G}}
\DeclareMathOperator{\cH}{\mathcal{H}}
\DeclareMathOperator{\cN}{\mathcal{N}}
\DeclareMathOperator{\cP}{\mathcal{P}}
\DeclareMathOperator{\cR}{\mathcal{R}}
\DeclareMathOperator{\cS}{\mathcal{S}}
\DeclareMathOperator{\cT}{\mathcal{T}}
\DeclareMathOperator{\1}{\mathds{1}}
\DeclareMathOperator{\poly}{poly}
\newcommand{\IdState}{\pi}%\mathbf{1}}
\newcommand{\dotarg}{\,\cdot\,}%\mathbf{1}}
\newcommand{\DBS}[2]{\widehat{D}( #1\mathbin{\|} #2)}
\newcommand{\HBS}[3]{\widehat{H}_{#1}( #2\mathbin{|} #3)}
\newcommand{\IBS}[3]{\widehat{I}_{#1}( #2: #3)}
\newcommand{\IosBS}[4]{{\widehat{I}}^{\mathrm{os}}_{#1}(#2\mathbin{;}#3\mathbin{|}#4)}
\newcommand{\ItsBS}[4]{{\widehat{I}}^{\mathrm{ts}}_{#1}(#2\mathbin{;}#3\mathbin{|}#4)}
\newcommand{\IrevBS}[4]{{\widehat{I}}^{\mathrm{rev}}_{#1}(#2\mathbin{;}#3\mathbin{|}#4)}
\DeclareMathOperator{\diam}{diam}
\newcommand{\eps}{\varepsilon}
\newcommand{\crit}{\beta_c(\lambda)}
\newcommand{\tcrit}{\widetilde\beta_c(\lambda)}
\newcommand{\vertiii}[1]{{\left\vert\kern-0.25ex\left\vert\kern-0.25ex\left\vert #1 
    \right\vert\kern-0.25ex\right\vert\kern-0.25ex\right\vert}}
\definecolor{coolblue}{RGB}{0,51,102}
\definecolor{lightblue}{RGB}{102,210,255}
\definecolor{lightpurple}{RGB}{140,30,255}
\definecolor{lightpink}{RGB}{204,0,204}
\definecolor{midblue}{RGB}{0,102,204}
\definecolor{midpink}{RGB}{153,0,153}
\definecolor{darkblue}{RGB}{0,0,153}
\definecolor{cyan}{RGB}{0,204,204}
\definecolor{lightgreen}{RGB}{0,255,128}
\definecolor{midgreen}{RGB}{0,204,0}
\definecolor{midyellow}{RGB}{204,204,0}
\definecolor{darkyellow}{RGB}{153,153,0}
\definecolor{darkpurple}{RGB}{102,0,102}
\definecolor{orange}{RGB}{255,153,51}
\definecolor{darkred}{RGB}{153,0,76}
\definecolor{lightyellow}{RGB}{255,255,153}
\definecolor{lightred}{RGB}{255,153,153}
\pgfplotsset{compat=1.9}
\def\Block[#1,#2,#3,#4]{

\def\r{0.3};

\ifthenelse{\NOT #4=0}{
\fill [#2] (-0.5,-0.5) rectangle ({#1-0.5},0.5);
}

\foreach \n in {1,...,#1}{ 

\shade[shading=ball, ball color=darkred] ({\n-1},0) circle (\r);

}

}
\def\@fnsymbol#1{%
  \ifcase#1 % No symbol for 0
    \or \textdagger % 1: dagger
    \or \textdaggerdbl % 2: double dagger
    \or \S % 3: section symbol
    \or \P % 4: paragraph symbol
    \or \textbullet % 5: bullet
    \or \ddagger % 6: custom (e.g., triple dagger or alternative)
    \else \# % fallback: hash symbol
  \fi}
\begin{document}

\title{Conditional Independence of 1D Gibbs States with Applications to Efficient Learning}
\author[1]{\'Alvaro M. Alhambra\thanks{alvaro.alhambra@csic.es}}
\author[2,3]{{\'A}ngela Capel\thanks{angela.capel@uni-tuebingen.de}}
\author[2]{Paul Gondolf$*$\thanks{paul.gondolf@uni-tuebingen.de}}
\author[2,4]{Alberto Ruiz-de-Alarc\'on\thanks{albertoruizdealarcon@ucm.es}}
\author[3]{Samuel O. Scalet\thanks{sos25@cam.ac.uk\\$*$ corresponding author}}
\affil[1]{\small Instituto de F\'isica Te\'orica UAM/CSIC, C/ Nicol\'as Cabrera 13-15, Cantoblanco, 28049 Madrid, Spain}
\affil[2]{Fachbereich Mathematik, Universit\"at T\"ubingen, 72076 T\"ubingen, Germany}
\affil[3]{Department of Applied Mathematics and Theoretical Physics, University of Cambridge, United Kingdom}
\affil[4]{Departamento de An\'alisis Matem\'atico y Matem\'atica Aplicada, Universidad Complutense de Madrid, Plaza de las Ciencias 3, 28040 Madrid, Spain.}

\date{\today}

\maketitle

\begin{abstract}

  We show that spin chains in thermal equilibrium have a correlation structure in which individual regions are strongly correlated at most with their near vicinity. We quantify this with alternative notions of the conditional mutual information, defined through the so-called Belavkin-Staszewski relative entropy. We prove that these measures decay superexponentially at every positive temperature, under the assumption that the spin chain Hamiltonian is translation-invariant. Using a recovery map associated with these measures, we sequentially construct tensor network approximations in terms of marginals of small (sublogarithmic) size. As a main application, we show that classical representations of the states can be learned efficiently from local measurements with a polynomial sample complexity. We also prove an approximate factorization condition for the purity of the entire Gibbs state, which implies that it can be efficiently estimated to a small multiplicative error from a small number of local measurements. The results extend from strictly local to exponentially-decaying interactions above a threshold temperature, albeit only with exponential decay rates. As a technical step of independent interest, we show an upper bound to the decay of the Belavkin-Staszewski relative entropy upon the application of a conditional expectation.
   
\end{abstract}

\newpage

\tableofcontents

%%%%%%%%%
%%%%%%%%%

%%%%%%%%%%%%%%%%%%%%%%%%%%%%%%%%%%%%%%%%%%%%%%%%%%%%%%%%%%%
\section{Introduction}
%%%%%%%%%%%%%%%%%%%%%%%%%%%%%%%%%%%%%%%%%%%%%%%%%%%%%%%%%%%

% Introduction to the topic (previous work)

It is now widely established that tools and ideas from quantum information theory can give fresh perspectives to studying complex quantum many-body systems. A notable way this happens is the systematic characterization of the correlations among their constituents, which often allows us to narrow down the complexity of the many-body states in specific ways.

One of the main ways of constraining those correlations is the area law, which states that the amount of information that two adjacent regions share is upper bounded by the size of their mutual boundary. This area law has been shown for gapped ground states \cite{Plenio.2005,Hastings.2.2007,Anshu.3.2021} using entanglement entropy, and also for Gibbs states of finite temperatures and other classes of mixed states \cite{Brandao.2015,Firanko.2023,Arad.2023} using the mutual information \cite{Wolf.2008,Kuwahara.2020,Scalet.2021}. Importantly, area laws can be linked with the efficiency of classical algorithms in the form of tensor network methods \cite{Verstraete.2006,Landau.2015,Kuwahara.2020,Guth.2020}. Another related way in which many-body states are typically constrained is by a fast decay of correlations of distant regions. This is often stated in terms of connected correlation functions \cite{Hastings.2.2006,Brandao.2014,Araki.1969,Kliesch.2014,Froehlich.2015}, but also with other quantifiers such as the mutual information \cite{Bluhm.2022} (although they are often equivalent \cite{Capel.2024,Kochanowski.2023,Bluhm.2024}), as well as measures of quantum entanglement \cite{Kuwahara.2022}. This decay means that distant regions behave roughly independently of each other, so complex collective effects (such as long-range order or entanglement) are absent. 

There is a third, arguably more refined, way in which correlations can be constrained in many-body systems: conditional independence \cite{Hayden.2004,Leifer.2008}. This notion aims to quantify how much the (possibly small) correlations between two separate regions $A$ and $C$ are mediated by a separating region $B$ that shields one from the other, such as in Figure \ref{fig:introABC}.
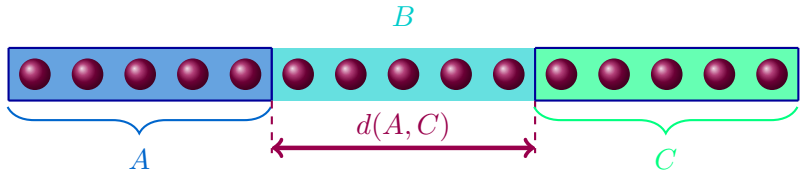
\begin{figure}[ht]
    \begin{center}
        \begin{tikzpicture}[scale=0.7]
            \definecolor{frenchblue}{rgb}{0.0, 0.45, 0.73}
            \Block[5,midblue!60!white,A,1];
            \draw [thick,midblue,decorate,decoration={brace,amplitude=10pt,mirror},xshift=-0.5pt,yshift=-0.6pt](-0.5,-0.6) -- (4.5,-0.6) node[black,midway,yshift=-0.7cm] { \textcolor{midblue}{$A$}};
            \draw [darkblue,thick](-0.5,-0.5) -- (-0.5,0.5) ;
            \draw [darkblue,thick](-0.5,-0.5) -- (4.5,-0.5) ;
            \draw [darkblue,thick](4.5,0.5) -- (-0.5,0.5) ;
            \draw [darkblue,thick](4.5,0.5) -- (4.5,-0.5) ;
            \begin{scope}[xshift=5cm]
                \Block[5,cyan!60!white,B,1];
                \draw [darkblue,thick](-0.5,-0.5) -- (-0.5,0.5) ;
                \draw [dashed,darkred,thick](-0.5,-0.5) -- (-0.5,-1.5) ;
                \draw [dashed,darkred,thick](4.5,-0.5) -- (4.5,-1.5) ;
            \end{scope}
            \begin{scope}[xshift=10cm]
                \Block[5,lightgreen!60!white,C,1];
                \draw [thick,lightgreen,decorate,decoration={brace,amplitude=10pt,mirror},xshift=-0.5pt,yshift=-0.6pt](-0.5,-0.6) -- (4.5,-0.6) node[black,midway,yshift=-0.7cm] { \textcolor{lightgreen}{$C$}};
                \draw [darkblue,thick](-0.5,-0.5) -- (-0.5,0.5) ;
                \draw [darkblue,thick](-0.5,-0.5) -- (4.5,-0.5) ;
                \draw [darkblue,thick](4.5,0.5) -- (-0.5,0.5) ;
                \draw [darkblue,thick](4.5,0.5) -- (4.5,-0.5) ;
            \end{scope}
            \draw [<->,darkred,thick,line width=0.6mm](4.5,-1.4) -- (9.5,-1.4) ;
            \node at (7,1.1) { \textcolor{cyan}{$B$}};
            \node at (7,-1) { \textcolor{darkred}{$d(A,C)$}};
        \end{tikzpicture}
    \end{center}
    \caption{Regions $A$ and $C$ shielded by a region $B$.}
    \label{fig:introABC}
\end{figure}

In quantum systems, the notion of conditional independence of a state $\rho$ is typically expressed through the quantum conditional mutual information (CMI) $I_{\rho}(A:C\vert B)$ \cite{Lieb.1973}, and its behaviour with the geometry of the regions $A,B,C$. This quantity has the following equivalent definitions 
\begin{align} \label{eq:CMI1}
    I_{\rho}(A:C \vert B) &= S(\rho_{AB})+S(\rho_{BC})- S(\rho_{ABC})-S(\rho_{B})
    \\ &= I_{\rho}(A:BC) - I_{\rho}(A: B) \label{eq:CMI2}
     \\ &= H_{\rho}(A | B) - H_{\rho}(A | BC). \label{eq:CMI3}
\end{align}
That is, as a linear combination of subsystem entropies, or as a difference of mutual informations or conditional entropies. 
When $ I_{\rho}(A:C \vert B)=0$, we say that $\rho_{ABC}$ is a quantum Markov state. Additionally, the fact that it is small enables us to conclude that the state $\rho_{ABC}$ can be closely approximated from $\rho_{AB}$ by applying a CPTP map on B alone \cite{Hayden.2004,Fawzi.2015,Sutter.2018}.

The CMI has often appeared in the context of quantum many-body systems. As an analogue of the classical Hammersley-Clifford theorem \cite{Hammersley.1971}, it is known that states in which this quantity vanishes whenever $B$ shields $A$ from $C$ correspond to Gibbs states of commuting Hamiltonians \cite{Leifer.2008,Brown.2012}. Additionally, an exponential decay of the CMI as $B$ increasingly separates $A$ and $C$ has been shown for various instances of Gibbs states \cite{Kato.2019,Kuwahara.2024}, and matrix product states \cite{Svetlichnyy.2022}. A fast decay with the distance between $A$ and $C$ guarantees the accuracy of local reconstruction maps \cite{Fawzi.2015,Junge.2018}, which is linked to the efficiency of quantum algorithms for preparing Gibbs states \cite{Brandao.2019}, and of learning phases of matter \cite{Onorati.2023}. The CMI is also prominently featured in the study of topological order at low energies \cite{Kitaev.2006,Shi.2020}. 

An arbitrarily small CMI is the main way of quantifying conditional independence, but quantum information theory provides us with a framework to systematically construct similar measures, for instance, through generalizations of Eq. \eqref{eq:CMI2} and Eq. \eqref{eq:CMI3}. While their operational meaning might be entirely different \cite{Devetak.2008,Berta.2015,Anshu.2023}, each possible definition comes with the potential for applications. In this paper, we study several such alternative notions of conditional independence. In contrast to the CMI, which is based on the Umegaki relative entropy, ours are largely based on the Belavkin-Staszewksi (BS) relative entropy \cite{Belavkin.1982}. We also study a measure in terms of local Rényi-$2$ entropies inspired by recent results on efficient learning of entropies and entanglement measures \cite{Vermersch.2023}.  

One of our main findings is that the decay of the BS-CMI in Gibbs states is strictly faster than what is believed to hold for any measure of bipartite correlations \cite{Araki.1969,Froehlich.2015,Bluhm.2022}: superexponential, as opposed to just exponential. This difference can be seen in analogy with the classical setting, in which Gibbs distributions have exactly zero CMI, while still having an exponential decay of correlations. In contrast, existing conjectures and results on the quantum CMI only display an exponential decay \cite{Brandao.2019,Kato.2019,Kuwahara.2024}. Our work thus suggests that in quantum Gibbs states there is a separation of the magnitude of conditional independence and decay of correlations, much like in the classical case. 

We then study the application of our notions of conditional independence, and their fast decay, to devise efficient learning schemes for properties of quantum Gibbs states in one dimension. The problem of efficiently learning quantum Gibbs states from local measurements has seen tremendous progress recently, including methods whose efficiency is based on the decay of the CMI \cite{Onorati.2023}. In particular, how to efficiently learn the Hamiltonian with a small number of samples and classical post-processing has been actively studied \cite{Bairey.2019,Anshu.2021,Haah.2022,Rouze.2021,Kokail.2021}, including important experimental efforts \cite{Joshi.2023}, and efficient protocols at all temperatures \cite{Bakshi.2023}. Results along these lines have also appeared for matrix product states in 1D \cite{Cramer.2010,Fanizza.2023}. With our techniques, based on alternative notions of conditional independence, we shift the focus away from Hamiltonian learning and instead study the problem of directly reconstructing tensor network approximations to the state, as well as the global purity.

%%%%%%%%%%%%%%
%%%%%%%%%%%%%%

% Our main results including 
\subsection{Main results}

In this paper, we focus on Gibbs states of local Hamiltonians in 1D at any inverse positive temperature $\beta>0$. Given a finite chain $\Lambda=ABC$, with $B$ shielding $A$ from $C$ as in Figure \ref{fig:introABC}, and $\rho$ a Gibbs state of a local, translation-invariant Hamiltonian on $\Lambda$, we investigate several notions of conditional independence between $A$ and $C$ conditioned on $B$, and show that they decay with the size of $B$. 

\vspace{0.2cm}

\noindent \textbf{\textit{1. Superexponential decay of the BS Conditional Mutual Information}} 

\vspace{0.1cm}

We first consider generalisations of the CMI involving the Belavkin-Staszewski (BS) relative entropy \cite{Belavkin.1982}, an upper bound for the Umegaki relative entropy, which is given by Eq. \eqref{eq:belavkin-staszewski-relative-entropy}. For three notions of BS Conditional Mutual Information (BS-CMI), termed one-sided (Eq. \eqref{eq:BS-CMI-os}), two-sided (Eq. \eqref{eq:BS-CMI-ts}) and reversed (Eq. \eqref{eq:BS-CMI-rev}), we establish superexponential decay and thereby positively answer a conjecture from \cite{Bluhm.2022}. Note that this is faster than what is expected to hold for the CMI in this context \cite{Kato.2019,Kuwahara.2.2020,Kuwahara.2024}. We prove that all of these quantities exhibit the following scaling:
\begin{equation} \label{eq:main1}
    \widehat I_{\rho}^x(A\mathbin{;}C|B) \le c e^{\alpha|A|}\epsilon(|B|),  \quad x \in \{\operatorname{os}, \operatorname{ts}, \operatorname{rev}\}
\end{equation}
where $A$ and $C$ are separated by $B$. The function $\ell \mapsto \epsilon(\ell)$ decays superexponentially with $\ell$, and $c$ and $\alpha$ are universal for all intervals, only dependent on the inverse temperature and the range and strength of the interaction that gives rise to the local Hamiltonians, as well as the local dimension. The proof of this result is split into two parts, which constitute results of independent interest.

\vspace{0.2cm}

\noindent \textbf{\textit{1.1. Upper bound on the DPI for the BS-entropy}} 

\vspace{0.1cm}

The first step of the proof is an upper bound on the data-processing inequality (DPI) of the BS-entropy for conditional expectations. This bound is expressed in terms of a difference measure that establishes the relationship between $\rho$ and its asymmetric BS-recovery condition \cite{Bluhm.2020}, denoted by $\cB^\sigma_{\cE}(\cdot) := \sigma \cE(\sigma)^{-1}  (\cdot)    $, and analogously with the roles of $\rho$ and $\sigma$ interchanged. More precisely, in \cref{sec:upper-bound-DPI-BS-entropy} we show that 
\begin{equation}\label{eq:DPI1}
    \widehat D(\rho \Vert \sigma) - \widehat D(\cE(\rho) \Vert \cE(\sigma)) \le \begin{cases}
        C(\rho, \sigma, \cE) \, \|\rho \left(\cB^\sigma_{\cE}(\cE(\rho))\right)^{-1} - \1\| \\[1ex]
        C'(\rho, \sigma, \cE) \, \|\sigma \left(\cB^\rho_{\cE}(\cE(\sigma))\right)^{-1} - \1\| 
    \end{cases} \, , 
\end{equation}
with $C(\rho, \sigma, \cE)$ and $C'(\rho, \sigma, \cE)$ multiplicative factors depending on $\rho, \sigma$ and $\cE$. This provides a complementary perspective on the strengthened data-processing inequality established in \cite{Bluhm.2020}. The reason is that it offers an upper limit on the disparity of the BS-DPI, explicitly featuring a difference between the states and their BS-recovery. 

\vspace{0.2cm}

\noindent \textbf{\textit{1.2. Decay of BS-CMI for Gibbs states of 1D local Hamiltonians}} 

\vspace{0.1cm}

To prove Eq.~\eqref{eq:main1}, we combine the bounds of Eq.~\eqref{eq:DPI1} with existing techniques of decay of correlations on Gibbs states of 1D local Hamiltonians, such as those in \cite{Bluhm.2022}. More specifically, in Section \ref{sec:approximate-factorisation}, we include a collection of technical results for these objects based on tight estimates of Araki's expansionals \cite{Araki.1969}, which we use, jointly with \cite[Theorem 5.1]{Bluhm.2022}, to prove:
\begin{equation}\label{eq:approx_fact_BSrecov}
    \norm{\rho_{ABC}\rho^{-1}_{BC} \rho_B \rho_{AB}^{-1} - \1} \leq \varepsilon(|B|) \, ,
\end{equation}
for $\rho$ a Gibbs state on a possibly larger chain $\Lambda'$ than $\Lambda=ABC$, with $\Lambda \subset \Lambda'$, and $\rho_X$ its marginal on $X$ for $X\in \{ AB, B, BC, ABC\}$. Noticing that the BS-recovery of $\rho_{ABC}$ for the partial trace in $A$ is $\rho_{AB} \rho_B^{-1} \rho_{BC} $, we combine Eq. \eqref{eq:approx_fact_BSrecov} with Eq. \eqref{eq:DPI1} to conclude Eq. \eqref{eq:main1}.

\vspace{0.2cm}

\noindent \textbf{\textit{2. Efficient estimation of the global purity of 1D local Hamiltonians}} 

\vspace{0.1cm}

In the next part of the paper, we consider another notion of conditional independence, namely the purity. This quantity can be viewed as a version of the CMI defined in terms of R\'enyi-$2$ entropies. For this purity, using similar techniques and again results from \cite{Bluhm.2022} we resolve a conjecture from \cite{Vermersch.2023} and show that in the same setting as above, it decays exponentially with the size of $|B|$. We subsequently employ this to directly establish the sample and computational efficiency of the scheme to learn the purity from \cite{Vermersch.2023}.

\vspace{0.2cm}

\noindent \textbf{\textit{2.1. Approximate factorisation of the purity}} 

\vspace{0.1cm}

In the first part of this result, contained in Section \ref{sec:purity}, we explicitly prove that
\begin{equation} \label{eq:purity1_intro}
     \left| \frac{\Tr_{AB}[\rho_{AB}^2]\Tr_{BC}[\rho_{BC}^2]}{\Tr_{ABC}[{\rho_{ABC}^2}] \Tr_B[\rho_B^2] } -1\right| \leq c_p e^{- \alpha_p |B|} \, ,
\end{equation}
where $\rho_{ABC}$ is a Gibbs state and the other states are marginals thereof. The constants $c_p$ and $\alpha_p > 0$ are again universal and only depend on the range, inverse temperature, and strength of the interaction, as well as the local dimension.

\vspace{0.2cm}

\noindent \textbf{\textit{2.2. Estimation of the global purity}} 

\vspace{0.1cm}

Subsequently, in Section \ref{sec:MPOpurity}, we consider an $N$-partite system $\Lambda=A_1 \ldots A_N$ and define an $N$-partite version of the purity (removing the Rényi entropy of the total chain), given by 
\begin{equation}\label{eq:Npartite_purity_intro}
    P_2(\rho_{1:N}) = \frac{\prod_{j=1}^{N-1} \Tr_{j:j+1}[\rho_{j:j+1}^2] }{\prod_{j=2}^{N-1} \Tr_{j}[\rho_{j}^2]} \, .
\end{equation}
Then, we show that, due to an iterated application of Eq. \eqref{eq:purity1_intro}, we can obtain an approximation of $\Tr[\rho_{1:N}^2]$ by $P_2(\rho_{1:N})$ up to an error $\varepsilon$ as long as each region $A_i$ has size of order $\mathcal{O}(\log N / \varepsilon)$. Considering next some local measurements of each of the marginals of $\rho$ up to an error $\delta$, we show that there is an algorithm that outputs an estimate of $\Tr[\rho_{1:N}^2]$ up to a multiplicative error $\varepsilon$ with a number of samples and classical post-processing cost of order $\poly (|\Lambda|/\varepsilon)$.
\vspace{0.2cm}

\newpage

\noindent \textbf{\textit{3. Learning of Gibbs states of 1D local Hamiltonians via Matrix Product Operator approximations}} 

\vspace{0.1cm}

Finally, we apply our results for matrix product operator (MPO) reconstructions of Gibbs states and their efficient learning. Again, we consider a multipartite quantum state on an $N$-partite system.
Specifically, the results on the decay of BS-CMI allow us to give an efficient MPO description of 1D Gibbs states.
Furthermore, we show that this description can also be learned efficiently from local tomography.
We find a polynomial sample and computational complexity in both the system size and inverse error.

\vspace{0.2cm}

\noindent \textbf{\textit{3.1. Positive MPO approximations from recovery maps}} 

\vspace{0.1cm}

Apart from the previously mentioned asymmetric recovery map, we introduce an alternative, symmetric, and thereby positive recovery map
\begin{equation}\label{eq:recoveryMapIntro}
    \cR_i(X)=\rho_{i}^{1/2}(\rho_{i}^{-1/2}\rho_{i:{i+1}}\rho_{i}^{-1/2})^{1/2}\rho_{i}^{-1/2}X\rho_{i}^{-1/2}(\rho_{i}^{-1/2}\rho_{i:{i+1}}\rho_{i}^{-1/2})^{1/2}\rho_{i}^{1/2} \, .
\end{equation}
The recovery error is bounded by the BS-CMI with additional terms corresponding to the lowest eigenvalue of marginals of the thermal state and the maximal mutual information \cite{Scalet.2021}, all of which can be appropriately bounded for 1D Gibbs states.
However, this map is not a quantum channel because it is not trace-preserving. This presents a challenge, as the non-contractive nature of the map could lead to exponential amplification of recovery errors from earlier steps in the reconstruction process.
We overcome this issue by proving a Lipschitz constant on a concatenation of maps that is \emph{independent} of the level of concatenation.
We prove a representability result for the Gibbs state by the MPO obtained from concatenating these maps
\begin{equation*}
    \left\|\left(\bigcirc_{i=1}^{N-1}\cR_i\right)(\rho_{1})-\rho_{1:N}\right\|\le\eps
\end{equation*}
with a subpolynomial bond dimension in $|\Lambda|/\eps$.
\vspace{0.2cm}

\noindent \textbf{\textit{3.2. Reconstruction of positive MPO from local tomography}} 

\vspace{0.1cm}
Considering the explicit form of the recovery map Eq.~\eqref{eq:recoveryMapIntro} in terms of the marginals, we show how to learn this MPO representation and prove that the approximation error to the state is robust to small tomographic errors.
We obtain the representation as an explicit formula of the estimated marginals. 
As sublogarithmically-sized marginals are sufficient for the reconstruction, using standard tomography results, the sample and computational cost incurred to reconstruct the state to $\varepsilon$ error in 1-norm is again $\poly (|\Lambda|/\varepsilon)$.
A subpolynomial dependence in system size is also possible using an inherently translation-invariant formulation of the MPO, see \cref{rem:subpolyTomo}.

\vspace{0.2cm}

\noindent \textbf{\textit{Bonus. Exponentially-decaying interactions}}

\vspace{0.1cm}

Many of the techniques used in our main results can or have been recently extended to Gibbs states of Hamiltonians with exponentially-decaying interactions above a threshold temperature, as seen in \cite{Garcia.2023,Capel.2024,Bluhm.2024}. Therefore, a natural question is whether all results presented so far extend to that framework. In \cref{sec:shortrange} we positively answer that question, with small but necessary modifications. The decay of the BS-CMI is now only exponential in $|B|$ and holds only above a critical temperature in line with previous results. In contrast, the estimation of the global purity follows the same arguments as for finite-range interactions. Moreover, we can provide a protocol to learn marginals of Gibbs states in 1D with translation-invariant, exponentially-decaying interactions, by MPO approximations, albeit with a slightly worse bond dimension than for finite-range. Nevertheless, as far as we know, this is the first MPO approximation for Gibbs states in this framework in the literature.

\subsection{Previous work and future directions}
It is interesting to compare our results to existing ones on the representability of thermal states by tensor networks.
In general, it is known that representations with polynomial bond dimension exist for arbitrary lattices \cite{Molnar.2015,Verstraete.2004}.
Furthermore, in one dimension, an MPO description of finite Gibbs states with subpolynomial bond dimension has been shown in \cite{Kuwahara.2020}.
A feature common to all of these constructions is that they give explicit formulas in terms of the Hamiltonian.
While this easily allows for a computation of the MPO given \emph{knowledge of the systems' interactions}, it is unclear how to \emph{learn} such a representation directly. To do that, a precise analysis of the approximation quality given the errors in the estimated Hamiltonian would be needed. Furthermore, even if such a result is achieved, the best existing Hamiltonian learning results \cite{Bakshi.2023,Anshu.2021,Haah.2022} take polynomial time with degrees that are often impractically large.
Defining our reconstruction in terms of measurable marginals, we circumvent these complications and provide a practical formula for the reconstruction.
Another feature of our construction is that it directly works in the thermodynamic limit. Constructions explicitly involving the Hamiltonian would need to be truncated; however, an approach in this direction has been put forward in \cite{Alhambra.2021}.

While a recent result in \cite{Fanizza.2023,Qin.2024}, proposes a way of learning finitely correlated states (which can often be seen as equivalent to matrix product operators) and the application of this framework to the learning of Gibbs states, it is currently unclear whether this approach can yield an equivalent result to ours. The result is conditioned on bounds of singular values of a map involved in the MPO representation of thermal states, which are not known at present.

Finally, on a more conceptual level, the proof technique could be of independent interest in terms of extending the representability result to larger classes of states.
We provide sufficient information theoretic criteria that guarantee efficient MPO approximations. It is a natural question to find out whether other types of states also satisfy these conditions.
The overall idea displays a strong analogy to the technique used in the proof of representability of pure gapped ground states by matrix product states. There, an information-theoretic area law for Rényi entropies \cite{Hastings.2.2007} implies efficient representability of the state by a matrix product state \cite{Verstraete.2006,Schuch.2008} and is also involved in the later rigorous proof of an efficient algorithm for the ground-state energy \cite{Landau.2015}.

Let us finally also comment on an alternative approach that might come to mind when looking at our construction: Using a decay of the standard quantum conditional mutual information and the corresponding recovery channels.
The existence of such recovery channels has been shown \cite{Fawzi.2015} and since they are indeed quantum channels (CPTP-maps) their concatenation should be possible in an analogous way.
There are two obstructions to this route.
Firstly, the channels are not given by an equally simple explicit formula but involve an optimization procedure, which could be cumbersome to deal with on the learning side of our result \cite{Junge.2018,Sutter.2017}.
Secondly, no superexponential decay result is known or believed to hold for the conditional mutual information.

%%%%%%%%%%%%%%%%%%%%%%%%%%%%%%%%%%%%%%%%%%%%%%%%%%%%%%%%%%%
\section{Preliminaries}\label{sec:prelims}
%%%%%%%%%%%%%%%%%%%%%%%%%%%%%%%%%%%%%%%%%%%%%%%%%%%%%%%%%%%

This section is dedicated to introducing and reviewing the basic terminology, notions, and results of quantum systems, entropy measures, Gibbs states in quantum spin chains, and their \emph{approximate factorization}.

\subsection{Basic notation}

First, all vector spaces, tensor products and operator spaces considered in this paper are defined over the field of complex numbers $\C$.
For a quantum system labelled $A$, we let $\cH_A$ denote the Hilbert space. 

A bipartite system $AB$ arising as the composition of two systems $A$ and $B$ is described by the tensor product Hilbert space $\cH_{AB} = \cH_{A}\otimes \cH_{B}$.
In the specific context of quantum spin chains, consecutive letters representing intervals of qubits will generally imply that these intervals are adjacent and non-overlapping.

For $\cH_A$ we use $\cB(\cH_A)$ to denote the space of bounded linear operators with $\norm{\cdot}$ the operator norm. The trace map is $\Tr:\cB(\cH_A) \to \C$, sometimes with subindices if we want to emphasise the system it acts on. A similar notational convention is adapted to denote the identity map $\1$.
In turn, for multipartite systems, we take $\mathrm{tr}_A : \cB(\cH_{AB}) \to \cB(\cH_{B})$ to be the partial trace over $A$.
The spaces of bounded linear operators can be equipped with the Schatten $p$-norms, defined by $\norm{X}_p \coloneq (\operatorname{Tr}(|X|^p))^{1/p}$ for all $p\in [1,\infty)$, where $\abs{X}\coloneq (X^* X)^{1/2}$ stands for the operator square root.
We recall that the Schatten $1$-norm coincides with the usual trace norm, defined as the sum of the singular values of the operator,  and the Schatten $2$-norm is the Hilbert-Schmidt norm arising from the Hilbert-Schmidt inner product on $\cB(\cH_A)$.
Furthermore, in the limiting case $p\to +\infty$ the Schatten $\infty$-norm recovers the operator norm $\norm{\dotarg}_{\infty} = \norm{\dotarg}$, also characterized as the largest singular value of the operator.
We recall that all Schatten $p$-norms are submultiplicative and unitarily invariant. Moreover, they are ordered in the sense that $\norm{X}_{p} \geq \norm{X}_{q}$ for any $1\leq p \leq q \leq \infty$, and satisfy H\"older's inequality $\norm{XY}_r\leq \norm{X}_p \norm{Y}_q$ for all $p,q,r \in [1,\infty]$ with $1/r = 1/p + 1/q$.

Over any system $A$ we consider the non-negative and normalised set $\cS(\cH_{A})\subset \cB(\cH_{A})$ of states or density operators, i.e.~non-negative operators $\rho$ such that $\operatorname{Tr}[\rho] = 1$. In particular, we consider the maximally mixed state $\IdState_{A} \coloneq d_A^{-1}\1_{A}$ on $A$, where $d_A = \dim \cH_A$. We reserve the Greek letters $\rho$ and $\sigma$ for states.
For any multipartite state $\rho_{AB}$, we denote the marginals that arise after application of the partial trace with a subindex that indicates the system they act on, i.e. $\rho_{A}\coloneq \tr_{B}[\rho_{AB}]$ and $\rho_B \coloneq \tr_{A}[\rho_{AB}]$.

As we will be relying on the notion of conditional expectation, we shortly want to introduce it here as well. For a von Neumann subalgebra $\cN$ of $\cB(\cH)$ there exists a unique map, called conditional expectation, which we will generally denote with $\cE:\cB(\cH) \to \cN$, and which satisfies the properties that it projects onto $\cN$ orthogonally w.r.t the Hilbert-Schmidt inner product. An important example and the use case of such maps in this paper are partial traces followed by an embedding, i.e maps of the form $\pi_A \otimes \tr_A: \cB(\cH_{AB}) \to \1_A \otimes \cB(\cH_B) \subseteq \cB(\cH_{AB})$, where $X \in \cB(\cH_{AB})$ is mapped to $\pi_A \otimes \tr_A[X_{AB}]$.

Conditional expectations form a subset within the broader set of quantum channels, encompassing completely positive trace-preserving linear mappings. Note at last that through Stinespring's dilation theorem, every quantum channel can be represented by a composition of an isometry with a partial trace.

\subsection{Entropy measures}\label{sec:entropies}

We introduce several entropy measures that will appear throughout the text, and present some connections between them. Arguably, the most recognized measure among them is the \emph{von Neumann entropy} of a state $\rho$, defined by the expression
\begin{equation}
     S(\rho) \coloneq -\Tr[\rho \log(\rho)] \, .
\end{equation}
Here and in the following we adopt the convention $0 \, \log 0 \equiv 0$.
The von Neumann entropy captures the entropy of a quantum state and is a quantum analogous to the classical Shannon entropy \cite{Shannon.1948}. Also inspired by the classical setting, one can generalise the Kullback-Leibler divergence \cite{Kullback.1951} to the quantum setting. The most prominent of those generalizations is the \emph{Umegaki relative entropy} \cite{Umegaki.1962}. For two quantum states $\rho$ and $\sigma$, it is defined by the expression
\begin{equation}\label{eq:umegakirelative-entropy}
    D(\rho \Vert \sigma) \coloneq 
    \begin{cases}
        \Tr[\rho \log\rho - \rho \log \sigma]
            & \text{ if } \ker \sigma \subseteq \ker \rho \, ,\\
        + \infty
            & \text{ otherwise} \, .
    \end{cases}
\end{equation}
A less prominent one, which has however gained attention in the last years as tool to estimate channel capacities \cite{Fang.2019} and the decay of correlation measures for Gibbs state in 1D \cite{Bluhm.2022} is the \textit{Belavkin-Staszewski relative entropy} \cite{Belavkin.1982} or BS-entropy for short. For any two states $\rho$ and $\sigma$, it is defined as
\begin{equation}\label{eq:belavkin-staszewski-relative-entropy}
    \DBS{\rho}{\sigma} := 
    \begin{cases}
        \Tr[\rho\log (\rho^{1/2} \sigma^{-1} \rho^{1/2})]
            & \text{ if } \ker \sigma \subseteq \ker \rho \, , \\
        + \infty
            & \text{ otherwise} \, .
    \end{cases}
\end{equation}
The reason for terming those entropic measures generalisations is that if the involved states commute the quantum measure reduces to their classical analogue. For example, in the case of the Umegaki and BS-entropy, one would recover the Kullback-Leibler divergence. Hence, both Umegaki and BS-entropy agree in the commuting case, while the BS-entropy is strictly bigger than the Umegaki relative entropy in the case of non-commuting states \cite{Hiai.2017}. Note further that both divergences can be defined for general positive semidefinite matrices, and we will use those definitions under slight abuse of notation.

Whilst not featured in our decay results, we will make use of the maximal Rényi divergence in our results on MPO constructions \cite{Datta.2009}
\begin{equation*}
    D_\infty(\rho\|\sigma)=\log\inf\left\{\lambda:\rho\le\lambda\sigma\right\} \, ,
\end{equation*}
which arises as a limit of a wider family of sandwiched Rényi entropies \cite{Matsumoto.2018}.

As we are mostly concerned with the analysis of quantities involving the BS\nobreakdash-en\-tro\-py, we skip the definitions of the analogues for the relative, respectively von Neumann entropy, and only note that conditional entropy, mutual information, and conditional mutual information were initially defined in terms of the Shannon entropy and then rewritten using the Umegaki relative entropy. Inspired by this latter representation, the BS-entropy analogues are obtained by replacing the Umegaki with the BS-entropy. We refer to \cite{Bluhm.2023, Bluhm.2.2023, Scalet.2021, Zhai.2022} for partial or complete definitions of these quantities.

For any bipartite state $\rho_{AB}$, we define the \emph{BS-conditional entropy} by
\begin{equation}\label{eq:BS-conditional-entropy}
    \HBS{\rho}{A}{B} \coloneq - \DBS{ \rho_{AB} }{ \1_A \otimes \rho_B } , 
\end{equation}
and the \emph{BS-mutual information} by 
\begin{equation}\label{eq:BS-mutual-information}
    \IBS{\rho}{A}{B} \coloneq \DBS{ \rho_{AB} }{\rho_A \otimes \rho_B } . 
\end{equation}
Equivalently, we also introduce the \emph{maximal mutual information} \cite{Scalet.2021}
\begin{equation}\label{eq:max-mutual-information}
    I_{\infty}(A:B) \coloneq D_\infty(\rho_{AB}\|\rho_A \otimes \rho_B ) . 
\end{equation}

The definition of the BS-conditional mutual information (BS-CMI) is relatively more ambiguous and hence we recall the following definitions from \cite{Bluhm.2023}. Let us consider a tripartite quantum system $ABC$. Then, for a quantum state $\rho_{ABC}$ we define the \emph{one-sided BS-conditional mutual information} between the systems $A$ and $C$ conditioned on the system $B$ by the expression
\begin{equation}\label{eq:BS-CMI-os}
    \IosBS{\rho}{A}{C}{B} \coloneq
    \DBS{ \rho_{ABC} }{ \IdState_A \otimes \rho_{BC} } - \DBS{\rho_{AB} }{\IdState_A \otimes \rho_B }, 
\end{equation}
and the \emph{two-sided BS-conditional mutual information} as 
\begin{equation}
    \label{eq:BS-CMI-ts}
    \ItsBS{\rho}{A}{C}{B} \coloneq
    \DBS{\rho_{ABC} }{ \rho_A \otimes \rho_{BC} } - \DBS{\rho_{AB} }{ \rho_A \otimes \rho_B}. 
\end{equation}
Lastly, we define the \textit{reversed BS-conditional mutual information} by
\begin{equation}
    \label{eq:BS-CMI-rev}
    \IrevBS{\rho}{A}{C}{B} := \DBS{\pi_A\otimes \rho_{BC}}{\rho_{ABC}} - \DBS{\pi_A \otimes \rho_B }{\rho_{AB}},
\end{equation}
where we recall that $\pi_A$ stands for the maximally mixed state on $A$.

\subsection{Spin chains in 1D, local Hamiltonians and Gibbs states}

We write $\Lambda \Subset \Z$ for $\Lambda$ being a finite subset of $\Z$, use $|\Lambda|$ to denote its cardinality and $\diam(\Lambda) := \max\{x - y : x, y \in \Lambda\}$ its diameter. With every site $x \in \Z$ we associate a finite-dimensional Hilbert space $\cH_x := \C^d$ with corresponding linear operators $\cA_x := \cB(\cH_x)$. We generalise this concept to finite sets, where for $\Lambda \Subset \Z$, we define the Hilbert space $\cH_\Lambda := \bigotimes_{x \in \Lambda} \cH_x$ and the algebra of linear operators as $\cA_\Lambda := \cB(\cH_\Lambda) = \bigotimes_{x \in \Lambda} \cB(\cH_x)$. The last equality holds due to the finite-dimensional nature of $\cH_\Lambda$. For $\Lambda' \subseteq \Lambda \Subset \Z$, we can therefore consider $X \in \cA_{\Lambda'}$ as an element of $\cA_{\Lambda}$ by identifying $X$ with $X \otimes \1_{\Lambda\backslash\Lambda'}$. This identification is kept implicit and allows us to define the \textit{algebra of local observables} for general $\Sigma \subseteq \Z$ simply by
\begin{equation}
    \cA_\Sigma := \overline{\bigcup\limits_{\Lambda \Subset \Sigma} \cA_\Lambda}^{\norm{\cdot}} \, ,
\end{equation}
where the closure is taken with respect to the operator norm. Now an \textit{interaction} on $\Sigma \subseteq \Z$ is defined as
\begin{equation}
    \Phi:\{\Lambda \Subset \Sigma\} \to \cA_{\Z}, \quad \Lambda \mapsto \Phi(\Lambda) \in \cA_\Lambda \quad \text{with} \quad \Phi(\Lambda) = \Phi(\Lambda)^* \, , 
\end{equation}
and its corresponding local Hamiltonian on any $\Lambda \Subset \Sigma$ as
\begin{equation}
    H_\Lambda := \sum\limits_{\Lambda' \subseteq \Lambda} \Phi(\Lambda') \, . 
\end{equation}
In this paper, we focus on \textit{finite-range interactions}, characterised by the existence of parameters $R > 0$ and $J > 0$ such that $\Phi(\Lambda) = 0$ whenever the diameter of $\Lambda$ exceeds $R$, and 
\begin{equation*}
    \sum\limits_{\Lambda \Subset \Z \;:\; x \in \Lambda} \norm{\Phi(\Lambda)} \leq J  \, , 
\end{equation*}
for all $x \in \Z$. In the literature, it is common to alternatively impose the condition that for all $\Lambda \Subset \Z$, $\norm{\Phi(\Lambda)} \leq \widetilde{J}$, which can be related to $J$ by observing that, due to the interaction vanishing if the diameter of the input exceeds $R$, 
\begin{equation*}
    \sum\limits_{\Lambda \Subset \Z \;:\; x \in \Lambda} \norm{\Phi(\Lambda)} = \sum\limits_{k = 0}^{R}\sum\limits_{\Lambda \in \cP(\{x - k, \hdots, x - k + R\}\backslash\{x\})} \norm{\Phi(\{x\} \cup \Lambda)} \leq (R + 1)2^R \widetilde{J} \, . 
\end{equation*}
In addition to the finite range, we further require interactions to exhibit \textit{translation invariance}. In \Cref{sec:shortrange}, we extend all our results to the setting of exponentially-decaying interactions, whose formalism we introduce therein.

Lastly, we define the \textit{Gibbs state} for a Hamiltonian $H \in \cB(\cH)$ with $H = H^*$ over an arbitrary (finite-dimensional) Hilbert space $\cH$ at inverse temperature $\beta \in (0, \infty)$ as
\begin{equation}
    \rho_{\cH}^\beta[H] := \frac{e^{-\beta H}}{\Tr[e^{-\beta H}]} \, . 
\end{equation}
In \cref{sec:superexponential-decay-BS-CMI}, i.e., when we prove superexponential decay of Gibbs states of local translation-invariant Hamiltonians on a quantum spin system, we drop the index of the Hilbert space, absorb the inverse temperature into the Hamiltonian and write for $\Lambda \Subset \Sigma \subseteq \Z$ just
\begin{equation}
    \rho^{\Lambda} := \rho^\beta_{\cH_\Lambda}[H_\Lambda] \, . 
\end{equation}
Notice that Gibbs states of finite $\beta$ are always full rank.

\subsection{Approximate factorisation of Gibbs states of local Hamiltonians in 1D}\label{sec:approximate-factorisation}

We now present the technical lemmas we need for the decay of correlations and uniformity of 1D Gibbs states. While they hold at all temperatures for any finite-range interaction, the constants involved depend on range, interaction strength and temperature.
This dependence is mostly the same for all involved quantities, so we introduce the following convenient notation.

Let $\Lambda \Subset \mathbb{Z}$ be a finite interval. Let us split $\Lambda$ into two subintervals $X$ and $Y$ so that $\Lambda = X \cup Y$ or $\Lambda = XY$ for short and write
\begin{equation}\label{def:expantionals}
     E_{X,Y} \, := \, e^{-\, H_{XY}} \, e^{\, H_{X} \, + \, H_{Y}} \,. 
\end{equation}
Note that $ E^*_{X,Y}= e^{\, H_{X} \, + \, H_{Y}}  \,  e^{-\, H_{XY}} $ and  $ E^{-1}_{X,Y}= e^{- H_{X} \, - \, H_{Y}}  \,  e^{\, H_{XY}} $ and that we absorbed $\beta$ into the Hamiltonian for better readability.  The following proposition is extracted from \cite{Bluhm.2022,Garcia.2023} and contains an alternative formulation of Araki's results for estimates on expansionals \cite{Araki.1969}.

\begin{proposition}[{\cite[Corollary 3.4]{Bluhm.2022}}]\label{prop:estimates_expansionals}
    Let $\Phi$ be an interaction of finite-range $R$ and strength $J$ over $\Z$, at any inverse temperature $\beta < \infty$, which is further translation invariant. Then the following hold:
    \begin{enumerate}
        \item[(i)] There is an absolute constant $\mathcal{G}>1$ depending only on $J$, $R$ and $\beta$ such that, for any finite interval $\Lambda = X Y \Subset \mathbb{Z}$ split into two subintervals $X$ and $Y$, we have:
        \begin{equation*}
            \norm{E_{X,Y}} \, , \,\norm{E_{X,Y}^{-1}} \,\, \leq \,\, \cG \, .
        \end{equation*}
        \item[(ii)] There is a positive and decreasing function $\ell \mapsto \delta(\ell)$ with superexponential decay and depending on $J$, $R$ and $\beta$ such that if we add two intervals $\widetilde{X}$ and $\widetilde{Y}$ adjacent to $X$ and $Y$, respectively, so that we get a larger interval $\widetilde{X}XY\widetilde{Y}$, then
        \begin{equation*}
            \left\| E_{X,Y}^{-1} - E^{-1}_{\widetilde{X}X,Y\widetilde{Y}} \right\|, \left\| E_{X,Y} - E_{\widetilde{X}X,Y\widetilde{Y}} \right\| \, \leq \, \delta (\ell) \,.
        \end{equation*}
        for any $\ell \in \mathbb{N}$ such that $\ell \, \leq \, |X| \, , \, |Y|$. 
     \end{enumerate}
\end{proposition}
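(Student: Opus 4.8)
The plan is to realize $E_{X,Y}$ as the endpoint of the interpolating family $E_{X,Y}(t) := e^{-tH_{XY}}\,e^{t(H_X+H_Y)}$, and to isolate the only source of non-triviality, namely the boundary interaction straddling the cut between $X$ and $Y$. Setting $W := H_{XY}-H_X-H_Y = \sum_{\Lambda'}\Phi(\Lambda')$, where the sum runs over the interaction terms meeting both $X$ and $Y$, finite range $R$ and strength $J$ ensure that $W$ is supported within distance $R$ of the cut and that $\norm{W}$ is controlled by a constant depending only on $R$, $J$ (and, after absorbing $\beta$, on $\beta$). Using $[H_{XY},e^{-tH_{XY}}]=0$, a direct computation gives $\frac{d}{dt}E_{X,Y}(t) = -W(t)\,E_{X,Y}(t)$ with $W(t):=e^{-tH_{XY}}W\,e^{tH_{XY}}$, so that $E_{X,Y}=E_{X,Y}(1)$ is the ordered exponential $\cT\exp\bigl(-\int_0^1 W(s)\,ds\bigr)$ of the imaginary-time-evolved boundary term; an analogous equation holds for $E_{X,Y}^{-1}(t)=e^{-t(H_X+H_Y)}e^{tH_{XY}}$.

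For (i) I would bound the resulting Dyson series $\sum_{n\ge0}(-1)^n\int_{0\le s_1\le\cdots\le s_n\le1}W(s_n)\cdots W(s_1)\,ds$ uniformly in the size of $XY$. The crux is that the imaginary-time evolution is \emph{not} norm-preserving: the naive estimate $\norm{W(s)}\le e^{2s\norm{H_{XY}}}\norm{W}$ is useless since $\norm{H_{XY}}$ grows with the system, and the commutator expansion $W(s)=\sum_k \frac{(-s)^k}{k!}\mathrm{ad}_{H_{XY}}^k(W)$ fails too, because the nested commutators grow factorially. The genuine content, and the main obstacle, is Araki's one-dimensional analysis showing that the nested simplex integrals nonetheless have norms summing to a convergent, system-size-independent bound. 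I would follow Araki by peeling off the Hamiltonian layer by layer outward from the cut and using locality to control how far $W(s)$ has spread after imaginary time $s$, organizing the contributions into a convergent series whose sum yields the uniform constant $\cG>1$; this is exactly the repackaging recorded in \cite[Corollary 3.4]{Bluhm.2022}.

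For (ii) the key observation is that the boundary term $W$ across the $X\mid Y$ cut is \emph{identical} for the intervals $XY$ and $\widetilde X X Y \widetilde Y$, since $\widetilde X,\widetilde Y$ are appended on the far sides. Hence the two expansionals are ordered exponentials of the \emph{same} $W$ evolved by the two Hamiltonians $H_{XY}$ and $H_{\widetilde X X Y\widetilde Y}$, which differ only by terms supported at distance at least $\ell$ from the cut. Subtracting the two Dyson series term by term, the difference is governed by $\norm{e^{-sH_{XY}}W e^{sH_{XY}} - e^{-sH_{\widetilde X X Y\widetilde Y}}W e^{sH_{\widetilde X X Y\widetilde Y}}}$, which I would estimate by interpolating in the Hamiltonian and invoking the 1D imaginary-time clustering bound: a perturbation localized at distance $\ell$ from a boundary-localized observable alters its imaginary-time evolution by an amount decaying \emph{superexponentially} in $\ell$ --- the superexponential rate coming precisely from the $\sim\ell$ orders of expansion, each factorially suppressed, needed to reach distance $\ell$. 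Feeding the resulting $\delta(\ell)$ through the uniformly convergent series of part (i) produces the claimed bounds on both $\norm{E_{X,Y}-E_{\widetilde X X,Y\widetilde Y}}$ and its inverse. Throughout, the decisive difficulty is the control of the non-unitary imaginary-time dynamics; once Araki's mechanism supplies both the uniform boundedness and the superexponential locality of $W(s)$, the remainder is bookkeeping on the Dyson series.
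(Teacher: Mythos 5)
The paper does not actually prove \cref{prop:estimates_expansionals}: it is imported verbatim from \cite[Corollary 3.4]{Bluhm.2022}, which is itself a repackaging of Araki's 1969 expansional estimates \cite{Araki.1969}. Your sketch reconstructs exactly the architecture of that source: the differential equation $\frac{d}{dt}E_{X,Y}(t)=-W(t)E_{X,Y}(t)$ with $W=H_{XY}-H_X-H_Y$ the boundary term is correct, the Dyson-series representation is the right object, and for (ii) the observation that $W$ is common to both cuts (for $\ell\geq R$; for smaller $|X|$ one absorbs the discrepancy into the same locality estimate) so that the difference is driven by the two imaginary-time evolutions of a fixed boundary-localized operator is precisely how the cited proof proceeds. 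So in terms of route, you are aligned with the source.

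That said, as a proof the proposal has one genuine gap, and you have correctly located it yourself: everything rests on the claim that for 1D finite-range translation-invariant $H$, the imaginary-time evolution $s\mapsto e^{-sH}We^{sH}$ of the boundary term is bounded uniformly in the volume and is quasi-local with superexponential (factorially suppressed) tails. This is not bookkeeping; it is Araki's theorem, a nontrivial combinatorial resummation that fails in higher dimensions and at no point follows from the naive estimates you rightly discard. Writing ``follow Araki by peeling off the Hamiltonian layer by layer'' names the mechanism without supplying it --- the convergence of the resulting iterated expansional series, with explicit volume-independent constants $\cG$ and the superexponential rate $\delta(\ell)\sim \cC^{\ell}/\lfloor \ell/R\rfloor!$, is the entire content of the proposition. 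Since the paper itself treats this as a black box, your outline is a faithful account of where the result comes from, but it should be presented as a reduction to Araki's expansional theorem rather than as a proof.
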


\noindent Note in addition that for $\Lambda' \subset \Lambda \Subset \mathbb{Z}$ with local Hamiltonian $H_{\Lambda'}$ supported in $\cH_{\Lambda'}$ but lifted to $\cH_{\Lambda}$, the map
\begin{equation}
    \cA_{\Lambda} \to \cA_{\Lambda}\,, \quad Q \mapsto \tr_{\Lambda'}[e^{-H_{\Lambda'}} Q] = \tr_{\Lambda'}[e^{-\frac{1}{2}H_{\Lambda'}} Q e^{-\frac{1}{2}H_{\Lambda'}}]
\end{equation}
where $\tr_{\Lambda'}$ denotes the partial trace in $\Lambda'$, has the following property:
\begin{equation}
    \norm{\tr_{\Lambda'}[e^{-H_{\Lambda'}} Q]} \le \norm{\tr_{\Lambda'}[e^{-H_{\Lambda'}}]} \, \norm{Q} = \Tr_{\Lambda'}[e^{-H_{\Lambda'}}] \norm{Q} \,.
\end{equation}
As a consequence,
\begin{equation}\label{eq:contractiveExpectation}
    \cA_{\Lambda} \to \cA_{\Lambda}\,, \quad Q \mapsto \tr_{\Lambda'}[\rho^{\Lambda'} Q] 
\end{equation}
is positive, unital and hence contractive. This observation is essential for the following result, in which we are considering an interval $\Lambda$ split into three subintervals $\Lambda=ABC$, where $B$ shields $A$ from $C$, as illustrated in Figure \ref{fig:ABC}.

\begin{figure}[ht]
    \begin{center}
        \begin{tikzpicture}[scale=0.7]
            \definecolor{frenchblue}{rgb}{0.0, 0.45, 0.73}
            \Block[5,midblue!60!white,A,1];
            \draw [thick,midblue,decorate,decoration={brace,amplitude=10pt,mirror},xshift=-0.5pt,yshift=-0.6pt](-0.5,-0.6) -- (4.5,-0.6) node[black,midway,yshift=-0.7cm] { \textcolor{midblue}{$A$}};
            \begin{scope}[xshift=5cm]
                \Block[5,cyan!60!white,B,1];
                \draw [thick,cyan,decorate,decoration={brace,amplitude=10pt,mirror},xshift=-0.5pt,yshift=-0.6pt](-0.5,-0.6) -- (4.5,-0.6) node[black,midway,yshift=-0.7cm] { \textcolor{cyan}{$B$}};
            \end{scope}
            \begin{scope}[xshift=10cm]
                \Block[5,lightgreen!60!white,C,1];
                \draw [thick,lightgreen,decorate,decoration={brace,amplitude=10pt,mirror},xshift=-0.5pt,yshift=-0.6pt](-0.5,-0.6) -- (4.5,-0.6) node[black,midway,yshift=-0.7cm] { \textcolor{lightgreen}{$C$}};
            \end{scope}
            \node at (15,0.8) {\huge $\Lambda$};
        \end{tikzpicture}
      \end{center}
      \caption{Representation of an interval $\Lambda$ split into three subintervals $\Lambda=ABC$, where $B$ shields $A$ from $C$.}
      \label{fig:ABC}
\end{figure}
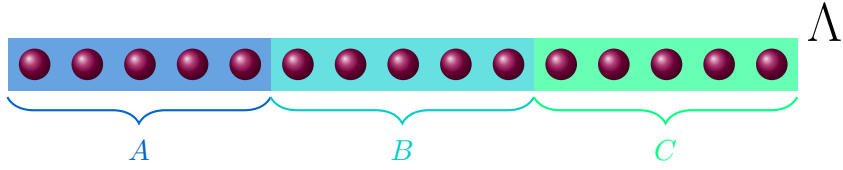

\begin{proposition}[{\cite[Section 3.4]{Bluhm.2022}}]\label{prop:BoundingPartialTraceInverses}
    In the context of \cref{prop:estimates_expansionals}, let $\Lambda = ABC \Subset \Z$ (we admit the possibility of some subintervals being empty). Then, there is an absolute constant $\mathcal{C}$ depending only on the strength $J$, range $R$ of the interaction and inverse temperature $\beta$, such that
    \begin{align}
       &  \big\| \tr_{B}[\rho^{B}Q] \big\|, \big\| \tr_{B}[\rho^{B}Q]^{-1} \big\| \, \leq \, \mathcal{C} \, \,  , \quad \quad Q \in \{ E_{B,C}^{*} \, , \, E_{B,C}\, , \, E_{A,B}^{*} \, , \, E_{A,B} \} \, , \label{equa:BoundingPartialTraceInverses1}\\[2mm]
      &   \label{equa:BoundingPartialTraceInverses2} \big\| \tr_{AB}[\rho^{AB}Q] \big\| \, , \, \big\| \tr_{AB}[\rho^{AB}Q]^{-1} \big\| \,   \leq \, \mathcal{C} \, \,  , \quad \quad Q \in \{ E_{A,B}^{* \, -1} \, , \, E_{A,B}^{\, -1} \} \, ,\\[2mm]
      &  \label{equa:BoundingPartialTraceInverses3} \big\| \tr_{B}\big[\rho^{B}E_{A,B}^{*} E_{AB,C}^{*}\big] \big\| \, , \, \big\| \tr_{B}\big[\rho^{B}E_{A,B}^{*} E_{AB,C}^{*}\big]^{-1} \big\| \, \leq \, \mathcal{C} \, .
    \end{align}
\end{proposition}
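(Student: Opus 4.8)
The plan is to split the statements into the non-inverted norms, which follow from contractivity, and the inverted norms, which carry the real content. The two ingredients I would use throughout are the uniform expansional bounds $\|E_{X,Y}\|,\|E_{X,Y}^{-1}\|\le\cG$ from \cref{prop:estimates_expansionals}(i), and the fact recorded around \eqref{eq:contractiveExpectation} that $Q\mapsto\tr_{\Lambda'}[\rho^{\Lambda'}Q]$ is positive and unital, hence an operator-norm contraction. For the non-inverted bounds this is immediate: $\|\tr_B[\rho^B Q]\|\le\|Q\|\le\cG$ for each $Q\in\{E_{B,C},E_{B,C}^*,E_{A,B},E_{A,B}^*\}$, and likewise $\|\tr_{AB}[\rho^{AB}Q]\|\le\cG$; for \eqref{equa:BoundingPartialTraceInverses3} one first applies submultiplicativity, $\|E_{A,B}^*E_{AB,C}^*\|\le\cG^2$, and then contractivity. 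This disposes of half of the claims.

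Before the inverted bounds I would record a scalar comparison of partition functions. Writing $Z_X:=\Tr[e^{-H_X}]$ and using $e^{-H_{XY}}=E_{X,Y}e^{-H_X-H_Y}$ gives $Z_{XY}=\Tr[E_{X,Y}e^{-H_X-H_Y}]$, while $E_{X,Y}^{-1}e^{-H_{XY}}=e^{-H_X-H_Y}$ gives $Z_XZ_Y=\Tr[E_{X,Y}^{-1}e^{-H_{XY}}]$; since $e^{-H_{XY}}$ and $e^{-H_X-H_Y}$ are positive while $\|E_{X,Y}^{\pm1}\|\le\cG$, this yields $\cG^{-1}\le Z_{XY}/(Z_XZ_Y)\le\cG$. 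Bound \eqref{equa:BoundingPartialTraceInverses2} then follows at once and is in fact scalar-valued: cycling $e^{\pm H_{AB}}$ inside the full trace gives $\tr_{AB}[\rho^{AB}E_{A,B}^{-1}]=Z_AZ_B/Z_{AB}$ (and the same for $E_{A,B}^{*-1}$), so both it and its reciprocal lie in $[\cG^{-1},\cG]$.

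The inverted bounds \eqref{equa:BoundingPartialTraceInverses1} and \eqref{equa:BoundingPartialTraceInverses3} are the crux. Using cyclicity of $\tr_B$ for $B$-supported factors and pulling the exponentials of the outer region out of the trace, each operator rewrites as a partition-function ratio times a dressed Gibbs marginal: for instance $\tr_B[\rho^B E_{B,C}]=\tfrac1{Z_B}\tr_B[e^{-H_{BC}}]\,e^{H_C}=\tfrac{Z_{BC}}{Z_B}\rho^{BC}_C\,e^{H_C}$ with $\rho^{BC}_C:=\tr_B[\rho^{BC}]$, and similarly $\tr_B[\rho^B E_{A,B}^*E_{AB,C}^*]=\tfrac{Z_{ABC}}{Z_B}e^{H_A+H_C}\rho^{ABC}_{AC}$ after simplifying $E_{A,B}^*E_{AB,C}^*=e^{H_A+H_B+H_C}e^{-H_{ABC}}$. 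Bounding the inverse thus reduces, via the partition-function comparison, to showing that $\rho^{BC}_C$ is comparable to the local Gibbs state $\rho^C=e^{-H_C}/Z_C$. I would obtain this from the expansional directly: with $L:=e^{\frac12(H_B+H_C)}E_{B,C}e^{-\frac12(H_B+H_C)}$ one has $e^{-H_{BC}}=e^{-\frac12(H_B+H_C)}L\,e^{-\frac12(H_B+H_C)}$, where $L$ is positive and conjugate to $E_{B,C}$, so its spectrum lies in $[\cG^{-1},\cG]$ by \cref{prop:estimates_expansionals}(i); conjugating by $e^{H_C/2}$ and taking $\tr_B$ then gives $\cG^{-1}Z_B\1\le e^{H_C/2}\tr_B[e^{-H_{BC}}]e^{H_C/2}\le\cG Z_B\1$.

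The main obstacle is exactly this last point. The sandwiching controls $\tr_B[\rho^B E_{B,C}]$ only up to conjugation by $e^{\pm H_C/2}$, and since that operator is not normal, a careless estimate would lose a factor exponential in $|C|$. The resolution I would pursue is that $L$, and hence the non-scalar part of $e^{H_C/2}\tr_B[e^{-H_{BC}}]e^{H_C/2}$, is localized within $O(R)$ of the $B$--$C$ boundary: the bulk of $H_C$ commutes through it and cancels between $e^{-H_C/2}$ and $e^{H_C/2}$, leaving a boundary operator whose conditioning is bounded by a function of $\cG$ and the boundary strength alone, uniformly in $|C|$ and the local dimension. This localization is precisely the content of the difference estimate \cref{prop:estimates_expansionals}(ii), namely that altering the interval far from the boundary changes $E_{B,C}$ by at most $\delta(\ell)$, and turning that approximate statement into the clean uniform bound $\|\tr_B[\rho^B E_{B,C}]^{-1}\|\le\cC$ is the step demanding the most care. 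The remaining operators $E_{B,C}^*,E_{A,B},E_{A,B}^*$ and the triple product in \eqref{equa:BoundingPartialTraceInverses3} are treated identically, with $C$ replaced by $A$ or by $AC$.
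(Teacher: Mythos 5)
Your proposal is correct and follows essentially the same route as the cited proof (and as the paper's own Appendix~A.1 argument for the generalisation in \cref{Lemma:Modified_BoundingPartialTraceInverses}): contractivity of $Q\mapsto\tr_{\Lambda'}[\rho^{\Lambda'}Q]$ for the direct norms, the scalar partition-function ratio for \eqref{equa:BoundingPartialTraceInverses2}, and, for the inverses, conjugation to a positive operator with spectrum in $[\cG^{-1},\cG]$ followed by a boundary-localization argument to tame the non-normal conjugation by $e^{\pm H_C/2}$ --- you correctly identify that the naive estimate loses a factor exponential in $|C|$ and that \cref{prop:estimates_expansionals}(ii) supplies the fix. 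The one step you leave as a sketch is carried out in the reference (and in Appendix~A.1) by a telescoping decomposition of the inverse into pieces supported within distance $n$ of the cuts and termwise imaginary-time-evolution bounds, where the superexponential decay of the pieces beats the exponential growth of the conjugation.
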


These two propositions are instrumental in the proof of the following result, which provides an approximate factorization of the Gibbs state of a finite-range, translation-invariant Hamiltonian. 

\begin{theorem}[{\cite[Eq. (17) in Theorem 5.1]{Bluhm.2022}}]\label{lemma:approximate-factorization-of-Gibbs-state}
    For $\Phi$ a finite-range, translation-invariant interaction over $\Z$ there exists a positive function $\ell \mapsto \varepsilon(\ell)$ with superexponential decay in $\ell$, depending only on $J$, $R$, and $\beta$ such that for every $\Lambda \Subset \Z$ split into three subintervals $\Lambda = ABC$, where $B$ shields $A$ from $C$ and $|B| \ge \ell$ for its local Gibbs state $\rho^{\Lambda} =: \rho_{ABC}$ it holds that
    \begin{equation}\label{eq:approximate-factorization-of-Gibbs-state}
        \norm{\rho_{ABC} \rho_{BC}^{-1} \rho_B \rho_{AB}^{-1} - \1} \le \varepsilon(\ell) \, .
    \end{equation}
    The form of the superexponentially-decaying function is given by 
    \[
    \varepsilon(\ell)=\cC_1\frac{\cC_2^{1+\lfloor\ell/2\rfloor}}{(\lfloor\lfloor\ell/2\rfloor/R\rfloor+1)!}
    \]
    for constants $\cC_1,\cC_2$ that only depend on $J$, $R$, and $\beta$.
\end{theorem}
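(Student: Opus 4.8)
The plan is to cancel the partition functions, rewrite every marginal through Araki's expansionals \eqref{def:expantionals}, and reduce the claim to a group-commutator expression that is exactly $\1$ in a strictly localized limit. Since all four marginals in $\rho_{ABC}\rho_{BC}^{-1}\rho_B\rho_{AB}^{-1}$ descend from $e^{-H_{ABC}}$, they carry the same normalization $Z=\Tr[e^{-H_{ABC}}]$, which cancels; hence it suffices to work with the unnormalized marginals $\widetilde\rho_X$ obtained from $e^{-H_{ABC}}$ by partial traces. Using the identity $e^{-H_{XY}}=E_{X,Y}e^{-H_X}e^{-H_Y}$ and the fact that a partial trace of $E_{X,Y}$ against a factor supported on one side of the $X$–$Y$ cut may be pulled outside the trace (expand $E_{X,Y}$ in a Schmidt basis across the cut), one computes
\[
\widetilde\rho_{AB}=G_{AB}\,E_{A,B}\,e^{-H_A}e^{-H_B},\quad \widetilde\rho_{BC}=F_{BC}\,E_{B,C}\,e^{-H_B}e^{-H_C},\quad \widetilde\rho_B=K_B\,e^{-H_B},
\]
where $F_{BC}=\tr_A[E_{A,BC}e^{-H_A}]$, $G_{AB}=\tr_C[E_{AB,C}e^{-H_C}]$ and $K_B=\tr_C[F_{BC}E_{B,C}e^{-H_C}]$; note that $F_{BC}$ and $G_{AB}$ involve the long expansionals $E_{A,BC},E_{AB,C}$ that reach across all of $B$.

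Substituting these and cancelling the factors $e^{\pm H_X}$ (which commute across disjoint regions), the whole quantity collapses, free of local Gibbs factors, to
\[
\rho_{ABC}\rho_{BC}^{-1}\rho_B\rho_{AB}^{-1}=E_{A,BC}\,F_{BC}^{-1}\,K_B\,E_{A,B}^{-1}\,G_{AB}^{-1}.
\]
The crux is that this becomes exactly $\1$ once every expansional is localized into its half of $B$. Splitting $B=B_LB_R$ with $|B_L|,|B_R|\ge\lfloor\ell/2\rfloor$, I replace each $A$-anchored expansional by the one reaching only into $B_L$ (so $E_{A,BC},E_{A,B}\to E_{A,B_L}$) and each $C$-anchored one by the one reaching only into $B_R$ (so $E_{AB,C},E_{B,C}\to E_{B_R,C}$). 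Then $F_{BC}\to F_{B_L}:=\tr_A[E_{A,B_L}e^{-H_A}]$ is supported in $B_L$, $G_{AB}\to G_{B_R}:=\tr_C[E_{B_R,C}e^{-H_C}]$ is supported in $B_R$, and $K_B\to F_{B_L}G_{B_R}$ factorizes, so the right-hand side reduces to the group commutator
\[
E_{A,B_L}\,G_{B_R}\,E_{A,B_L}^{-1}\,G_{B_R}^{-1}=\1,
\]
which vanishes because $E_{A,B_L}$ (supported on $AB_L$) commutes with $G_{B_R}$ (supported on $B_R$).

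It remains to control the error of the localizations. By \cref{prop:estimates_expansionals}(ii) each replacement changes one expansional by at most $\delta(\lfloor\ell/2\rfloor)$ in norm, and through the resolvent identity $\norm{M^{-1}-N^{-1}}\le\norm{M^{-1}}\norm{N^{-1}}\norm{M-N}$ this also controls the induced change in the inverse partial traces. The surrounding norms are uniformly bounded by constants depending only on $J,R,\beta$: \cref{prop:estimates_expansionals}(i) gives $\norm{E_{X,Y}},\norm{E_{X,Y}^{-1}}\le\cG$, the contractivity of the maps $Q\mapsto\tr_{\Lambda'}[\rho^{\Lambda'}Q]$ from \eqref{eq:contractiveExpectation} bounds the differences $\norm{F_{BC}-F_{B_L}}$, $\norm{G_{AB}-G_{B_R}}$, and \cref{prop:BoundingPartialTraceInverses} supplies two-sided bounds on $F_{BC},G_{AB},K_B$ and their inverses (after matching $\tr_A[\rho^AE_{A,BC}]$, $\tr_C[\rho^CE_{AB,C}]$, and the two-factor trace in $K_B$ to the listed forms by relabeling the regions). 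Crucially, the partition-function factors $Z_A,Z_C$ carried by these partial traces cancel between each operator and its paired inverse, so only the universal constants $\cG,\cC$ survive; a triangle inequality over the finitely many replacement steps then gives $\norm{\rho_{ABC}\rho_{BC}^{-1}\rho_B\rho_{AB}^{-1}-\1}\le c\,\delta(\lfloor\ell/2\rfloor)$ with $c$ depending only on $J,R,\beta$, which is of the stated superexponential form. I expect the main obstacle to be precisely this control of non-commutativity: the idealized cancellation rests on local factors commuting, whereas the true expansionals are only approximately localized with superexponentially-decaying tails, so one must simultaneously keep every inverse uniformly bounded—exactly what \cref{prop:BoundingPartialTraceInverses} secures—and ensure the dimensionful $Z$-factors cancel, while tracking how the tails reaching the midpoint of $B$ turn the naive decay scale into $\lfloor\ell/2\rfloor$.
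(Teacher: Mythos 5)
Your proof is correct, and after cancelling partition functions its five-factor identity $\rho_{ABC}\rho_{BC}^{-1}\rho_B\rho_{AB}^{-1}=E_{A,BC}F_{BC}^{-1}K_BE_{A,B}^{-1}G_{AB}^{-1}$ is algebraically the same decomposition that the paper imports from \cite{Bluhm.2022} and sketches for the short-range analogue in \cref{lemma:approximate-factorization-of-Gibbs-state_shortrange}, namely $E_{A,BC}\tr_A[E_{A,BC}\rho^A]^{-1}\tr_A[\tilde E_{A,BC}\rho^A]\tilde E_{A,BC}^{-1}$ with $\tilde E_{A,BC}=\tr_C[e^{-H_{ABC}}]\tr_C[e^{-H_{BC}}]^{-1}e^{H_A}$ (your last three normalized factors multiply out to $\tr_A[\tilde E_{A,BC}\rho^A]\tilde E_{A,BC}^{-1}$). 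Where you genuinely diverge is in how closeness to $\1$ is extracted: the paper reduces everything to the single comparison $\norm{\tilde E_{A,BC}-E_{A,BC}}\lesssim\varepsilon(|B|)$ together with uniform two-sided bounds on the four factors, whereas you localize every expansional into its half of $B$ and exhibit the localized product as the exactly trivial commutator $E_{A,B_L}G_{B_R}E_{A,B_L}^{-1}G_{B_R}^{-1}=\1$ of operators with disjoint supports. Your route is more symmetric and makes the mechanism (approximate locality of expansionals near the two cuts plus uniform invertibility of the partial traces) transparent, at the price of a few more triangle-inequality terms, each controlled by exactly the inputs you cite; note that identifying $K_B$ with the object of \cref{Lemma:Modified_BoundingPartialTraceInverses} uses $E_{A,BC}E_{B,C}=E_{AB,C}E_{A,B}$. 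Two small points to tidy: the uniform bounds and contractivity estimates should be applied to the $Z_A$-, $Z_C$-normalized versions of $F_{BC},G_{AB},K_B$ (as you observe these factors cancel across the product, but the unnormalized differences $\norm{F_{BC}-F_{B_L}}$ carry a $Z_A$); and \cref{prop:estimates_expansionals}(ii) as literally stated requires $\ell\le|X|,|Y|$, so one-sided replacements such as $E_{A,BC}\to E_{A,B_L}$ formally also need $|A|\ge\lfloor\ell/2\rfloor$ --- harmless, since the decay is governed only by the distance from the cut to the added region and the paper takes the same liberty elsewhere, but worth a remark.
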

Note that, in Eq.~\eqref{eq:approximate-factorization-of-Gibbs-state},  only $\rho_{ABC}$ denotes a Gibbs state of a local Hamiltonian while $\rho_{AB}, \rho_{BC}$ and $\rho_B$ are marginals after partially tracing out systems. We further suppressed tensoring with identity, i.e., $\rho_{AB}$ for example has to be understood as $\rho_{AB} \otimes \1_C$.

In the rest of the paper, we will need a refined version of this theorem for Gibbs states on $\Lambda=A'ABCC'$, for which we only compare the marginals in $AB$, $B$, $BC$ and $ABC$ as above. To generalize this, however, we need to introduce the following lemmata. The first constitutes a generalization of Eq.~\eqref{equa:BoundingPartialTraceInverses1} in Proposition \ref{prop:BoundingPartialTraceInverses}.
 
\begin{figure}[h]
    \begin{center}
        \begin{tikzpicture}[scale=0.68]
            \definecolor{frenchblue}{rgb}{0.0, 0.45, 0.73}
            \Block[5,midblue!60!white,A,1];
            \draw [thick,midblue,decorate,decoration={brace,amplitude=10pt,mirror},xshift=-0.5pt,yshift=-0.6pt](-0.5,-0.6) -- (4.5,-0.6) node[black,midway,yshift=-0.7cm] { \textcolor{midblue}{$A$}};
            \begin{scope}[xshift=5cm]
                \Block[5,cyan!60!white,B,1];
                \draw [thick,cyan,decorate,decoration={brace,amplitude=10pt,mirror},xshift=-0.5pt,yshift=-0.6pt](-0.5,-0.6) -- (4.5,-0.6) node[black,midway,yshift=-0.7cm] { \textcolor{cyan}{$B$}};
            \end{scope}
            \begin{scope}[xshift=10cm]
                \Block[5,lightgreen!60!white,C,1];
                \draw [thick,lightgreen,decorate,decoration={brace,amplitude=10pt,mirror},xshift=-0.5pt,yshift=-0.6pt](-0.5,-0.6) -- (4.5,-0.6) node[black,midway,yshift=-0.7cm] { \textcolor{lightgreen}{$C$}};
            \end{scope}
            \begin{scope}[xshift=15cm]
                \Block[4,midgreen!60!white,D,1];
                \draw [thick,midgreen,decorate,decoration={brace,amplitude=10pt,mirror},xshift=-0.5pt,yshift=-0.6pt](-0.5,-0.6) -- (3.5,-0.6) node[black,midway,yshift=-0.7cm] { \textcolor{midgreen}{$C'$}};
            \end{scope}
            \begin{scope}[xshift=-4cm]
                \Block[4,darkblue!60!white,B,1];
                \draw [thick,darkblue,decorate,decoration={brace,amplitude=10pt,mirror},xshift=-0.5pt,yshift=-0.6pt](-0.5,-0.6) -- (3.5,-0.6) node[black,midway,yshift=-0.7cm] { \textcolor{darkblue}{$A'$}};
            \end{scope}
            \node at (19,0.8) {\huge $\Lambda$};
        \end{tikzpicture}
    \end{center}
    \caption{Representation of an interval $\Lambda$ split into five subintervals $\Lambda=A'ABCC'$.}
    \label{fig:AABCC}
\end{figure}
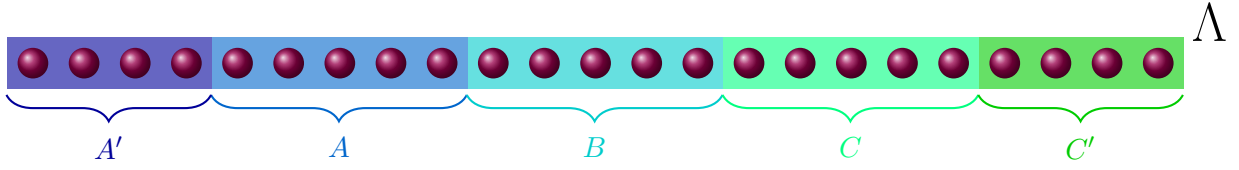

\begin{lemma}\label{Lemma:Modified_BoundingPartialTraceInverses}
Let $\Lambda \Subset \mathbb{Z}$ be a finite interval split into five subintervals $\Lambda=A'ABCC'$ as in Figure \ref{fig:AABCC}. Then, for $\rho = \rho^{\Lambda}$ the Gibbs state on $\Lambda$, and for every $\beta>0$, the following holds:
\begin{equation}\label{eq:aux_inverse_norm_contraction}
    \norm{\tr_{AC}[\rho^{A}\otimes \rho^{C} Q]^{-1}} \,   \leq \, \mathcal{C} \, 
\end{equation}
for $Q= E_{AB,C} \, E_{A,B}$ and a constant $\cC$ that only depends on $J$, $R$, $\beta$ and $d$.
\end{lemma}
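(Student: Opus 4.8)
The plan is to first collapse the product of expansionals in $Q$. Since $H_C$ commutes with $H_{AB}$ and with $H_A+H_B$, expanding
\[
Q=E_{AB,C}E_{A,B}=e^{-H_{ABC}}e^{H_{AB}+H_C}e^{-H_{AB}}e^{H_A+H_B}
\]
and cancelling the adjacent $e^{\pm H_{AB}}$ factors yields the clean form $Q=e^{-H_{ABC}}e^{H_A+H_B+H_C}$. Writing $\rho^A\otimes\rho^C=e^{-H_A-H_C}/(Z_AZ_C)$ with $Z_X:=\Tr e^{-H_X}$, the operator inside the partial trace becomes $G^{-1}e^{-H_{ABC}}G\,e^{H_B}$, where $G:=e^{H_A+H_C}$ is supported on $AC$.

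Next I would use that $\tr_{AC}$ is cyclic with respect to operators supported on the traced-out system: since $G$ acts trivially on $B$ it may be moved cyclically through $\tr_{AC}$, so $\tr_{AC}[G^{-1}e^{-H_{ABC}}G]=\tr_{AC}[e^{-H_{ABC}}]$. Pulling the $B$-local factor $e^{H_B}$ out of the trace and using $e^{H_B}=Z_B^{-1}(\rho^B)^{-1}$ gives
\[
\tr_{AC}[\rho^A\otimes\rho^C Q]=\frac{1}{Z_AZ_C}\tr_{AC}[e^{-H_{ABC}}]\,e^{H_B}=\frac{Z_{ABC}}{Z_AZ_BZ_C}\,\omega_B\,(\rho^B)^{-1},
\]
where $\omega_B:=\tr_{AC}[\rho^{ABC}]$ is the $B$-marginal of the Gibbs state on $ABC$. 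Hence $\norm{\tr_{AC}[\rho^A\otimes\rho^C Q]^{-1}}=\tfrac{Z_AZ_BZ_C}{Z_{ABC}}\norm{\rho^B\omega_B^{-1}}$. This is the two-boundary analogue of the one-cut identity $\tr_B[\rho^B E_{A,B}]=\tfrac{Z_{AB}}{Z_AZ_B}(\rho^{AB})_A(\rho^A)^{-1}$ that underlies \cref{prop:BoundingPartialTraceInverses}, now with the \emph{middle} region $B$ in the role of the exposed region.

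It then remains to bound the two factors uniformly in $\abs{A},\abs{B},\abs{C}$. The scalar $\tfrac{Z_AZ_BZ_C}{Z_{ABC}}$ is controlled by the free-energy Lipschitz bound $\abs{\log\Tr e^{-H}-\log\Tr e^{-H'}}\le\norm{H-H'}$ applied to $H_{ABC}$ and $H_A+H_B+H_C$, whose difference is the interaction straddling the cuts $A|B$ and $B|C$ and thus has norm bounded by a constant depending only on $J,R,\beta$; so this scalar lies in $[e^{-v},e^{v}]$. The genuine content is the operator factor $\norm{\rho^B\omega_B^{-1}}$, a uniform comparison of the marginal $\omega_B$ with the bare local Gibbs state $\rho^B$. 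Via the expansional factorisation $e^{-H_{ABC}}=Q\,e^{-H_A-H_B-H_C}$ with $\norm{Q},\norm{Q^{-1}}\le\cG^2$ from \cref{prop:estimates_expansionals}, one sees that controlling this factor amounts to a uniform lower bound on $\tr_{AC}[Q\,\rho^A\otimes\rho^C]$ at \emph{both} boundaries of $B$ simultaneously, the two-sided version of the single-cut estimate of \cref{prop:BoundingPartialTraceInverses}.

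The hard part will be exactly this last step. Purely algebraic rearrangement is circular: every regrouping of partition functions cancels and returns the original quantity $\tr_{AC}[Q\,\rho^A\otimes\rho^C]$, so a genuine analytic input is unavoidable. One must feed in Araki's quantitative expansional bounds at the $A|B$ and $B|C$ cuts — doubling the single-boundary argument of \cref{prop:BoundingPartialTraceInverses} — and, crucially, obtain an operator-norm (not merely spectral-radius) lower bound that survives conjugation by the size-dependent $\rho^B{}^{\pm1/2}$ coming from the non-Hermitian boundary dressing. I expect the dependence on the local dimension $d$ (absent from \cref{prop:BoundingPartialTraceInverses}) to enter precisely here, through crude estimates on the bounded-size (diameter $\le R$) boundary regions adjacent to $B$, where the local marginals are bounded by $d^{O(R)}$ factors; for $\abs{B}$ above the threshold of \cref{lemma:approximate-factorization-of-Gibbs-state} one may instead invoke approximate factorisation to get $\omega_B\approx\rho^B$ directly.
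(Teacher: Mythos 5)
Your algebraic setup is correct and matches the paper's in spirit: the collapse $Q=E_{AB,C}E_{A,B}=e^{-H_{ABC}}e^{H_A+H_B+H_C}$, the cyclicity of $\tr_{AC}$ with respect to $AC$-supported operators, and the resulting identity $\tr_{AC}[\rho^A\otimes\rho^C Q]=\tfrac{Z_{ABC}}{Z_AZ_BZ_C}\,\rho_B\,(\rho^B)^{-1}$ are all valid, as is the Lipschitz bound on the partition-function ratio. But you then stop exactly at the crux. As you yourself note, $\norm{\rho^B\rho_B^{-1}}$ is, up to bounded scalars, the very quantity to be bounded, so the reduction makes no progress; and the two escape routes you sketch do not close the gap. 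Invoking \cref{lemma:approximate-factorization-of-Gibbs-state} cannot give $\rho_B\approx\rho^B$ in the required multiplicative operator sense (it compares marginals of one Gibbs state with each other, never a marginal with the bare local Gibbs state $\rho^B$), and in the paper's logical order the comparison $\norm{\rho_B^{-1}}\norm{\rho_B}\le \cC e^{\alpha|B|}$ of \cref{lemma:norm-estimates-of-functions-of-Gibbs-states} is \emph{derived from} this lemma, so using any such statement here would be circular. "Doubling the single-boundary argument of \cref{prop:BoundingPartialTraceInverses}" is the right slogan but is not a proof: the obstruction is that after writing $\tr_{AC}[\rho^A\otimes\rho^C Q]=e^{\frac12 H_B}F'e^{-\frac12 H_B}$ with $F'=\tr_{AC}[\rho^A\otimes\rho^C F]\ge \cG^{-4}\1$ a positive unital image, one must control the conjugation of $F'^{-1}$ by $e^{\pm\frac12 H_B}$, whose individual norms grow like $e^{O(|B|)}$.

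What the paper actually supplies here, and what is missing from your proposal, is the quantitative locality argument that tames this conjugation. Using the seminorm $\norm{\cdot}_{I}$ of \cref{lem:interval-norms} together with \cref{prop:estimates_expansionals}, one shows $\norm{F'^{-1}}_{I'_n}$ decays superexponentially in $n$, where $I'_n$ is the union of the $n$-neighbourhoods \emph{inside $B$} of the two cuts $A|B$ and $B|C$; this yields a telescoping decomposition $F'^{-1}=\sum_n S_n$ with $S_n$ supported on $I'_n$ and $\norm{S_n}$ superexponentially small. Because $I'_n$ is disconnected, each $S_n$ must further be expanded into at most $d^{4n}$ elementary tensors $S^l_{n,i}\otimes S^r_{n,i}$ of single-cut-local factors (costing an extra $d^{2n}$ in norm), after which Araki's bound on imaginary-time evolution of local observables gives $\norm{e^{\frac12 H_B}S^l_{n,i}\otimes S^r_{n,i}e^{-\frac12 H_B}}\le \cG^{2n}d^{2n}\norm{S_n}$; the series converges because the superexponential decay of $\norm{S_n}$ beats $(\cG d^3)^{O(n)}$. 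This is also where the dependence on $d$ genuinely enters — through the tensor decomposition at every scale $n\le\lfloor|B|/2\rfloor+1$, not, as you guessed, through crude bounds on $O(R)$-sized boundary collars. Without this step the lemma is unproven.
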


We defer the proof of \Cref{Lemma:Modified_BoundingPartialTraceInverses} to \Cref{sec:proofsApproxTens}. Next, we show that the decay provided by the measure of conditional independence in \Cref{lemma:approximate-factorization-of-Gibbs-state} can be reduced from the total Gibbs state to a large marginal of it. The proof of this lemma is also deferred to \Cref{sec:proofsApproxTens}.

\begin{lemma}\label{Lemma:DPI_mixing_condition}
Let $\Lambda \Subset \mathbb{Z}$ be a finite interval split into five subintervals $\Lambda=A'ABCC'$ as in Figure \ref{fig:AABCC}. Then, for $\rho = \rho^{\Lambda}$ the Gibbs state on $\Lambda$, and for every $\beta>0$, the following holds
\begin{equation}\label{eq:DPI_mixing_condition}
   \left\| \rho_{AB} \, \rho_{B}^{-1}  \rho_{BC} \, \rho_{ABC}^{-1} - \1 \right\| < \mathcal{C}^4  \, \left\| \rho_{A'AB} \, \rho_{B}^{-1} \rho_{BCC'} \, \rho_{A'ABCC'}^{-1} - \1 \right\| \, ,  
\end{equation}
for $\mathcal{C}>1$ depending on $R, J$ and $\beta$, and given in Proposition \ref{prop:BoundingPartialTraceInverses}.

\end{lemma}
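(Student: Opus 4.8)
The plan is to collapse the five‑interval factorization defect on the left to the one on the right through a single partial‑trace identity, and then to estimate the resulting linear map in operator norm. Write $\rho := \rho^{\Lambda} = \rho_{A'ABCC'}$ and abbreviate the two defects by
\[
Y := \rho_{AB}\,\rho_B^{-1}\,\rho_{BC}\,\rho_{ABC}^{-1} - \1 ,
\qquad
X := \rho_{A'AB}\,\rho_B^{-1}\,\rho_{BCC'}\,\rho^{-1} - \1 ,
\]
so that the claim is $\norm{Y}\le \cC^4\norm{X}$. First I would record a purely algebraic identity. Since $A'$ appears only in $\rho_{A'AB}$ and $C'$ only in $\rho_{BCC'}$, while the middle factor $\rho_B^{-1}$ is supported on $B$, both partial traces pull through the outer factors, giving
\[
\tr_{A'C'}\big[\rho_{A'AB}\,\rho_B^{-1}\,\rho_{BCC'}\big] = \rho_{AB}\,\rho_B^{-1}\,\rho_{BC} .
\]
Substituting $\rho_{A'AB}\rho_B^{-1}\rho_{BCC'} = (X+\1)\rho$ and $\tr_{A'C'}[\rho] = \rho_{ABC}$ then yields
\[
Y = \tr_{A'C'}[X\rho]\,\rho_{ABC}^{-1} = \tr_{A'C'}\big[X\,\rho\,\rho_{ABC}^{-1}\big] ,
\]
the last step because $\rho_{ABC}^{-1}$ is supported on $ABC$ and commutes with $\tr_{A'C'}$. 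In particular the map $\cT\colon X\mapsto \tr_{A'C'}[X\rho]\rho_{ABC}^{-1}$ is unital, $\cT(\1)=\1$, so everything reduces to the induced operator‑norm bound $\norm{\cT}\le\cC^4$.

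By trace--operator-norm duality this reduces to a statement that no longer involves $X$:
\[
\norm{Y} = \sup_{\norm{\xi}_1\le 1}\big|\Tr_\Lambda\big[X\,\rho\,\rho_{ABC}^{-1}(\xi\otimes\1_{A'C'})\big]\big| \le \norm{X}\,\sup_{\norm{\xi}_1\le 1}\norm{\rho\,\rho_{ABC}^{-1}(\xi\otimes\1_{A'C'})}_1 ,
\]
where $\xi$ ranges over $\cB(\cH_{ABC})$, so it suffices to prove $\norm{\rho\,\rho_{ABC}^{-1}(\xi\otimes\1_{A'C'})}_1\le\cC^4\norm{\xi}_1$. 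Here the potentially divergent factor $\rho_{ABC}^{-1}$ must cancel against $\rho$, and this is exactly where the Araki‑expansional machinery enters. Using \cref{prop:estimates_expansionals} I would factor $e^{-H_\Lambda}$ by peeling $A'$ off the left and $C'$ off the right, writing $\rho$ as a product of the local Gibbs factors $\rho^{A'}$, $\rho^{C'}$ and norm‑bounded expansionals, and likewise expand $\rho_{ABC}=\tr_{A'C'}[\rho]$. The weighted partial traces $Q\mapsto \tr_{A'}[\rho^{A'}Q]$ and $Q\mapsto\tr_{C'}[\rho^{C'}Q]$ are unital and positive, hence operator‑norm contractive as recorded around \eqref{eq:contractiveExpectation}, and the mismatch operators produced by the peeling are precisely of the tabulated form $\tr_X[\rho^{X}Q]^{\pm1}$. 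Bounding each such factor by $\cC$ via \cref{prop:BoundingPartialTraceInverses} and \cref{Lemma:Modified_BoundingPartialTraceInverses} --- two for the $A'$ end and two for the $C'$ end --- produces the constant $\cC^4$.

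The main obstacle is the non‑commutative bookkeeping in this last step: the expansionals coupling $A'C'$ to $ABC$ do not commute with the local Gibbs factors, so one must order the peeling so that (i) each partial trace genuinely collapses to one of the contractive conditional expectations above, and (ii) every leftover correction is exactly one of the operators whose norm or inverse norm is controlled by $\cC$. Making the cancellation of $\rho_{ABC}^{-1}$ against the $ABC$‑block of $\rho$ \emph{exact} rather than merely approximate is what forces the symmetric peeling from both ends and accounts for all four factors of $\cC$; by contrast, the crude bound that ignores the Gibbs structure fails, since $\norm{\rho_{ABC}^{-1}}$ alone is not controlled by $\cC$.
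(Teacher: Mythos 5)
Your proposal is correct and follows essentially the same route as the paper: the same opening identity recasting the $ABC$ defect as $\tr_{A'C'}[X\rho]\,\rho_{ABC}^{-1}$, followed by the same expansional peeling into $E_{A'ABC,C'}E_{A',ABC}$, contractivity of the positive unital map $Q\mapsto\tr_{A'C'}[\rho^{A'}\otimes\rho^{C'}Q]$, and \cref{Lemma:Modified_BoundingPartialTraceInverses} for the inverse factor, yielding $\cC^4$. The only (harmless) inefficiency is the trace-norm duality detour: positive unital maps are contractive in \emph{operator} norm, not trace norm, so proving $\norm{\rho\,\rho_{ABC}^{-1}(\xi\otimes\1_{A'C'})}_1\le\cC^4\norm{\xi}_1$ forces you to dualize back to precisely the operator-norm estimate the paper applies directly to the defect $X$.
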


\begin{remark}
Note that Eq. \eqref{eq:DPI_mixing_condition} can be interpreted as a modified data-processing inequality for the partial trace for the mixing condition between two positive states $\eta$ and $\sigma$,
\begin{equation*}
    \norm{\eta \, \sigma^{-1} - \1} \, ,
\end{equation*}
with contraction coefficient upper bounded by $\mathcal{C}^4$, identifying $\eta:= \rho_{A'AB} \, \rho_{B}^{-1} \rho_{BCC'} $, $\sigma := \rho_{A'ABCC'}$, and taking the partial trace in $A'C'$.
\end{remark}

Combining now the findings of Theorem \ref{lemma:approximate-factorization-of-Gibbs-state} and Lemma \ref{Lemma:DPI_mixing_condition}, we can conclude the following:

\begin{corollary}\label{cor:Clean_approx_identity_prodmarginals}
For $\Phi$ a finite-range, translation-invariant interaction over $\Z$, there exists a positive function $\ell \mapsto \tilde \varepsilon(\ell)$ with superexponential decay in $\ell$, depending only on $J$, $R$ and $\beta$, such that for every $\Lambda \Subset \Z$ split into three subintervals $\Lambda = A'ABCC'$, with $|B| \ge \ell$, for its local Gibbs state $\rho^{\Lambda} =: \rho_{A'ABCC'}$ it holds that
    \begin{equation}\label{eq:clean_approx_identity_prodmarginals}
        \norm{\rho_{ABC} \rho_{BC}^{-1} \rho_B \rho_{AB}^{-1} - \1} \le \tilde \varepsilon(\ell) \, .
    \end{equation}
\end{corollary}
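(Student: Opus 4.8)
The plan is to chain together \cref{lemma:approximate-factorization-of-Gibbs-state} and \cref{Lemma:DPI_mixing_condition}, bridging them by the elementary observation that the four factorisation expressions involved pair up as genuine operator inverses, and that the perturbation $\norm{X-\1}$ is stable under inversion. First I would apply \cref{lemma:approximate-factorization-of-Gibbs-state} to the tripartition of $\Lambda$ into $\tilde A := A'A$, $\tilde B := B$, $\tilde C := CC'$. Since $B$ still shields $A'A$ from $CC'$ and $|\tilde B| = |B| \ge \ell$, the theorem (applied to the same Gibbs state $\rho^{\Lambda}=\rho_{A'ABCC'}$) yields
\[
    \norm{\rho_{A'ABCC'}\,\rho_{BCC'}^{-1}\,\rho_B\,\rho_{A'AB}^{-1} - \1} \le \varepsilon(\ell),
\]
with $\varepsilon$ superexponentially decaying.

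The key intermediate step is the standard perturbative inversion bound: for any invertible $X$ with $\norm{X-\1}=\delta<1$, the Neumann series gives $\norm{X^{-1}}\le(1-\delta)^{-1}$ and hence $\norm{X^{-1}-\1}=\norm{X^{-1}(\1-X)}\le \delta/(1-\delta)$. A one-line cancellation using $\rho_X^{-1}\rho_X=\1$ shows that $\rho_{A'AB}\,\rho_B^{-1}\,\rho_{BCC'}\,\rho_{A'ABCC'}^{-1}$ is exactly the inverse of the operator bounded above (all marginals of the full-rank finite-$\beta$ Gibbs state are invertible, so the inverse exists). Applying the inversion bound with $\delta=\varepsilon(\ell)$ therefore controls precisely the right-hand side of \cref{Lemma:DPI_mixing_condition},
\[
    \norm{\rho_{A'AB}\,\rho_B^{-1}\,\rho_{BCC'}\,\rho_{A'ABCC'}^{-1} - \1}\le \frac{\varepsilon(\ell)}{1-\varepsilon(\ell)}.
\]

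I would then invoke \cref{Lemma:DPI_mixing_condition} to transfer this estimate to the smaller marginals, giving $\norm{\rho_{AB}\,\rho_B^{-1}\,\rho_{BC}\,\rho_{ABC}^{-1}-\1}<\mathcal{C}^4\,\varepsilon(\ell)/(1-\varepsilon(\ell))$, and finally apply the inversion bound a second time, noting by the same cancellation that $\rho_{ABC}\,\rho_{BC}^{-1}\,\rho_B\,\rho_{AB}^{-1}$ is the inverse of $\rho_{AB}\,\rho_B^{-1}\,\rho_{BC}\,\rho_{ABC}^{-1}$. Setting $\tilde\varepsilon(\ell)$ to be the resulting composite expression completes the bound. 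It remains only to check that $\tilde\varepsilon$ inherits the superexponential decay: for $\ell$ large enough that $\varepsilon(\ell)$ is small, both inversion steps and the factor $\mathcal{C}^4$ contribute only fixed multiplicative constants, so $\tilde\varepsilon(\ell)$ is dominated by a constant multiple of $\varepsilon(\ell)$; the finitely many small values of $\ell$ are absorbed by enlarging the prefactor.

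I expect no substantive obstacle here, as the content of the corollary lives entirely in the two results being combined. The only points requiring genuine care are verifying that the expressions genuinely pair up as operator inverses (an algebraic cancellation, not a norm estimate) and ensuring the two successive inversion bounds stay in the regime $\delta<1$, which holds precisely because $\varepsilon$ decays and $\mathcal{C}$ is a fixed constant independent of the intervals.
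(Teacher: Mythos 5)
Your proposal is correct and follows exactly the route the paper intends: the corollary is stated as an immediate combination of \cref{lemma:approximate-factorization-of-Gibbs-state} (applied to the coarse tripartition $A'A\,|\,B\,|\,CC'$) with \cref{Lemma:DPI_mixing_condition}. The one point the paper leaves implicit --- that the two four-fold products are exact operator inverses of each other, so the estimate must be transported through two Neumann-series perturbation bounds, with the finitely many small $\ell$ absorbed into the prefactor via the uniform a priori bound on the product --- is precisely the bookkeeping you supply, and you supply it correctly.
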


In addition to the above, the authors in \cite[Theorem 6.2]{Bluhm.2022} showed exponential decay of correlations between spatially separated regions. This will become relevant to us later in the form of local indistinguishability, as per \cref{theorem:local_indistinguishability}.

\begin{proposition}[Decay of correlations {\cite[Theorem 6.2]{Bluhm.2022}}]\label{prop:decay_correlations}
    For $\Phi$ a finite-range, translation-invariant interaction over $\Z$ there exist  positive constants $c$, $\alpha$, depending only on $J$, $R$ and $\beta$ such that for every $\Lambda \Subset \Z$ split into three subintervals $\Lambda = ABC$, where $B$ shields $A$ from $C$ and for $O_A \in \cA_A$ and $O_C \in \cA_C$  it holds that
    \begin{align*}
        \left|\Tr_{ABC}[\rho^{ABC} O_A \, O_C ] -  \Tr_{ABC}[\rho^{ABC} O_A  ]  \Tr_{ABC} [\rho^{ABC} O_C  ]  \right| & \leq \norm{O_A} \norm{O_C} \, c \, e^{- \alpha |B|} \, .
    \end{align*}
\end{proposition}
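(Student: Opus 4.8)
The plan is to bound the trace-norm distance $\norm{\rho_{AC}-\rho_A\otimes\rho_C}_1$ between the marginal $\rho_{AC}:=\Tr_B[\rho^{ABC}]$ and the product of its marginals, which suffices for the claim: since $O_A$ and $O_C$ act on disjoint regions, $O_AO_C=O_A\otimes O_C$ and the quantity to estimate equals $\Tr_{AC}[(\rho_{AC}-\rho_A\otimes\rho_C)(O_A\otimes O_C)]$, so H\"older's inequality produces the factor $\norm{O_A}\norm{O_C}$. Centring the observables, i.e.\ subtracting $\Tr[\rho_AO_A]\1$ and $\Tr[\rho_CO_C]\1$, isolates the connected part and lets me assume $\Tr[\rho_AO_A]=\Tr[\rho_CO_C]=0$.

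To control this distance I would expose the quasi-locality of $\rho^{ABC}$ through the expansionals of \eqref{def:expantionals}. Their very definition gives the exact factorisation $e^{-H_{ABC}}=E_{AB,C}\,E_{A,B}\,(e^{-H_A}\otimes e^{-H_B}\otimes e^{-H_C})$, hence $\rho^{ABC}\propto E_{AB,C}E_{A,B}(\rho^A\otimes\rho^B\otimes\rho^C)$. By \cref{prop:estimates_expansionals}(ii) each boundary expansional can be replaced, at the superexponentially small cost $\delta(\ell)$, by a version supported within distance $\ell$ of its boundary inside $B$; choosing $\ell$ so that the two localised operators occupy disjoint end-blocks of $B$ separated by a middle block of length $m\approx|B|-2\ell$, they commute, and all auxiliary norms stay $O(1)$ by \cref{prop:estimates_expansionals}(i) and \cref{prop:BoundingPartialTraceInverses}. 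Tracing out $A$ and $C$ then collapses the correlation, together with its normalisation, to a single connected two-point function evaluated in the interior Gibbs state $\rho^B$, between operators localised at the two ends of $B$ and separated by the middle block.

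The heart of the matter is the decay of this interior two-point function, and here the reduction turns out to be self-referential: clustering across $B$ has been converted into clustering inside $B$, with the geometric separation essentially spent on the localisation. Consequently the exponential rate cannot be extracted from \cref{prop:estimates_expansionals} alone --- fed in directly, that (superexponential) input yields the approximate factorisation \cref{lemma:approximate-factorization-of-Gibbs-state}, a stronger-looking but logically distinct statement. The exponential rate must instead come from the analyticity of the one-dimensional Gibbs state \`a la Araki, equivalently from a spectral gap of the non-normal quantum transfer operator obtained after blocking $R$ consecutive sites into nearest-neighbour super-sites, which bounds the interior correlation by $\lambda^{m}$ for some $\lambda\in(0,1)$. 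Balancing the cheap localisation error against this rate, for instance taking $\ell=\Theta(\log|B|)$ so that $m=(1-o(1))|B|$, then yields the stated $c\,e^{-\alpha|B|}$, with the edge cases of small $A$ or $C$ absorbed into the constants. I expect the transfer-operator step --- establishing the gap and the uniform boundedness of the leading left and right eigenvectors of a non-Hermitian operator --- to be the main obstacle; it is also the structural reason the correlation decay here is only exponential, in contrast to the superexponential factorisation.
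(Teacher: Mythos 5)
The paper does not actually prove this proposition: it is imported verbatim from \cite[Theorem 6.2]{Bluhm.2022}, so there is no in-paper argument to compare against line by line. Judged on its own terms, your sketch has a genuine gap at its centre. The reduction via the exact identity $e^{-H_{ABC}}=E_{AB,C}\,E_{A,B}\,(e^{-H_A}\otimes e^{-H_B}\otimes e^{-H_C})$, the localisation of the two expansionals near their respective cuts, and the $O(1)$ control of the auxiliary factors via \cref{prop:estimates_expansionals} and \cref{prop:BoundingPartialTraceInverses} are all sound and do mirror the setup of the cited proof. But, as you yourself observe, this converts clustering of $\rho^{ABC}$ across $B$ into clustering of $\rho^{B}$ between its two end-blocks, i.e., into another instance of the very statement being proved. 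The entire content of the proposition therefore sits in the step you defer: exponential decay of the interior two-point function, which you attribute to ``a spectral gap of the non-normal quantum transfer operator.'' That gap is nowhere established in your proposal, and it is not a routine Perron--Frobenius step: because the local terms of $H$ do not commute, $e^{-\beta H}$ admits no exact product-of-local-transfer-matrices structure, and constructing and gapping the relevant completely positive, non-self-adjoint transfer map uniformly in $\beta$ requires Araki's bounds on complex-time-evolved local observables --- the same analyticity input that underlies \cref{prop:estimates_expansionals}, used in an essentially different way. Naming the obstacle is not the same as overcoming it, so as written the argument is circular; the reference you would need to close it is, in effect, the theorem itself (or Araki's original clustering result).

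A secondary, fixable quantitative slip: with $\ell=\Theta(\log|B|)$ the localisation error $\delta(\ell)\sim \cG^{\ell}/(\lfloor\ell/R\rfloor+1)!$ is only $e^{-\Theta(\log|B|\,\log\log|B|)}$, i.e., quasi-polynomially small in $|B|$, which is \emph{not} $O(e^{-\alpha|B|})$ and would dominate the claimed bound. You need $\ell=\Omega(|B|/\log|B|)$, or simply a constant fraction of $|B|$, to make $\delta(\ell)\le e^{-\alpha|B|}$; this still leaves a separation $m=(1-o(1))|B|$, so the balance survives, but the parameter choice you state does not deliver the stated rate.
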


The exponential decay of correlations can be used to prove local indistinguishability of Gibbs states, a result that can be found in \cite[Proposition 7.1]{Bluhm.2022}. 

\begin{theorem}[Local indistinguishability {\cite[Proposition 7.1]{Bluhm.2022}}]\label{theorem:local_indistinguishability}
    For $\Phi$ a finite-range, translation-invariant interaction over $\Z$, there exist  positive constants $c'$, $\alpha'$, depending only on $J$, $R$ and $\beta$, such that for every $\Lambda \Subset \Z$ split into three subintervals $\Lambda = ABC$, where $B$ shields $A$ from $C$, and for $O_A \in \cA_A$ and $O_C \in \cA_C$  it holds that
    \begin{align*}
        \left| \Tr_{ABC}[\rho^{ABC} O_A] -  \Tr_{AB}[\rho^{AB} O_A]  \right| & \leq \norm{O_A} \, c' \, e^{-\alpha' |B|} \, , \\[1mm]
         \left| \Tr_{ABC}[\rho^{ABC} O_C] -  \Tr_{BC}[\rho^{BC} O_C]  \right| & \leq \norm{O_C} \, c' \, e^{-\alpha' |B|} \, .
    \end{align*}
\end{theorem}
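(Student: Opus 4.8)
The plan is to derive both estimates from the decay of correlations in \Cref{prop:decay_correlations}; since the second inequality follows from the first upon exchanging the roles of $A$ and $C$, I focus on the first. The key is to compare $\rho^{ABC}$ with the genuine sub-chain Gibbs state $\rho^{AB}$ through a single boundary operator. Writing $\sigma := \rho^{AB}\otimes\rho^{C}$, the factorization $e^{-H_{ABC}} = E_{AB,C}\,e^{-H_{AB}}e^{-H_{C}}$ gives $\rho^{ABC} = E_{AB,C}\,\sigma/\Tr[E_{AB,C}\,\sigma]$, and hence
\begin{equation*}
    \Tr_{ABC}[\rho^{ABC}O_A] = \frac{\Tr[E_{AB,C}\,\sigma\,O_A]}{\Tr[E_{AB,C}\,\sigma]}, \qquad \Tr_{AB}[\rho^{AB}O_A] = \Tr[\sigma\,O_A] ,
\end{equation*}
the second identity because $\Tr_C[\rho^C]=1$ and $O_A$ acts trivially on $C$. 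Using $\|E_{AB,C}^{-1}\|\le\cG$ from \Cref{prop:estimates_expansionals}(i) together with the positivity of $e^{-H_{ABC}}$, the denominator $\Tr[E_{AB,C}\,\sigma]=Z_{ABC}/(Z_{AB}Z_C)$ is bounded below by $1/\cG$.

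Next I would localize the boundary operator away from $A$. Letting $\tilde B$ be the part of $B$ of length $\lfloor|B|/2\rfloor$ adjacent to $C$, \Cref{prop:estimates_expansionals}(ii), applied so that we extend only on the side towards $A$ and leave $C$ untouched, produces $\tilde E := E_{\tilde B,C}\in\cA_{\tilde B C}$, supported away from $A$, with $\|E_{AB,C}-\tilde E\|\le\delta(\lfloor|B|/2\rfloor)$ and $\|\tilde E\|\le\cG$. Substituting $\tilde E$ for $E_{AB,C}$ in numerator and denominator costs at most $\delta(\lfloor|B|/2\rfloor)\,\|O_A\|$ and $\delta(\lfloor|B|/2\rfloor)$ respectively, so together with the denominator bound it suffices to control the localized ratio $\Tr[\tilde E\,\sigma\,O_A]/\Tr[\tilde E\,\sigma]$.

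To invoke decay of correlations I would trace out $C$ against $\rho^C$: the map $X\mapsto\Tr_C[X(\1_{\tilde B}\otimes\rho^C)]$ is unital and completely positive, hence operator-norm contractive, so $\Phi := \Tr_C[\tilde E(\1\otimes\rho^C)]\in\cA_{\tilde B}$ satisfies $\|\Phi\|\le\|\tilde E\|\le\cG$. Since $\sigma$ is a product across $AB\,|\,C$ and $\tilde E$ commutes with $O_A$, a direct computation gives $\Tr[\tilde E\,\sigma\,O_A]=\Tr_{AB}[\rho^{AB}O_A\Phi]$ and $\Tr[\tilde E\,\sigma]=\Tr_{AB}[\rho^{AB}\Phi]$. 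As $O_A\in\cA_A$ and $\Phi\in\cA_{\tilde B}$ are separated in $\rho^{AB}$ by the buffer $B\setminus\tilde B$ of size $\lceil|B|/2\rceil$, \Cref{prop:decay_correlations} applied to $\rho^{AB}$ yields
\begin{equation*}
    \big|\Tr_{AB}[\rho^{AB}O_A\Phi] - \Tr_{AB}[\rho^{AB}O_A]\,\Tr_{AB}[\rho^{AB}\Phi]\big| \le c\,\|O_A\|\,\|\Phi\|\,e^{-\alpha\lceil|B|/2\rceil} .
\end{equation*}
Dividing by $\Tr_{AB}[\rho^{AB}\Phi]=\Tr[\tilde E\,\sigma]\ge 1/(2\cG)$ identifies the localized ratio with $\Tr_{AB}[\rho^{AB}O_A]$ up to an error of order $\cG^{2}\|O_A\|\,e^{-\alpha\lceil|B|/2\rceil}$.

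Collecting the localization error $\delta(\lfloor|B|/2\rfloor)$, which decays superexponentially and hence faster than any exponential, together with the decay-of-correlations error, and absorbing all constants into new $c',\alpha'$ depending only on $J,R,\beta$, gives the claimed bound $\|O_A\|\,c'e^{-\alpha'|B|}$. I expect the main obstacle to be the localization step: \emph{a priori} $E_{AB,C}$ acts on all of $\cH_{ABC}$ and only becomes harmless because \Cref{prop:estimates_expansionals}(ii) confines it to a neighborhood of the $B$--$C$ junction, so some care is needed that the one-sided truncation leaving $C$ intact still yields $\delta(\lfloor|B|/2\rfloor)$ independently of $|C|$. The conceptual crux is then the contraction $\|\Phi\|\le\|\tilde E\|$, which turns the correlation of $O_A$ with the nonlocal operator $\tilde E$ into a scalar connected correlation of $O_A$ with the single reduced observable $\Phi$ --- precisely the quantity governed by decay of correlations.
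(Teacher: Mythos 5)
The paper never proves this statement itself --- it is imported wholesale from \cite[Proposition 7.1]{Bluhm.2022} --- so there is no in-paper proof to compare against. Your argument is a correct, self-contained reconstruction from ingredients the paper does supply, and it follows the standard route (which is also that of the cited reference): write $\rho^{ABC}$ as the normalized perturbation $E_{AB,C}\,(\rho^{AB}\otimes\rho^{C})/\Tr[E_{AB,C}\,\rho^{AB}\otimes\rho^{C}]$, localize the expansional near the $B$--$C$ junction via \cref{prop:estimates_expansionals}(ii), collapse it to a single observable $\Phi\in\cA_{\tilde B}$ using the unital completely positive map $X\mapsto\tr_C[X(\1\otimes\rho^C)]$ (exactly the contractivity observation of Eq.~\eqref{eq:contractiveExpectation}), and invoke \cref{prop:decay_correlations} for $\rho^{AB}$ across the buffer $B\setminus\tilde B$. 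The individual steps check out: the lower bound $\Tr[E_{AB,C}\sigma]=Z_{ABC}/(Z_{AB}Z_C)\ge 1/\cG$ does follow from $\norm{E_{AB,C}^{-1}}\le\cG$, and the identity $\Tr[\tilde E\sigma O_A]=\Tr_{AB}[\rho^{AB}O_A\Phi]$ is valid because $O_A$, $\rho^{AB}$, $\rho^C$ and $\tilde E$ have the required support structure. The one substantive caveat is the one you flag yourself: \cref{prop:estimates_expansionals}(ii) as stated requires $\ell\le|X|$ \emph{and} $\ell\le|Y|$, so the one-sided truncation $E_{AB,C}\mapsto E_{\tilde B,C}$ with $\widetilde Y=\emptyset$ formally needs $|C|\ge\lfloor|B|/2\rfloor$. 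The underlying Araki estimates give the bound with decay governed only by the distance to the extended side, and the paper itself applies the proposition in precisely this one-sided manner in the proof of \cref{th:purity}, so this is a presentational rather than a mathematical gap --- but a careful write-up should either cite the sharper localization statement or treat small $|C|$ separately. Two cosmetic points: $\Tr[\tilde E\sigma]$ is a priori complex, so the lower bound $1/(2\cG)$ should be asserted for its modulus, and the regime where $\delta(\lfloor|B|/2\rfloor)>1/(2\cG)$ (small $|B|$) must be absorbed into the constant $c'$ by noting the left-hand side is trivially at most $2\norm{O_A}$.
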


All these results will be essential for our proof that the purity of a Gibbs state of a translation-invariant, local Hamiltonian in 1D decays exponentially fast with the size of the middle system (cf. \Cref{sec:purity}). 

\begin{remark}
    It is possible to extend all the results to the context of short-range (i.e. exponentially-decaying) interactions, as shown in \cite{Garcia.2023,Capel.2024}. We review all these extensions and provide necessary new ones in \cref{sec:shortrange}.
\end{remark}

%%%%%%%%%%%%%%%%%%%%%%%%%%%%%%%%%%%%%%%%%%%%%%%%%%%%%%%%%%%
\section{Main results}
%%%%%%%%%%%%%%%%%%%%%%%%%%%%%%%%%%%%%%%%%%%%%%%%%%%%%%%%%%%

%
\subsection{An upper bound on the DPI for the Belavkin Staszewski entropy}\label{sec:upper-bound-DPI-BS-entropy}
The search for the presence and tightness of data-processing inequality (DPI) in entropy measures, namely that they cannot increase under quantum channels, constitute relevant problems in quantum Shannon theory. For the Umegaki relative entropy, it was originally shown by Petz \cite{Petz.1988,Petz.2003} that saturation of the DPI is equivalent to recoverability of one state in terms of the other, through the so-called \textit{Petz recovery map}. A strengthening of the DPI for the relative entropy, namely lower bounds to the DPI after applying a conditional expectation (or in general, a quantum channel) have been obtained in the past few years \cite{Sutter.2017,Carlen.2020,Gao.2021}, starting with the breakthrough of \cite{Fawzi.2015}. These results have also been extended to the larger families of standard f-divergences \cite{Carlen.2018,Hiai.2017}. Results in the converse direction, i.e. upper bounds for the DPI, can be found e.g. in \cite{Bluhm.2023}. 

We can now ask a similar question for the BS-entropy. This quantity also satisfies a DPI, which was proven to be saturated if and only if \cite{Bluhm.2020}
\begin{equation}\label{eq:asymmetric_BS-recovery}
     \widehat D(\rho \Vert \sigma) - \widehat D(\cE(\rho) \Vert \cE(\sigma)) = 0 \Leftrightarrow  \sigma =  \rho \, \mathcal{E}(\rho)^{-1} \cE(\sigma) = \cB^\rho_{\cE}(\mathcal{E}(\sigma)) \, . 
\end{equation}
The term on the right-hand side above was coined \textit{BS-recovery condition}. In the same paper, a strengthened DPI for the BS-entropy (and any maximal f-divergence) in terms of the distance between a state and its BS-recovery was provided. 

Here, we would like to complement this picture by giving an upper bound on the DPI of the BS-entropy. More specifically, we provide an upper bound for $\widehat D(X \Vert Y) - \widehat D(\cE(X) \Vert \cE(Y))$ in terms of distance measures that explicitly feature the recovery condition.
We use the usual representation of the logarithm we want to recall here. For $X > 0$, we have that 
\begin{equation}\label{eq:integral-representation-logarithm}
    \log(X) = \int_{0}^\infty \left( \frac{1}{t + 1} - \frac{1}{t + X} \right) \, dt \, . 
\end{equation}
Using this representation we derive the following simple but essential lemma. 
\begin{lemma}\label{lemma:difference-estimate}
     Let $X, Y \in \cB(\cH)$ with $X > 0$, $Y > 0$ and $\cE: \cB(\cH) \to \cB(\cH)$ a conditional expectation. We further set $Z = X^{1/2} Y^{-1} X^{1/2}$, $X_{\cE} := \cE(X)$, $Y_{\cE} := \cE(Y)$ and finally $Z_{\cE} = X_{\cE}^{1/2} Y_{\cE}^{-1} X_{\cE}^{1/2}$. For $T > 0$, we find that
     \begin{equation*}
         \int_{0}^T \Tr[X_{\cE} \frac{1}{t + Z_{\cE}} - X \frac{1}{t + Z}] dt = \int_{0}^T \Tr[X_{\cE}^{1/2} \frac{1}{t + Z_{\cE}} X_{\cE}^{-1/2}(XY^{-1} - X_{\cE} Y_{\cE}^{-1}) X^{1/2} \frac{1}{t + Z} X^{1/2}] \, dt \, . 
     \end{equation*}
\end{lemma}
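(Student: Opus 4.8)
The plan is to establish the integrand identity pointwise in $t$ and then integrate over $[0,T]$; the integral itself plays no role beyond bookkeeping. The enabling algebraic observation is a ``completion of the square'': since $X^{1/2}(tX^{-1}+Y^{-1})X^{1/2} = t\,\1 + X^{1/2}Y^{-1}X^{1/2} = t + Z$, setting $A := tX^{-1}+Y^{-1} > 0$ and analogously $A_{\cE} := tX_{\cE}^{-1}+Y_{\cE}^{-1} > 0$ gives the two rewritings
\[
X^{1/2}\tfrac{1}{t+Z}X^{1/2} = A^{-1}, \qquad X_{\cE}^{1/2}\tfrac{1}{t+Z_{\cE}}X_{\cE}^{-1/2} = A_{\cE}^{-1}X_{\cE}^{-1}.
\]
Using cyclicity of the trace, the left-hand integrand collapses to $\Tr[A_{\cE}^{-1}] - \Tr[A^{-1}]$, while the right-hand integrand becomes $\Tr[A_{\cE}^{-1}X_{\cE}^{-1}(XY^{-1}-X_{\cE}Y_{\cE}^{-1})A^{-1}] = \Tr[A_{\cE}^{-1}MA^{-1}]$ with $M := X_{\cE}^{-1}XY^{-1}-Y_{\cE}^{-1}$.

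Next I would invoke the resolvent identity $A_{\cE}^{-1} - A^{-1} = A_{\cE}^{-1}(A-A_{\cE})A^{-1}$, which after taking the trace yields $\Tr[A_{\cE}^{-1}]-\Tr[A^{-1}] = \Tr[A_{\cE}^{-1}(A-A_{\cE})A^{-1}]$. Comparing the two sides of the desired identity then reduces everything to showing $\Tr[A_{\cE}^{-1}(A-A_{\cE}-M)A^{-1}] = 0$. A direct computation, using $X^{-1}-X_{\cE}^{-1} = X_{\cE}^{-1}(X_{\cE}-X)X^{-1}$ and $\1 - X_{\cE}^{-1}X = X_{\cE}^{-1}(X_{\cE}-X)$, arranges the $t$-dependent and $t$-independent pieces into the single clean factor
\[
A - A_{\cE} - M = X_{\cE}^{-1}(X_{\cE}-X)\,A,
\]
so that, cancelling $A$ against $A^{-1}$, the residual trace is simply $\Tr[A_{\cE}^{-1}X_{\cE}^{-1}(X_{\cE}-X)]$.

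The final and conceptually crucial step is to see that this residual vanishes, and this is exactly where the structure of $\cE$ as a conditional expectation onto a subalgebra $\cN$ enters. Since $X_{\cE}, Y_{\cE} \in \cN$, the element $N := A_{\cE}^{-1}X_{\cE}^{-1}$ also lies in $\cN$; meanwhile $X_{\cE}-X = \cE(X)-X$ is Hilbert--Schmidt orthogonal to $\cN$ because $\cE$ is the orthogonal projection onto $\cN$, equivalently $\cE(X_{\cE}-X) = X_{\cE}-X_{\cE} = 0$ together with trace preservation. Hence $\Tr[N(X_{\cE}-X)] = \Tr[N\,\cE(X_{\cE}-X)] = 0$ for every $N \in \cN$, in particular for $N = A_{\cE}^{-1}X_{\cE}^{-1}$. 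This proves the pointwise identity, and integrating over $t \in [0,T]$ gives the claim.

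I expect the main obstacle to be neither the resolvent manipulation nor the completion-of-the-square rewriting, both of which are routine, but rather correctly isolating the residual term as a product of an element of $\cN$ with $X_{\cE}-X$: the bracket $A-A_{\cE}-M$ must be coaxed into the factored form $X_{\cE}^{-1}(X_{\cE}-X)A$, and only then does the orthogonality of the conditional expectation deliver the vanishing. This is also precisely the step that explains why the resulting bound is special to conditional expectations and would not follow from the algebra alone for an arbitrary channel.
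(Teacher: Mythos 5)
Your proof is correct and rests on the same two ingredients as the paper's: the resolvent identity $A_{\cE}^{-1}-A^{-1}=A_{\cE}^{-1}(A-A_{\cE})A^{-1}$ and the trace/bimodule property of the conditional expectation, $\Tr[N(X_{\cE}-X)]=0$ for $N\in\cN$ (your $N=A_{\cE}^{-1}X_{\cE}^{-1}=X_{\cE}^{1/2}(t+Z_{\cE})^{-1}X_{\cE}^{-1/2}$ is exactly the element the paper uses). The only difference is organizational: the paper applies the conditional-expectation step first, replacing $X_{\cE}$ by $X$ inside both traces, so that the resolvent identity on $t+XY^{-1}$ versus $t+X_{\cE}Y_{\cE}^{-1}$ immediately yields the claimed form, whereas you apply the resolvent identity to the positive operators $tX^{-1}+Y^{-1}$ first and invoke the conditional expectation at the end to cancel the residual.
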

\begin{proof}
    We find by cyclicity of the trace and the fact that $\cE$ is a conditional expectation that
    \begin{align*}
        \int_{0}^T \Tr[X_{\cE} \frac{1}{1 + Z_{\cE}} - X \frac{1}{t + Z}] dt &= \int_{0}^T \Tr[ X_{\cE}^{1/2}\frac{1}{t + Z_{\cE}} X_{\cE}^{-1/2} X - X^{1/2} \frac{1}{t + Z} X^{-1/2} X] \, dt \, . \\
        &= \int_{0}^T \Tr[\left(X_{\cE}^{1/2}\frac{1}{t + Z_{\cE}} X_{\cE}^{-1/2} - X^{1/2} \frac{1}{t + Z} X^{-1/2}\right) X] \, dt  \, . 
    \end{align*}
    Recalling that, for any two invertible matrices $A$ and $B$, one can write $A^{-1}-B^{-1}= A^{-1}(B-A)B^{-1}$, we note that 
    \begin{align*}
        X_{\cE}^{1/2}&\frac{1}{t + Z_{\cE}} X_{\cE}^{-1/2} - X^{1/2} \frac{1}{t + Z} X^{-1/2}\\
        &= X_{\cE}^{1/2}\frac{1}{t + Z_{\cE}} X_{\cE}^{-1/2}(X^{1/2}(Z + t) X^{-1/2} - X_{\cE}^{1/2}(Z_{\cE} + t) X_{\cE}^{-1/2}) X^{1/2} \frac{1}{t + Z} X^{-1/2}\\
        &= X_{\cE}^{1/2}\frac{1}{t + Z_{\cE}} X_{\cE}^{-1/2}(XY^{-1} - X_{\cE} Y_{\cE}^{-1}) X^{1/2} \frac{1}{t + Z} X^{-1/2}
    \end{align*}
    and inserting it into the above equation gives the claim.
\end{proof}
With this lemma at hand, we are set to prove the upper bound on the BS-entropy DPI.
\begin{theorem}\label{theorem:upper-bound-DPI-of-BS-entropy}
    Let $X, Y \in \cB(\cH)$ with $X > 0$, $Y > 0$ and $\cE:\cB(\cH) \to \cB(\cH)$ a conditional expectation. Then, we have
    \begin{equation}\label{eq:DPI-BS-entropy-estimate-1}
        \begin{aligned}
            \widehat D(X \Vert Y) - \widehat D(\cE(X) \Vert \cE(Y)) &\le \norm{X^{-1/2} Y X^{-1/2}} \norm{X}_1 \norm{\cE(X)^{1/2}}\norm{\cE(X)^{-1/2}}\\
            &\hphantom{=} \cdot \norm{\cE(Y)^{-1}\cE(X)}\norm{X Y^{-1}\cE(Y)\cE(X)^{-1} - \1} \, . 
        \end{aligned}
    \end{equation}
    and 
    \begin{equation}\label{eq:DPI-BS-entropy-estimate-2}
        \begin{aligned}
            \widehat D(X \Vert Y) - \widehat D(\cE(X) \Vert \cE(Y)) &\le \norm{X^{-1/2} Y X^{-1/2}} \norm{X}_1 \norm{\cE(X)^{1/2}}\norm{\cE(X)^{-1/2}}\\
            &\hphantom{=} \cdot \norm{Y^{-1}X}\norm{YX^{-1} \cE(X) \cE(Y)^{-1}- \1} \, . 
        \end{aligned}
    \end{equation}
\end{theorem}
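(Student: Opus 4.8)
The plan is to start from the integral representation \eqref{eq:integral-representation-logarithm} of the logarithm and feed it into \Cref{lemma:difference-estimate}. Writing $Z = X^{1/2}Y^{-1}X^{1/2}$, $X_\cE = \cE(X)$, $Y_\cE = \cE(Y)$ and $Z_\cE = X_\cE^{1/2}Y_\cE^{-1}X_\cE^{1/2}$ as in the lemma, I expand $\widehat D(X\|Y) = \Tr[X\log Z]$ and $\widehat D(\cE(X)\|\cE(Y)) = \Tr[X_\cE\log Z_\cE]$ through \eqref{eq:integral-representation-logarithm}. The two $\tfrac{1}{t+1}$ terms carry coefficients $\Tr[X]$ and $\Tr[X_\cE]$, which agree because $\cE$ is trace preserving, so they cancel and
\[
\widehat D(X\|Y) - \widehat D(\cE(X)\|\cE(Y)) = \int_0^\infty \Tr\!\left[X_\cE\tfrac{1}{t+Z_\cE} - X\tfrac{1}{t+Z}\right]dt .
\]
This is exactly the left-hand side of \Cref{lemma:difference-estimate} in the limit $T\to\infty$ (convergence is harmless: after applying the lemma the integrand decays like $t^{-2}$). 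Applying the lemma rewrites the right-hand side as
\[
\int_0^\infty \Tr\!\left[X_\cE^{1/2}\tfrac{1}{t+Z_\cE}X_\cE^{-1/2}\,(XY^{-1}-X_\cE Y_\cE^{-1})\,X^{1/2}\tfrac{1}{t+Z}X^{1/2}\right]dt .
\]

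Next I would isolate the recovery term. For \eqref{eq:DPI-BS-entropy-estimate-1} I factor $XY^{-1}-X_\cE Y_\cE^{-1} = (XY^{-1}Y_\cE X_\cE^{-1}-\1)\,X_\cE Y_\cE^{-1}$, recognising the left factor as $XY^{-1}\cE(Y)\cE(X)^{-1}-\1$. Then I apply the trace-norm Hölder inequality $|\Tr[ABCD]|\le \norm{A}\norm{B}\norm{C}\norm{D}_1$, putting the trace norm on the positive block $X^{1/2}\tfrac{1}{t+Z}X^{1/2}$ (so $\norm{\cdot}_1 = \Tr[X(t+Z)^{-1}]$) and operator norms on the other three factors. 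Using $\norm{X_\cE^{1/2}\tfrac{1}{t+Z_\cE}X_\cE^{-1/2}}\le \norm{X_\cE^{1/2}}\,\norm{X_\cE^{-1/2}}\,\tfrac{1}{t+\lambda_{\min}(Z_\cE)}$ and $\norm{X_\cE Y_\cE^{-1}}=\norm{\cE(Y)^{-1}\cE(X)}$, this pulls out precisely the operator-norm factors of \eqref{eq:DPI-BS-entropy-estimate-1} and reduces the claim to the scalar--operator estimate
\[
\int_0^\infty \frac{1}{t+\lambda_{\min}(Z_\cE)}\,\Tr\!\left[X\tfrac{1}{t+Z}\right]dt \;\le\; \norm{Z^{-1}}\,\norm{X}_1 .
\]
Estimate \eqref{eq:DPI-BS-entropy-estimate-2} is obtained identically, only factoring $XY^{-1}-X_\cE Y_\cE^{-1} = -\,XY^{-1}\,(YX^{-1}\cE(X)\cE(Y)^{-1}-\1)$ instead; the pulled-out norm is then $\norm{XY^{-1}}=\norm{Y^{-1}X}$ (equal since $X,Y$ are Hermitian) and the recovery norm is $\norm{YX^{-1}\cE(X)\cE(Y)^{-1}-\1}$.

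The heart of the argument, and the step I expect to be the main obstacle, is the final resolvent integral. Bounding the two resolvents independently is too lossy, since $\int_0^\infty \tfrac{dt}{(t+a)(t+z)} = \tfrac{\log(z/a)}{z-a}$ is \emph{not} dominated by $1/z_{\min}$ in general; the resolvents must be kept coupled. The decisive input is an operator-monotonicity observation: as $\lambda_{\min}(Z)=\lambda_{\min}(XY^{-1})$ is the smallest generalised eigenvalue of the pencil $(X,Y)$, one has $X\ge \lambda_{\min}(Z)\,Y$, and applying the positive map $\cE$ gives $X_\cE\ge \lambda_{\min}(Z)\,Y_\cE$, whence $\lambda_{\min}(Z_\cE)=\lambda_{\min}(X_\cE Y_\cE^{-1})\ge \lambda_{\min}(Z)$. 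Setting $a:=\lambda_{\min}(Z_\cE)\ge b:=\lambda_{\min}(Z)$, every eigenvalue $z\ge b$ of $Z$ satisfies $(t+a)(t+z)\ge (t+b)^2$, so $\int_0^\infty \tfrac{dt}{(t+a)(t+z)}\le \int_0^\infty \tfrac{dt}{(t+b)^2} = \tfrac1b = \norm{Z^{-1}}$; diagonalising $Z$ then yields $\int_0^\infty \tfrac{1}{t+a}\,\Tr[X(t+Z)^{-1}]\,dt\le \norm{Z^{-1}}\,\Tr[X]$, the missing estimate. Combining the pieces gives both bounds, and the only remaining work is the routine justification of the trace--integral interchange and of the integrand's behaviour as $t\to 0$ and $t\to\infty$.
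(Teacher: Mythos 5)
Your proposal is correct and follows essentially the same route as the paper's proof: the integral representation of the logarithm combined with \cref{lemma:difference-estimate}, H\"older's inequality, the factorisation of $XY^{-1}-\cE(X)\cE(Y)^{-1}$ into the recovery term times a bounded factor, and the key monotonicity $\lambda_{\min}(Z_\cE)\ge\lambda_{\min}(Z)$ (which the paper obtains as the DPI for $(A,B)\mapsto\|A^{1/2}B^{-1}A^{1/2}\|$). The only cosmetic difference is that the paper splits off $\|X^{1/2}\|_2^2=\|X\|_1$ and bounds both resolvents by $(t+\|Z^{-1}\|^{-1})^{-1}$ before integrating $\int_0^\infty(t+b)^{-2}\,dt=1/b$, which already gives the claimed constant without your diagonalisation of $Z$.
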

\begin{proof}
    Both inequalities follow from the same estimate through the application of Hölder inequality. To obtain this expression we employ \cref{eq:integral-representation-logarithm} using the notation of \cref{lemma:difference-estimate}
    \begin{align*}
        \widehat D(X \Vert Y) - \widehat D(X_{\cE} \Vert Y_{\cE}) = \lim_{T \to \infty} \int_{0}^T \Tr[X_{\cE} \frac{1}{t + Z_{\cE}} - X \frac{1}{t + Z}] \, dt \, . 
    \end{align*}
    Further, using \cref{lemma:difference-estimate} and H\"older's inequality we can upper bound
    \begin{align*}
        &\int_{0}^T \Tr[X_{\cE} \frac{1}{t + Z_{\cE}} - X \frac{1}{t + Z}] dt \\
        &\hspace{1.5cm} \le \int_{0}^T \norm{X_{\cE}^{-1/2}} \frac{1}{t + \norm{Z_{\cE}^{-1}}^{-1}} \norm{X_{\cE}^{1/2}} \norm{X Y^{-1} - X_{\cE}Y_{\cE}^{-1}} \norm{X^{1/2}}_2 \frac{1}{t + \norm{Z^{-1}}^{-1}}\norm{X^{1/2}}_2 \, dt \, .
    \end{align*}
    Note also that $\norm{\frac{1}{t + Z}} = \frac{1}{t + \norm{Z^{-1}}^{-1}}$ and analogously for the fraction including $Z_{\cE}$. Utilising DPI for $(A, B) \mapsto \norm{A^{1/2}B^{-1}A^{1/2}}$ (see e.g. \cite[Proposition 4.7]{Tomamichel.2016}), we can estimate $\frac{1}{t + \norm{Z_{\cE}^{-1}}^{-1}} \le \frac{1}{t + \norm{Z^{-1}}^{-1}}$. 
        \begin{align*}
       &  \int_{0}^T \Tr[X_{\cE} \frac{1}{t + Z_{\cE}} - X \frac{1}{t + Z}] dt \\
       &\hspace{2.5cm}\le \norm{X_{\cE}^{-1/2}}\norm{X_{\cE}^{1/2}} \norm{X^{1/2}}_2^2 \norm{X Y^{-1} - X_{\cE}Y_{\cE}^{-1}}   \int_{0}^T   \frac{1}{\left(t + \norm{Z^{-1}}^{-1}\right)^2 }   \, dt \, .
    \end{align*}
    Finally, the fact that $\norm{X^{1/2}}_2^2 = \norm{X}_1$, integrating both sides, and then taking the limit $T \to \infty$ give
    \begin{align}\label{eq:upperBoundDPI}
        &\widehat D(X \Vert Y) - \widehat D(X_{\cE} \Vert Y_{\cE}) \le \norm{X_{\cE}^{-1/2}} \norm{X^{-1/2}Y X^{-1/2}}  \norm{X_{\cE}^{1/2}} \norm{X Y^{-1} - X_{\cE}Y_{\cE}^{-1}} \norm{X}_1. 
    \end{align}
\end{proof}

\begin{remark}
    The above result can readily be extended to quantum channels instead of conditional expectations. The statement and proof of that result can be found in the appendix in \cref{corollary:extending-upper-bound-bs-entropy-to-channels}.
\end{remark}

\begin{remark}\label{rem:complementing-perspective-BS-DPI}
    As mentioned in the introduction, the above result complements the findings of \cite{Bluhm.2020}, providing a comprehensive perspective on the data processing inequality of the BS-entropy. Using the notation of the theorem for quantum states $\rho$ and $\sigma$, and the BS-recovery for $\sigma$ denoted by $\cB^\rho_{\cE}(\cE(\sigma)) = \rho \, \mathcal{E}(\rho)^{-1} \cE(\sigma)$, the following chain of inequalities hold:
    \begin{align*}
        \left(\frac{\pi}{8}\right)^4 \norm{\rho^{-1/2} \sigma \rho^{-1/2}}^{-4} & \norm{\cE(\rho)^{-1} }^{-2} \norm{\cB^{\rho}_{\cE}(\cE(\sigma)) - \sigma}_2^4 \\
        &\le \widehat{D}(\rho \Vert \sigma) - \widehat{D}(\cE(\rho) \Vert \cE(\sigma)) \le \\
        &\hphantom{\le \widehat{D}}\norm{\rho^{-1/2}\sigma\rho^{-1/2}} \norm{\cE(\rho)^{-1} }^{1/2} \norm{\cE(\sigma)^{-1} \cE(\rho)} \norm{\sigma^{-1}}\norm{   \cB_{\cE}^\rho(\cE(\sigma)) - \sigma} \, , 
    \end{align*}
    where the upper bound follows from Equation~\eqref{eq:upperBoundDPI}.
\end{remark}

%%%%%%
%%%%%%

\subsection{Superexponential decay of the BS-CMI}\label{sec:superexponential-decay-BS-CMI}
In this section, we prove one of our main results: the superexponential decay of the three definitions of the BS-CMI. The proof crucially involves Theorem \ref{theorem:upper-bound-DPI-of-BS-entropy}, as well as the results from Sec. \ref{sec:approximate-factorisation}. Before we do so, we need a technical lemma concerning norm estimates of functions of Gibbs states of local translation invariant Hamiltonians on a spin chain. 

\begin{lemma}\label{lemma:norm-estimates-of-functions-of-Gibbs-states}
     Let $\Phi$ be a finite-range, translation invariant-interaction over $\Z$ and $\Lambda \Subset \Z$ split into three subintervals $\Lambda = ABC$, where $B$ shields $A$ from $C$, with local Gibbs state $\rho^{\Lambda} = \rho_{ABC}$. Then
     \begin{enumerate}
        \item $\norm{\rho_A \rho_{B} \rho_{AB}^{-1}} \le \cC$, $\norm{\rho_{AB} \rho_A^{-1} \rho_{B}^{-1}} \le \cC$, $\norm{\rho_{ABC}\rho_B^{-1}} \le \cC$ ,
        \item $\norm{\rho_B^{-1}}\norm{\rho_B}\le \cC e^{\alpha |B|}$ ,
     \end{enumerate}
     where the constants $\cC, \alpha$ only depend on interaction strength $J$ and range $R$ of $\Phi$.
     Note that only $\rho_{ABC}$ is the Gibbs state of a local Hamiltonian, while all other states are marginals of that state.
     Furthermore, these conditions still hold if $\rho_{ABC}=\tr_{A'C'}[\rho_{A'ABCC'}]$ where $\rho_{A'ABCC'}$ is the Gibbs state on the larger system $A'ABCC'$.
\end{lemma}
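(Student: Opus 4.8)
The plan is to express every marginal through Araki's expansionals, by peeling the (rescaled) global Gibbs weight at the two cuts between $A$ and $B$ and between $B$ and $C$. Factoring \eqref{def:expantionals} successively at the $B$-$C$ and then the $A$-$B$ cut gives the exact identity
\begin{equation*}
e^{-H_{ABC}} = E_{A,B}\, e^{-H_A}\, e^{-H_B}\, e^{-H_C}\, E_{AB,C}^{*}\,,
\end{equation*}
and tracing out the outer regions turns the bulk factors and boundary expansionals into operators on the surviving region. Concretely, writing $Z_X:=\Tr[e^{-H_X}]$, one obtains
\begin{equation*}
\rho_B = \tfrac{1}{Z_{ABC}}\, V_B\, e^{-H_B}\, W_B\,,\qquad V_B := \tr_A[E_{A,B}\, e^{-H_A}]\,,\quad W_B := \tr_C[e^{-H_C}\, E_{AB,C}^{*}]\,,
\end{equation*}
and likewise $\rho_{AB}=\tfrac1{Z_{ABC}}\,e^{-H_{AB}}W_B$ and $\rho_A=\tfrac{Z_{BC}}{Z_{ABC}}\,\Xi_A\,e^{-H_A}$ with $\Xi_A:=\tr_{BC}[E_{A,BC}\,\rho^{BC}]$. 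The decisive structural point is that the bulk weights $e^{-H_A},e^{-H_B},e^{-H_C}$ live on disjoint regions, hence commute, and the shared boundary dressings (e.g. the factor $W_B$ at the $B$-$C$ cut) are literally identical across the different marginals. Therefore, in every product below the bulk weights annihilate their inverses and the shared dressings cancel between numerator and denominator, leaving a bounded product of boundary operators times a partition-function ratio. The operator-norm bounds on the dressings, $\norm{V_B'}\le\cC$ with $V_B'=\tr_A[E_{A,B}\rho^A]$ (and analogues), come from \cref{prop:estimates_expansionals,prop:BoundingPartialTraceInverses} applied with the roles of the two blocks adjacent to a cut interchanged or merged, while upper bounds such as $\norm{\Xi_A}\le\cG$ follow from the contractivity in \eqref{eq:contractiveExpectation}.

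For item 1 I would insert these decompositions and perform the cancellations. For $\rho_A\rho_B\rho_{AB}^{-1}$ the factor $W_B$ and all bulk weights drop out, yielding $\tfrac{Z_A Z_{BC}}{Z_{ABC}}\,\Xi_A\,V_B'\,E_{A,B}^{-1}$; the first two factors are bounded by $\cG$ and $\cC$, and the scalar $\tfrac{Z_A Z_{BC}}{Z_{ABC}}$ is bounded by a constant via Peierls--Bogoliubov, since the interaction across the $A|BC$ cut has size-independent norm. The second estimate is the inverse of the first (the marginals on $A$ and $B$ commute), so it needs in addition $\norm{\Xi_A^{-1}}\le\cC$, which is precisely \cref{prop:BoundingPartialTraceInverses} for the two-block split $A|BC$. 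For $\norm{\rho_{ABC}\rho_B^{-1}}$ I would factor it as $\rho_{ABC}\rho_{AB}^{-1}\cdot\rho_{AB}\rho_B^{-1}$; after the shared outer dressings cancel, each factor collapses to the two-region form $\rho^{XY}(\rho^{XY}_Y)^{-1}=E_{X,Y}\,\nu_Y^{-1}\,e^{-H_X}$ for adjacent blocks $X,Y$, and this is bounded by $\cG\cC$ once one uses the sharp inequality $\norm{e^{-H_X}}\le Z_X$ to absorb $\norm{\nu_Y^{-1}}\le\cC/Z_X$; this is exactly the step that prevents a spurious factor exponential in $|A|$.

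Item 2 I would phrase as a bound on the condition number $\kappa(\rho_B)=\norm{\rho_B}\norm{\rho_B^{-1}}$ (recall $\norm{\rho_B}\le1$). From $\rho_B=\tfrac1{Z_{ABC}}V_B\,e^{-H_B}\,W_B$ and submultiplicativity, the prefactor $Z_{ABC}$ cancels between $\norm{\rho_B}$ and $\norm{\rho_B^{-1}}$, giving $\kappa(\rho_B)\le\kappa(V_B)\,\kappa(W_B)\,\kappa(e^{-H_B})$. The boundary condition numbers are bounded by $\cC^2$ through \cref{prop:BoundingPartialTraceInverses}, while $\kappa(e^{-H_B})=e^{\lambda_{\max}(H_B)-\lambda_{\min}(H_B)}\le e^{2\norm{H_B}}\le e^{2J|B|}$. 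The crucial observation is that taking the ratio also cancels the dimension factor $d^{|B|}$ hidden in $\norm{\rho_B^{-1}}$, so the final bound $\cC e^{2J|B|}$ depends only on $J$ and $R$ as claimed.

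The extension to $\rho_{A'ABCC'}$ is verbatim: tracing out $A'$ and $C'$ merely replaces $V_B,W_B,\Xi_A$ by their counterparts with merged outer blocks $A'A$ and $CC'$, which are controlled by the same propositions applied to the split $(A'A)\,|\,B\,|\,(CC')$, and the merged dressings again cancel or contribute bounded factors. The main obstacle is the non-commutative bookkeeping: for each marginal one must choose the peeling order that places every bulk weight adjacent to the inverse it should annihilate and makes the shared boundary dressings coincide exactly, and one must check that each surviving boundary operator is of a type for which \cref{prop:BoundingPartialTraceInverses} --- possibly in a reflected or merged-block form, including the combined expansional of \eqref{equa:BoundingPartialTraceInverses3} --- supplies a two-sided bound. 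Equally essential is to never estimate a partition-function ratio or a weight $e^{-H_X}$ naively, which would introduce factors exponential in $|A|$ or $|C|$, but instead to use Peierls--Bogoliubov for the ratios and $\norm{e^{-H_X}}\le Z_X$ for the weights, so that all size dependence is confined to the single factor $e^{2J|B|}$ in item 2 and disappears entirely in item 1.
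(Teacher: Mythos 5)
Your overall strategy---peel the Gibbs weight into expansionals at the two cuts, normalise all bulk weights into states $\rho^X$ or partition-function ratios controlled by Peierls--Bogoliubov, and bound the surviving boundary operators and their inverses---is exactly the route the paper takes. However, the exact identity your item~1 and item~2 arguments rest on is false. You claim $\rho_B=\tfrac{1}{Z_{ABC}}V_B\,e^{-H_B}\,W_B$ with $V_B=\tr_A[E_{A,B}e^{-H_A}]$ and $W_B=\tr_C[e^{-H_C}E_{AB,C}^*]$. Starting from your (correct) identity $e^{-H_{ABC}}=E_{A,B}e^{-H_A}e^{-H_B}e^{-H_C}E_{AB,C}^*$, tracing over $C$ first gives $E_{A,B}e^{-H_A}e^{-H_B}\,W_{AB}$ where $W_{AB}:=\tr_C[e^{-H_C}E_{AB,C}^*]$ still acts nontrivially on $A$; the subsequent trace over $A$ therefore couples $E_{A,B}$ to $W_{AB}$ and does \emph{not} factor into $\tr_A[E_{A,B}e^{-H_A}]$ times an operator obtained from the $C$-trace alone. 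Since $W_{AB}$ sits \emph{inside} the $A$-trace in $\rho_B$ but \emph{outside} any trace in $\rho_{AB}=\tfrac{1}{Z_{ABC}}e^{-H_{AB}}W_{AB}$, the ``shared dressings cancel between numerator and denominator'' mechanism that drives your item~1, and the bound $\kappa(\rho_B)\le\kappa(V_B)\kappa(W_B)\kappa(e^{-H_B})$ in your item~2, do not go through as written.

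The correct factorisation keeps a \emph{single} partial trace over both outer regions: reordering the commuting bulk weights gives $\rho_B=\tfrac{Z_AZ_BZ_C}{Z_{ABC}}\,\tr_{AC}\!\left[E_{A,BC}E_{B,C}\,\rho^A\otimes\rho^C\right]\rho^B$, which is what the paper uses. This is not merely bookkeeping: once $\rho_B$ is written this way, one must bound the operator norm of the \emph{inverse} of a partial trace over two disjoint, distant regions, $\bigl\|\tr_{AC}[\rho^A\otimes\rho^C\,E_{AB,C}E_{A,B}]^{-1}\bigr\|$, and this is not covered by \cref{prop:BoundingPartialTraceInverses} in any ``reflected or merged-block'' form (all cases there trace over a single contiguous block). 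It is precisely the content of the new \cref{Lemma:Modified_BoundingPartialTraceInverses}, whose proof requires a separate locality argument splitting $F'^{-1}$ into operators supported near the two cuts. Your plan never invokes this ingredient because your (invalid) factorisation appears to avoid the two-sided trace altogether. To repair the proof you should adopt the single-trace factorisations for $\rho_B$ (and the analogous five-block versions for the $\tr_{A'C'}$ extension), express the products in item~1 as chains of factors each of the form $E$, $E^{-1}$, $\tr[\rho^{X}\otimes\rho^{Y}\,E\cdots E]^{\pm1}$, or a single-cut partition-function ratio, and cite \cref{Lemma:Modified_BoundingPartialTraceInverses} wherever an inverse of a two-region trace appears; your treatment of the partition-function ratios, of $\|e^{-H_X}\|\le Z_X$, and of the final $e^{2J|B|}$ from $\kappa(e^{-H_B})$ is otherwise in line with the paper.
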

\begin{proof}
Let us denote the partition function for a given interval by
\begin{equation*}
    Z_A=\Tr[e^{-H_A}].
\end{equation*}
Note that the perturbation formula \cite[Lemma 3.6]{Lenci.2005}
\begin{equation*}
    \left|\log(\Tr[e^{H+P}])-\log(\Tr[e^{H}])\right|\le\|P\|
\end{equation*}
implies that ratios of partition functions are bounded by a constant $\cC$ only depending on the interaction strength, range, and inverse temperature
\begin{equation*}
    \frac{Z_A Z_B}{Z_{AB}},\frac{Z_{AB}}{Z_AZ_B}\le \cC \, .
\end{equation*}
    For the first part, we identify $CC'$ with $B'$.
    We use an argument analogous to \cref{cor:Clean_approx_identity_prodmarginals}.
    \begin{align*}
        &\|\rho_A\rho_{B} \rho^{-1}_{AB}\|\\
         & \hspace{1cm}= \Big\|\tr_{A'BB'}[\rho^{A'}\otimes \rho^{BB'} E_{A',ABB'} E_{A,BB'}] \tr_{A'AB'}[\rho^{A'A}\otimes \rho^{B'} E_{A'A,BB'} E_{B,B'}]E_{A,B}^{-1}  \\
        &\hspace{1.8cm}  \tr_{A'B'}[\rho^{A'}\otimes \rho^{B'} E_{A',ABB'} E_{AB,B'}]^{-1}\Big\|\frac{Z_{A'A}Z_{BB'}}{Z_{A'ABB'}} \, .
    \end{align*}
    Using submultiplicativity of the norm, the first two partial traces are bounded by contractivity of the conditional expectation and \cref{prop:estimates_expansionals}.
    The same proposition also bounds $\|E_{A,B}^{-1}\|$.
    For the inverse partial trace we use \cref{Lemma:Modified_BoundingPartialTraceInverses} for the appropriately identified subsystems. Analogously,
    \begin{align*}
        &\|\rho_{AB}\rho_A^{-1}\rho_{B}^{-1}\|\\
        &\hspace{1cm}=\Big\|\tr_{A'B'}[\rho^{A'}\otimes \rho^{B'} E_{A',ABB'} E_{AB,B'}]E_{A,B}\tr_{A'BB'}[\rho^{A'}\otimes \rho^{BB'} E_{A',ABB'} E_{A,BB'}]^{-1}\\
        &\hspace{1.5cm} \tr_{A'AB'}[\rho^{A'A}\otimes \rho^{B'} E_{A'A,BB'} E_{B,B'}] ^{-1}\Big\|\frac{Z_{A'ABB'}}{Z_{A'A}Z_{BB'}} \, ,
    \end{align*}
    and, switching the notation back to a Gibbs state on the original $A'ABCC'$,
    \begin{align*}
        &\|\rho_{ABC}\rho_B^{-1}\|\\
        &\hspace{1cm}=\Big\|\tr_{A'C'}[\rho^{A'}\otimes\rho^{C'}E_{A',ABCC'}E_{ABC,C'}]E_{A,BC}E_{B,C} \frac{e^{-H_A}}{Z_A}\frac{e^{-H_C}}{Z_C}\\
        &\hspace{1.5cm}\tr_{A'ACC'}[\rho^{A'A}\otimes \rho^{CC'}E_{A'A,BCC'}E_{B,CC'}]^{-1}\Big\|\frac{Z_{A'}Z_AZ_CZ_{C'}}{Z_{AA'}Z_{CC'}}
    \end{align*}
    are bounded in the same way, apart from the last equation, which also uses $\|\rho^A\|,\|\rho^C\|\le1$.

    The second point follows from similar estimates, but no partition functions appear due to the cancellation from the two norms.
    \begin{equation*}
        \norm{\rho_B^{-1}}\norm{\rho_B}=\left\|(\rho^B)^{-1}\tr_{AC}[E_{A,BC}E_{B,C}\rho^A\otimes \rho^C]^{-1}\right\|\left\|(\rho^B)\tr_{AC}[E_{A,BC}E_{B,C}\rho^A \otimes \rho^C]\right\|
    \end{equation*}
    Again, employing \cref{Lemma:Modified_BoundingPartialTraceInverses} for the norms of inverses of partial traces, we are left with
    \begin{equation*}
        \norm{\left(\rho^B\right)^{-1}}\norm{\rho^B}=\norm{e^{H_B}}\norm{e^{-H_B}}\le e^{2\|H_B\|}
    \end{equation*}
    which yields the claim choosing $\alpha=2J$.
    
\end{proof}

We are now set to prove the main result of the section.
\begin{theorem}\label{theorem:superexponential-decay-BS-CMI}
    For $\Phi$ a finite-range, translation-invariant interaction over $\Z$, there exists a positive function $\ell \mapsto \epsilon(\ell)$ with superexponential decay in $\ell$, depending only on $J$, $R$ and $\beta$ such that for every $\Lambda \Subset \Z$ split into consecutive subintervals $\Lambda = A'ABCC'$, with $A'$ and $C'$ possibly empty, for the marginal on $ABC$ of its local Gibbs state $\tr_{A'C'}[\rho^{\Lambda}] = \rho_{ABC}$ it holds that
    \begin{equation}\label{eq:superexponential-decay-BS-CMI}
        \widehat I^{x}_{\rho} (A\mathbin{;}C|B) \le c e^{\alpha|A|} \epsilon(|B|) \quad x \in \{\operatorname{os}, \operatorname{ts}, \operatorname{rev}\} \, , 
    \end{equation}
    Here $c$ and $\alpha$ are constants only depending on inverse temperature $\beta$, strength $J$ and range $R$ of $\Phi$.
\end{theorem}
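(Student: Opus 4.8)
The plan is to recognise each of the three BS-CMIs as a genuine data-processing difference $\DBS{X}{Y}-\DBS{\cE(X)}{\cE(Y)}$ for the conditional expectation $\cE=\pi_C\otimes\tr_C$ on $\cB(\cH_{ABC})$ (partial trace over $C$ followed by tensoring with $\pi_C$), and then to invoke \cref{theorem:upper-bound-DPI-of-BS-entropy}. Concretely, for the one-sided version I take $X=\rho_{ABC}$ and $Y=\pi_A\otimes\rho_{BC}$, so that $\cE(X)=\rho_{AB}\otimes\pi_C$ and $\cE(Y)=\pi_A\otimes\rho_B\otimes\pi_C$; additivity of the BS-entropy together with $\DBS{\pi_C}{\pi_C}=0$ gives $\DBS{\cE(X)}{\cE(Y)}=\DBS{\rho_{AB}}{\pi_A\otimes\rho_B}$, whence $\DBS{X}{Y}-\DBS{\cE(X)}{\cE(Y)}=\IosBS{\rho}{A}{C}{B}$. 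The two-sided version is identical with $\pi_A$ replaced by $\rho_A$, and the reversed version swaps the two arguments, $X=\pi_A\otimes\rho_{BC}$ and $Y=\rho_{ABC}$. Since $\cE$ is precisely the conditional expectation onto $\cB(\cH_{AB})\otimes\1_C$, the intermediate bound \eqref{eq:upperBoundDPI} of \cref{theorem:upper-bound-DPI-of-BS-entropy} applies verbatim in each case.

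I would then evaluate the two ingredients of \eqref{eq:upperBoundDPI}, the recovery-error factor $\norm{XY^{-1}-\cE(X)\cE(Y)^{-1}}$ and the prefactor $\norm{\cE(X)^{-1/2}}\,\norm{X^{-1/2}YX^{-1/2}}\,\norm{\cE(X)^{1/2}}\,\norm{X}_1$. The powers of $\pi_A=d_A^{-1}\1_A$ and $\pi_C=d_C^{-1}\1_C$ appearing in the four operators produce explicit scalars $d_A^{\pm1},d_C^{\pm1}$; a short computation shows the $d_C$ factors cancel completely, so no $|C|$-dependence enters through them, and the $d_A$ in the recovery-error factor cancels the $d_A^{-1}$ inside $\norm{X^{-1/2}YX^{-1/2}}$. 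For the one-sided case the recovery-error factor then becomes $\norm{\rho_{ABC}\rho_{BC}^{-1}-\rho_{AB}\rho_B^{-1}}$, which I rewrite as $\bigl(\rho_{ABC}\rho_{BC}^{-1}\rho_B\rho_{AB}^{-1}-\1\bigr)\rho_{AB}\rho_B^{-1}$ and bound by $\cC\,\tilde\varepsilon(|B|)$ using \cref{cor:Clean_approx_identity_prodmarginals} for the first factor and $\norm{\rho_{AB}\rho_B^{-1}}=\norm{(\rho_{AB}\rho_A^{-1}\rho_B^{-1})\rho_A}\le\cC$ (from \cref{lemma:norm-estimates-of-functions-of-Gibbs-states} and $\norm{\rho_A}\le1$) for the second. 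The two-sided and reversed cases reduce identically, at the cost of one extra benign factor $\norm{\rho_A^{-1}}$, resp.\ $\norm{\rho_B}\norm{\rho_B^{-1}}$, which commute through because operators supported on the disjoint regions $A$ and $BC$ commute.

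For the prefactor I would lean entirely on \cref{lemma:norm-estimates-of-functions-of-Gibbs-states}, whose point of design is that every inverse marginal occurs inside a combination bounded either by $\cC$ or by $\cC e^{\alpha|B|}$, and never as a bare $\norm{\rho_{ABC}^{-1}}$ (which would scale with $|C|$). Thus, in the one-sided case, $\norm{\cE(X)^{-1/2}}\norm{\cE(X)^{1/2}}$ collapses to $(\norm{\rho_{AB}}\norm{\rho_{AB}^{-1}})^{1/2}$ with $\norm{\rho_{AB}^{-1}}\le\cC\norm{\rho_A^{-1}}\norm{\rho_B^{-1}}$, while $\norm{X^{-1/2}YX^{-1/2}}$ reduces, via the same approximate factorisation, to the norm of the self-adjoint operator $\rho_{BC}^{1/2}\rho_{ABC}^{-1}\rho_{BC}^{1/2}$, controlled by $\cC\norm{\rho_A^{-1}}$ up to a $\tilde\varepsilon(|B|)$ correction; the single inverse marginals obey $\norm{\rho_A^{-1}}\le e^{O(|A|)}$ and $\norm{\rho_B^{-1}}\le e^{O(|B|)}$ (point 2 together with $\norm{\rho_X}\ge d_X^{-1}$). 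Throughout I use the versions of \cref{cor:Clean_approx_identity_prodmarginals} and \cref{lemma:norm-estimates-of-functions-of-Gibbs-states} valid when $\rho_{ABC}=\tr_{A'C'}[\rho^\Lambda]$ is only a marginal of the larger Gibbs state. Collecting exponents, the prefactor is at most $c\,e^{\alpha|A|}e^{\alpha'|B|}$, and since $\tilde\varepsilon$ decays superexponentially, $\epsilon(|B|):=e^{\alpha'|B|}\tilde\varepsilon(|B|)$ is again superexponential; this is exactly \eqref{eq:superexponential-decay-BS-CMI}.

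I expect the main obstacle to be precisely this prefactor bookkeeping: ensuring that (i) all $|C|$- and $d_C$-dependence cancels, so the estimate is uniform in the length of $C$, and (ii) the residual $|B|$-growth is merely exponential and therefore absorbable into the superexponential $\tilde\varepsilon$, leaving the clean $e^{\alpha|A|}$ prefactor. The delicacy is that the naive estimates on the inverses of marginals grow with the full system size, so the whole argument hinges on systematically routing every inverse through the controlled combinations of \cref{lemma:norm-estimates-of-functions-of-Gibbs-states} and the approximate factorisation of \cref{cor:Clean_approx_identity_prodmarginals}, using repeatedly that $A$- and $BC$-supported operators commute.
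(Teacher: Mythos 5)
Your proposal follows essentially the same route as the paper's proof: each BS-CMI is written as a DPI gap for the conditional expectation $\tr_C[\cdot]\otimes\pi_C$, \cref{theorem:upper-bound-DPI-of-BS-entropy} is applied, the recovery-error factor is controlled by the approximate factorisation of \cref{cor:Clean_approx_identity_prodmarginals}, the prefactors are routed through the controlled combinations of \cref{lemma:norm-estimates-of-functions-of-Gibbs-states}, and the residual $e^{\alpha'|B|}$ growth is absorbed into the superexponential decay. The minor algebraic variations (using the intermediate bound \eqref{eq:upperBoundDPI} rather than the theorem-statement form, and factoring $\rho_{ABC}\rho_{BC}^{-1}-\rho_{AB}\rho_B^{-1}$ as $(\rho_{ABC}\rho_{BC}^{-1}\rho_B\rho_{AB}^{-1}-\1)\rho_{AB}\rho_B^{-1}$) are all correct and do not change the argument.
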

\begin{proof}
    We begin with the one-sided version and note that we can write 
    \begin{equation*}
        \widehat I^{\operatorname{os}}_\rho (A\mathbin{;}C|B) = \widehat D(\rho_{ABC} \Vert \pi_A \otimes \rho_{BC}) - \widehat D(\cE(\rho_{ABC}) \Vert \cE(\pi_A \otimes \rho_{BC}))
    \end{equation*}
    with $\cE(\cdot) := \tr_C[\cdot] \otimes \pi_C$ a conditional expectation. Using \cref{theorem:upper-bound-DPI-of-BS-entropy} we obtain
    \begin{equation*}
        \widehat I^{\operatorname{os}}_\rho (A\mathbin{;}C|B) \le \norm{\rho_{ABC}^{-1/2} \rho_{BC} \rho_{ABC}^{-1/2}} \norm{\rho_{ABC}}_1 \norm{\rho_{AB}^{1/2}} \norm{\rho_{AB}^{-1/2}}\norm{\rho_B^{-1} \rho_{AB}} \norm{\rho_{ABC}\rho_{BC}^{-1} \rho_B \rho_{AB}^{-1} - \1}
    \end{equation*}
    where we already simplified terms and cancelled constants. The fact that $\norm{AB} \le \norm{BA}$ for normal $AB$ \cite[Proposition IX.1.1]{Bhatia.1996} and that for quantum states $\norm{\rho^{-p}} =\norm{\rho^{-1}}^p$, $p \in [0, \infty)$, gives us
    \begin{align*}
        \widehat I^{\operatorname{os}}_\rho (A\mathbin{;}C|B) &\le \norm{\rho_{BC} \rho_{ABC}^{-1}} \left(\norm{\rho_{AB}^{-1}}\norm{\rho_{AB}}\right)^{1/2} \norm{\rho_B^{-1} \rho_{AB}} \norm{\rho_{ABC}\rho_{BC}^{-1} \rho_B \rho_{AB}^{-1} - \1}\\
        &\le(\norm{\rho_A^{-1}}\|\rho_A\|) \norm{\rho_A \rho_{BC} \rho_{ABC}^{-1}}\left(\norm{\rho_{AB}^{-1}}\norm{\rho_{AB}}\right)^{1/2} \\
        &\hspace{2cm}\cdot \norm{\rho_A^{-1}\rho_B^{-1} \rho_{AB}} \norm{\rho_{ABC}\rho_{BC}^{-1} \rho_B \rho_{AB}^{-1} - \1}\\
        &\le c e^{\alpha(|A| + |B|)} \varepsilon(|B|) \, . 
    \end{align*}
    In the last inequality, we used \cref{lemma:approximate-factorization-of-Gibbs-state} with \cref{Lemma:DPI_mixing_condition} and \cref{lemma:norm-estimates-of-functions-of-Gibbs-states}.
    
    For the two-sided definition, we again use \cref{theorem:upper-bound-DPI-of-BS-entropy} with the same conditional expectation and $X = \rho_{ABC}$ but with $Y = \rho_A \otimes \rho_{BC}$ instead. Applying similar simplifications as before, \cref{lemma:approximate-factorization-of-Gibbs-state}, \cref{Lemma:DPI_mixing_condition}, and \cref{lemma:norm-estimates-of-functions-of-Gibbs-states}, we obtain
    \begin{align*}
        \widehat I^{\operatorname{ts}}_\rho (A\mathbin{;}C|B) &\le \norm{\rho_A \rho_{BC} \rho_{ABC}^{-1}} \left(\norm{\rho_{AB}^{-1}}\norm{\rho_{AB}}\right)^{1/2} \norm{\rho_A^{-1} \rho_B^{-1} \rho_{AB}}\norm{\rho_{ABC}\rho_{BC}^{-1} \rho_B \rho_{AB}^{-1} - \1}\\
        &\le c e^{\alpha(|A| + |B|)} \varepsilon(|B|) \, . 
    \end{align*}
    
    For the reversed version, we set $X = \pi_A \otimes \rho_{BC}$, $Y = \rho_{ABC}$ with the conditional expectation fixed as before. Employing \cref{theorem:upper-bound-DPI-of-BS-entropy} in the form of the second inequality we obtain after initial simplifications
    \begin{align*}
        \widehat I^{\operatorname{rev}}_\rho(A\mathbin{;}C|B) &\le \norm{\rho_{BC}^{-1/2} \rho_{ABC} \rho_{BC}^{-1/2}} \left(\norm{\rho_{B}^{-1}}\norm{\rho_B}\right)^{1/2} \norm{\rho_{ABC}^{-1} \rho_{BC}} \norm{\rho_{ABC}\rho_{BC}^{-1}\rho_B \rho_{AB}^{-1} - \1} \\
        &\le \norm{\rho_{ABC} \rho_A^{-1} \rho_{BC}^{-1}} \left(\norm{\rho_{B}^{-1}}\norm{\rho_B}\right)^{1/2} \norm{\rho_A^{-1}}\norm{\rho_A} \\
        &\hspace{2cm} \cdot \norm{\rho_A \rho_{BC} \rho_{ABC}^{-1}} \norm{\rho_{ABC} \rho_{BC}^{-1} \rho_B \rho_{AB}^{-1} - \1}\\
        &\le c e^{\alpha(|A| + |B|)} \varepsilon(|B|) \, . 
    \end{align*}
    The last estimation again follows from \cref{lemma:approximate-factorization-of-Gibbs-state}, \cref{Lemma:DPI_mixing_condition} and \cref{lemma:norm-estimates-of-functions-of-Gibbs-states}. 

    Ultimately, for all bounds, we absorbed the exponential growth in the $B$ system into the superexponential decay, yielding the claimed result.
\end{proof}

Note that all the BS-CMI bounds depend on the size of one of the side systems exponentially. At least in the case of the two-sided version, one can relatively easily show a \emph{constant} upper bound on the DPI independent of the dimension, which might hint towards possible improvements. Lastly, we would also like to restate \cref{rem:complementing-perspective-BS-DPI} and translate it to the context of one-sided BS-CMI to make the inequalities more accessible. 

\begin{remark}
    In the case where $\rho_{ABC}$ is a state on a tripartite system (not necessarily with any relation to a Gibbs state), the following chain of inequalities holds:
    \begin{align*}
        &\left(\frac{\pi}{8}\right)^4 \norm{\rho_{ABC}^{-1/2}\rho_{BC} \rho_{ABC}^{-1/2}}^{-4} \norm{\rho_{AB}^{-1}}^{-2}\norm{\rho_B \rho_{AB}^{-1}\rho_{ABC} - \rho_{BC}}\\
        &\hspace{4.2cm}\le \widehat I^{\operatorname{os}}_\rho (A\mathbin{;}C|B) \le \\
        &\hspace{4.8cm}\norm{\rho_{ABC}^{-1/2}\rho_{BC} \rho_{ABC}^{-1/2}} (\norm{\rho_{AB}^{-1}}\norm{\rho_{AB}})^{1/2} \norm{\rho_{B}^{-1} \rho_{AB}} \norm{\rho_{ABC}\rho_{BC}^{-1}\rho_{AB}\rho_B^{-1}  - \1} \, . 
    \end{align*}
\end{remark}
\begin{remark}
    An analogue of \cref{theorem:superexponential-decay-BS-CMI} for exponentially-decaying interactions is provided in \cref{theorem:superexponential-decay-BS-CMI_shortrange}. The main difference is the exponential instead of the superexponential decay with $|B|$, which only holds for $\beta$ small enough.
\end{remark}

\subsection{Approximate factorisation of the purity}
\label{sec:purity}
 
We now consider another possible measure of conditional independence given by the approximate factorisation of the purity. This notion is inspired by \cite{Vermersch.2023}, where it was shown that for a $\rho_\Lambda$ prepared by a finite depth circuit with $\Lambda \Subset \mathbb{Z}$ split as $\Lambda = ABC$ it holds that
\begin{equation} \label{eq:exactpurity}
    \frac{\Tr_{AB}[\rho_{AB}^2]\Tr_{BC}[\rho_{BC}^2]}{\Tr_\Lambda[\rho_\Lambda^2] \Tr_B[\rho_B^2] } = 1
\end{equation}
whenever $|B|\geq 2 \ell -1$ with $\ell$ being the depth of the circuit. This may suggest a definition of conditional independence based on a notion of CMI defined in terms of R\'enyi-$2$ entropies, as, in analogy with Eq. \eqref{eq:CMI1},
\begin{equation}\label{eq:notCMI2}
    I_2(A:C \vert B) :=   \log  \frac{\Tr_\Lambda[\rho_\Lambda^2] \Tr_B[\rho_B^2] }{\Tr_{AB}[\rho_{AB}^2]\Tr_{BC}[\rho_{BC}^2]} = S_2(\rho_{AB}) + S_2(\rho_{BC}) - S_2(\rho_B) - S_2(\rho_\Lambda) \, ,
\end{equation}
with $S_2(\rho)=-\log \Tr[\rho^2]$. It should not be confused with the $2$-CMI that could be defined from Petz Rényi or sandwiched Rényi divergences.
However, notice that this quantity, contrary to $I(A:C \vert B)$, $\IosBS{\rho}{A}{C}{B}$, $\ItsBS{\rho}{A}{C}{B}$ and $\IrevBS{\rho}{A}{C}{B}$, is not necessarily positive, and likely lacks most other relevant information-theoretic properties.

Nevertheless, motivated by the problem of efficiently learning R\'enyi entropies the authors of \cite{Vermersch.2023} introduce an approximate factorisation condition for this measure, which holds when the previous Eq. \eqref{eq:exactpurity} fails up to, at most, an exponentially small error in $|B|$. They proved that translation-invariant Matrix Product Density Operators satisfy this property, conjectured it for a larger class of states, and numerically verified it for some relevant models. This approximate factorisation is equivalent to the exponential decay in $\vert B \vert$ of Eq. \eqref{eq:notCMI2}. In this section we show this property for any translation-invariant, finite-range Hamiltonian in 1D at any inverse temperature $\beta >0$. 

\begin{proposition}\label{th:purity}
    Let $\Phi$ be a finite-range translation-invariant interaction over $\mathbb{Z}$. Then, there exist positive constants $c_p, \alpha_p$ depending only on the strength $J$, range of the interaction $R$ and inverse temperature $\beta>0$ with the following property: For every $\Lambda \Subset \mathbb{Z}$ split as $\Lambda = ABC$, where $B$ shields $A$ from $C$  (see Figure \ref{fig:1}), and for $\rho_\Lambda := \rho^\Lambda$ the Gibbs state on $\Lambda$,
    \begin{equation}\label{eq:approximate_factorization}
         \left| \frac{\Tr_{AB}[\rho_{AB}^2]\Tr_{BC}[\rho_{BC}^2]}{\Tr_\Lambda[\rho_\Lambda^2] \Tr_B[\rho_B^2] } -1\right| \leq c_p e^{- \alpha_p |B|} \, .
    \end{equation}
\end{proposition}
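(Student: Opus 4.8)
The plan is to strip off the bulk normalisation and recast the claim as a factorisation of correlations across $B$ for a doubled system. Since here $\Lambda = ABC$, I write $\rho_\Lambda = e^{-H_{ABC}}/Z$ with $Z = \Tr[e^{-H_{ABC}}]$ and set $g_X := \tr_{\Lambda\setminus X}[e^{-H_{ABC}}]$ for $X \in \{AB, BC, B, ABC\}$, so that $\rho_X = g_X/Z$ and $\Tr[\rho_X^2] = Z^{-2}\Tr[g_X^2]$. Numerator and denominator each carry a factor $Z^{-4}$, which cancels, reducing the statement to
\begin{equation*}
    \left|\frac{\Tr[g_{AB}^2]\,\Tr[g_{BC}^2]}{\Tr[g_{ABC}^2]\,\Tr[g_B^2]} - 1\right| \le c_p\, e^{-\alpha_p|B|} \, .
\end{equation*}
This cancellation is more than cosmetic: each $\Tr[\rho_X^2]$ is exponentially small in $|X|$, so I must keep every estimate \emph{multiplicative} (relative) rather than additive throughout.

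Next I would pass to two replicas. Writing $\mathbb{F}_S$ for the swap of the two copies on a region $S$, one has $\Tr[g_{ABC}^2] = \Tr[(e^{-H_{ABC}}\otimes e^{-H_{ABC}})\mathbb{F}_{ABC}]$ and, tracing out the unswapped copies, $\Tr[g_{AB}^2] = \Tr[(e^{-H_{ABC}}\otimes e^{-H_{ABC}})\mathbb{F}_{AB}]$, and analogously for $BC$ and $B$. Here $e^{-H_{ABC}}\otimes e^{-H_{ABC}}$ is the unnormalised Gibbs operator of the doubled Hamiltonian $\hat H = H_{ABC}\otimes\1 + \1\otimes H_{ABC}$, which is again finite-range and translation-invariant on sites of local dimension $d^2$; hence all results of \cref{sec:approximate-factorisation}, together with \cref{prop:decay_correlations} and \cref{theorem:local_indistinguishability}, apply verbatim to $\hat\rho := \rho_\Lambda\otimes\rho_\Lambda$. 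With the local, $\pm1$-valued swap observables $O_A := \mathbb{F}_A$, $O_B := \mathbb{F}_B$, $O_C := \mathbb{F}_C$, the target ratio is exactly $R = \langle O_AO_B\rangle\langle O_BO_C\rangle / (\langle O_AO_BO_C\rangle\langle O_B\rangle)$ with $\langle\cdot\rangle$ the expectation in $\hat\rho$. Using $O_B^2 = \1$ and $[O_A,O_B]=[O_C,O_B]=0$, this equals $\langle O_A\rangle'_B\,\langle O_C\rangle'_B / \langle O_AO_C\rangle'_B$ for the twisted functional $X\mapsto\langle X\rangle'_B := \langle XO_B\rangle/\langle O_B\rangle$, so $|R-1|$ small is precisely the statement that $A$–$C$ correlations factorise across $B$ for this $O_B$-twisted Gibbs functional.

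To obtain the bound I would feed in the approximate-factorisation identity $\|\hat\rho_{ABC}\hat\rho_{BC}^{-1}\hat\rho_B\hat\rho_{AB}^{-1} - \1\| \le \tilde\varepsilon(|B|)$ from \cref{cor:Clean_approx_identity_prodmarginals} applied to $\hat\rho$. Substituting $\hat\rho_{ABC}\approx\hat\rho_{AB}\hat\rho_B^{-1}\hat\rho_{BC}$ into $\langle O_AO_BO_C\rangle$ and exploiting that $O_A$, $O_C$ are supported away from the opposite junctions, the trace collapses onto $B$: tracing out $A$ and $C$ yields operators $\Tr_A[\hat\rho_{AB}O_A]$ and $\Tr_C[\hat\rho_{BC}O_C]$ on the doubled $B$, localised near the $A$–$B$ and $B$–$C$ junctions respectively, and the claim reduces to comparing these two junction contributions. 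The estimate then asserts that once $|B|$ is large the two junctions decouple, so that the junction factor seen with $C$ absent matches the one seen with $C$ present up to a \emph{relative} error governed by \cref{prop:decay_correlations} and \cref{theorem:local_indistinguishability} for $\hat\rho$. Because these are the exponentially (not superexponentially) decaying inputs, this is what pins the rate at $e^{-\alpha_p|B|}$ and fixes $\alpha_p$ and $c_p$ in terms of $J$, $R$, $\beta$ and $d$.

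The hard part is keeping everything multiplicative. A black-box use of the superexponentially small norm $\tilde\varepsilon$ only gives additive errors, which dwarf the exponentially small purities; the bulk contributions must therefore be shown to cancel \emph{exactly}, which is why I would unfold $e^{-H_{ABC}}$ through the expansional representation
\begin{equation*}
    e^{-H_{ABC}} = E_{A,BC}\,E_{B,C}\,\bigl(e^{-H_A}\otimes e^{-H_B}\otimes e^{-H_C}\bigr)
\end{equation*}
of \cref{prop:estimates_expansionals} rather than invoking the factorisation as a black box: the decoupled core furnishes the exactly-cancelling bulk, while the two bounded expansionals $E_{A,BC}$ and $E_{B,C}$ supply exactly the two junction corrections to be compared. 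The residual difficulty is the interplay of the two non-commuting replicas with the inserted swap $O_B$; controlling $\Tr_A[\hat\rho_{AB}O_A]$, $\Tr_C[\hat\rho_{BC}O_C]$ and their decoupling amounts to re-running the partial-trace inverse bounds of \cref{prop:BoundingPartialTraceInverses} and the difference estimates of \cref{prop:estimates_expansionals} for the doubled Hamiltonian, which is where the bulk of the technical work lies.
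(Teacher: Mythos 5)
Your plan is correct and, once you discard the false start, is essentially the paper's proof in replica clothing: the paper likewise cancels the bulk exactly by unfolding each Gibbs weight through Araki expansionals against the decoupled core $e^{-H_A}\otimes e^{-H_B}\otimes e^{-H_C}$, and then controls the surviving $O(1)$ junction factors with \cref{prop:estimates_expansionals}(ii), \cref{prop:BoundingPartialTraceInverses}/\cref{Lemma:Modified_BoundingPartialTraceInverses}, \cref{theorem:local_indistinguishability} and \cref{prop:decay_correlations} --- only it packages the doubling as the inverse-temperature-$2\beta$ Gibbs states $\widetilde{\rho}^X = e^{-2H_X}/\widetilde{Z}_X$ rather than as two replicas with swap operators (the swap contracted against $e^{-H_B}\otimes e^{-H_B}$ produces exactly $e^{-2H_B}$, so the two bookkeepings coincide), and it disposes of the residual partition-function ratio $\widetilde{Z}_{ABC}\widetilde{Z}_B/(\widetilde{Z}_{AB}\widetilde{Z}_{BC})$ by a separate, purely scalar estimate. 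Be aware that the intermediate step you propose first --- substituting $\hat\rho_{ABC}\approx\hat\rho_{AB}\hat\rho_B^{-1}\hat\rho_{BC}$ from \cref{cor:Clean_approx_identity_prodmarginals} into $\langle O_AO_BO_C\rangle$ --- fails on its own for precisely the additive-versus-multiplicative reason you identify afterwards ($\tilde\varepsilon(|B|)$ enters additively while $\Tr[\rho_{ABC}^2]$ is exponentially small in $|\Lambda|$), and the paper never invokes that corollary here; your corrected expansional route is the one that closes.
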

\begin{proof}
    Consider $\ell \in \mathbb{N}$ such that $|B|\geq 3 \ell$ and split $B$ into $B_1$, $B_2$ and $B_3$ such that $|B_1|, |B_2|, |B_3| \geq \ell $ as in Figure \ref{fig:1}.
    \begin{figure}[h]
        \begin{center}
            \begin{tikzpicture}[scale=0.68]
                \definecolor{frenchblue}{rgb}{0.0, 0.45, 0.73}
                \Block[5,darkblue!60!white,A,1];
                \draw [thick,darkblue,decorate,decoration={brace,amplitude=10pt,mirror},xshift=-0.5pt,yshift=-0.6pt](-0.5,-0.6) -- (4.5,-0.6) node[black,midway,yshift=-0.7cm] { \textcolor{darkblue}{$A$}};
                \begin{scope}[xshift=5cm]
                    \Block[4,midblue!50!cyan!80!white,B,1];
                    \draw [thick,midblue!50!cyan!80!white,decorate,decoration={brace,amplitude=10pt,mirror},xshift=-0.5pt,yshift=-0.6pt](-0.5,-0.6) -- (3.5,-0.6) node[black,midway,yshift=-0.7cm] { \textcolor{midblue!50!cyan!80!white}{$B_1$}};
                    \draw [thick,cyan,decorate,decoration={brace,amplitude=10pt,mirror},xshift=-0.5pt,yshift=-0.6pt](-0.5,-2) -- (11.5,-2) node[black,midway,yshift=-0.7cm] { \textcolor{cyan}{$B$}};
                \end{scope}
                \begin{scope}[xshift=9cm]
                    \Block[4,cyan!80!white,C,1];
                    \draw [thick,cyan!60!white,decorate,decoration={brace,amplitude=10pt,mirror},xshift=-0.5pt,yshift=-0.6pt](-0.5,-0.6) -- (3.5,-0.6) node[black,midway,yshift=-0.7cm] { \textcolor{cyan!60!white}{$B_2$}};
                \end{scope}
                \begin{scope}[xshift=13cm]
                    \Block[4,lightgreen!30!cyan!60!white,C,1];
                    \draw [thick,lightgreen!30!cyan!60!white,decorate,decoration={brace,amplitude=10pt,mirror},xshift=-0.5pt,yshift=-0.6pt](-0.5,-0.6) -- (3.5,-0.6) node[black,midway,yshift=-0.7cm] { \textcolor{lightgreen!30!cyan!60}{$B_3$}};
                \end{scope}
                \begin{scope}[xshift=17cm]
                    \Block[5,midgreen!60!white,D,1];
                    \draw [thick,midgreen,decorate,decoration={brace,amplitude=10pt,mirror},xshift=-0.5pt,yshift=-0.6pt](-0.5,-0.6) -- (4.5,-0.6) node[black,midway,yshift=-0.7cm] { \textcolor{midgreen}{$C$}};
                \end{scope}
                \node at (23,0.8) {\huge $\Lambda$};
            \end{tikzpicture} 
        \end{center}
        \caption{Representation of an interval $\Lambda$ split into three subintervals $\Lambda=ABC$, with $B$ further split into $B_1$, $B_2$ and $B_3$ such that $|B_1|, |B_2|, |B_3| \geq \ell $.}
        \label{fig:1}
    \end{figure}
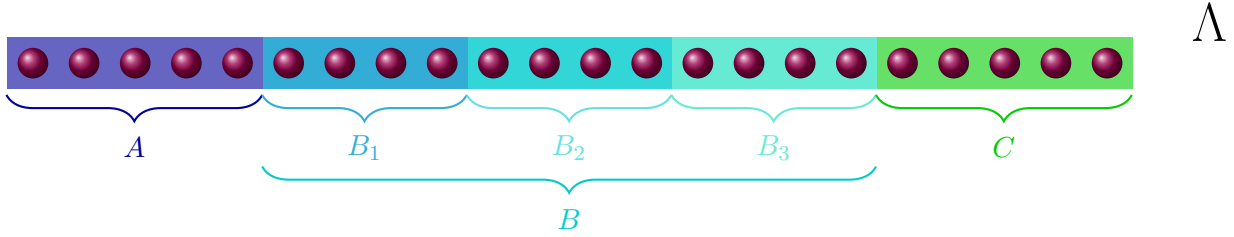
    For any $X \Subset \Lambda$, let us denote $Z_X:= \Tr_X[e^{-H_X}]$ and by $\rho^X$ the Gibbs state on $X$, i.e. $\rho^X:= e^{-H_X}/Z_X$ (notice the difference with the marginal of the global state $\rho_X$). Let us further define 
    \begin{equation}\label{eq:lambda}
        \lambda_{ABC} := \frac{Z_{ABC}Z_B}{Z_{AB}Z_{BC}} \, .
    \end{equation}
    Then, we can rewrite $\lambda_{ABC}$ in terms of expansionals as
    \begin{align*}
        \lambda_{ABC} &= \frac{\Tr_{ABC}[e^{-H_{ABC}}] \Tr_{B}[e^{-H_{B}}]}{\Tr_{AB}[e^{-H_{AB}}] \Tr_{BC}[e^{-H_{BC}}]} \\[1mm]
        & = \big(\Tr_{ABC}[\rho^{ABC} \underbrace{e^{H_{ABC}} e^{-H_A-H_{BC}}}_{E_{A,BC}^{-1 \, *}}]\big)^{-1}  \Tr_{AB}[\rho^{AB} \underbrace{e^{H_{AB}} e^{-H_A-H_{B}}}_{E_{A,B}^{-1 \, *}}] \, .
    \end{align*}
    By \Cref{prop:BoundingPartialTraceInverses}, we have 
    \begin{equation}\label{eq:uni_bound_lambda}
        |\lambda_{ABC}|, |\lambda_{ABC}^{-1}| \leq \mathcal{C}^2 \, ,
    \end{equation}
    and by Step 2 in the proof of Proposition 8.1 of \cite{Bluhm.2022}, there exist $\tilde c_1, \alpha_1 > 0$ such that 
    \begin{equation}\label{eq:uni_bound_lambda-1}
        |\lambda_{ABC} -1 | \leq \tilde c_1 e^{-\alpha_1 \ell} \, .
    \end{equation}
    Note that the above two estimates can be used to obtain
    \begin{equation}
        |\lambda_{ABC}^{-1} - 1| = |\lambda_{ABC}^{-1}||1 - \lambda_{ABC}| \le \cC^2 \tilde c_1 e^{-\alpha_1 \ell} =: c_1 e^{-\alpha_1 \ell}
    \end{equation}
    We will follow similar but more technical steps for  
    \begin{equation}\label{eq:lambda_hat}
        \widehat{\lambda}_{ABC}^{-1} := \frac{\Tr_{AB}[\rho_{AB}^2]\Tr_{BC}[\rho_{BC}^2]}{\Tr_\Lambda[\rho_\Lambda^2] \Tr_B[\rho_B^2] } \, .
    \end{equation}
    Let us analyse each of the terms involved independently. For that, we will consider and compare the partition functions and Gibbs states associated to $H_\Lambda$ and to $2H_\Lambda$, denoting $\widetilde{Z}_X := \Tr_X[e^{-2H_X}]$ and $\widetilde{\rho}_X:= e^{-2H_X}/\widetilde{Z}_X$ for $X \Subset \Lambda$.
    \begin{itemize}
        \item For $\Tr_{\Lambda}[\rho_{\Lambda}^2]$, we have:
        \begin{align*}
             \Tr_{\Lambda}[\rho_{\Lambda}^2] & = \frac{\Tr_{ABC}\big[e^{-2H_{ABC}}\big]}{Z_{ABC}^2}  = \frac{\widetilde{Z}_{ABC}}{Z_{ABC}^2} \, .
        \end{align*}
        \item For $\Tr_{AB}[\rho_{AB}^2]$, we obtain:
        \begin{align*}
            \Tr_{AB}[\rho_{AB}^2] & = \frac{1}{Z_{ABC}^2}\Tr_{AB}\big[\tr_C[e^{-H_{ABC}}]^2\big] \\
            & = \frac{Z_C^2}{Z_{ABC}^2} \Tr_{AB}\Big[ e^{-H_{AB}} \tr_C\big[ \rho^C \underbrace{e^{H_{AB}} e^{H_{C}} e^{-H_{ABC}}}_{E_{AB,C}^*}\big] \tr_C \big[ \underbrace{e^{-H_{ABC}}  e^{H_{C}} e^{H_{AB}}}_{E_{AB,C}}  \rho^C\big] e^{-H_{AB}}\Big] \\
            & = \frac{\widetilde{Z}_{AB} Z_C^2}{Z_{ABC}^2} \Tr_{AB} \Big[ \widetilde{\rho}^{AB} \tr_C \big[ \rho^C {E_{AB,C}^*}\big] \tr_C \big[ E_{AB,C} \rho^C \big] \Big] \, .
        \end{align*}
        \item Similarly for $\Tr_{BC}[\rho_{BC}^2]$, we can write:
        \begin{align*}
            \Tr_{BC}[\rho_{BC}^2] & = \frac{Z_A^2 \widetilde{Z}_{BC} }{Z_{ABC}^2} \Tr_{BC}\Big[ \widetilde{\rho}^{BC} \tr_A \big[ \rho^A {E_{A,BC}^*} \big] \tr_A \big[ E_{A,BC} \rho^A \big] \Big] \, .
        \end{align*}
        \item Finally, for $\Tr_{B}[\rho_{B}^2]$, we have:
        \begin{align*}
            \Tr_{B}[\rho_{B}^2] & = \frac{1}{Z_{ABC}^2}\Tr_{B}[\tr_{AC}[e^{-H_{ABC}}]^2] \\
            & = \frac{Z_A^2 Z_C^2}{Z_{ABC}^2} \Tr_{B} \Big[ e^{-H_{B}} \tr_{AC} \big[ \rho^A \otimes \rho^C \underbrace{e^{H_{A}} e^{H_{B}} e^{-H_{AB}}}_{E_{A,B}^*} \underbrace{e^{H_{AB}} e^{H_{C}} e^{-H_{ABC}}}_{E_{AB,C}^*} \big]  \\
            & \phantom{asasassdAasd}  \tr_{AC}\big[ \underbrace{e^{-H_{ABC}}  e^{H_{AB}} e^{H_{C}} }_{E_{AB,C}} \underbrace{ e^{-H_{AB}} e^{H_{A}} e^{H_{B}} }_{E_{A,B}} \rho^A \otimes \rho^C \big] e^{-H_{B}} \Big] \\
            & = \frac{Z_A^2 Z_C^2 \widetilde{Z}_B }{Z_{ABC}^2} \Tr_{B} \Big[ \widetilde{\rho}^B \tr_{AC} \big[ \rho^A \otimes \rho^C {E_{A,B}^*} {E_{AB,C}^*}\big]  \tr_{AC} \big[ {E_{AB,C}} {E_{A,B}} \rho^A \otimes \rho^C \big]  \Big]\, .
        \end{align*}
    \end{itemize}
    Denoting 
    \begin{equation}
        \widetilde{\lambda}_{ABC} := \frac{\widetilde{Z}_{ABC}\widetilde{Z}_{B} }{\widetilde{Z}_{AB}\widetilde{Z}_{BC} } 
    \end{equation}
    and replacing all these values in Eq. \eqref{eq:lambda_hat}, after noticing that all partition functions $Z_X$ of the original Hamiltonian cancel, we obtain
    \begin{equation*}
        \widehat{\lambda}_{ABC}^{-1} =  \widetilde{\lambda}_{ABC}^{-1} \frac{ \Tr_{AB} \Big[ \widetilde{\rho}^{AB} \tr_C \big[ \rho^C {E_{AB,C}^*}\big] \tr_C \big[ E_{AB,C} \rho^C \big] \Big] \Tr_{BC}\Big[ \widetilde{\rho}^{BC} \tr_A \big[ \rho^A {E_{A,BC}^*} \big] \tr_A \big[ E_{A,BC} \rho^A \big] \Big]}{\Tr_{B} \Big[ \widetilde{\rho}^B \tr_{AC} \big[ \rho^A \otimes \rho^C {E_{A,B}^*} {E_{AB,C}^*}\big]  \tr_{AC} \big[ {E_{AB,C}} {E_{A,B}} \rho^A \otimes \rho^C \big]  \Big]} \, .
    \end{equation*}
    Let us denote the last fraction above by $\chi_{ABC}$. Then, it is clear that 
    \begin{align*}
        \left| \frac{\Tr_{AB}[\rho_{AB}^2]\Tr_{BC}[\rho_{BC}^2]}{\Tr_\Lambda[\rho_\Lambda^2] \Tr_B[\rho_B^2] } -1\right|  & = \left| \widetilde{\lambda}_{ABC}^{-1} \chi_{ABC} - 1 \right| \\
        & \leq \left|\widetilde{\lambda}_{ABC}^{-1} - 1 \right| + \left| \widetilde{\lambda}_{ABC}^{-1} \right| \left|  \chi_{ABC} - 1 \right| \\
        & \leq c_1 e^{-\alpha_1 \ell} + \mathcal{C}^2 \left|  \chi_{ABC} - 1 \right| \, ,
    \end{align*}
    where we are using Eq. \eqref{eq:uni_bound_lambda} and Eq. \eqref{eq:uni_bound_lambda-1} for the partition functions associated to the $2H_\Lambda$. Note that the constants would change slightly, just modifying their dependence from $\beta$ to $\beta/2$, but we keep the same notation for simplicity. Finally, we bound the last term in the expression above. We will repeat the combination of \Cref{prop:estimates_expansionals} and \Cref{theorem:local_indistinguishability}. 

    Let us consider a splitting of $B$ into $B_1 $, $B_2$ and $B_3$ such that $|B_1|, |B_2|, |B_3| \geq \ell$ (see Figure \ref{fig:1}). First, by triangle inequality, Hölder's inequality, point (ii) of \Cref{prop:estimates_expansionals} and \Cref{prop:BoundingPartialTraceInverses}, we get
    \begin{equation}\label{eq:bound_tr_AB}
        \begin{aligned}
            &\left|\Tr_{AB} \Big[ \widetilde{\rho}^{AB} \tr_C \big[ \rho^C {E_{AB,C}^*}\big] \tr_C \big[ E_{AB,C} \rho^C \big] \Big] -  \Tr_{AB} \Big[ \widetilde{\rho}^{AB} \tr_C \big[ \rho^C {E_{B_3,C}^*}\big] \tr_C \big[ E_{B_3,C} \rho^C \big] \Big] \right| \\
            & \hspace{1cm} = \left|\Tr_{AB} \Big[ \widetilde{\rho}^{AB}  \tr_C \big[ \rho^C\big(  {E_{AB,C}^*} -  {E_{B_3,C}^*}  \big)\big] \tr_C \big[ E_{B_3,C} \rho^C \big] \Big] \right.\\
            &\hspace{1.5cm}  \left. + \Tr_{AB} \Big[ \widetilde{\rho}^{AB}  \tr_C \big[ \rho^C  {E_{AB,C}^*} \big] \tr_C \big[\left( E_{AB,C} -  {E_{B_3,C}}  \right) \rho^C \big] \Big] \right| \\
            &\hspace{1cm} \leq \mathcal{C} \norm{{E_{AB,C}^*} -  {E_{B_3,C}^*}} + \mathcal{C} \norm{{E_{AB,C}} -  {E_{B_3,C}} }  \\
            &\hspace{1cm} \leq 2 \mathcal{C} \delta( \ell) \, . 
        \end{aligned}
    \end{equation}
    Similarly for the analogous term tracing out $BC$, 
    \begin{equation}\label{eq:bound_tr_BC}
        \left|\Tr_{BC} \Big[ \widetilde{\rho}^{BC} \tr_A \big[ \rho^A {E_{A,BC}^*}\big] \tr_A \big[ E_{A,BC} \rho^A \big] \Big] - \Tr_{BC} \Big[ \widetilde{\rho}^{BC} \tr_A \big[ \rho^A {E_{A,B_1}^*}\big] \tr_A \big[ E_{A,B_1} \rho^A \big] \Big] \right| \leq  2 \mathcal{C} \delta( \ell) \, . 
    \end{equation}
    The term tracing out $B$ is slightly more involved but follows the same lines:
    \begin{equation}\label{eq:bound_tr_B}
        \begin{aligned}
            &\left|\Tr_{B} \Big[ \widetilde{\rho}^B \tr_{AC} \big[ \rho^A \otimes \rho^C {E_{A,B}^*} {E_{AB,C}^*}\big]  \tr_{AC} \big[ {E_{AB,C}} {E_{A,B}} \rho^A \otimes \rho^C \big]  \Big] \right. \\
            & \hspace{1cm} \left. - \Tr_{B} \Big[ \widetilde{\rho}^B \tr_{AC} \big[ \rho^A \otimes \rho^C {E_{A,B_1}^*} \otimes  {E_{B_3,C}^*}\big]  \tr_{AC} \big[ {E_{A,B_1}} \otimes {E_{B_3,C}} \rho^A \otimes \rho^C \big]  \Big]\right|\\
            & = \left|\Tr_{B} \Big[ \widetilde{\rho}^B \tr_{AC} \big[ \rho^A \otimes \rho^C {E_{A,B}^*} {E_{AB,C}^*}\big]  \tr_{AC} \big[ {E_{AB,C}} {E_{A,B}} \rho^A \otimes \rho^C \big]  \Big] \right. \\
            & \hspace{1cm} \left. - \Tr_{B} \Big[ \widetilde{\rho}^B \tr_{A} \big[ \rho^A  {E_{A,B_1}^*} \big] \tr_{A} \big[{E_{A,B_1}} \rho^A \big]  \otimes   \tr_{C} \big[ \rho^C  {E_{B_3,C}^*}\big] \tr_{C} \big[ {E_{B_3,C}} \rho^C \big]  \Big] \right| \\
            & \hspace{1cm} \le  4 \mathcal{C}^3 \delta(\ell)  \, . 
        \end{aligned}
    \end{equation}
    Additionally, note that by \cref{Lemma:Modified_BoundingPartialTraceInverses}
    \begin{equation}\label{eq:constant_bound_tr_B}
        \begin{aligned}
            \Tr_{B} \Big[ \widetilde{\rho}^B \tr_{AC} &\big[ \rho^A \otimes \rho^C {E_{A,B}^*} {E_{AB,C}^*}\big]  \tr_{AC} \big[ {E_{AB,C}} {E_{A,B}} \rho^A \otimes \rho^C \big]  \Big] \\
            &\ge \norm{(\tr_{AC} \big[ \rho^A \otimes \rho^C {E_{A,B}^*} {E_{AB,C}^*}\big]  \tr_{AC} \big[ {E_{AB,C}} {E_{A,B}} \rho^A \otimes \rho^C \big])^{-1}}^{-1} \ge \cC^{-2}
        \end{aligned}
    \end{equation}  
    Now, combining Eq. \eqref{eq:bound_tr_AB}, Eq. \eqref{eq:bound_tr_BC}, Eq. \eqref{eq:bound_tr_B} and Eq. \eqref{eq:constant_bound_tr_B}, we have 
    \begin{align*}
        &\left|  \chi_{ABC} - 1 \right| \\
        & \hspace{0.5cm} \leq \mathcal{C}^2  \left| \Tr_{B} \Big[ \widetilde{\rho}^B \tr_{AC} \big[ \rho^A \otimes \rho^C {E_{A,B}^*} {E_{AB,C}^*}\big]  \tr_{AC} \big[ {E_{AB,C}} {E_{A,B}} \rho^A \otimes \rho^C \big]  \Big] \right. \\
        & \hspace{1.5cm} - \left. \Tr_{AB} \Big[ \widetilde{\rho}^{AB} \tr_C \big[ \rho^C {E_{AB,C}^*}\big] \tr_C \big[ E_{AB,C} \rho^C \big] \Big] \Tr_{BC}\Big[ \widetilde{\rho}^{BC} \tr_A \big[ \rho^A {E_{A,BC}^*} \big] \tr_A \big[ E_{A,BC} \rho^A \big] \Big] \right| \\
        &\hspace{0.5cm} \leq \mathcal{C}^2  \left| \Tr_{B} \Big[ \widetilde{\rho}^B \tr_{A} \big[ \rho^A  {E_{A,B_1}^*} \big] \tr_{A} \big[  {E_{A,B_1}} \rho^A \big]  \otimes   \tr_{C} \big[ \rho^C  {E_{B_3,C}^*}\big]\tr_{C} \big[ {E_{B_3,C}} \rho^C \big]  \Big]  \right. \\
        &\hspace{1.5cm} - \left. \Tr_{AB} \Big[ \widetilde{\rho}^{AB} \tr_C \big[ \rho^C {E_{B_3,C}^*}\big] \tr_C \big[ E_{B_3,C} \rho^C \big] \Big] \Tr_{BC} \Big[ \widetilde{\rho}^{BC} \tr_A \big[ \rho^A {E_{A,B_1}^*}\big] \tr_A \big[ E_{A,B_1} \rho^A \big] \Big] \right| \\
        &\hspace{1.5cm}+ 4 \mathcal{C}^5 \delta(\ell) +  2 \mathcal{C}^5 \delta( \ell)+  2 \mathcal{C}^5 \delta( \ell)    \, .
    \end{align*}
    To conclude, we will estimate the difference in the previous expression using the local indistinguishability of Gibbs states as in \Cref{theorem:local_indistinguishability}. Indeed, note that
    \begin{align*}
        & \left|\Tr_{BC} \Big[ \widetilde{\rho}^{BC} \tr_A \big[ \rho^A {E_{A,B_1}^*}\big] \tr_A \big[ E_{A,B_1} \rho^A \big] \Big] - \Tr_{B} \Big[ \widetilde{\rho}^{B} \tr_A \big[ \rho^A {E_{A,B_1}^*}\big] \tr_A \big[ E_{A,B_1} \rho^A \big] \Big]\right|\le \mathcal{C}^2 \,  c_2  \, e^{-\alpha_2 \ell} \, ,
    \end{align*}
    as well as 
    \begin{align*}
       &\left|\Tr_{AB} \Big[ \widetilde{\rho}^{AB} \tr_C \big[ \rho^C {E_{B_3,C}^*}\big] \tr_C \big[ E_{B_3,C} \rho^C \big] \Big] - \Tr_{B} \Big[ \widetilde{\rho}^{B} \tr_C \big[\rho^C {E_{B_3,C}^*}\big] \tr_C \big[ E_{B_3,C} \rho^C \big] \Big]\right| \leq \mathcal{C}^2 \,  c_2  \, e^{-\alpha_2 \ell} \, ,
    \end{align*}
    for certain constants $c_2, \alpha_2 >0$. Therefore, denoting 
    \begin{equation*}
        \xi_{B_1} := \tr_A \big[ \rho^A {E_{A,B_1}^*}\big] \tr_A \big[ E_{A,B_1} \rho^A \big] \; , \quad  \xi_{B_3} := \tr_C \big[ \rho^C {E_{B_3,C}^*}\big] \tr_C \big[ E_{B_3,C} \rho^C \big] \, ,
    \end{equation*}
    we have
    \begin{align*}
        & \left|  \Tr_{B} \big[ \widetilde{\rho}^B \xi_{B_1}  \otimes   \xi_{B_3} \big] - \Tr_{AB} \big[ \widetilde{\rho}^{AB} \xi_{B_3}  \big] \Tr_{BC} \big[ \widetilde{\rho}^{BC}  \xi_{B_1}  \big] \right|  \\
        & \hspace{5cm} \leq  \left|  \Tr_{B} \big[ \widetilde{\rho}^B \xi_{B_1}  \otimes   \xi_{B_3} \big]  - \Tr_{B} \big[ \widetilde{\rho}^{B} \xi_{B_3}  \big] \Tr_{B} \big[ \widetilde{\rho}^{B}  \xi_{B_1}  \big] \right| +  2 \mathcal{C}^4 \, c_2  \, e^{-\alpha_2 \ell}  \, , 
    \end{align*}
    and we conclude the proof by using exponential decay of correlations as in \Cref{prop:decay_correlations} giving us an upper estimate $ \cC^{4} c_3 e^{-\alpha_3 \ell}$. Combining all of the above results finally gives the claim:
    \begin{equation*}
        |\widehat{\lambda}_{ABC}^{-1} - 1| \le c_1 e^{-\alpha_1 \ell} + 8 \cC^{7}\delta(\ell) + \cC^{8}(2 c_2 e^{-\alpha_2 \ell} + c_3 e^{-\alpha_3 \ell}) \, .
    \end{equation*}
\end{proof}

\begin{remark}
    Note that \cref{th:purity} also holds for exponentially-decaying interactions, with the same proof, just by adapting to that case the technical tools employed. This is the content of \cref{th:purity_shortrange}.  
\end{remark}

%%%%%%%%%%%%%%%%%%%%%%%%%%%%%%%%%%%%%%%%%%%%%%%%%%%%
\section{Applications}
%%%%%%%%%%%%%%%%%%%%%%%%%%%%%%%%%%%%%%%%%%%%%%%%%%%%
This section is devoted to applications of the main results from the previous section in the context of MPO approximations and learning of Gibbs states. In \cref{sec:MPOappox_recovchannels}, we show that a set of information-theoretic criteria, most importantly the decay of the BS-CMI, imply an efficient MPO representation of a state, which by the previous section exists for one-dimensional Gibbs states.  Using the explicit form of this reconstruction in \Cref{sec:tomography}, we show that this representation can be learned by tomography of small marginals. In \Cref{sec:MPOpurity}, we outline the scheme for estimating the global purity given Proposition \ref{th:purity}.

\subsection{Positive MPO approximations from recovery maps}\label{sec:MPOappox_recovchannels}
In this section, we provide a sequential reconstruction using a symmetric recovery map for the BS-CMI, newly introduced in this manuscript. The motivation for this map arises from Eq. \eqref{eq:asymmetric_BS-recovery}, where we recalled that the DPI for the BS-entropy saturates if, and only if, each state can be recovered from the other by the so-called (asymmetric) BS-recovery condition. This condition, albeit appealing and useful for applications in the context of Gibbs states (cf. \Cref{theorem:superexponential-decay-BS-CMI}), is operationally flawed because of its lack of positivity, let alone Hermiticity. The desire to have a related positive map encourages us to define the following \textit{symmetric recovery map} for a particular case in which both states are defined in a tripartite space and the conditional expectation considered is a partial trace:
\begin{align}\label{eq:symmentric_recovery_map}
    \cR(X) &= \rho_B^{1/2}(\rho_B^{-1/2}\rho_{AB}\rho_B^{-1/2})^{1/2}\rho_B^{-1/2}X\rho_B^{-1/2}(\rho_B^{-1/2}\rho_{AB}\rho_B^{-1/2})^{1/2}\rho_B^{1/2} \, . 
\end{align}
Surprisingly, this is a completely positive linear map with the same fixed points as the BS-recovery condition in a tripartite space, which is a trace-preserving linear map, but not even positive \cite{Bluhm.2.2004}. The definition of this map is not arbitrary, since it follows from the combination of some bounds obtained in \cite{Bluhm.2020} and \cite{Carlen.2020}, as we will see in the Lemma below. Interestingly, it yields a single-shot recovery error bound that involves the inverse BS-CMI, the lowest eigenvalue of marginals and the maximal mutual information.

%%%%%%%%%%
%%%%%%%%%%

\begin{lemma}\label{lem:singleRecoveryError}
  Given a tripartite state $\rho_{ABC}$, the following bound on the distance between the state and the recovery map from Eq. \eqref{eq:symmentric_recovery_map} holds
    \begin{align}\label{eq:recoveryIndividual}
        \widehat I_\rho^{\operatorname{rev}}(A\mathbin{;}C|B) &= \widehat D(\pi_A\otimes\rho_{BC}\|\rho_{ABC})-\widehat D(\pi_A\otimes\rho_{B}\|\rho_{AB})\\
        &\ge\left(\frac\pi8\right)^4\|\Gamma\|^{-2}\|\cR(\rho_{BC})-\rho_{ABC}\|_1^4 \, , 
    \end{align}
    where $\Gamma=\rho_{BC}^{-1/2}\rho_{ABC}\rho_{BC}^{-1/2}$.
\end{lemma}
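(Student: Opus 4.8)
The plan is to recognise the reversed BS-CMI as a \emph{deficit} in the data-processing inequality for the BS-entropy and then feed it into a strengthened (lower-bound) DPI. Concretely, set $\rho := \pi_A \otimes \rho_{BC}$, $\sigma := \rho_{ABC}$, and let $\cE(\cdot) := \tr_C[\cdot] \otimes \pi_C$ be the partial-trace conditional expectation onto $AB$, exactly as used for the reversed case in the proof of \cref{theorem:superexponential-decay-BS-CMI}. Then $\cE(\rho) = \pi_A \otimes \rho_B \otimes \pi_C$ and $\cE(\sigma) = \rho_{AB} \otimes \pi_C$, and since the common tensor factor $\pi_C$ drops out of the BS-entropy, Eq.~\eqref{eq:BS-CMI-rev} yields $\widehat I_\rho^{\operatorname{rev}}(A\mathbin{;}C|B) = \widehat D(\rho \Vert \sigma) - \widehat D(\cE(\rho)\Vert \cE(\sigma))$. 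Thus the left-hand side is a genuine DPI deficit, and any lower bound on the deficit in terms of a recovery error will suffice.

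For that lower bound I would combine the strengthened DPI for the BS-entropy of \cite{Bluhm.2020} (recalled in \cref{rem:complementing-perspective-BS-DPI}) with the recovery-stability estimate of \cite{Carlen.2020}. First I compute the asymmetric BS-recovery that saturates the DPI according to Eq.~\eqref{eq:asymmetric_BS-recovery}: $\cB_\cE^\rho(\cE(\sigma)) = \rho\,\cE(\rho)^{-1}\cE(\sigma) = \rho_{BC}\,\rho_B^{-1}\,\rho_{AB}$, where the $\pi_A$ and $\pi_C$ dimension factors cancel. This is the natural operator approximating $\sigma = \rho_{ABC}$, but it is neither Hermitian nor positive. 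The crux is to replace it by the manifestly positive map $\cR$ of Eq.~\eqref{eq:symmentric_recovery_map}. For this I would use that, writing $V := \rho_B^{1/2}(\rho_B^{-1/2}\rho_{AB}\rho_B^{-1/2})^{1/2}\rho_B^{-1/2}$, one has $\cR(X) = V X V^*$ and $V^2 = \rho_{AB}\rho_B^{-1}$, so that $\cR$ is exactly the symmetric, single-Kraus (hence completely positive) square-root version of the asymmetric recovery $\rho_{BC}\,\rho_B^{-1}\,\rho_{AB} = \rho_{BC}(V^*)^2$. This is precisely the point at which the definition of $\cR$ is \emph{forced} by the two cited bounds rather than being arbitrary.

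The remaining step is to run the Carlen--Vershynina-type argument for a maximal $f$-divergence, bounding the DPI deficit below by the constant $(\pi/8)^4$, a power of the relevant relative modular operator, and the fourth power of the \emph{trace-norm} distance between $\sigma$ and the symmetric recovery $\cR(\rho_{BC})$. I would then identify the modular-operator prefactor with $\|\Gamma\| = \|\rho_{BC}^{-1/2}\rho_{ABC}\rho_{BC}^{-1/2}\|$: since $\rho^{-1/2}\sigma\rho^{-1/2} = d_A\,\Gamma$, and the complementary dimension and $\rho_B^{-1}$ factors are absorbed in the passage from the $2$-norm/asymmetric form to the $1$-norm/symmetric form, the prefactor should collapse to $\|\Gamma\|^{-2}$, giving $\widehat I_\rho^{\operatorname{rev}}(A\mathbin{;}C|B) \ge (\pi/8)^4\|\Gamma\|^{-2}\|\cR(\rho_{BC})-\rho_{ABC}\|_1^4$.

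I expect the main obstacle to be exactly this last bookkeeping. The estimate of \cite{Bluhm.2020} in the form of \cref{rem:complementing-perspective-BS-DPI} carries the $2$-norm distance to the \emph{asymmetric} recovery together with the prefactor $\|\rho^{-1/2}\sigma\rho^{-1/2}\|^{-4}\|\cE(\rho)^{-1}\|^{-2}$, whereas the target uses the $1$-norm distance to the \emph{symmetric} recovery and the single clean factor $\|\Gamma\|^{-2}$. Reconciling these requires (i) verifying that the recovery-stability argument of \cite{Carlen.2020} can be run directly with the positive map $\cR$ in the trace norm, and (ii) tracking that the dimensional and marginal-dependent constants combine exactly into $\|\Gamma\|^{-2}$ without leaving behind spurious factors of $d_A$, $d_C$, or $\|\rho_B^{-1}\|$.
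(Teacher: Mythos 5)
Your overall strategy --- reading $\widehat I_\rho^{\operatorname{rev}}(A\mathbin{;}C|B)$ as a DPI deficit for $\cE(\cdot)=\tr_C[\cdot]\otimes\pi_C$ with arguments $\pi_A\otimes\rho_{BC}$ and $\rho_{ABC}$, then combining the strengthened BS-DPI of \cite{Bluhm.2020} with \cite{Carlen.2020} --- is the same as the paper's, and your observation that $\cR$ is the symmetric single-Kraus square root of the asymmetric recovery (via $V^2=\rho_{AB}\rho_B^{-1}$) is correct. However, there is a genuine gap exactly where you flag one, and it does not close by bookkeeping. You start from the form of \cite[Theorem 5.3]{Bluhm.2020} quoted in \cref{rem:complementing-perspective-BS-DPI}, which is already the \emph{final}, asymmetric, $2$-norm version with prefactor $\norm{\rho^{-1/2}\sigma\rho^{-1/2}}^{-4}\norm{\cE(\rho)^{-1}}^{-2}$. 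With your choice of arguments $\rho^{-1/2}\sigma\rho^{-1/2}=d_A\,\Gamma$, so that route carries $\norm{\Gamma}^{-4}$ (not $\norm{\Gamma}^{-2}$) together with dimensional and $\norm{\cE(\rho)^{-1}}^{-2}$ factors, and there is no general passage from a $2$-norm bound against the asymmetric recovery to a $1$-norm bound against the symmetric one that ``absorbs'' these into a single $\norm{\Gamma}^{-2}$.

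The missing step is to invoke \cite[Theorem 5.3]{Bluhm.2020} in its \emph{intermediate} square-root form,
\begin{equation*}
    \widehat I_\rho^{\operatorname{rev}}(A\mathbin{;}C|B)\;\ge\;\left(\frac{\pi}{4}\right)^4\norm{\Gamma}^{-2}\left\|\rho_{BC}^{1/2}\rho_{B}^{-1/2}\Gamma_{\cE}^{1/2}\rho_B^{1/2}-\Gamma^{1/2}\rho_{BC}^{1/2}\right\|_2^4,\qquad \Gamma_{\cE}=\rho_B^{-1/2}\rho_{AB}\rho_B^{-1/2},
\end{equation*}
and then apply \cite[Lemma 2.2]{Carlen.2020}, namely $\norm{X^*X-Y^*Y}_1\le2\norm{X-Y}_2$ whenever $\Tr[X^*X]=\Tr[Y^*Y]=1$, to the specific operators $X=\rho_{BC}^{1/2}\rho_{B}^{-1/2}\Gamma_{\cE}^{1/2}\rho_B^{1/2}$ and $Y=\Gamma^{1/2}\rho_{BC}^{1/2}$. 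A direct computation gives $X^*X=\cR(\rho_{BC})$ and $Y^*Y=\rho_{ABC}$, and the normalisation hypothesis requires checking $\Tr[\cR(\rho_{BC})]=1$, which follows from cyclicity after tracing out $C$. This converts $\norm{X-Y}_2$ into $\tfrac12\norm{\cR(\rho_{BC})-\rho_{ABC}}_1$, turning $(\pi/4)^4$ into $(\pi/8)^4$ while keeping the clean $\norm{\Gamma}^{-2}$. Your items (i) and (ii) are precisely this argument, but as proposed they are left as an unverified reconciliation starting from the wrong form of the theorem, from which the stated constant and norm cannot be recovered.
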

\begin{remark}
    Note that $\cR(X)$ is completely positive but not trace-preserving, so it is not a quantum channel.
    Several choices of recovery map for the BS-entropy are possible, see \cite{Bluhm.2020} for an alternative definition.
    The one in the above Lemma, despite its more complicated form, is necessary for proving the bound on the Lipschitz constants below and has not been considered before to the best of our knowledge. 
\end{remark}
\begin{proof}[Proof of Lemma \ref{lem:singleRecoveryError}]
    We use a particular case of the strengthened data-processing inequality for the BS-entropy \cite[Theorem 5.3]{Bluhm.2020}, where we choose the channel to be the conditional expectation $\cE(\cdot) = \tr_C[\cdot] \otimes \pi_C$, $\sigma=\pi_A\otimes\rho_{BC}$ for the first argument, and $\rho=\rho_{ABC}$ for the second (note the different convention for the naming of arguments). Employing now \cite[Theorem 5.3]{Bluhm.2020} gives the bound
    \begin{equation}\label{eq:intermediate-strengthened-BS-DPI}
        \begin{aligned}
            \widehat{I}^{\operatorname{rev}}(A\mathbin{;}C|B) &= \widehat D(\pi_A\otimes\rho_{BC}\|\rho_{ABC})-\widehat D(\pi_A\otimes\rho_{B}\|\rho_{AB}) \\
            &\ge \left(\frac{\pi}{4}\right)^4\|\Gamma\|^{-2}\|\rho_{BC}^{1/2}\rho_{B}^{-1/2}\Gamma_{\mathcal{E}}^{1/2}\rho_B^{1/2}-\Gamma^{1/2}\rho_{BC}^{1/2}\|^4_2 \, ,
        \end{aligned}
    \end{equation}
    where $\Gamma=\rho_{BC}^{-1/2}\rho_{ABC}\rho_{BC}^{-1/2}$ and $\Gamma_{\mathcal{E}}=\rho_{B}^{-1/2}\rho_{AB}\rho_{B}^{-1/2}$. Noticing that 
    \begin{align*}
        \Tr[\cR(\rho_{BC})] & = \Tr_{ABC}[\rho_B^{1/2}(\rho_B^{-1/2}\rho_{AB}\rho_B^{-1/2})^{1/2}\rho_B^{-1/2}\rho_{BC}\rho_B^{-1/2}(\rho_B^{-1/2}\rho_{AB}\rho_B^{-1/2})^{1/2}\rho_B^{1/2}]\\
        &= \Tr_{AB}[\rho_B^{1/2}(\rho_B^{-1/2}\rho_{AB}\rho_B^{-1/2})\rho_B^{1/2}] = 1
    \end{align*}
    in addition to $\Tr_{ABC}[\rho_{ABC}] = 1$ allows us to employ \cite[Lemma 2.2]{Carlen.2020}, which states
    \begin{equation*}
        \|X^*X-Y^*Y\|_1\le2\|X-Y\|_2
    \end{equation*}
    for operators $X,Y$ such that $\Tr[X^*X]=\Tr[Y^*Y]=1$. Considering here
    \begin{equation*}
        X= \rho_{BC}^{1/2}\rho_{B}^{-1/2}\Gamma_{\mathcal{E}}^{1/2}\rho_B^{1/2} \; , \quad Y = \Gamma^{1/2}\rho_{BC}^{1/2} \, ,
    \end{equation*}
    this, in turn, gives a lower bound to Eq. \eqref{eq:intermediate-strengthened-BS-DPI} and a strengthened DPI for the inverse BS-CMI involving the distance of $\rho_{ABC}$ to $\cR(\rho_{BC})$ in $1$ norm, i.e. the claim of \cref{lem:singleRecoveryError}.
\end{proof}

%%%%%%%%%

As we will see in the next pages, the recovery map introduced in Eq. \eqref{eq:symmentric_recovery_map}, when applied iteratively, provides a set of information-theoretic criteria for the existence of MPO descriptions of an arbitrary state. For that, let us consider a sequence of recovery maps that can reconstruct a quantum state on a spin chain.
We assume an appropriately subdivided chain on subintervals $A_1,\ldots,A_N$ as in Figure \ref{fig:3}.

%%%%%%%%%

\begin{figure}[h]
    \begin{center}
        \begin{tikzpicture}[scale=0.7]
            \Block[3,darkblue!60!white,A,1];
            \draw [thick,darkblue,decorate,decoration={brace,amplitude=10pt,mirror},xshift=-0.5pt,yshift=-0.6pt](-0.5,-0.6) -- (2.5,-0.6) node[black,midway,yshift=-0.7cm] { \textcolor{darkblue}{$A_1$}};
            \begin{scope}[xshift=3cm]
                \Block[3,midblue!60!white,A,1];
                \draw [thick,midblue,decorate,decoration={brace,amplitude=10pt,mirror},xshift=-0.5pt,yshift=-0.6pt](-0.5,-0.6) -- (2.5,-0.6) node[black,midway,yshift=-0.7cm] { \textcolor{midblue}{$A_2$}};
            \end{scope}
            \begin{scope}[xshift=6cm]
                \Block[3,cyan!60!white,B,1];
                \draw [thick,cyan,decorate,decoration={brace,amplitude=10pt,mirror},xshift=-0.5pt,yshift=-0.6pt](-0.5,-0.6) -- (2.5,-0.6) node[black,midway,yshift=-0.7cm] { \textcolor{cyan}{$A_3$}};
            \end{scope}
            \begin{scope}[xshift=9cm]
                \Block[3,lightgreen!60!white,C,1];
                \draw [thick,lightgreen,decorate,decoration={brace,amplitude=10pt,mirror},xshift=-0.5pt,yshift=-0.6pt](-0.5,-0.6) -- (2.5,-0.6) node[black,midway,yshift=-0.7cm] { \textcolor{lightgreen}{$A_4$}};
            \end{scope}
            \begin{scope}[xshift=12cm]
                \Block[3,midgreen!60!white,D,1];
                \draw [thick,midgreen,decorate,decoration={brace,amplitude=10pt,mirror},xshift=-0.5pt,yshift=-0.6pt](-0.5,-0.6) -- (2.5,-0.6) node[black,midway,yshift=-0.7cm] { \textcolor{midgreen}{$A_5$}};
            \end{scope}
            \node at (15.25,0) {\ldots};
                \begin{scope}[xshift=16.5cm]
                \Block[3,lightpink!60!white,D,1];
                \draw [thick,lightpink,decorate,decoration={brace,amplitude=10pt,mirror},xshift=-0.5pt,yshift=-0.6pt](-0.5,-0.6) -- (2.5,-0.6) node[black,midway,yshift=-0.7cm] { \textcolor{lightpink}{$A_{n-1}$}};
            \end{scope}
            \begin{scope}[xshift=19.5cm]
                \Block[3,darkpurple!60!white,D,1];
                \draw [thick,darkpurple,decorate,decoration={brace,amplitude=10pt,mirror},xshift=-0.5pt,yshift=-0.6pt](-0.5,-0.6) -- (2.5,-0.6) node[black,midway,yshift=-0.7cm] { \textcolor{darkpurple}{$A_{n}$}};
            \end{scope}
            \node at (22.5,0.8) {\huge $\Lambda$};
        \end{tikzpicture}
    \end{center}
    \caption{Representation of an interval $\Lambda$ split into multiple subintervals $\Lambda=A_1 A_2 \ldots A_n$.}
    \label{fig:3}
\end{figure}
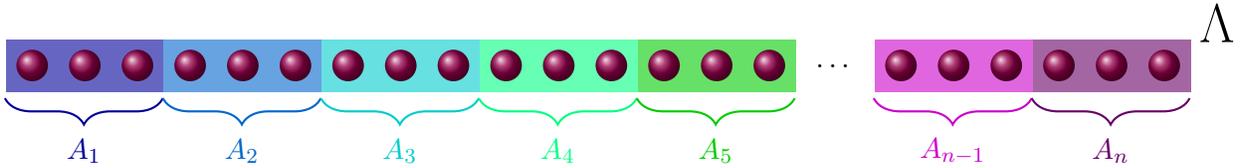

Their size is chosen so that the inverted BS-CMI is sufficiently small, and will be chosen in the next subsection.
For simplicity, let us denote by $\rho_i:= \rho_{A_i} = \tr_{\Lambda \setminus A_i}[\rho]$ and by  $\rho_{i:j}:= \rho_{A_i, A_{i+1}, \ldots , A_{j}}= \tr_{\Lambda \setminus (A_i \cup \ldots \cup A_{j})}[\rho]$.

A standard argument to extend a recovery result from a single recovery map to a chain is to use the contractivity of the recovery map to show that the errors behave additively.
This is not possible in our setting, as the non-trace-preserving maps are not contractive. However, we will be able to overcome this technical difficulty by proving a Lipschitz bound independently of the level of concatenation.
More specifically, in Lemma~\ref{lem:concatLip} we show that the errors only suffer a constant amplification that does not grow exponentially in the length of the recovery chain. 

We denote the individual recovery maps $\cR_i:\mathcal B(\mathcal H_{A_i})\to\mathcal B(\mathcal H_{A_iA_{i+1}})$, $i=1,\ldots,N-1$ as
\begin{equation}\label{eq:recoverymap}
    \cR_i(X)=\rho_{i}^{1/2}(\rho_{i}^{-1/2}\rho_{i:{i+1}}\rho_{i}^{-1/2})^{1/2}\rho_{i}^{-1/2}X\rho_{i}^{-1/2}(\rho_{i}^{-1/2}\rho_{i:{i+1}}\rho_{i}^{-1/2})^{1/2}\rho_{i}^{1/2} \, .
\end{equation}
For simplicity, we will abbreviate the map when adding on a larger chain, and replace $\mathds{1}^{\otimes i-1}\otimes\cR_i$ by $\cR_i$.

\begin{lemma}\label{lem:concatLip}
    Let $1\leq j < N$, be natural numbers and $X\in\mathcal B(\mathcal H_{A_j})$ be positive semidefinite.
    There is a constant independent of the number of concatenated maps that bounds the Lipschitz constant of the concatenated map:
    \begin{align*}
        \left\|\left(\bigcirc_{i=j}^{N-1}\cR_{i}\right)(X)\right\|_1\le \norm{\rho_j^{-1/2}X\rho_j^{-1/2}}_1 \le \|\rho_{j}^{-1}\|\|X\|_1 \, .
    \end{align*}
    This also holds if the $\rho_{i},\rho_{i:{i+1}}$ are not consistent marginals of a fixed global state, as long as they are local positive states and $\tr_j[\rho_{j:{j+1}}]=\rho_{{j+1}}$. %\alv{could this also be a remark? not sure what's best, maybe not important}\SOS{I kept it because lesss awkward to ref in next section}
\end{lemma}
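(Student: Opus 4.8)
The plan is to exploit that each $\cR_i$ has a single-Kraus (hence completely positive) form, reduce the trace-norm estimate to one elementary operator inequality for partial traces, and isolate the marginal-consistency hypothesis $\tr_j[\rho_{j:j+1}]=\rho_{j+1}$ to the single place where it is genuinely needed. First I would record that, writing $S_i := (\rho_i^{-1/2}\rho_{i:i+1}\rho_i^{-1/2})^{1/2}$ and $N_i := \rho_i^{1/2}S_i\rho_i^{-1/2}$, each map acts as $\cR_i(\cdot) = N_i(\cdot)N_i^*$, so it is completely positive. Consequently the concatenation is again of single-Kraus form,
\[
\left(\bigcirc_{i=j}^{N-1}\cR_i\right)(X) = M(X\otimes\1)M^*, \qquad M := N_{N-1}N_{N-2}\cdots N_j ,
\]
with $X$ embedded on $A_j$. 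In particular, for $X\ge 0$ the output is positive semidefinite, so its trace norm equals its trace.

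Next I would absorb the conjugation by $\rho_j^{1/2}$ into $M$. Setting $\tilde M := M(\rho_j^{1/2}\otimes\1)$ and $\bar A := A_{j+1}\cdots A_N$, cyclicity of the trace gives
\[
\left\|\left(\bigcirc_{i=j}^{N-1}\cR_i\right)(X)\right\|_1 = \Tr\big[(X\otimes\1)M^*M\big] = \Tr\Big[\rho_j^{-1/2}X\rho_j^{-1/2}\,\tr_{\bar A}[\tilde M^*\tilde M]\Big] .
\]
The key step is then the elementary inequality $\tr_{\bar A}[Q] \le \Tr[Q]\,\1_{A_j}$, valid for every positive semidefinite $Q$ on $\cH_{A_j}\otimes\cH_{\bar A}$ (diagonalise $Q$ and use that the reduced state of each normalised eigenvector is bounded by $\1$). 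Applying this to $Q=\tilde M^*\tilde M$ and using $\rho_j^{-1/2}X\rho_j^{-1/2}\ge 0$ yields
\[
\left\|\left(\bigcirc_{i=j}^{N-1}\cR_i\right)(X)\right\|_1 \le \Tr[\tilde M^*\tilde M]\,\norm{\rho_j^{-1/2}X\rho_j^{-1/2}}_1 .
\]

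It remains to show the prefactor equals $1$, and this is the only point where the hypothesis enters. Since $\Tr[\tilde M^*\tilde M]=\Tr[\tilde M\tilde M^*]=\Tr[(\bigcirc_{i}\cR_i)(\rho_j)]$, I would argue by induction along the chain using the exact recovery identity $\cR_i(\rho_i)=\rho_i^{1/2}S_i^2\rho_i^{1/2}=\rho_{i:i+1}$: if the partially built operator has reduced state $\rho_i$ on its rightmost factor $A_i$, then tracing out $A_j\cdots A_{i-1}$ from $\cR_i$ of it returns $\rho_{i:i+1}$, whose $A_{i+1}$-marginal is $\tr_{A_i}[\rho_{i:i+1}]=\rho_{i+1}$ by the consistency hypothesis. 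Hence the trace is preserved at every step and the reduced state propagates as $\rho_i\mapsto\rho_{i+1}$; starting from $\rho_j$ this gives $\Tr[(\bigcirc_i\cR_i)(\rho_j)]=1$. The second, weaker inequality then follows from H\"older.

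I expect the main conceptual obstacle to be exactly the one flagged before the statement: the maps are not trace-preserving, so a naive step-by-step contraction fails—bounding an intermediate operator by $\rho_i^{-1}$ and pushing it through the next map amplifies it by a factor of the local dimension, which compounds with the length of the chain. The factorisation through $\tilde M$ together with the global partial-trace inequality sidesteps this, collapsing the entire concatenation into a \emph{single} positive operator whose normalisation, and nothing else, has to be controlled; the independence of the bound on the number of maps is then automatic.
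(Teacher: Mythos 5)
Your proof is correct, and its backbone is the same as the paper's: both arguments reduce the problem to evaluating the concatenation on $\rho_j$ and then show $\Tr[(\bigcirc_{i=j}^{N-1}\cR_i)(\rho_j)]=1$ by exactly the same induction, using $\cR_i(\rho_i)=\rho_{i:i+1}$, the commutation of earlier partial traces with later maps, and the marginal-consistency hypothesis $\tr_i[\rho_{i:i+1}]=\rho_{i+1}$. Where you differ is in the reduction step. The paper uses the operator inequality $0\le X\le\norm{\rho_j^{-1/2}X\rho_j^{-1/2}}\,\rho_j$ together with positivity of the concatenated map, which immediately gives $\norm{(\bigcirc_i\cR_i)(X)}_1\le\norm{\rho_j^{-1/2}X\rho_j^{-1/2}}\,\norm{(\bigcirc_i\cR_i)(\rho_j)}_1$ — note this yields the \emph{operator} norm of $\rho_j^{-1/2}X\rho_j^{-1/2}$, i.e.\ a bound slightly stronger than the stated one, which is then relaxed via $\norm{\cdot}\le\norm{\cdot}_1$. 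You instead collapse the concatenation into a single Kraus operator $M$ and invoke $\tr_{\bar A}[Q]\le\Tr[Q]\,\1$, landing directly on the trace norm $\norm{\rho_j^{-1/2}X\rho_j^{-1/2}}_1$. Both routes are valid; the paper's domination trick is shorter and marginally sharper, while your factorisation makes the single-Kraus structure of the MPO explicit, which is a nice structural observation but not needed for the bound. All intermediate claims you use (complete positivity of each $\cR_i$, the partial-trace inequality for positive $Q$, and the trace-propagation induction) check out.
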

\begin{proof}
    The map $\bigcirc_{i=j}^{N-1}\cR_{i}(\cdot)$ preserves positivity being a concatenation of positive maps. Hence by
    the operator inequality $0 \le X \le \Vert\rho_j^{-1/2} X \rho_j^{-1/2}\Vert \rho_j$, we readily observe that:
    \begin{equation*}
        \norm{\bigcirc_{i=j}^{N-1}\cR_{i}(X)}_1 \le \norm{\rho_j^{-1/2}X\rho_j^{-1/2}}\norm{\bigcirc_{i=j}^{N-1}\cR_{i}(\rho_j)}_1 \, . 
    \end{equation*}
    Further inspection reveals that $\cR_i(\rho_i) = \rho_{i:i + 1}$ and $[\tr_i, \cR_k] = 0$ for $k > i$. Leveraging these observations, we can immediately conclude:
    \begin{equation*}
        \begin{aligned}
            \norm{\bigcirc_{i=j}^{N-1}\cR_{i}(\rho_j)}_1 &= \Tr[\bigcirc_{i=j}^{N-1}\cR_{i}(\rho_j)] = \Tr[\tr_j[\bigcirc_{i=j}^{N-1}\cR_{i}(\rho_j)]] \\
            &= \Tr[\tr_j[\bigcirc_{i=j + 1}^{N-1}\cR_{i}(\rho_{j:j+1})]] \\
            &= \Tr[\bigcirc_{i=j + 1}^{N-1}\cR_{i}(\tr_j[\rho_{j:j+1}])] = \Tr[\bigcirc_{i=j + 1}^{N-1}\cR_{i}(\rho_{j+1})] \, .
        \end{aligned}
    \end{equation*}
    Iterating this process yields $\norm{\bigcirc_{i=j}^{N-1}\cR_{i}(\rho_j)}_1 = \Tr[\rho_N] = 1$. Applying Hölder's inequality and the hierarchy of Schatten p-norms to simplify $\norm{\rho_j^{-1/2}X\rho_j^{-1/2}}$ finally confirms the desired result.
\end{proof}

%%%%%%%%%%
%%%%%%%%%%

We now combine Lemma~\ref{lem:concatLip} with Lemma~\ref{lem:singleRecoveryError} to decompose the overall recovery error, obtaining a bound on the recovery error based on entropic quantities. This is the main technical result of the section. 

\begin{theorem}\label{thm:entropicMPO}
    For a multipartite quantum state $\rho_{1:N}$, the recovery error on the chain between the concatenated recovery map and the original state is bounded by
    \begin{equation}\label{eq:recoverycondition}
        \begin{aligned}
           & \left\|\left(\bigcirc_{i=1}^{N-1}\cR_i\right)(\rho_{1})-\rho_{1:N}\right\|_1 \\
           & \hspace{1.2cm} \le\frac{16(N-1)}{\pi}\sup_i \norm{\rho_i^{-1}}\exp(I_\infty(A_1\ldots A_{i-1}:A_i)/2) 
             \widehat{I}_\rho^{\operatorname{rev}}(A_i:A_1\ldots A_{i-2}|A_{i-1})^{1/4} \, .
        \end{aligned}
    \end{equation}
    Note that the recovery map representation $\left(\bigcirc_{i=1}^{N-1}\cR_i\right)(\rho_{1})$ is also a matrix product operator representation with bond dimension $D=\dim(A_i)^3$.
\end{theorem}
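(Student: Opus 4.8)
The plan is to prove the bound by a telescoping (hybrid) argument interpolating between the true state and its reconstruction, where each one-step error is controlled by the single-shot estimate of \cref{lem:singleRecoveryError} and where \cref{lem:concatLip} prevents the non-trace-preserving maps from amplifying these errors. First I would introduce the hybrid states
\[
\sigma_k := \left(\bigcirc_{i=k}^{N-1}\cR_i\right)(\rho_{1:k}), \qquad k = 1, \ldots, N,
\]
so that $\sigma_1 = \left(\bigcirc_{i=1}^{N-1}\cR_i\right)(\rho_1)$ is the full reconstruction and $\sigma_N = \rho_{1:N}$ is the true state. The triangle inequality reduces the global error to $\left\|\sigma_1 - \sigma_N\right\|_1 \le \sum_{k=1}^{N-1}\left\|\sigma_k - \sigma_{k+1}\right\|_1$, and peeling off the outermost map gives the factorization
\[
\sigma_k - \sigma_{k+1} = \Phi_{k+1}\!\left(\cR_k(\rho_{1:k}) - \rho_{1:k+1}\right), \qquad \Phi_{k+1} := \bigcirc_{i=k+1}^{N-1}\cR_i.
\]
Since every $\cR_i$ with $i \ge k+1$ acts on $A_iA_{i+1}$ and leaves $A_1\ldots A_k$ untouched, the map splits as $\Phi_{k+1} = \1_{A_1\ldots A_k}\otimes\Psi_{k+1}$ with $\Psi_{k+1}$ a completely positive map on $A_{k+1}\ldots A_N$.

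Next I would bound the inner single-shot error $D_k := \cR_k(\rho_{1:k}) - \rho_{1:k+1}$. Identifying the recovered, conditioning and reference regions as $A\leftrightarrow A_{k+1}$, $B\leftrightarrow A_k$, $C\leftrightarrow A_1\ldots A_{k-1}$, the map $\cR_k$ of \eqref{eq:recoverymap} is exactly the symmetric recovery map of \cref{lem:singleRecoveryError}, which yields
\[
\|D_k\|_1 \le \frac{8}{\pi}\,\|\Gamma_k\|^{1/2}\,\widehat{I}_\rho^{\operatorname{rev}}(A_{k+1}; A_1\ldots A_{k-1}\mid A_k)^{1/4}, \qquad \Gamma_k = \rho_{1:k}^{-1/2}\rho_{1:k+1}\rho_{1:k}^{-1/2}.
\]
The operator norm of $\Gamma_k$ is controlled by the maximal mutual information: by definition of $D_\infty$ one has $\rho_{1:k+1}\le e^{I_\infty(A_1\ldots A_k:A_{k+1})}\,\rho_{1:k}\otimes\rho_{k+1}$, so conjugating by $\rho_{1:k}^{-1/2}$ and using $\|\rho_{k+1}\|\le1$ gives $\|\Gamma_k\|^{1/2}\le e^{I_\infty(A_1\ldots A_k:A_{k+1})/2}$. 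For $k=1$ the identity $\cR_1(\rho_1)=\rho_{1:2}$ holds exactly, so $D_1=0$ and that term drops out; the remaining terms reindex by $i=k+1$ into the quantities appearing in the statement.

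The crucial and hardest step is to show that applying $\Phi_{k+1}$ to $D_k$ does not inflate the error, i.e. to upgrade \cref{lem:concatLip} from single-region positive inputs to the Hermitian operator $D_k$ living on the enlarged chain $A_1\ldots A_{k+1}$. Writing $D_k = D_k^+ - D_k^-$ with $D_k^\pm\ge0$ and using complete positivity of $\Phi_{k+1}$, I would bound $\|\Phi_{k+1}(D_k)\|_1 \le \|\Phi_{k+1}(D_k^+)\|_1 + \|\Phi_{k+1}(D_k^-)\|_1$. The point is that \cref{lem:concatLip}, read through $\|\Psi_{k+1}(Y)\|_1 = \Tr[\Psi_{k+1}(Y)]$ for $Y\ge0$, says precisely $\Tr[\Psi_{k+1}(Y)]\le\Tr[\rho_{k+1}^{-1}Y]$, i.e. the dual satisfies $\Psi_{k+1}^*(\1)\le\rho_{k+1}^{-1}$. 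Tensoring with the identity on the reference system then gives, for $D_k^\pm\ge0$,
\[
\|\Phi_{k+1}(D_k^\pm)\|_1 = \Tr\!\left[D_k^\pm\,(\1_{A_1\ldots A_k}\otimes\Psi_{k+1}^*(\1))\right] \le \Tr\!\left[D_k^\pm\,(\1\otimes\rho_{k+1}^{-1})\right] \le \|\rho_{k+1}^{-1}\|\,\|D_k^\pm\|_1,
\]
and bounding $\|D_k^\pm\|_1\le\|D_k\|_1$ produces $\|\Phi_{k+1}(D_k)\|_1 \le 2\,\|\rho_{k+1}^{-1}\|\,\|D_k\|_1$. The factor two is the only place where positivity is spent, and it accounts for the prefactor $\tfrac{16(N-1)}{\pi}$ after summing over $k$ and taking the supremum over $i=k+1$. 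The main obstacle is exactly this dimension-independent Lipschitz control: the recovery maps are not trace preserving, so a naive contractivity estimate would amplify errors exponentially in $N$, and it is the combination of complete positivity with the dual bound $\Psi_{k+1}^*(\1)\le\rho_{k+1}^{-1}$ that caps the amplification at a constant per step.

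Finally, the matrix product operator claim is structural: each recovery map has the Kraus form $\cR_i(X) = L_i\,X\,L_i^\dagger$ with $L_i = \rho_i^{1/2}(\rho_i^{-1/2}\rho_{i:i+1}\rho_i^{-1/2})^{1/2}\rho_i^{-1/2}$ supported on the two adjacent blocks $A_iA_{i+1}$, so the reconstruction equals $\bigl(\textstyle\prod_i L_i\bigr)\rho_1\bigl(\prod_i L_i\bigr)^\dagger$, a product of nearest-neighbour operators. Contracting this network block by block exhibits $\left(\bigcirc_{i=1}^{N-1}\cR_i\right)(\rho_1)$ as an MPO whose virtual bond across the cut between $A_i$ and $A_{i+1}$ carries the ket index, the bra index and the internal index of $L_i$, giving bond dimension $\dim(A_i)^3$; I would close by making this index bookkeeping explicit.
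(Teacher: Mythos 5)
Your proposal is correct and follows essentially the same route as the paper: a telescoping decomposition of the error, a Hahn decomposition of each one-step defect into positive and negative parts, the single-shot bound of \cref{lem:singleRecoveryError} (with the same identification of regions and the same $D_\infty$ estimate for $\|\Gamma_k\|$), and \cref{lem:concatLip} to cap the amplification at $\|\rho_{i}^{-1}\|$ per step, reproducing the $\tfrac{16(N-1)}{\pi}$ prefactor. The only (equivalent) variation is how you invoke \cref{lem:concatLip} on the enlarged chain: you dualize it to $\Psi_{k+1}^*(\1)\le\rho_{k+1}^{-1}$ and tensor with the identity on the spectator systems, whereas the paper first traces out the left systems from $X_i^{\pm}$ — using that this preserves the trace norm of positive operators and commutes with the remaining maps — so as to apply the lemma verbatim to a single-region positive input.
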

\begin{proof}
    We denote the overall recovery error by $\eps$
    \begin{equation}\label{eq:epsilon_error_recovery}
        \eps=\left\|\cR_{N-1}(\ldots\cR_1(\rho_{1})\ldots)-\rho_{1:N}\right\|_1=\left\|\left(\bigcirc_{i=1}^{N-1}\cR_i\right)(\rho_{1})-\rho_{1:N}\right\|_1.
    \end{equation}
    First, we split the total error $\eps$ into the individual errors for each recovery step, where every error is still amplified by the subsequent recovery maps:
    \begin{equation}\label{eq:decomposedPreContractivity}
        \begin{aligned}
             \eps&\le\sum_{i=1}^{N-1}\left\|\left(\bigcirc_{j=i+1}^{N-1}\cR_j\right)(\rho_{1\ldots {i+1}}-\cR_i(\rho_{1\ldots i}))\right\|_1\\
            &\le\sum_{i=1}^{N-1}\left\|\left(\bigcirc_{j=i+1}^{N-1}\cR_j\right)(X_{A_{i+1}}^+)\right\|_1+\left\|\left(\bigcirc_{j=i+1}^{N-1}\cR_j\right)(X_{A_{i+1}}^-)\right\|_1
        \end{aligned}
    \end{equation}
    Here, $X_{i}=\rho_{1:{i+1}}-\cR_i(\rho_{A_1\ldots A_i})$ with decomposition into positive and negative parts $X_{i}=X_{i}^+-X_{i}^-$ and $X_{A_i}^{\pm}=\Tr_{A_1\ldots A_{i-1}}[X_{i}^\pm]$ such that the last norm is taken on systems $A_i\ldots A_N$ only.
    Furthermore, this means that
    \begin{equation*}
        \|X_{A_i}^\pm\|_1=\|X_{i}^\pm\|_1\le\|X_{A_i}\|_1.
    \end{equation*}
    Finally, combining Eqs. \eqref{eq:epsilon_error_recovery} and \eqref{eq:decomposedPreContractivity}, and using Lemma~\ref{lem:concatLip} we can upper bound the error as
    \begin{equation}\label{eq:epsilon}
        \begin{aligned}
            \eps
            &=\left\|\left(\bigcirc_{i=1}^{N-1}\cR_i\right)(\rho_{1})-\rho_{1:N}\right\|_1\\
            &\le2(N-1)\sup_i \|\rho_{i}^{-1}\|\frac8\pi\sqrt{\|\Gamma_i\|}\widehat I_\rho^{\operatorname{rev}}(A_i:A_1\ldots A_{i-2}|A_{i-1})^{1/4} \, ,
        \end{aligned}
    \end{equation}
    where in the last inequality we are using that, by \Cref{lem:singleRecoveryError}, 
    \begin{equation*}
        \norm{X_{A_i}}_1 \leq \frac8\pi\sqrt{\|\Gamma_i\|}\widehat I_\rho^{\operatorname{rev}}(A_i:A_1\ldots A_{i-2}|A_{i-1})^{1/4} \, , 
    \end{equation*}
    and we introduced $\Gamma_i=\rho_{1:{i-1}}^{-1/2}\rho_{1:i}\rho_{1:{i-1}}^{-1/2}$.
    To obtain the final form of the result, let us note that all the quantities involved can be rewritten in terms of divergences.
    The factor $\Gamma_i$ is related to the maximal Rényi mutual information as follows:
    \begin{align*}
        \|\Gamma_i\| &\le \left\|\rho_{i}^{1/2}\right\|^2 \left\|(\rho_{1:{i-1}}\otimes\rho_{i})^{-1/2}\rho_{1:i}(\rho_{1:{i-1}}\otimes\rho_{i})^{-1/2}\right\|\\
        &\le\exp(I_\infty(A_1\ldots A_{i-1}:A_i)) \, .
    \end{align*}
    This concludes the proof.
\end{proof}

We now consider the implications of Theorem~\ref{thm:entropicMPO} for MPO representations of Gibbs states.
Note that due to the expected exponential growth of $\norm{\rho_i^{-1}}$ in Eq. \eqref{eq:recoverycondition} with the size of the subsystem, at least an exponential decay of $\widehat I_\rho^{\operatorname{rev}}(A\mathbin{;}C|B)$ with sufficient rate, for any adjacent three systems $A, B, C$, is needed to employ the above result.
In the case of thermal states, however, Theorem \ref{theorem:superexponential-decay-BS-CMI} guarantees a superexponential rate, which dominates over the other prefactors.  

\begin{corollary}\label{cor:gibbsMPO}
    For an $n$-site marginal of a Gibbs state in one dimension with translation-invariant interaction $\Phi$ and a given accuracy $\eps$, there is an MPO representation of bond dimension
    \begin{equation} \label{eq:bonddim}
        D=\exp(2\log(d)\cC_1\frac{\log(n/\eps)+\cC_2}{\log(\log(n/\eps))}),
    \end{equation}
    where $\cC_1,\cC_2$ are constants depending on $J$, $R$, and $\beta$.
    
    The MPO representation is given by the above construction, i.e.,
    \begin{equation*}
        \left\|\left(\bigcirc_{i=1}^{N-1}\cR_i\right)(\rho_{1})-\rho_{1:N}\right\|\le\eps
    \end{equation*}
    choosing the $A_i$ as consecutive regions of at least $l$ spins
    \begin{equation*}
         |A_i|=l\ge \cC_1\frac{\log(n/\eps)+\cC_2}{\log(\log(n/\eps))} \, . 
    \end{equation*}
\end{corollary}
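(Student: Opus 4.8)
The plan is to feed the Gibbs-state estimates of \Cref{sec:approximate-factorisation,sec:superexponential-decay-BS-CMI} into the abstract recovery bound of \Cref{thm:entropicMPO} and then solve for the block size. Fix all blocks to a common length $|A_i|=l$, so that $N-1\le n/l\le n$, and examine the three $l$-dependent factors on the right-hand side of \eqref{eq:recoverycondition}: the inverse-marginal norm $\norm{\rho_i^{-1}}$, the exponentiated maximal mutual information $\exp(I_\infty(A_1\ldots A_{i-1}:A_i)/2)$, and the reversed BS-CMI $\widehat I_\rho^{\operatorname{rev}}(A_i:A_1\ldots A_{i-2}|A_{i-1})$.

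Each factor is controlled by a result already at our disposal. By the second point of \Cref{lemma:norm-estimates-of-functions-of-Gibbs-states} (together with $\norm{\rho_i}\le 1$) we have $\norm{\rho_i^{-1}}\le \cC e^{\alpha l}$. The maximal mutual information across a single cut of a 1D Gibbs state obeys an area law and is bounded by a constant depending only on $\beta,J,R$ (the boundary region has width $O(R)$), cf. \cite{Scalet.2021}; thus $\exp(I_\infty/2)$ contributes at most a further factor $e^{O(l)}$, and in fact a constant. Finally, \Cref{theorem:superexponential-decay-BS-CMI}, applied with first argument $A_i$ of size $l$ and conditioning system $A_{i-1}$ of size $l$, gives $\widehat I_\rho^{\operatorname{rev}}(A_i:A_1\ldots A_{i-2}|A_{i-1})\le c\,e^{\alpha l}\,\epsilon(l)$, where $\epsilon$ is the superexponentially decaying function of \Cref{lemma:approximate-factorization-of-Gibbs-state}. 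Collecting these estimates, the recovery error is bounded by
\[
    n\cdot e^{O(l)}\cdot\epsilon(l)^{1/4}\,,
\]
with all implicit constants depending only on $\beta,J,R,d$.

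The heart of the argument is then an asymptotic inversion: we must exhibit $l$ for which this bound drops below the target $\eps$. Taking logarithms and inserting the explicit form $\epsilon(\ell)=\cC_1\,\cC_2^{1+\lfloor\ell/2\rfloor}/(\lfloor\lfloor\ell/2\rfloor/R\rfloor+1)!$, Stirling's approximation of the factorial yields $\log\epsilon(l)=-\tfrac{l}{2R}\log l+O(l)$, so that
\[
    \log(\text{error})\le \log n-\tfrac{1}{8R}\,l\log l+O(l)\,.
\]
Requiring the negative $l\log l$ term to dominate $\log(n/\eps)$ gives the condition $l\log l\gtrsim\log(n/\eps)$, whose solution is $l=\Theta\!\big(\log(n/\eps)/\log\log(n/\eps)\big)$; this is precisely the stated choice $l\ge\cC_1(\log(n/\eps)+\cC_2)/\log\log(n/\eps)$ once the constants are fixed. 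The only delicate point is to keep track of the $O(l)$ contributions so that they are genuinely absorbed by the leading $l\log l$ term, thereby verifying that this $l$ makes the error at most $\eps$.

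With $l$ chosen in this way, \Cref{thm:entropicMPO} guarantees that $\big(\bigcirc_{i=1}^{N-1}\cR_i\big)(\rho_1)$ is an MPO approximating $\rho_{1:N}$ to error $\eps$, of bond dimension $\dim(A_i)^3=\exp(3l\log d)$. Substituting the value of $l$ and relabelling the constant $\cC_1$ yields the claimed bond dimension \eqref{eq:bonddim}. I expect the main obstacle to be exactly the asymptotic inversion of the third paragraph: isolating the leading $l\log l$ behaviour of $\log\epsilon(l)$ from the competing exponential prefactors $e^{O(l)}$, and confirming that a block size merely logarithmic in $n/\eps$, deflated by a $\log\log$ factor, already suppresses the error below $\eps$.
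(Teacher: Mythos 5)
Your proposal is correct and follows essentially the same route as the paper: the same three ingredients (the inverse-marginal bound from \cref{lemma:norm-estimates-of-functions-of-Gibbs-states}, the $I_\infty$ area law of \cite{Scalet.2021}, and the superexponential BS-CMI decay of \cref{theorem:superexponential-decay-BS-CMI}) are fed into \cref{thm:entropicMPO} with equal block sizes, and the block length is obtained by the same Stirling-based inversion of $\epsilon(l)$; the paper merely carries out that inversion with an explicit convex-duality relaxation of $x\mapsto x\log x$ to fix the constants, where you argue asymptotically. The only cosmetic discrepancy is the exponent of the bond dimension ($\dim(A_i)^3$ versus the coefficient $2\log d$ in \eqref{eq:bonddim}), an inconsistency already present between \cref{thm:entropicMPO} and the corollary as stated.
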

\begin{remark}
    Due to the uniformity of the result in \cref{theorem:superexponential-decay-BS-CMI}, this result applies equally to a local Gibbs state, i.e. the Gibbs state of the Hamiltonian on $n$ sites, as to the marginal of a Gibbs state on a larger or even infinite system.
\end{remark}
\begin{proof}[Proof of \cref{cor:gibbsMPO}]
    Here, we follow a similar procedure to that of \cite[Corollary 3.3]{Fawzi.2023}. First, recall that \cref{theorem:superexponential-decay-BS-CMI} provides the needed decay for the BS-CMI in the case of thermal states.
    Furthermore, in this setting, by Lemma~\ref{lemma:norm-estimates-of-functions-of-Gibbs-states}, the growth of $\norm{\rho_i^{-1}}$ is exponential in $|A_i|$.
    Finally, the maximal mutual information obeys an area law for thermal states \cite[Theorem 2]{Scalet.2021}, namely
    \begin{equation*}
         I_\infty(A:B)\le\cC
    \end{equation*}
    for any adjacent intervals $A, B$ and a constant $\cC$ that only depends on $\beta$, $J$ and $R$.
    
    We now proceed by partitioning our system into subsystems of size $|A_i|=l$ to be determined later.
    Combining all these results we achieve the desired accuracy if
    \begin{align*}
        \eps^4\le\cC_1\frac{n}{l}\frac{\cC_2^{\lfloor l/2\rfloor+1}}{(\lfloor\lfloor l/2\rfloor/R\rfloor+1)!} \, .
    \end{align*}
    Note that the factor $\exp(\alpha(|A_i| + |A_{i+1}|))$ for every $i$ is included in $\cC_2^{\lfloor l/2\rfloor+1}$.
    Using Stirling's approximation we find that a sufficient condition for this is given by
    \begin{equation*}
         \log(\frac{n\cC_1}{\eps^4})\le-\left(\frac l2+1\right)\log(\cC_2)+\left(\frac{l}{2R}-\frac1R\right)\left(\log(\frac{l}{2R}-\frac1R)-1\right) \, .
    \end{equation*}
    While this inequality cannot be inverted analytically, a further relaxation based on a dual description of the convex function $x\mapsto x\log(x)$ yields the sufficient condition 
    \begin{equation*}
         2R\frac{\log(\frac{n\cC_1\cC_2^2}{\eps^4})+a}{\log(a)-R\log(\cC_2)}+1\le l 
    \end{equation*}
    for any $a>0$. Choosing $a=\log(n/\eps)\mathcal{C}_2^R$ and again combining all constants the claim follows.
\end{proof}

\begin{remark}
    \Cref{cor:gibbsMPO} can be tailored to accommodate exponentially-decaying interactions, thereby yielding an MPO approximation for a Gibbs state in this context, which to the best of our knowledge is the first of its kind. The bond dimension is slightly worse in that case though, as it arises from the decay of the BS-CMI, which is only exponential in $|B|$ in this case, as opposed to the superexponential behaviour in the finite-range case. This is the content of \cref{cor:gibbsMPO_shortrange}.
\end{remark}

\subsection{Reconstructing the positive MPO from local tomography}\label{sec:tomography}
Since Theorem \ref{thm:entropicMPO} provides a way to reconstruct an entire state $\rho$ from the smaller marginals, it also offers a reconstruction of classical representations of $\rho$, in the form of a MPO, from tomographic data on small regions. What is required from the tomographic data is to produce a classical representation of the marginals with small tomography errors, which can be done with standard results \cite{Haah.2017,Wright.2016}. To prove that our final MPO reconstruction of $\rho$ is accurate, we thus need to show that the reconstruction process, which happens at the level of classical post-processing, is stable to the small errors in the local tomography. Since in our setting the Hamiltonian is translation invariant, the cost of learning a state from measurements is limited to the cost of learning the state in a single region (although the precision might depend on the final size of the reconstruction).

Our starting point is a measurement scheme for general quantum states that we will apply to the marginals of the Gibbs state.
Since we do not aim to give constants in our runtime explicitly, any polynomial time measurement scheme is sufficient for our purposes.
Sample optimal measurement schemes for arbitrary quantum states with quadratic sample complexity in the inverse error and dimension have been developed in \cite{Haah.2017,Wright.2016}.
The situation regarding the classical runtime of these schemes is however unclear.
Nevertheless, the following result is an immediate consequence of measuring each coefficient of the density matrix individually to sufficient precision and using the equivalence of matrix norms up to dimensional factors.
\begin{lemma}[consequence of {\cite{Haah.2017,Wright.2016}}] \label{le:tomography}
    There exists a measurement scheme that produces a classical representation of a quantum state $\hat{\rho}$ such that with probability $\ge1-c$ 
    \begin{equation}
        \norm{\rho-\hat \rho }_1 \le \delta \, ,
    \end{equation}
    with several samples and classical runtime bounded by
    \begin{align*}
        n_s&\le C_s \poly\left(d,\frac{1}{\delta}\right)\log(1/c) \, ,\\
        t&\le C_t \poly\left(d,\frac1\delta\right)\log(1/c) \, .
    \end{align*}
\end{lemma}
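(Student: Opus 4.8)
The plan is to sidestep the sample-optimal but computationally opaque schemes of \cite{Haah.2017,Wright.2016} for the runtime bound, and instead reconstruct $\rho$ entrywise in a fixed basis, exactly as suggested in the remark preceding the statement. I would fix an orthonormal basis $\{\ket{k}\}_{k=1}^d$ of $\cH$, so that $\rho$ is determined by its $d^2$ matrix entries $\rho_{jk}=\bra{j}\rho\ket{k}$. Each entry is a linear combination of expectation values of bounded Hermitian observables: for $j \neq k$ one has $\operatorname{Re}\rho_{jk}=\tfrac12\Tr[\rho(\ket{j}\bra{k}+\ket{k}\bra{j})]$ and $\operatorname{Im}\rho_{jk}=\tfrac{i}{2}\Tr[\rho(\ket{j}\bra{k}-\ket{k}\bra{j})]$, while $\rho_{jj}=\Tr[\rho\,\ket{j}\bra{j}]$. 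All observables appearing here are Hermitian with operator norm at most one, hence have spectrum contained in $[-1,1]$.

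For each of these $O(d^2)$ observables $O$, I would estimate $\Tr[\rho O]$ by the empirical mean of $m$ independent measurement outcomes on fresh copies of $\rho$. Since each single-shot outcome lies in $[-1,1]$, Hoeffding's inequality yields $\Pr[\,|\widehat{\Tr[\rho O]}-\Tr[\rho O]|>\eps\,]\le 2\exp(-m\eps^2/2)$. Choosing $m=O(\eps^{-2}\log(d^2/c))$ then makes each estimate accurate to additive error $\eps$ with failure probability at most $c/(Cd^2)$, and a union bound over all $O(d^2)$ entries guarantees that, with probability at least $1-c$, the assembled matrix $\hat\rho$ satisfies $|\hat\rho_{jk}-\rho_{jk}|\le\eps$ for every $j,k$ simultaneously. (If a genuine density matrix is required, I would additionally project $\hat\rho$ onto the closed convex set of states, which at most doubles the trace-norm error and costs only polynomial time.)

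It then remains to convert this entrywise control into a trace-norm bound. Using the standard matrix-norm inequalities $\|X\|_1\le\sqrt{d}\,\|X\|_2$ and $\|X\|_2\le d\max_{jk}|X_{jk}|$, valid for any $d\times d$ matrix $X$, I obtain $\|\rho-\hat\rho\|_1\le d^{3/2}\max_{jk}|\rho_{jk}-\hat\rho_{jk}|\le d^{3/2}\eps$. Setting $\eps=\delta\,d^{-3/2}$ thus forces $\|\rho-\hat\rho\|_1\le\delta$ on the good event. The total sample count is $O(d^2)\cdot m=O(d^5\delta^{-2}\log(d^2/c))=\poly(d,1/\delta)\log(1/c)$, and the classical runtime is dominated by averaging the outcomes and writing down the $d^2$ entries, which is of the same polynomial order; this gives both bounds with the claimed $\log(1/c)$ dependence.

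The argument is entirely routine, so there is no genuine obstacle to overcome; the only two points that need care are keeping the dependence on the confidence parameter $c$ logarithmic — which is precisely what the exponential tail of Hoeffding's inequality buys over a naive variance-based bound — and tracking the dimensional factors in the norm conversion so that the $\poly(d)$ overhead is accounted for correctly. Should one instead wish to optimise the sample complexity, the near-optimal schemes of \cite{Haah.2017,Wright.2016} can be invoked directly for the bound on $n_s$, at the cost of losing explicit control of the classical runtime $t$.
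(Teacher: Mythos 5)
Your proposal is correct and follows essentially the same route the paper indicates: the text immediately preceding the lemma states that it is ``an immediate consequence of measuring each coefficient of the density matrix individually to sufficient precision and using the equivalence of matrix norms up to dimensional factors,'' which is precisely your entrywise Hoeffding estimate, union bound, and $\norm{X}_1\le d^{3/2}\max_{jk}|X_{jk}|$ conversion. The paper gives no further detail, so your write-up simply fills in the same argument with explicit constants.
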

With this, we can now state the cost of local tomography that we require to produce the accurate MPO representation of the whole state from the previous section. The proof essentially shows how the errors in the marginals from \cref{le:tomography} propagate in the final error of the MPO approximation, which allows us to estimate the final costs of samples and classical runtime.

\begin{theorem}\label{thm:learnMPO}
    Let us consider a $n$-site marginal of a Gibbs state in one dimension with translation-invariant interaction, and define the approximate reconstruction map as in Eq. \eqref{eq:recoverymap}, 
    \begin{equation}\label{eq:hat_recovery_condition}
        \hat{\cR}_i(X)=(\hat \rho_{i})^{1/2}((\hat \rho_{i})^{-1/2}\hat \rho_{i:i+1}(\hat \rho_{i})^{-1/2})^{1/2}(\hat \rho_{i})^{-1/2}X(\hat \rho_{i})^{-1/2}((\hat \rho_{i})^{-1/2}\hat \rho_{i:i+1}(\hat \rho_{i})^{-1/2})^{1/2}(\hat \rho_{i})^{1/2} \, ,
    \end{equation}
    where each $\hat{\rho}_i$ is an approximation of the corresponding marginal of $\rho$. Then, for a given accuracy $\eps$, the MPO representation $\left(\bigcirc_{i=1}^{N-1}\hat{\cR}_i\right)(\hat \rho_{1})$ has bond dimension as in Eq. \eqref{eq:bonddim} and is such that, with probability $\ge1-c$,
    \begin{equation*}
        \left\|\left(\bigcirc_{i=1}^{N-1}\hat{\cR}_i\right)(\hat \rho_{1})-\rho_{1:N}\right\|_1\le\eps \, . 
    \end{equation*}
    The sample complexity and classical post-processing time for finding this MPO are $\textup{poly}\left(n /\varepsilon \right) \log(1/c)$.
\end{theorem}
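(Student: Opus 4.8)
The plan is to compare the learned MPO $\left(\bigcirc_{i=1}^{N-1}\hat{\cR}_i\right)(\hat\rho_1)$ with the target $\rho_{1:N}$ through the \emph{exact} MPO of \cref{cor:gibbsMPO}. By the triangle inequality,
\[
\left\|\left(\bigcirc_{i=1}^{N-1}\hat{\cR}_i\right)(\hat\rho_1) - \rho_{1:N}\right\|_1 \le \left\|\left(\bigcirc_{i=1}^{N-1}\hat{\cR}_i\right)(\hat\rho_1) - \left(\bigcirc_{i=1}^{N-1}\cR_i\right)(\rho_1)\right\|_1 + \left\|\left(\bigcirc_{i=1}^{N-1}\cR_i\right)(\rho_1) - \rho_{1:N}\right\|_1 .
\]
The second, \emph{exact recovery}, term I would make at most $\eps/2$ by choosing the block size $l=|A_i|$ exactly as in \cref{cor:gibbsMPO}; this simultaneously fixes the bond dimension to the value in \eqref{eq:bonddim}, since $\hat{\cR}_i$ has the same tensor structure as $\cR_i$. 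It then remains to bound the first, \emph{tomographic perturbation}, term by $\eps/2$ once every estimated marginal is within trace-norm error $\delta=1/\poly(n/\eps)$ of the true one.

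To control the perturbation term I would telescope the two concatenations map-by-map. Introducing the hybrid operators $Z_k=\left(\bigcirc_{i=k}^{N-1}\cR_i\right)\circ\left(\bigcirc_{i=1}^{k-1}\hat{\cR}_i\right)(\hat\rho_1)$ together with $Z_0=\left(\bigcirc_{i=1}^{N-1}\cR_i\right)(\rho_1)$, the term equals $\|Z_N-Z_0\|_1\le\|Z_1-Z_0\|_1+\sum_{k=1}^{N-1}\|Z_{k+1}-Z_k\|_1$, where $Z_1-Z_0=\left(\bigcirc_{i=1}^{N-1}\cR_i\right)(\hat\rho_1-\rho_1)$ and $Z_{k+1}-Z_k=\left(\bigcirc_{i=k+1}^{N-1}\cR_i\right)\big[(\hat{\cR}_k-\cR_k)(\omega_k)\big]$ with $\omega_k:=\left(\bigcirc_{i=1}^{k-1}\hat{\cR}_i\right)(\hat\rho_1)$. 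The crucial point — and the reason the non-trace-preserving maps do not blow up the error — is a length-\emph{independent} Lipschitz bound: since $\cR_{k+1},\ldots,\cR_{N-1}$ act trivially on $A_1\ldots A_k$, I would commute the partial trace over those spectator sites through them and reduce to \cref{lem:concatLip}, obtaining (after splitting the Hermitian argument into positive and negative parts) $\|Z_{k+1}-Z_k\|_1\le\|\rho_{k+1}^{-1}\|\,\|(\hat{\cR}_k-\cR_k)(\omega_k)\|_1$ and likewise $\|Z_1-Z_0\|_1\le\|\rho_1^{-1}\|\,\|\hat\rho_1-\rho_1\|_1$. I would also record that each $\omega_k$ is a genuine state, $\|\omega_k\|_1=1$, via the telescoping identity in \cref{lem:concatLip}, which applies once the estimated marginals are made \emph{consistent}, i.e.\ $\tr_i[\hat\rho_{i:i+1}]=\hat\rho_{i+1}$ — automatic here, as translation invariance lets us learn a single representative two-site marginal and define the single-site anchors as its partial traces.

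The heart of the argument is the single-step estimate $\|(\hat{\cR}_k-\cR_k)(\omega_k)\|_1\le\poly(n/\eps)\cdot\delta$. Writing $\cR_k(X)=M_kXM_k^*$ with $M_k=\rho_k^{1/2}(\rho_k^{-1/2}\rho_{k:k+1}\rho_k^{-1/2})^{1/2}\rho_k^{-1/2}$ and $\hat M_k$ the analogue built from $\hat\rho_k,\hat\rho_{k:k+1}$, I would use $(\hat{\cR}_k-\cR_k)(\omega_k)=(\hat M_k-M_k)\omega_k\hat M_k^*+M_k\omega_k(\hat M_k^*-M_k^*)$ and Hölder to get $\|(\hat{\cR}_k-\cR_k)(\omega_k)\|_1\le(\|M_k\|+\|\hat M_k\|)\,\|\hat M_k-M_k\|\,\|\omega_k\|_1$. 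The factors $\|M_k\|,\|\hat M_k\|$ are controlled by $\|\rho_k^{-1}\|,\|\rho_{k:k+1}^{-1}\|$, which by \cref{lemma:norm-estimates-of-functions-of-Gibbs-states} grow only like $e^{\cO(l)}=\poly(n/\eps)$. The \textbf{main obstacle} is bounding $\|\hat M_k-M_k\|$: this is a perturbation of a composition of operator square roots and inverse square roots, so I would invoke operator-Lipschitz continuity of $x\mapsto\sqrt{x}$ and $x\mapsto x^{-1/2}$ on spectra bounded below by $\lambda_{\min}\ge\|\rho_k^{-1}\|^{-1}$; one first checks that for $\delta<\tfrac12\lambda_{\min}$ the estimates $\hat\rho_k,\hat\rho_{k:k+1}$ stay positive and invertible with comparable lower bounds, so that $\hat M_k$ is well defined and $\|\hat M_k-M_k\|\le L\,\delta$ with $L=\poly(n/\eps)$. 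Summing the $N-1\le n$ telescoping terms then yields a total perturbation error $\le\poly(n/\eps)\cdot\delta$.

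Finally I would set $\delta=1/\poly(n/\eps)$ so that the perturbation term is at most $\eps/2$, giving the claimed $\eps$-accuracy. Each marginal lives on at most $2l=\cO(\log(n/\eps))$ sites, hence has dimension $d^{2l}=\poly(n/\eps)$; by \cref{le:tomography} it is learned to trace-norm error $\delta$ with $\poly(d^{2l},1/\delta)\log(1/c')=\poly(n/\eps)\log(1/c')$ samples and classical time, returning positive estimates. Because translation invariance requires estimating only a constant number of distinct marginals (or, without it, a union bound over the $N\le n$ marginals with failure probability $c/N$ each, contributing a $\log N$ factor absorbed into the polynomial), the overall sample complexity and post-processing time are $\poly(n/\eps)\log(1/c)$, while the bond dimension remains as in \eqref{eq:bonddim}. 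This completes the plan.
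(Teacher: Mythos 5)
Your proposal is correct and follows essentially the same route as the paper's proof: the same triangle-inequality split against the exact MPO of \cref{cor:gibbsMPO}, the same telescoping through hybrid concatenations controlled by the length-independent Lipschitz bound of \cref{lem:concatLip} (with consistent marginals $\hat\rho_{i+1}=\tr_i[\hat\rho_{i:i+1}]$ and the positive/negative splitting), the same single-step perturbation bound via operator-Lipschitz continuity of $x\mapsto x^{1/2}$ and $x\mapsto x^{-1/2}$ under the condition $\delta\lesssim\|\rho_{i:i+1}^{-1}\|^{-1}$, and the same final choice $\delta=1/\poly(n/\eps)$ combined with \cref{le:tomography}. No substantive differences or gaps.
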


\begin{proof}
    We assume the repartitioned subsystems $A_1,\ldots,A_N$ each of size $|A_i|=l$ as of \cref{cor:gibbsMPO}.
    We need to show that the accuracy of the reconstruction map $\cR_i(X)$ is robust to small errors in the estimated marginals $\rho_{i},\rho_{i:i+1}$.
    We repeatedly apply those approximate reconstructions, and the error induced by each is shown in Eq. \eqref{eq:epsilon}.
    
    To construct a good approximation to $\left(\bigcirc_{i=1}^{N-1}\cR_i\right)(\rho_1)$ we require estimates of the marginals $\hat \rho_{i}$ and $\hat \rho_{i:i+1}$, which we take to be such that with probability $\ge1-c$
    \begin{equation}\label{eq:tomography}
        \norm{ \rho_{i:i+1}-\hat \rho_{i:i+1}}_1 \le \delta,
    \end{equation}
    for which we need $\textup{poly}\left( d^l,\delta^{-1} \right)\log(N/c)$ samples. Then, for consistency, we take the single-region marginals as the reductions $\hat \rho_{{i+1}}=\tr_{i}[\hat \rho_{i:i+1}]$.
    To account for the possibility of failure in the tomography scheme, we choose a confidence bound of $c/N$ for each marginal such that, by a union bound, all marginals fulfil Eq.~\eqref{eq:tomography}.
    
    To obtain an accurate approximation to the recovery map, we need our estimates to have well-behaved inverses.
    Let us note that according to \cref{lemma:norm-estimates-of-functions-of-Gibbs-states} 
    \begin{equation*}
        \norm{\rho_{i:i+1}^{-1}}\le \cC e^{\alpha l} \, .
    \end{equation*}
    Thereby, explicitly requiring
    \begin{equation}\label{eq:deltaCon1}
        \delta\le\frac1{2\cC} e^{-\alpha l} \le \frac12 \left\|\rho_{i:i+1}^{-1}\right\|^{-1} \, ,
    \end{equation}
    we obtain the equivalent bounds for the approximate marginals
    \begin{equation*}
        \left\|\rho_{i:i+1}^{-1}\right\|,\left\|\rho_{i}^{-1}\right\|, \left\|(\hat \rho)_{i:i+1}^{-1}\right\|,\left\|(\hat \rho)_{i}^{-1}\right\|\le 2\cC e^{\alpha l} \, .
    \end{equation*}
    With this and the definition of $\hat{\cR}_i$ from  Eq. \eqref{eq:hat_recovery_condition},  our target MPO approximation is $\left(\bigcirc_{i=1}^{N-1}\hat{\cR}_i\right)({\hat \rho}{}_{1})$. 
    To control the approximation error, we repeatedly apply the triangle inequality so that
    \begin{align*}
        \norm{\left(\bigcirc_{i=1}^{N-1} \hat{\cR}_i \right)({\hat \rho}{}_{1}) - \left(\bigcirc_{i=1}^{N-1}\cR_i\right)(\rho_{1}) }_1 \le \sum_{j=1}^{N-1}& \norm{\left(\bigcirc_{i=j+1}^{N-1}\cR_i\right)\left(\cR_j-\hat{\cR}_j \right)\left(\bigcirc_{i=1}^{j-1}\hat{\cR}_i\right)({\hat \rho}_{1})}_1 \\&+\norm{\left(\bigcirc_{i=1}^{N-1}\cR_i\right)(\hat \rho_{1}-\rho_{1})}_1.
    \end{align*}
    
    Applying Lemma \ref{lem:concatLip} to each term in the sum, and Eq. \eqref{eq:tomography} we have that 
    \begin{align*}
        \norm{\left(\bigcirc_{i=1}^{N-1}\hat{\cR}_i\right)({\hat \rho}{}_{1}) - \left(\bigcirc_{i=1}^{N-1}\cR_i\right)(\rho_{1}) }_1 \le \sum_{j=1}^{N-1}& \norm{\rho_{j+1}^{-1}} \norm{\left(\cR_j-\hat{\cR}_j \right)\left(\bigcirc_{i=1}^{j-1}\hat{\cR}_i\right)(\hat \rho_{1})}_1 \\&+2\delta \norm{\rho_{1}^{-1}}.
    \end{align*}
    To bound the summands $\norm{\left(\cR_j-\cR'_j \right)(X)}_1$, we require the following two continuity bounds for the square root and its inverse, as proven in \cite{Cavaretta.2003},
    \begin{align*}
        \norm{\sqrt{\rho}-\sqrt{{\hat \rho}{}}} &\le \max \{\norm{\rho^{-1}},\norm{({\hat \rho}{})^{-1}}\}\cdot \sqrt{8}\norm{\rho-{\hat \rho}{}}, \\
        \norm{\rho^{-1/2}-({\hat \rho}{})^{-1/2}} &\le \max\{\norm{\rho^{-1}},\norm{({\hat \rho}{})^{-1}}\} \cdot \norm{\rho^{-1/2}}\norm{({\hat \rho}{})^{-1/2}}
        \cdot \sqrt{8}\norm{\rho-{\hat \rho}{}}.
    \end{align*}
    Repeated applications of these and the triangle inequality, and using that $\norm{\cdot} \le \norm{\cdot}_1$ then yield
    \begin{align*}
        \norm{\left(\cR_j-\hat{\cR}_j \right)(X)}_1 \le C \delta \norm{X}_1 \left( \norm {\rho_{j}{}^{-1}}\norm{(\hat \rho_{j})^{-1}}\right)^{D}
    \end{align*}
    for some numerical constants $C,D$.
    Again, notice that by Lemma \ref{lem:concatLip},
    \begin{equation*}
        \norm{\left(\bigcirc_{i=1}^{j-1}\hat{\cR}_i\right)(\hat \rho_{1})}_1 \le 1.
    \end{equation*}
    Putting everything together, and introducing new numerical constants $C', D'$, we thus obtain
    \begin{equation*}
        \norm{\left(\bigcirc_{i=1}^{N-1} \hat{\cR}_i \right)(\hat \rho_{1}) - \left(\bigcirc_{i=1}^{N-1}\cR_i\right)(\rho_{1}) }_1 \le C' N \delta d^{\alpha D' l}. 
    \end{equation*}
    This means (recalling Eq.~\eqref{eq:deltaCon1}) that if we set 
    \begin{equation*}
         \delta=\min\left\{\frac{\varepsilon d^{-\alpha D' l}}{C'N},\frac{1}{2\cC}e^{-\alpha l}\right\}
    \end{equation*}
    by \cref{cor:gibbsMPO} and the triangle inequality,
    \begin{equation*}
        \norm{\left(\bigcirc_{i=1}^{N-1} \hat{\cR}_i \right)(\hat \rho_{1}) - \rho_{1:N} }_1 \le 2 \varepsilon.
    \end{equation*}
    Given the choice of $\delta$, the sample complexity is bounded by $\textup{poly}\left(d^l,N /\varepsilon\right)\log(1/c)$.
    The time complexity, apart from the contribution due to the tomography itself, consists of elementary matrix functions and multiplications which also run in time $\textup{poly}\left(d^l,N\right)$.
    Noting the sublogarithmic dependence of $l$ on $n/\eps$ in \cref{cor:gibbsMPO} and that $N\le n$, the result follows. 
\end{proof}
\begin{remark}\label{rem:subpolyTomo}
    Compared to the previous section which gave a \emph{subpolynomial} dependence in system size and inverse error, we only achieve polynomial dependence here.
    For the inverse error, this is due to the polynomial dependence in the tomography scheme and cannot be avoided.
    However, for the system size, the situation is slightly different.
    Instead of explicitly giving all $N$ channels it can be noted that for \emph{infinite} translation-invariant systems, they are all identical and only measuring one of the marginals and reconstructing one of the channels can be done in subpolynomial time in $n$.
    Similarly, for \emph{finite} translation-invariant only the sample complexity can be improved to subpolynomial in $n$ by measuring all subsystems simultaneously for each sample.
    However, writing out the MPO for the $n$ systems takes linear time in $n$ by definition again in both cases.
\end{remark}

\subsection{Efficient estimation of the global purity}\label{sec:MPOpurity}
Beyond MPO reconstructions of the entire state, the results on the factorization of the purity from Section \ref{sec:purity} allow us to efficiently estimate the purity of the global state $\rho_{1:N}$ from local purity estimations.

For completeness, we now reproduce a simplified version of the approximation scheme of the purity of 1D Gibbs states in \cite{Vermersch.2023}, where a more detailed analysis of the polynomial scaling of resources and prefactors can be found. Our main contribution is Proposition \ref{th:purity}, which constitutes the missing ingredient for a fully rigorous polynomial bound on the sample complexity of their estimation algorithm. Instead, in \cite{Vermersch.2023}, the exponential decay in Eq. \eqref{eq:approximate_factorization} was only studied numerically. To simplify the discussion, we do not consider the randomized measurement toolbox \cite{Elben.2022}, which potentially yields a slightly more favourable scaling and more amenable classical post-processing, and instead use the result from Lemma \ref{le:tomography}. 

\begin{theorem}\label{thm:estimation_purity}
    Under the conditions of Proposition \ref{th:purity}, there is an algorithm that outputs an estimate $\hat P_2(\rho_{1:N})$ such that
    \begin{equation} \label{eq:purityerror}
        \left \vert \frac{\hat P_2(\rho_{1:N})}{\Tr[\rho_{1:N}^2]}- 1 \right \vert \le \varepsilon \, ,
    \end{equation}
    with a number of samples and classical post-processing cost given by $\textup{poly}(n/\varepsilon)$.
\end{theorem}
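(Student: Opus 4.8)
The plan is to decouple the argument into two independent error sources: (i) the \emph{systematic} error of replacing the true global purity $\Tr[\rho_{1:N}^2]$ by the local surrogate $P_2(\rho_{1:N})$ of Eq.~\eqref{eq:Npartite_purity_intro}, and (ii) the \emph{statistical} error of estimating the marginals entering $P_2$ by tomography. I partition $\Lambda$ into $N$ consecutive blocks $A_1,\dots,A_N$, each of size $\ell$ to be fixed at the end, and define $\hat P_2$ as $P_2$ with every marginal purity replaced by its tomographic estimate.

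For (i) I would telescope Proposition~\ref{th:purity}. Writing $F(m):=\Tr[\rho_{m:N}^2]$ and applying the approximate factorisation to the marginal $\rho_{m:N}$ with the tripartition $A=A_m$, $B=A_{m+1}$, $C=A_{m+2}\cdots A_N$ yields the one-step recursion
\begin{equation*}
    F(m)=(1+\theta_m)\,\frac{\Tr[\rho_{m:m+1}^2]}{\Tr[\rho_{m+1}^2]}\,F(m+1),\qquad |\theta_m|\le 2c_p e^{-\alpha_p\ell}.
\end{equation*}
Iterating from $m=1$ to $m=N-2$ telescopes the nearest-neighbour and single-block purities into exactly $P_2(\rho_{1:N})$, so that $\Tr[\rho_{1:N}^2]/P_2(\rho_{1:N})=\prod_{m=1}^{N-2}(1+\theta_m)$, and $\prod(1+|\theta_m|)-1\le e^{2(N-2)c_p e^{-\alpha_p\ell}}-1\le \varepsilon/3$ once $\ell=\Theta\!\bigl(\alpha_p^{-1}\log(N/\varepsilon)\bigr)$. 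The one point needing care is that the intermediate applications are to marginals $\rho_{m:N}$ of the global Gibbs state rather than to genuine local Gibbs states; this is legitimate because, exactly as Corollary~\ref{cor:Clean_approx_identity_prodmarginals} upgrades Theorem~\ref{lemma:approximate-factorization-of-Gibbs-state} to marginals, the proof of Proposition~\ref{th:purity} goes through verbatim for marginals of a Gibbs state on a larger chain.

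For (ii) I would invoke Lemma~\ref{le:tomography} to produce estimates $\hat\rho_{j:j+1}$ of each two-block marginal with $\|\rho_{j:j+1}-\hat\rho_{j:j+1}\|_1\le\delta$ (and confidence $c/N$ per marginal, combined by a union bound), taking the single-block estimates as the consistent reductions $\hat\rho_{j}=\tr_{j+1}[\hat\rho_{j:j+1}]$. The stability of a purity under a trace-norm perturbation is $|\Tr[\rho_X^2]-\Tr[\hat\rho_X^2]|=|\Tr[(\rho_X-\hat\rho_X)(\rho_X+\hat\rho_X)]|\le 2\delta$, so once $2\delta\le\tfrac12\Tr[\rho_X^2]$ one has $|\log\Tr[\rho_X^2]-\log\Tr[\hat\rho_X^2]|\le 4\delta/\Tr[\rho_X^2]$. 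Summing the $O(N)$ logarithmic terms in $\log\hat P_2-\log P_2$, a relative error $\varepsilon/3$ on $\hat P_2$ is secured as soon as $\delta\le \varepsilon\,\Tr[\rho_X^2]/(cN)$ uniformly in $X$. The essential quantitative input is the dimensional lower bound $\Tr[\rho_X^2]\ge d^{-|X|}\ge d^{-2\ell}$, which with $\ell=\Theta(\log(N/\varepsilon))$ gives $d^{2\ell}=\poly(N/\varepsilon)$ and hence $1/\delta=\poly(N/\varepsilon)$.

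Combining (i) and (ii) by the triangle inequality for multiplicative errors yields Eq.~\eqref{eq:purityerror}. For the complexity: each of the $N-1$ two-block marginals lives in dimension $d^{2\ell}=\poly(N/\varepsilon)$, so by Lemma~\ref{le:tomography} its tomography to error $\delta$ costs $\poly(d^{2\ell},1/\delta)\log(N/c)=\poly(N/\varepsilon)\log(1/c)$ samples; summing over $j$ keeps this polynomial, and the classical post-processing only forms $O(N)$ purities of $\poly(N/\varepsilon)$-dimensional matrices and multiplies them, again $\poly(N/\varepsilon)$. I expect the main obstacle to be precisely the interplay governing the choice of $\ell$: it must be large enough that the factorisation defect in (i) is suppressed, yet small enough that the marginals stay low-dimensional and their purities stay bounded below by a $1/\poly(N/\varepsilon)$ amount, so that the \emph{multiplicative} accuracy demanded by Eq.~\eqref{eq:purityerror} (rather than a merely additive one, which would be trivial) can be attained with polynomially many samples; the window $\ell=\Theta(\log(N/\varepsilon))$ is exactly what renders every resource polynomial.
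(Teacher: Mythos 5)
Your proposal is correct and follows essentially the same route as the paper: an iterated (telescoped) application of Proposition~\ref{th:purity} with block size $\ell=\Theta(\log(N/\varepsilon))$ to control the systematic error $|P_2/\Tr[\rho_{1:N}^2]-1|$, followed by tomography of the $O(N)$ two-block marginals via Lemma~\ref{le:tomography}, converting the trace-norm error into a multiplicative purity error through the dimensional lower bound $\Tr[\rho_X^2]\ge d^{-|X|}$, with both error budgets met at $\delta=1/\poly(N/\varepsilon)$. The one place where you are more explicit than the paper is in flagging that the intermediate factorisation steps act on marginals of the global Gibbs state rather than on local Gibbs states; the paper leaves this implicit in its appeal to the iteration of Proposition~\ref{th:purity} (and to Lemma~5 of the cited reference), so this is a point of added care rather than a divergence.
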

\begin{proof}
    Consider adjacent marginals of the state $\rho_{1:N}$ on regions of size $l$ and adjacent pairs $\rho_{i:i+1}$. An estimate for the purity is given by
    \begin{equation*}
        P_2(\rho_{1:N})=\frac{\prod_{j=1}^{N-1}\Tr_{j:j+1}[\rho^2_{j:j+1}]}{\prod_{j=2}^{N-1}\Tr_{j}[\rho^2_{j}]} \, .
    \end{equation*}
    Then, from an iterated application of Proposition \ref{th:purity}, and the straightforward estimate for scalars $x,y$,
    \begin{equation*}
        |xy -1| \leq |x| |y-1| + |x-1| \, ,
    \end{equation*}
    we have that (see also Lemma 5 in \cite{Vermersch.2023})
    \begin{align*}
        \left \vert \frac{P_2(\rho_{1:N})}{ \Tr[\rho_{1:N}^2] }-1\right \vert \le \left (1+ \kappa e^{-\alpha l}\right)^N-1 \, .
    \end{align*}
    Therefore, with the choice $l=\mathcal{O}\left(\log{N/\varepsilon}\right)$ we achieve error $\varepsilon$. 
    
    The numbers $\Tr_{j:j+1}[\rho^2_{j:j+1}]$ and $\Tr_{j}[\rho^2_{j}]$
    can be obtained to multiplicative error $\delta$ from $\textup{poly}(d^l, \delta^{-1})$ local measurements and the scheme in Lemma \ref{le:tomography}, since 
    \begin{equation*}
        \frac{\Tr_{j}[\rho^2_{j}]}{\Tr_{j}[\hat \rho^2_{j}]} -1 \le \Tr_{j}[\hat \rho^2_{j}]^{-1}\norm{\rho_{j} - \hat \rho_{j}}_1 \le d^{l} \delta \, ,
    \end{equation*}
    and thus our estimation of the global purity is 
    \begin{equation*}
        \hat P_2(\rho_{1:N})=\frac{\prod_{j=1}^{N-1}\Tr_{j:j+1}[\hat \rho^2_{j:j+1}]}{\prod_{j=2}^{N-1}\Tr_{j}[\hat \rho^2_{j}]} \, .
    \end{equation*}
    With a similar argument as above, this is such that
    \begin{equation*}
        \left \vert \frac{\hat P_2(\rho_{1:N})}{ P_2(\rho_{1:N})} - 1 \right \vert \le \left (1+ d^{2l} \delta \right)^{2N}-1 \, .
    \end{equation*}
    Therefore, choosing $d^{l} \delta=e^{-\Omega(l)}= \mathcal{O}(1/ \text{poly}(N/\varepsilon) )$ we again obtain the approximation error $\varepsilon$. The result follows from the estimates on the sample and computational cost in Lemma \ref{le:tomography}, and noticing that $n\ge N$. 
\end{proof}

Notice that it is also possible to estimate the purity via the estimation of the MPO representation in \cref{thm:learnMPO}, since
\begin{equation*}
    \left \vert \Tr[\left(\bigcirc_{i=1}^{N-1}\hat{\mathcal R}_i\right)(\hat \rho_{1})^2]-\Tr[\rho_{1:N}^2] \right \vert \le \norm{\left(\bigcirc_{i=1}^{N-1}\hat{\mathcal R}_i\right)(\hat \rho_{1})-\rho_{1:N}}_1.
\end{equation*}
This, however, is a much worse estimate than Eq. \eqref{eq:purityerror}, since the purity is typically an exponentially small number. Thus, additive approximations (as opposed to multiplicative ones) are much less meaningful.

\begin{acknowledgements}
AC, PG and AR-d-A acknowledge the support of the Deutsche Forschungsgemeinschaft (DFG, German Research Foundation) - Project-ID 470903074 - TRR 352. AC, PG and AR-d-A also acknowledge funding by the Federal Ministry of Education and Research (BMBF) and the Baden-Württemberg Ministry of Science as part of the Excellence Strategy of the German Federal and State Governments. 
SOS acknowledges support from the UK Engineering and Physical Sciences Research Council (EPSRC) under grant number EP/W524141/1.
AMA acknowledges support from the Spanish Agencia Estatal de Investigaci\'on through the grants ``IFT Centro de Excelencia Severo Ochoa CEX2020-001007-S" and ``Ram\'on y Cajal RyC2021-031610-I", financed by MCIN/AEI/10.13039/501100011033 and the European Union NextGenerationEU/PRTR. This project was funded within the QuantERA II Programme which has received funding from the EU’s H2020 research and innovation programme under the GA No 101017733. AR-d-A acknowledges financial support by the Ministry for Digital Transformation and of Civil Service of the Spanish Government through the QUANTUM ENIA project call – Quantum Spain project, and by the European Union through the Recovery, Transformation and Resilience Plan – NextGenerationEU within the framework of the Digital Spain 2026 Agenda.
\end{acknowledgements}

\vspace{0.5cm}

\noindent \emph{Data Availability:}
Data sharing is not applicable to this article as no datasets were generated or analysed during the current study. \\
\emph{Conflict of interests:}
The authors have no competing interests to declare that are relevant to the content of this article.

\bibliographystyle{abbrv}
\bibliography{references}

\begin{thebibliography}{10}

\bibitem{Alhambra.2021}
{\'A}.~M. Alhambra and J.~I. Cirac.
\newblock Locally {Accurate} {Tensor} {Networks} for {Thermal} {States} and {Time} {Evolution}.
\newblock {\em PRX Quantum}, 2(4):040331, Nov. 2021.
\newblock \href{https://doi.org/10.1103/PRXQuantum.2.040331}{\nolinkurl{doi:10.1103/PRXQuantum.2.040331}}. \href{https://arxiv.org/abs/2106.00710}{\nolinkurl{arXiv:2106.00710}}.

\bibitem{Anshu.3.2021}
A.~Anshu, I.~Arad, and D.~Gosset.
\newblock An area law for 2d frustration-free spin systems.
\newblock In {\em Proceedings of the 54th {Annual} {ACM} {SIGACT} {Symposium} on {Theory} of {Computing}}, pages 12--18, Rome Italy, June 2022. ACM.
\newblock \href{https://doi.org/10.1145/3519935.3519962}{\nolinkurl{doi:10.1145/3519935.3519962}}. \href{https://arxiv.org/abs/2103.02492}{\nolinkurl{arXiv:2103.02492}}.

\bibitem{Anshu.2021}
A.~Anshu, S.~Arunachalam, T.~Kuwahara, and M.~Soleimanifar.
\newblock Sample-efficient learning of interacting quantum systems.
\newblock {\em Nature Physics}, 17(8):931--935, Aug. 2021.
\newblock \href{https://doi.org/10.1038/s41567-021-01232-0}{\nolinkurl{doi:10.1038/s41567-021-01232-0}}. \href{https://arxiv.org/abs/2004.07266}{\nolinkurl{arXiv:2004.07266}}.

\bibitem{Anshu.2023}
A.~Anshu, S.~Bab~Hadiashar, R.~Jain, A.~Nayak, and D.~Touchette.
\newblock One-{Shot} {Quantum} {State} {Redistribution} and {Quantum} {Markov} {Chains}.
\newblock {\em IEEE Transactions on Information Theory}, 69(9):5788--5804, Sept. 2023.
\newblock \href{https://doi.org/10.1109/TIT.2023.3271316}{\nolinkurl{doi:10.1109/TIT.2023.3271316}}. \href{https://arxiv.org/abs/2104.08753}{\nolinkurl{arXiv:2104.08753}}.

\bibitem{Arad.2023}
I.~Arad, R.~Firanko, and R.~Jain.
\newblock Area law for the maximally mixed ground state in degenerate {1D} gapped systems, Oct. 2023.
\newblock \href{https://arxiv.org/abs/2310.19028}{\nolinkurl{arXiv:2310.19028}}.

\bibitem{Araki.1969}
H.~Araki.
\newblock Gibbs states of a one dimensional quantum lattice.
\newblock {\em Communications in Mathematical Physics}, 14(2):120--157, June 1969.
\newblock \href{https://doi.org/10.1007/BF01645134}{\nolinkurl{doi:10.1007/BF01645134}}.

\bibitem{Bairey.2019}
E.~Bairey, I.~Arad, and N.~H. Lindner.
\newblock Learning a {Local} {Hamiltonian} from {Local} {Measurements}.
\newblock {\em Physical Review Letters}, 122(2):020504, Jan. 2019.
\newblock \href{https://doi.org/10.1103/PhysRevLett.122.020504}{\nolinkurl{doi:10.1103/PhysRevLett.122.020504}}. \href{https://arxiv.org/abs/1807.04564}{\nolinkurl{arXiv:1807.04564}}.

\bibitem{Bakshi.2023}
A.~Bakshi, A.~Liu, A.~Moitra, and E.~Tang.
\newblock Learning quantum hamiltonians at any temperature in polynomial time.
\newblock In {\em Proceedings of the 56th Annual ACM Symposium on Theory of Computing}, STOC 2024, page 1470–1477, New York, NY, USA, 2024. Association for Computing Machinery.

\bibitem{Belavkin.1982}
V.~P. Belavkin and P.~Staszewski.
\newblock C*-algebraic generalization of relative entropy and entropy.
\newblock {\em Annales de l'I.H.P. Physique th\'eorique}, 37(1):51--58, 1982.

\bibitem{Berta.2015}
M.~Berta, K.~P. Seshadreesan, and M.~M. Wilde.
\newblock R\'enyi generalizations of the conditional quantum mutual information.
\newblock {\em Journal of Mathematical Physics}, 56(2):022205, Feb. 2015.
\newblock \href{https://doi.org/10.1063/1.4908102}{\nolinkurl{doi:10.1063/1.4908102}}. \href{https://arxiv.org/abs/1403.6102}{\nolinkurl{arXiv:1403.6102}}.

\bibitem{Bhatia.1996}
R.~Bhatia.
\newblock {\em Matrix analysis}.
\newblock Number 169 in Graduate texts in mathematics. Springer, New York, 1997.

\bibitem{Bluhm.2020}
A.~Bluhm and {\'A}.~Capel.
\newblock A strengthened data processing inequality for the {Belavkin}–{Staszewski} relative entropy.
\newblock {\em Reviews in Mathematical Physics}, 32(02):2050005, Mar. 2020.
\newblock \href{https://doi.org/10.1142/S0129055X20500051}{\nolinkurl{doi:10.1142/S0129055X20500051}}. \href{https://arxiv.org/abs/1904.10768}{\nolinkurl{arXiv:1904.10768}}.

\bibitem{Bluhm.2.2004}
A.~Bluhm, A.~Capel, P.~Costa~Rico, and A.~Jencova.
\newblock {Belavkin Staszewski Quantum Markov Chains}.
\newblock {\em in preparation}, 2024.

\bibitem{Bluhm.2023}
A.~Bluhm, {\'A}.~Capel, P.~Gondolf, and A.~P\'erez-Hern\'andez.
\newblock Continuity of {Quantum} {Entropic} {Quantities} via {Almost} {Convexity}.
\newblock {\em IEEE Transactions on Information Theory}, 69(9):5869--5901, Sept. 2023.
\newblock \href{https://doi.org/10.1109/TIT.2023.3277892}{\nolinkurl{doi:10.1109/TIT.2023.3277892}}. \href{https://arxiv.org/abs/2208.00922}{\nolinkurl{arXiv:2208.00922}}.

\bibitem{Bluhm.2.2023}
A.~Bluhm, {\'A}.~Capel, P.~Gondolf, and A.~P\'erez-Hern\'andez.
\newblock General {Continuity} {Bounds} for {Quantum} {Relative} {Entropies}.
\newblock In {\em 2023 {IEEE} {International} {Symposium} on {Information} {Theory} ({ISIT})}, pages 162--167, Taipei, Taiwan, June 2023. IEEE.
\newblock \href{https://doi.org/10.1109/ISIT54713.2023.10206734}{\nolinkurl{doi:10.1109/ISIT54713.2023.10206734}}. \href{https://arxiv.org/abs/2305.10140}{\nolinkurl{arXiv:2305.10140}}.

\bibitem{Bluhm.2022}
A.~Bluhm, {\'A}.~Capel, and A.~P{\'e}rez-Hern{\'a}ndez.
\newblock Exponential decay of mutual information for {Gibbs} states of local {Hamiltonians}.
\newblock {\em Quantum}, 6:650, Feb. 2022.
\newblock \href{https://doi.org/10.22331/q-2022-02-10-650}{\nolinkurl{doi:10.22331/q-2022-02-10-650}}. \href{https://arxiv.org/abs/2104.04419}{\nolinkurl{arXiv:2104.04419}}.

\bibitem{Bluhm.2024}
A.~Bluhm, A.~Capel, and A.~P\'erez-Hern\'andez.
\newblock Strong decay of correlations for {Gibbs} states in any dimension, Jan. 2024.
\newblock \href{https://arxiv.org/abs/2401.10147}{\nolinkurl{arXiv:2401.10147}}.

\bibitem{Brandao.2015}
F.~G. S.~L. Brand{\~a}o, T.~S. Cubitt, A.~Lucia, S.~Michalakis, and D.~P{\'e}rez-Garc{\'i}a.
\newblock Area law for fixed points of rapidly mixing dissipative quantum systems.
\newblock {\em Journal of Mathematical Physics}, 56(10):102202, Oct. 2015.
\newblock \href{https://doi.org/10.1063/1.4932612}{\nolinkurl{doi:10.1063/1.4932612}}. \href{https://arxiv.org/abs/1505.02776}{\nolinkurl{arXiv:1505.02776}}.

\bibitem{Brandao.2014}
F.~G. S.~L. Brand{\~a}o and M.~Horodecki.
\newblock Exponential {Decay} of {Correlations} {Implies} {Area} {Law}.
\newblock {\em Communications in Mathematical Physics}, 333(2):761--798, Jan. 2015.
\newblock \href{https://doi.org/10.1007/s00220-014-2213-8}{\nolinkurl{doi:10.1007/s00220-014-2213-8}}. \href{https://arxiv.org/abs/1206.2947}{\nolinkurl{arXiv:1206.2947}}.

\bibitem{Brandao.2019}
F.~G. S.~L. Brand{\~a}o and M.~J. Kastoryano.
\newblock Finite {Correlation} {Length} {Implies} {Efficient} {Preparation} of {Quantum} {Thermal} {States}.
\newblock {\em Communications in Mathematical Physics}, 365(1):1--16, Jan. 2019.
\newblock \href{https://doi.org/10.1007/s00220-018-3150-8}{\nolinkurl{doi:10.1007/s00220-018-3150-8}}. \href{https://arxiv.org/abs/1609.07877}{\nolinkurl{arXiv:1609.07877}}.

\bibitem{Brown.2012}
W.~Brown and D.~Poulin.
\newblock Quantum {Markov} {Networks} and {Commuting} {Hamiltonians}, June 2012.
\newblock \href{https://arxiv.org/abs/1206.0755}{\nolinkurl{arXiv:1206.0755}}.

\bibitem{Capel.2024}
{\'A}.~Capel, M.~Moscolari, S.~Teufel, and T.~Wessel.
\newblock From decay of correlations to locality and stability of the {Gibbs} state, Feb. 2023.
\newblock \href{https://arxiv.org/abs/2310.09182}{\nolinkurl{arXiv:2310.09182}}.

\bibitem{Carlen.2018}
E.~A. Carlen and A.~Vershynina.
\newblock Recovery and the {Data} {Processing} {Inequality} for {Quasi}-{Entropies}.
\newblock {\em IEEE Transactions on Information Theory}, 64(10):6929--6938, Oct. 2018.
\newblock \href{https://doi.org/10.1109/TIT.2018.2812038}{\nolinkurl{doi:10.1109/TIT.2018.2812038}}. \href{https://arxiv.org/abs/1710.08080}{\nolinkurl{arXiv:1710.08080}}.

\bibitem{Carlen.2020}
E.~A. Carlen and A.~Vershynina.
\newblock Recovery map stability for the data processing inequality.
\newblock {\em Journal of Physics A: Mathematical and Theoretical}, 53(3):035204, Jan. 2020.
\newblock \href{https://doi.org/10.1088/1751-8121/ab5ab7}{\nolinkurl{doi:10.1088/1751-8121/ab5ab7}}. \href{https://arxiv.org/abs/1710.02409}{\nolinkurl{arXiv:1710.02409}}.

\bibitem{Cavaretta.2003}
A.~S. Cavaretta and L.~Smithies.
\newblock Lipschitz-type bounds for the map ${A}\to |{A}|$ on ${L}({H})$.
\newblock {\em Linear Algebra and its Applications}, 360:231--235, Feb. 2003.
\newblock \href{https://doi.org/10.1016/S0024-3795(02)00453-6}{\nolinkurl{doi:10.1016/S0024-3795(02)00453-6}}.

\bibitem{Hammersley.1971}
P.~Clifford and J.~M. Hammersley.
\newblock Markov fields on finite graphs and lattices.
\newblock \href{https://ora.ox.ac.uk/objects/uuid:4ea849da-1511-4578-bb88-6a8d02f457a6}{\nolinkurl{https://ora.ox.ac.uk/objects/uuid:4ea849da-1511-4578-bb88-6a8d02f457a6}}, 1971.

\bibitem{Cramer.2010}
M.~Cramer, M.~B. Plenio, S.~T. Flammia, R.~Somma, D.~Gross, S.~D. Bartlett, O.~Landon-Cardinal, D.~Poulin, and Y.-K. Liu.
\newblock Efficient quantum state tomography.
\newblock {\em Nature Communications}, 1(1):149, Dec. 2010.
\newblock \href{https://doi.org/10.1038/ncomms1147}{\nolinkurl{doi:10.1038/ncomms1147}}. \href{https://arxiv.org/abs/1101.4366}{\nolinkurl{arXiv:1101.4366}}.

\bibitem{Datta.2009}
N.~Datta.
\newblock Min- and {Max}-{Relative} {Entropies} and a {New} {Entanglement} {Monotone}.
\newblock {\em IEEE Transactions on Information Theory}, 55(6):2816--2826, June 2009.
\newblock \href{https://doi.org/10.1109/TIT.2009.2018325}{\nolinkurl{doi:10.1109/TIT.2009.2018325}}. \href{https://arxiv.org/abs/0803.2770}{\nolinkurl{arXiv:0803.2770}}.

\bibitem{Devetak.2008}
I.~Devetak and J.~Yard.
\newblock Exact {Cost} of {Redistributing} {Multipartite} {Quantum} {States}.
\newblock {\em Physical Review Letters}, 100(23):230501, June 2008.
\newblock \href{https://doi.org/10.1103/PhysRevLett.100.230501}{\nolinkurl{doi:10.1103/PhysRevLett.100.230501}}. \href{https://arxiv.org/abs/quant-ph/0612050}{\nolinkurl{arXiv:quant-ph/0612050}}.

\bibitem{Elben.2022}
A.~Elben, S.~T. Flammia, H.-Y. Huang, R.~Kueng, J.~Preskill, B.~Vermersch, and P.~Zoller.
\newblock The randomized measurement toolbox.
\newblock {\em Nature Reviews Physics}, 5(1):9--24, Dec. 2022.
\newblock \href{https://doi.org/10.1038/s42254-022-00535-2}{\nolinkurl{doi:10.1038/s42254-022-00535-2}}. \href{https://arxiv.org/abs/2203.11374}{\nolinkurl{arXiv:2203.11374}}.

\bibitem{Fang.2019}
K.~Fang and H.~Fawzi.
\newblock Geometric {R\'enyi} {Divergence} and its {Applications} in {Quantum} {Channel} {Capacities}.
\newblock {\em Communications in Mathematical Physics}, 384(3):1615--1677, June 2021.
\newblock \href{https://doi.org/10.1007/s00220-021-04064-4}{\nolinkurl{doi:10.1007/s00220-021-04064-4}}. \href{https://arxiv.org/abs/1909.05758}{\nolinkurl{arXiv:1909.05758}}.

\bibitem{Fanizza.2023}
M.~Fanizza, N.~Galke, J.~Lumbreras, C.~Rouz\'e, and A.~Winter.
\newblock Learning finitely correlated states: stability of the spectral reconstruction, Dec. 2023.
\newblock \href{https://arxiv.org/abs/2312.07516}{\nolinkurl{arXiv:2312.07516}}.

\bibitem{Fawzi.2023}
H.~Fawzi, O.~Fawzi, and S.~O. Scalet.
\newblock A subpolynomial-time algorithm for the free energy of one-dimensional quantum systems in the thermodynamic limit.
\newblock {\em Quantum}, 7:1011, May 2023.
\newblock \href{https://doi.org/10.22331/q-2023-05-22-1011}{\nolinkurl{doi:10.22331/q-2023-05-22-1011}}. \href{https://arxiv.org/abs/2209.14989}{\nolinkurl{arXiv:2209.14989}}.

\bibitem{Fawzi.2015}
O.~Fawzi and R.~Renner.
\newblock Quantum {Conditional} {Mutual} {Information} and {Approximate} {Markov} {Chains}.
\newblock {\em Communications in Mathematical Physics}, 340(2):575--611, Dec. 2015.
\newblock \href{https://doi.org/10.1007/s00220-015-2466-x}{\nolinkurl{doi:10.1007/s00220-015-2466-x}}. \href{https://arxiv.org/abs/1410.0664}{\nolinkurl{arXiv:1410.0664}}.

\bibitem{Firanko.2023}
R.~Firanko, M.~Goldstein, and I.~Arad.
\newblock Area law for steady states of detailed-balance local lindbladians.
\newblock {\em Journal of Mathematical Physics}, 65(5), May 2024.

\bibitem{Froehlich.2015}
J.~Fröhlich and D.~Ueltschi.
\newblock Some properties of correlations of quantum lattice systems in thermal equilibrium.
\newblock {\em Journal of Mathematical Physics}, 56(5):053302, May 2015.
\newblock \href{https://doi.org/10.1063/1.4921305}{\nolinkurl{doi:10.1063/1.4921305}}. \href{https://arxiv.org/abs/1412.2534}{\nolinkurl{arXiv:1412.2534}}.

\bibitem{Gao.2021}
L.~Gao and M.~M. Wilde.
\newblock Recoverability for optimized quantum $f$-divergences.
\newblock {\em Journal of Physics A: Mathematical and Theoretical}, 54(38):385302, Sept. 2021.
\newblock \href{https://doi.org/10.1088/1751-8121/ac1dc2}{\nolinkurl{doi:10.1088/1751-8121/ac1dc2}}. \href{https://arxiv.org/abs/2008.01668}{\nolinkurl{arXiv:2008.01668}}.

\bibitem{Guth.2020}
J.~Guth~{Jarkovsk{\'y}}, A.~Moln{\'a}r, N.~Schuch, and J.~I. Cirac.
\newblock Efficient {Description} of {Many}-{Body} {Systems} with {Matrix} {Product} {Density} {Operators}.
\newblock {\em PRX Quantum}, 1(1):010304, Sept. 2020.
\newblock \href{https://doi.org/10.1103/PRXQuantum.1.010304}{\nolinkurl{doi:10.1103/PRXQuantum.1.010304}}. \href{https://arxiv.org/abs/2003.12418}{\nolinkurl{arXiv:2003.12418}}.

\bibitem{Haah.2017}
J.~Haah, A.~W. Harrow, Z.~Ji, X.~Wu, and N.~Yu.
\newblock Sample-optimal tomography of quantum states.
\newblock {\em IEEE Transactions on Information Theory}, pages 1--1, 2017.
\newblock \href{https://doi.org/10.1109/TIT.2017.2719044}{\nolinkurl{doi:10.1109/TIT.2017.2719044}}. \href{https://arxiv.org/abs/1508.01797}{\nolinkurl{arXiv:1508.01797}}.

\bibitem{Haah.2022}
J.~Haah, R.~Kothari, and E.~Tang.
\newblock Optimal learning of quantum {Hamiltonians} from high-temperature {Gibbs} states.
\newblock In {\em 2022 {IEEE} 63rd {Annual} {Symposium} on {Foundations} of {Computer} {Science} ({FOCS})}, pages 135--146, Denver, CO, USA, Oct. 2022. IEEE.
\newblock \href{https://doi.org/10.1109/FOCS54457.2022.00020}{\nolinkurl{doi:10.1109/FOCS54457.2022.00020}}. \href{https://arxiv.org/2108.04842}{\nolinkurl{arXiv:2108.04842}}.

\bibitem{Hastings.2.2007}
M.~B. Hastings.
\newblock An area law for one-dimensional quantum systems.
\newblock {\em Journal of Statistical Mechanics: Theory and Experiment}, 2007(08):P08024--P08024, Aug. 2007.
\newblock \href{https://doi.org/10.1088/1742-5468/2007/08/P08024}{\nolinkurl{doi:10.1088/1742-5468/2007/08/P08024}}. \href{https://arxiv.org/abs/}{\nolinkurl{arXiv:0705.2024}}.

\bibitem{Hastings.2.2006}
M.~B. Hastings and T.~Koma.
\newblock Spectral {Gap} and {Exponential} {Decay} of {Correlations}.
\newblock {\em Communications in Mathematical Physics}, 265(3):781--804, Aug. 2006.
\newblock \href{https://doi.org/10.1007/s00220-006-0030-4}{\nolinkurl{doi:10.1007/s00220-006-0030-4}}. \href{https://arxiv.org/abs/math-ph/0507008}{\nolinkurl{arXiv:math-ph/0507008}}.

\bibitem{Hayden.2004}
P.~Hayden, R.~Jozsa, D.~Petz, and A.~Winter.
\newblock Structure of {States} {Which} {Satisfy} {Strong} {Subadditivity} of {Quantum} {Entropy} with {Equality}.
\newblock {\em Communications in Mathematical Physics}, 246(2):359--374, Apr. 2004.
\newblock \href{https://doi.org/10.1007/s00220-004-1049-z}{\nolinkurl{doi:10.1007/s00220-004-1049-z}}. \href{https://arxiv.org/abs/quant-ph/0304007}{\nolinkurl{arXiv:quant-ph/0304007}}.

\bibitem{Hiai.2017}
F.~Hiai and M.~Mosonyi.
\newblock Different quantum $f$-divergences and the reversibility of quantum operations.
\newblock {\em Reviews in Mathematical Physics}, 29(07):1750023, Aug. 2017.
\newblock \href{https://doi.org/10.1142/S0129055X17500234}{\nolinkurl{doi:10.1142/S0129055X17500234}}. \href{https://arxiv.org/abs/1604.03089}{\nolinkurl{arXiv:1604.03089}}.

\bibitem{Joshi.2023}
M.~K. Joshi, C.~Kokail, R.~Van~Bijnen, F.~Kranzl, T.~V. Zache, R.~Blatt, C.~F. Roos, and P.~Zoller.
\newblock Exploring large-scale entanglement in quantum simulation.
\newblock {\em Nature}, 624(7992):539--544, Dec. 2023.
\newblock \href{https://doi.org/10.1038/s41586-023-06768-0}{\nolinkurl{doi:10.1038/s41586-023-06768-0}}. \href{https://arxiv.org/abs/2306.00057}{\nolinkurl{arXiv:2306.00057}}.

\bibitem{Junge.2018}
M.~Junge, R.~Renner, D.~Sutter, M.~M. Wilde, and A.~Winter.
\newblock Universal {Recovery} {Maps} and {Approximate} {Sufficiency} of {Quantum} {Relative} {Entropy}.
\newblock {\em Annales Henri Poincar\'e}, 19(10):2955--2978, Oct. 2018.
\newblock \href{https://doi.org/10.1007/s00023-018-0716-0}{\nolinkurl{doi:10.1007/s00023-018-0716-0}}. \href{https://arxiv.org/abs/1509.07127}{\nolinkurl{arXiv:1509.07127}}.

\bibitem{Kato.2019}
K.~Kato and F.~G. S.~L. Brand{\~a}o.
\newblock Quantum {Approximate} {Markov} {Chains} are {Thermal}.
\newblock {\em Communications in Mathematical Physics}, 370(1):117--149, Aug. 2019.
\newblock \href{https://doi.org/10.1007/s00220-019-03485-6}{\nolinkurl{doi:10.1007/s00220-019-03485-6}}. \href{https://arxiv.org/abs/1609.06636}{\nolinkurl{arXiv:1609.06636}}.

\bibitem{Kitaev.2006}
A.~Kitaev and J.~Preskill.
\newblock Topological {Entanglement} {Entropy}.
\newblock {\em Physical Review Letters}, 96(11):110404, Mar. 2006.
\newblock \href{https://doi.org/10.1103/PhysRevLett.96.110404}{\nolinkurl{doi:10.1103/PhysRevLett.96.110404}}. \href{https://arxiv.org/abs/hep-th/0510092}{\nolinkurl{arXiv:hep-th/0510092}}.

\bibitem{Kliesch.2014}
M.~Kliesch, C.~Gogolin, M.~J. Kastoryano, A.~Riera, and J.~Eisert.
\newblock Locality of {Temperature}.
\newblock {\em Physical Review X}, 4(3):031019, July 2014.
\newblock \href{https://doi.org/10.1103/PhysRevX.4.031019}{\nolinkurl{doi:10.1103/PhysRevX.4.031019}}. \href{https://arxiv.org/abs/1309.0816}{\nolinkurl{arXiv:1309.0816}}.

\bibitem{Kochanowski.2023}
J.~Kochanowski, A.~M. Alhambra, A.~Capel, and C.~Rouzé.
\newblock Rapid thermalization of dissipative many-body dynamics of commuting hamiltonians, 2024.

\bibitem{Kokail.2021}
C.~Kokail, R.~Van~Bijnen, A.~Elben, B.~Vermersch, and P.~Zoller.
\newblock Entanglement {Hamiltonian} tomography in quantum simulation.
\newblock {\em Nature Physics}, 17(8):936--942, Aug. 2021.
\newblock \href{https://doi.org/10.1038/s41567-021-01260-w}{\nolinkurl{doi:10.1038/s41567-021-01260-w}}. \href{https://arxiv.org/abs/2009.09000}{\nolinkurl{arXiv:2009.09000}}.

\bibitem{Kullback.1951}
S.~Kullback and R.~A. Leibler.
\newblock On {Information} and {Sufficiency}.
\newblock {\em The Annals of Mathematical Statistics}, 22(1):79--86, Mar. 1951.
\newblock \href{https://doi.org/10.1214/aoms/1177729694}{\nolinkurl{doi:10.1214/aoms/1177729694}}.

\bibitem{Kuwahara.2024}
T.~Kuwahara.
\newblock Clustering of conditional mutual information and quantum markov structure at arbitrary temperatures, 2024.
\newblock \href{https://arxiv.org/abs/2407.05835}{\nolinkurl{arXiv:2407.05835}}.

\bibitem{Kuwahara.2020}
T.~Kuwahara, {\'A}.~M. Alhambra, and A.~Anshu.
\newblock Improved {Thermal} {Area} {Law} and {Quasilinear} {Time} {Algorithm} for {Quantum} {Gibbs} {States}.
\newblock {\em Physical Review X}, 11(1):011047, Mar. 2021.
\newblock \href{https://doi.org/10.1103/PhysRevX.11.011047}{\nolinkurl{doi:10.1103/PhysRevX.11.011047}}. \href{https://arxiv.org/abs/2007.11174}{\nolinkurl{arXiv:2007.11174}}.

\bibitem{Kuwahara.2.2020}
T.~Kuwahara, K.~Kato, and F.~G.~S.~L. Brand{\~a}o.
\newblock Clustering of {Conditional} {Mutual} {Information} for {Quantum} {Gibbs} {States} above a {Threshold} {Temperature}.
\newblock {\em Physical Review Letters}, 124(22):220601, June 2020.
\newblock \href{https://doi.org/10.1103/PhysRevLett.124.220601}{\nolinkurl{doi:10.1103/PhysRevLett.124.220601}}. \href{https://arxiv.org/abs/1910.09425}{\nolinkurl{arXiv:1910.09425}}.

\bibitem{Kuwahara.2022}
T.~Kuwahara and K.~Saito.
\newblock Exponential {Clustering} of {Bipartite} {Quantum} {Entanglement} at {Arbitrary} {Temperatures}.
\newblock {\em Physical Review X}, 12(2):021022, Apr. 2022.
\newblock \href{https://doi.org/10.1103/PhysRevX.12.021022}{\nolinkurl{doi:10.1103/PhysRevX.12.021022}}. \href{https://arxiv.org/abs/2108.12209}{\nolinkurl{arXiv:2108.12209}}.

\bibitem{Landau.2015}
Z.~Landau, U.~Vazirani, and T.~Vidick.
\newblock A polynomial time algorithm for the ground state of one-dimensional gapped local {Hamiltonians}.
\newblock {\em Nature Physics}, 11(7):566--569, July 2015.
\newblock \href{https://doi.org/10.1038/nphys3345}{\nolinkurl{doi:10.1038/nphys3345}}. \href{https://arxiv.org/abs/1307.5143}{\nolinkurl{arXiv:1307.5143}}.

\bibitem{Leifer.2008}
M.~Leifer and D.~Poulin.
\newblock Quantum {Graphical} {Models} and {Belief} {Propagation}.
\newblock {\em Annals of Physics}, 323(8):1899--1946, Aug. 2008.
\newblock \href{https://doi.org/10.1016/j.aop.2007.10.001}{\nolinkurl{doi:10.1016/j.aop.2007.10.001}}. \href{https://arxiv.org/abs/0708.1337}{\nolinkurl{arXiv:0708.1337}}.

\bibitem{Lenci.2005}
M.~Lenci and L.~Rey-Bellet.
\newblock Large {Deviations} in {Quantum} {Lattice} {Systems}: {One}-{Phase} {Region}.
\newblock {\em Journal of Statistical Physics}, 119(3):715--746, May 2005.
\newblock \href{https://doi.org/10.1007/s10955-005-3015-3}{\nolinkurl{doi:10.1007/s10955-005-3015-3}}. \href{https://arxiv.org/abs/math-ph/0406065}{\nolinkurl{arXiv:math-ph/0406065}}.

\bibitem{Lieb.1973}
E.~H. Lieb and M.~B. Ruskai.
\newblock Proof of the strong subadditivity of quantum-mechanical entropy.
\newblock {\em Journal of Mathematical Physics}, 14(12):1938--1941, Dec. 1973.
\newblock \href{https://doi.org/10.1063/1.1666274}{\nolinkurl{doi:10.1063/1.1666274}}.

\bibitem{Matsumoto.2018}
K.~Matsumoto.
\newblock A {New} {Quantum} {Version} of $f$-{Divergence}.
\newblock In M.~Ozawa, J.~Butterfield, H.~Halvorson, M.~R\'edei, Y.~Kitajima, and F.~Buscemi, editors, {\em Reality and {Measurement} in {Algebraic} {Quantum} {Theory}}, volume 261, pages 229--273. Springer Singapore, Singapore, 2018.
\newblock \href{https://doi.org/10.1007/978-981-13-2487-1_10}{\nolinkurl{doi:10.1007/978-981-13-2487-1_10}}. \href{https://arxiv.org/abs/1311.4722}{\nolinkurl{arXiv:1311.4722}}.

\bibitem{Molnar.2015}
A.~Moln{\'a}r, N.~Schuch, F.~Verstraete, and J.~I. Cirac.
\newblock Approximating {Gibbs} states of local {Hamiltonians} efficiently with projected entangled pair states.
\newblock {\em Physical Review B}, 91(4):045138, Jan. 2015.
\newblock \href{https://doi.org/10.1103/PhysRevB.91.045138}{\nolinkurl{doi:10.1103/PhysRevB.91.045138}}. \href{https://arxiv.org/abs/1406.2973}{\nolinkurl{arXiv:1406.2973}}.

\bibitem{Wright.2016}
R.~O'Donnell and J.~Wright.
\newblock Efficient quantum tomography.
\newblock In {\em Proceedings of the forty-eighth annual {ACM} symposium on {Theory} of {Computing}}, pages 899--912, Cambridge MA USA, June 2016. ACM.
\newblock \href{https://doi.org/10.1145/2897518.2897544}{\nolinkurl{doi:10.1145/2897518.2897544}}. \href{https://arxiv.org/abs/1508.01907}{\nolinkurl{arXiv:1508.01907}}.

\bibitem{Garcia.2023}
D.~P\'erez-Garc\'ia and A.~P\'erez-Hern\'andez.
\newblock Locality {Estimates} for {Complex} {Time} {Evolution} in {1D}.
\newblock {\em Communications in Mathematical Physics}, 399(2):929--970, Apr. 2023.
\newblock \href{https://doi.org/10.1007/s00220-022-04573-w}{\nolinkurl{doi:10.1007/s00220-022-04573-w}}. \href{https://arxiv.org/abs/2004.10516}{\nolinkurl{arXiv:2004.10516}}.

\bibitem{Petz.1988}
D.~Petz.
\newblock Sufficiency of channels over von {Neumann} algebras.
\newblock {\em The Quarterly Journal of Mathematics}, 39(1):97--108, 1988.
\newblock \href{https://doi.org/10.1093/qmath/39.1.97}{\nolinkurl{doi:10.1093/qmath/39.1.97}}.

\bibitem{Petz.2003}
D.~Petz.
\newblock Monotonicity of quantum relative entropy revisited.
\newblock {\em Reviews in Mathematical Physics}, 15(01):79--91, Mar. 2003.
\newblock \href{https://doi.org/10.1142/S0129055X03001576}{\nolinkurl{doi:10.1142/S0129055X03001576}}. \href{https://arxiv.org/abs/quant-ph/0209053}{\nolinkurl{arXiv:quant-ph/0209053}}.

\bibitem{Plenio.2005}
M.~B. Plenio, J.~Eisert, J.~Drei{\ss}ig, and M.~Cramer.
\newblock Entropy, {Entanglement}, and {Area}: {Analytical} {Results} for {Harmonic} {Lattice} {Systems}.
\newblock {\em Physical Review Letters}, 94(6):060503, Feb. 2005.
\newblock \href{https://doi.org/10.1103/PhysRevLett.94.060503}{\nolinkurl{doi:10.1103/PhysRevLett.94.060503}}. \href{https://arxiv.org/abs/quant-ph/0405142}{\nolinkurl{arXiv:quant-ph/0405142}}.

\bibitem{Qin.2024}
Z.~Qin, C.~Jameson, Z.~Gong, M.~B. Wakin, and Z.~Zhu.
\newblock Quantum state tomography for matrix product density operators.
\newblock {\em IEEE Transactions on Information Theory}, 70(7):5030–5056, July 2024.

\bibitem{Rouze.2021}
C.~Rouzé and D.~Stilck~França.
\newblock Learning quantum many-body systems from a few copies.
\newblock {\em Quantum}, 8:1319, Apr. 2024.

\bibitem{Onorati.2023}
C.~Rouzé, D.~Stilck~França, E.~Onorati, and J.~D. Watson.
\newblock Efficient learning of ground and thermal states within phases of matter.
\newblock {\em Nature Communications}, 15(1), Sept. 2024.

\bibitem{Scalet.2021}
S.~O. Scalet, {\'A}.~M. Alhambra, G.~Styliaris, and J.~I. Cirac.
\newblock Computable {R\'enyi} mutual information: {Area} laws and correlations.
\newblock {\em Quantum}, 5:541, Sept. 2021.
\newblock \href{https://doi.org/10.22331/q-2021-09-14-541}{\nolinkurl{doi:10.22331/q-2021-09-14-541}}. \href{https://arxiv.org/abs/2103.01709}{\nolinkurl{arXiv:2103.01709}}.

\bibitem{Schuch.2008}
N.~Schuch, M.~M. Wolf, F.~Verstraete, and J.~I. Cirac.
\newblock Entropy {Scaling} and {Simulability} by {Matrix} {Product} {States}.
\newblock {\em Physical Review Letters}, 100(3):030504, Jan. 2008.
\newblock \href{https://doi.org/10.1103/PhysRevLett.100.030504}{\nolinkurl{doi:10.1103/PhysRevLett.100.030504}}. \href{https://arxiv.org/abs/0705.0292}{\nolinkurl{arXiv:0705.0292}}.

\bibitem{Shannon.1948}
C.~E. Shannon.
\newblock A {Mathematical} {Theory} of {Communication}.
\newblock {\em Bell System Technical Journal}, 27(3):379--423, July 1948.
\newblock \href{https://doi.org/10.1002/j.1538-7305.1948.tb01338.x}{\nolinkurl{doi:10.1002/j.1538-7305.1948.tb01338.x}}.

\bibitem{Shi.2020}
B.~Shi, K.~Kato, and I.~H. Kim.
\newblock Fusion rules from entanglement.
\newblock {\em Annals of Physics}, 418:168164, July 2020.
\newblock \href{https://doi.org/10.1016/j.aop.2020.168164}{\nolinkurl{doi:10.1016/j.aop.2020.168164}}. \href{https://arxiv.org/abs/1906.09376}{\nolinkurl{arXiv:1906.09376}}.

\bibitem{Sutter.2018}
D.~Sutter.
\newblock Approximate {Quantum} {Markov} {Chains}.
\newblock In {\em Approximate {Quantum} {Markov} {Chains}}, volume~28, pages 75--100. Springer International Publishing, Cham, 2018.
\newblock \href{https://doi.org/10.1007/978-3-319-78732-9_5}{\nolinkurl{doi:10.1007/978-3-319-78732-9_5}}. \href{https://arxiv.org/abs/1802.05477}{\nolinkurl{arXiv:1802.05477}}.

\bibitem{Sutter.2017}
D.~Sutter, M.~Berta, and M.~Tomamichel.
\newblock Multivariate {Trace} {Inequalities}.
\newblock {\em Communications in Mathematical Physics}, 352(1):37--58, May 2017.
\newblock \href{https://doi.org/10.1007/s00220-016-2778-5}{\nolinkurl{doi:10.1007/s00220-016-2778-5}}. \href{https://arxiv.org/abs/1604.03023}{\nolinkurl{arXiv:1604.03023}}.

\bibitem{Svetlichnyy.2022}
P.~Svetlichnyy and T.~A.~B. Kennedy.
\newblock Decay of quantum conditional mutual information for purely generated finitely correlated states.
\newblock {\em Journal of Mathematical Physics}, 63(7):072201, July 2022.
\newblock \href{https://doi.org/10.1063/5.0085358}{\nolinkurl{doi:10.1063/5.0085358}}. \href{https://arxiv.org/abs/2210.09387}{\nolinkurl{arXiv:2210.09387}}.

\bibitem{Tomamichel.2016}
M.~Tomamichel.
\newblock {\em Quantum {Information} {Processing} with {Finite} {Resources}}, volume~5 of {\em {SpringerBriefs} in {Mathematical} {Physics}}.
\newblock Springer International Publishing, Cham, 2016.
\newblock \href{https://doi.org/10.1007/978-3-319-21891-5}{\nolinkurl{doi:10.1007/978-3-319-21891-5}}. \href{https://arxiv.org/abs/1504.00233}{\nolinkurl{arXiv:1504.00233}}.

\bibitem{Umegaki.1962}
H.~Umegaki.
\newblock Conditional expectation in an operator algebra. {IV}. {Entropy} and information.
\newblock {\em Kodai Mathematical Journal}, 14(2), Jan. 1962.
\newblock \href{https://doi.org/10.2996/kmj/1138844604}{\nolinkurl{doi:10.2996/kmj/1138844604}}.

\bibitem{Vermersch.2023}
B.~Vermersch, M.~Ljubotina, J.~I. Cirac, P.~Zoller, M.~Serbyn, and L.~Piroli.
\newblock Many-body entropies and entanglement from polynomially many local measurements.
\newblock {\em Phys. Rev. X}, 14:031035, Aug 2024.
\newblock \href{https://link.aps.org/doi/10.1103/PhysRevX.14.031035}{\nolinkurl{doi:10.1103/PhysRevX.14.031035}}. \href{https://arxiv.org/abs/2311.08108}{\nolinkurl{arXiv:2311.08108}}.

\bibitem{Verstraete.2006}
F.~Verstraete and J.~I. Cirac.
\newblock Matrix product states represent ground states faithfully.
\newblock {\em Physical Review B}, 73(9):094423, Mar. 2006.
\newblock \href{https://doi.org/10.1103/PhysRevB.73.094423}{\nolinkurl{doi:10.1103/PhysRevB.73.094423}}. \href{https://arxiv.org/abs/cond-mat/0505140}{\nolinkurl{arXiv:cond-mat/0505140}}.

\bibitem{Verstraete.2004}
F.~Verstraete, J.~J. Garc{\'i}a-Ripoll, and J.~I. Cirac.
\newblock Matrix {Product} {Density} {Operators}: {Simulation} of {Finite}-{Temperature} and {Dissipative} {Systems}.
\newblock {\em Physical Review Letters}, 93(20):207204, Nov. 2004.
\newblock \href{https://doi.org/10.1103/PhysRevLett.93.207204}{\nolinkurl{doi:10.1103/PhysRevLett.93.207204}}. \href{https://arxiv.org/abs/cond-mat/0406426}{\nolinkurl{arXiv:cond-mat/0406426}}.

\bibitem{Wilde.2018}
M.~M. Wilde.
\newblock Recoverability for holevo's just-as-good fidelity.
\newblock In {\em 2018 IEEE International Symposium on Information Theory (ISIT)}, pages 2331--2335, 2018.

\bibitem{Wolf.2008}
M.~M. Wolf, F.~Verstraete, M.~B. Hastings, and J.~I. Cirac.
\newblock Area {Laws} in {Quantum} {Systems}: {Mutual} {Information} and {Correlations}.
\newblock {\em Physical Review Letters}, 100(7):070502, Feb. 2008.
\newblock \href{https://doi.org/10.1103/PhysRevLett.100.070502}{\nolinkurl{doi:10.1103/PhysRevLett.100.070502}}. \href{https://arxiv.org/abs/0704.3906}{\nolinkurl{arXiv:0704.3906}}.

\bibitem{Zhai.2022}
Y.~Zhai, B.~Yang, and Z.~Xi.
\newblock Belavkin–{Staszewski} {Relative} {Entropy}, {Conditional} {Entropy}, and {Mutual} {Information}.
\newblock {\em Entropy}, 24(6):837, June 2022.
\newblock \href{https://doi.org/10.3390/e24060837}{\nolinkurl{doi:10.3390/e24060837}}.

\end{thebibliography}

\appendix

\section{Proofs of approximate factorization of Gibbs states in 1D}\label{sec:proofsApproxTens}
This appendix is devoted to the proofs of the new results appearing in Section \ref{sec:approximate-factorisation}, in connection to the approximate factorisation of Gibbs states in 1D.

\subsection{Proof of \texorpdfstring{\Cref{Lemma:Modified_BoundingPartialTraceInverses}}{Lemma~\ref*{Lemma:Modified_BoundingPartialTraceInverses}}}

Let us recall that we want to prove the following inequality:
\begin{equation}\label{eq:aux_inverse_norm_contraction_app}
    \norm{\tr_{AC}[\rho^{A}\otimes \rho^{C} E_{AB,C} \, E_{A, B}]^{-1}} \,   \leq \, \mathcal{C} \, .
\end{equation}
Before providing its proof, as a technical tool we introduce the following measure of locality
\begin{equation*}
    \norm{T}_{I}:=\inf\{\norm{T - P}: P\in\cA_I\}
\end{equation*}
and the following Lemma, whose proof can be found in \cite{Bluhm.2022}.
\begin{lemma}{\cite[Lemma 4.1]{Bluhm.2022}}\label{lem:interval-norms}
    Let $T,T'$ be two observables. Then
    \begin{equation*}
        \|TT'\|_I\le 2(\|T\|_I\|T'\|+\|T\|\|T'\|_I)
    \end{equation*}
    and if $T$ is positive definite
    \begin{equation*}
        \|T^{-1}\|_{I}\le 2\|T^{-1}\|^2\|T\|_I.
    \end{equation*}
    For $T$ an observable on a bipartite system $AB$, $\rho_A$ any state on $A$, $I\subset AB$, and $I'=I\cap B$, we also have
    \begin{equation*}
        \|\tr_A[\rho_A T]\|_{I'}\le\|T\|_I
    \end{equation*}
\end{lemma}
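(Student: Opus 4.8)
The plan is to prove the three inequalities separately, in each case transporting a (near-)optimal local approximant of the argument through the operation in question. Throughout, I use that $\cA_I$ is a $*$-subalgebra of $\cB(\cH)$ of finite dimension, so each infimum defining $\norm{\cdot}_I$ is attained by a minimizer $P\in\cA_I$; moreover, when $T=T^*$ one may always take such a minimizer Hermitian, since replacing $P$ by $\tfrac12(P+P^*)\in\cA_I$ does not increase $\norm{T-P}$.

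For the submultiplicativity bound I would pick minimizers $P,P'\in\cA_I$ with $\norm{T-P}=\norm{T}_I$ and $\norm{T'-P'}=\norm{T'}_I$. Since $\cA_I$ is an algebra, $PP'\in\cA_I$, so $\norm{TT'}_I\le\norm{TT'-PP'}$. Splitting $TT'-PP'=(T-P)T'+P(T'-P')$ and using $\norm{P}\le\norm{T}+\norm{T}_I$ gives $\norm{TT'}_I\le\norm{T}_I\norm{T'}+(\norm{T}+\norm{T}_I)\norm{T'}_I$; absorbing the cross term via $\norm{T'}_I\le\norm{T'}$ (so that $\norm{T}_I\norm{T'}_I\le\norm{T}_I\norm{T'}$) yields the stated factor $2$.

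For the partial-trace inequality I would consider the map $\Phi(X):=\tr_A[\rho_A X]$ from $\cB(\cH_{AB})$ to $\cB(\cH_B)$. It is positive and unital, since $\Phi(\1)=\tr_A[\rho_A]\,\1=\1$ as $\rho_A$ is a state, hence a contraction for the operator norm, $\norm{\Phi(X)}\le\norm{X}$. Writing $I=I_A\cup I'$ with $I_A=I\cap A$, every $P\in\cA_I=\cA_{I_A}\otimes\cA_{I'}$ is a sum of product terms, so tracing out the $A$-part against $\rho_A$ contracts the $I_A$ factors into scalars and leaves $\Phi(P)\in\cA_{I'}$. Taking $P$ a minimizer for $\norm{T}_I$ then gives $\norm{\tr_A[\rho_A T]}_{I'}\le\norm{\Phi(T)-\Phi(P)}=\norm{\Phi(T-P)}\le\norm{T-P}=\norm{T}_I$.

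The delicate case is the inverse estimate, and the obstacle is that a minimizer $P$ of $\norm{T-P}$ need not be invertible, so $P^{-1}$ need not lie in $\cA_I$. I would resolve this by a dichotomy on $\delta:=\norm{T}_I$. If $\delta\ge\tfrac12\norm{T^{-1}}^{-1}$, the trivial bound $\norm{T^{-1}}_I\le\norm{T^{-1}}$ (take $P=0$) already yields $\norm{T^{-1}}_I\le\norm{T^{-1}}\le 2\norm{T^{-1}}^2\delta$. Otherwise $\delta<\tfrac12\norm{T^{-1}}^{-1}$; choosing a Hermitian minimizer $P$, the operator bound $P\ge T-\delta\1\ge(\norm{T^{-1}}^{-1}-\delta)\1>0$ shows $P$ is invertible with $\norm{P^{-1}}\le(\norm{T^{-1}}^{-1}-\delta)^{-1}\le 2\norm{T^{-1}}$. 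Then $P^{-1}\in\cA_I$, and the resolvent identity $T^{-1}-P^{-1}=T^{-1}(P-T)P^{-1}$ gives $\norm{T^{-1}}_I\le\norm{T^{-1}-P^{-1}}\le\norm{T^{-1}}\,\delta\,\norm{P^{-1}}\le 2\norm{T^{-1}}^2\delta$, which is the claim. Both branches only invoke elementary norm and positivity estimates, so the main work is simply organizing this case split cleanly.
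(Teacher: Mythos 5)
Your proposal is correct, and it actually covers more than the paper itself does: the paper imports the first two inequalities from \cite[Lemma 4.1]{Bluhm.2022} and only writes out a proof of the third (partial-trace) inequality, remarking that it was omitted in the reference. For that third point your argument coincides with the paper's: both pick, by compactness, a minimizer $P\in\cA_I$, use that $X\mapsto\tr_A[\rho_A X]$ is positive and unital (hence contractive), and use that this map sends $\cA_I$ into $\cA_{I'}$ --- a fact the paper leaves implicit and you verify explicitly on product terms. Your self-contained proofs of the two cited inequalities are also sound: the decomposition $TT'-PP'=(T-P)T'+P(T'-P')$ together with $\|P\|\le\|T\|+\|T\|_I$ and the absorption $\|T\|_I\|T'\|_I\le\|T\|_I\|T'\|$ gives exactly the stated factor $2$; and for the inverse bound, the dichotomy on $\delta=\|T\|_I$ is a clean way to handle the genuine obstacle that a minimizer need not be invertible --- in the nontrivial branch the Hermitian minimizer satisfies $P\ge(\|T^{-1}\|^{-1}-\delta)\1>0$, so $P^{-1}\in\cA_I$ by spectral permanence in the unital subalgebra, and the resolvent identity closes the estimate. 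The only stylistic caveat is that attainment of the infimum and the Hermitian-minimizer reduction rely on $\cA_I$ being a finite-dimensional unital $*$-subalgebra, which holds in the paper's setting (finite $I$, finite local dimension) and matches the paper's own appeal to compactness.
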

\begin{proof}
    The proof of the last point has been omitted in the reference, but follows straightforwardly from the contractiveness of the operator norm. By compactness, let $P\in\cA_I$ be an operator such that $\|T\|_I=\|T-P\|$.
    Then,
    \begin{equation*}
        \|\tr_A[\rho_AT]\|_{I'}\le\|\tr_A[\rho_A T]-\tr_A[\rho_A P]\|\le\|T-P\|=\|T\|_I.
    \end{equation*}
\end{proof}
\begin{proof}[Proof of \Cref{Lemma:Modified_BoundingPartialTraceInverses}]
The proof of this lemma closely follows the arguments of the third point of \cite[Corollary 4.4]{Bluhm.2022}, however, due to subtle differences in the statement we reprove it here. The original strategy used that the operator $Q$ is almost local in a small interval (after a partial trace and identification of neighbouring intervals), a preliminary that we replace with $Q$ being approximately supported on two small \emph{but distant} intervals. First, note that
\begin{align*}
    \tr_{AC}[\rho^{A}\otimes \rho^{C} Q] & =  \tr_{AC}[\rho^{A}\otimes \rho^{C} e^{-H_{ABC}} e^{H_{A} +H_{B}+ H_{C}}] \\
    & = e^{\frac{1}{2}H_{B}} \tr_{AC}[\rho^{A}\otimes \rho^{C} F_{A,B} F_{AB,C} F_{AB,C}^* F_{A,B}^*   ]e^{-\frac{1}{2}H_{B}}\\
    & = e^{\frac{1}{2}H_{B}} \tr_{AC}[\rho^{A}\otimes \rho^{C}F]e^{-\frac{1}{2}H_{B}}\, ,
\end{align*}
with $F_{X,Y} = e^{-\frac{1}{2}H_{XY}} e^{\frac{1}{2}(H_{X}+H_Y)} $ and $F = F_{A,B} F_{AB,C} F_{AB,C}^* F_{A,B}^* \ge 0$. By \cref{prop:estimates_expansionals}, we know that each factor $F_{X,Y},F_{X,Y}^{-1}$ is bounded in norm by $\mathcal{G}$ (for $\Phi/2$), and thus $\norm{F},\norm{F^{-1}} \le \cG^4$.

Let us define the intervals $I_n$ as neighbourhoods of the cuts $AB$ and $BC$ and $I'_n$ as $I_n$ reduced to $B$. More precisely if $l_\cdot$ and $u_\cdot$ denote the lower and upper boundaries of an interval, we define $I_n = \{u_A - n, \hdots ,  l_B + n\} \cup \{u_{B} - n, \hdots,  l_{C} + n\}$. Our objective now is to minimize the impact of $F$ by approximating it solely in terms of local operators on these sets. Subsequently, we will demonstrate that by performing a partial trace, these operators are further reduced to act non-trivially solely on $I'_n = B \cap I_n$.

Using \cref{lem:interval-norms} and \cref{prop:estimates_expansionals} we can derive 
\begin{align*}
    \norm{F}_{I_n} &= \norm{F_{A,B} F_{AB,C} F_{AB,C}^* F_{A,B}^*}_{I_n}\\
    &\le 4\cG^3 \left(\|F_{A,B}\|_{I_n}+\| F_{AB,C}\|_{I_n}+\| F_{AB,C}^*\|_{I_n}+\| F_{A,B}^*\|_{I_n}\right)\\
    &\le16\cG^3 \frac{\cG^n}{(\lfloor n/R\rfloor+1)!}.
\end{align*}
Using \cref{lem:interval-norms} and the fact that the map $\tr_{AC}[\rho^{A}\otimes\rho^{C}\cdot]$ is positive unital we have
\begin{align*}
    \|F'\|_{I'_n}&\le\|F\|_{I_n}\le 16\cG^3 \frac{\cG^n}{(\lfloor n/R\rfloor+1)!}\\
    \|F'^{-1}\|&\le\|F^{-1}\| \le \cG^4,
\end{align*}
where $F'=\tr_{AC}[\rho^{A}\otimes\rho^{C}F]$. Employing \cref{lem:interval-norms} again
\begin{equation*}
    \|F'^{-1}\|_{I'_n}\le2\|F'^{-1}\|^2\|F'\|_{I'_n}\le 32\cG^{11}\frac{\cG^n}{(\lfloor n/R\rfloor+1)!}.
\end{equation*}

The next step is completely analogous to \cite[Lemma 4.2]{Bluhm.2022} with the only deviation being that \cite[Proposition 3.2]{Bluhm.2022}, involving uniform estimates for the complex time evolution of any observable, does not apply directly.

Let us denote by $P_n$ a collection of $I'_n$-local operators fulfilling
$\|F'^{-1}-P_n\|=\|F'^{-1}\|_{I'_n}$ and $S_1 = P_1$, $S_n = P_n - P_{n-1}$.
This implies the bounds
\begin{equation*}
    \|S_1\|\le \|F'^{-1}\|_{I'_1}+\|F'^{-1}\|\qquad\textrm{ and }\qquad \|S_n\|\le\|F'^{-1}\|_{I'_n}+\|F'^{-1}\|_{I'_{n-1}} \, .
\end{equation*}
Note that for $n\ge\lfloor|B|/2\rfloor+1$, $P_n = F'^{-1}$ and thereby $S_{n+1}=0$.
We start with a decomposition
\begin{equation*}
    F'^{-1}=\sum_{n=1}^{\lfloor|B|/2\rfloor+1} S_n \, ,
\end{equation*}
where we stop the local decomposition once the two distinct intervals start overlapping. Our goal is to bound
\begin{equation*}
    e^{\frac12H_{B}}F'^{-1}e^{-\frac12H_{B}}
\end{equation*}
by individually bounding the norms of the imaginary time-evolved $S_n$. To this end, let us note that there exists an orthogonal decomposition for all $n\le\lfloor|B|/2\rfloor$
\begin{equation*}
    S_n=\sum_{i=1}^{d^{4n}} S^l_{n,i}\otimes S^r_{n,i} \, . 
\end{equation*}
Such a decomposition further possesses the property that $\|S^l_{n,i}\otimes S^r_{n, i}\| \leq d^{2n} \|S_n\|$, which is derived from the orthogonality of the $S^l_{n,i}\otimes S^r_{n, i}$ and the inequality
\begin{equation*}
    \|\cdot\|\le\|\cdot\|_2\le\sqrt{d'}\|\cdot\| \, .
\end{equation*}
Here, $d'$ is the dimension of the underlying Hilbert space. We can now use that the time evolution is a group automorphism on each term individually and that we have \cite[Proposition 3.2]{Bluhm.2020} allowing us to estimate
\begin{align*}
    \left\|e^{\frac12H_{B}}S^l_{n,i}\otimes S^r_{n,i}e^{-\frac12H_{B}}\right\|&\le\left\|e^{\frac12H_{B}} S^l_{n,i}e^{-\frac12H_{B}}\right\|\left\|e^{\frac12H_{B}}S^r_{n,i}e^{-\frac12H_{B}}\right\|\\
    &\le \cG^{2n} d^{2n}\|S_n\|
\end{align*}
At last for $n=\lfloor|B|/2\rfloor+1$, we simply regard $R_n$ as an $B$-local operator, so the directly applying \cite[Proposition 3.2]{Bluhm.2020} yields
\begin{equation*}
    \left\|e^{\frac12H_{B}}S_ne^{-\frac12H_{B}}\right\|\le\cG^{|B|}\|S_n\| \, . 
\end{equation*}

Putting everything together, we have
\begin{align*}
    \left\|e^{\frac12H_{B}}F'^{-1}e^{-\frac12H_{B}}\right\|&\le 32 d^6\cC^2(\cG^{12}+\cG^4)+\left(32\cG^{11}\sum_{n=2}^{\lfloor|B/2|\rfloor+1} \cG^{2n} d^{6n}\frac{\cG^n}{(\lfloor n/R\rfloor+1)!} \right) 
\end{align*}
and the statement follows by combining constants.
%Analogously as in \cite[Corollary 4.4]{Bluhm.2022}, Lemma 4.1, Corollary 3.4 and Proposition 4.3 therein allow us to conclude the result.
\end{proof}

\subsection{Proof of \texorpdfstring{\Cref{Lemma:DPI_mixing_condition}}{Lemma~\ref*{Lemma:DPI_mixing_condition}}}

Next, we prove that, for a Gibbs state $\rho^{A'ABCC'}$ on a finite chain $A'ABCC'$, the following holds.
\begin{equation}\label{eq:DPI_mixing_condition_app}
   \left\| \rho_{AB} \, \rho_{B}^{-1}  \rho_{BC} \, \rho_{ABC}^{-1} - \1 \right\|_{\infty} < \mathcal{C}^4  \, \left\| \rho_{A'AB} \, \rho_{B}^{-1} \rho_{BCC'} \, \rho_{A'ABCC'}^{-1} - \1 \right\|_{\infty} \, ,  
\end{equation}

\begin{proof}
We absorb $\beta$ in the interaction, so the argument will be made for $\beta=1$. Let us rewrite
\begin{align*} 
& \rho_{AB} \, \rho_{B}^{-1} \rho_{BC} \, \rho_{ABC}^{-1} \\[1.5mm]
& \hspace{2cm} = \operatorname{tr}_{A'C'}\left[ \rho_{A'AB} \, \rho_{B}^{-1} \rho_{BCC'} \right] \, \operatorname{tr}_{A'C'} \left[\rho_{A'ABCC'} \right]^{-1}  \\[1.5mm]
& \hspace{2cm} =  \operatorname{tr}_{A'C'} \left[ \rho_{A'AB} \, \rho_{B}^{-1} \rho_{BCC'} \, \rho_{A'ABCC'}^{-1} \, \rho_{A'ABCC'} \right]  \, \operatorname{tr}_{A'C'} \left[\rho_{A'ABCC'}\right]^{-1}   \\[1.5mm]
& \hspace{2cm} = \operatorname{tr}_{A'C'} \left[ \rho_{A'AB} \, \rho_{B}^{-1} \rho_{BCC'} \, \rho_{A'ABCC'}^{-1}  \, e^{-H_{A'ABCC'}} \right] \, \operatorname{tr}_{A'C'} \left[e^{-H_{A'ABCC'}}\right]^{-1}  \\[1.5mm]
& \hspace{2cm} = \operatorname{tr}_{A'C'} \left[ \rho_{A'AB} \, \rho_{B}^{-1} \rho_{BCC'} \, \rho_{A'ABCC'}^{-1} \, e^{-H_{A'ABCC'}} \, e^{H_{ABC}} \right] \, \operatorname{tr}_{A'C'} \left[e^{-H_{A'ABCC'}} \, e^{H_{ABC}}\right]^{-1} \, .
\end{align*}
Recalling that
\[ E_{A'ABC, C'}:= e^{-H_{A'ABCC'}} \, e^{H_{A'ABC} + H_{C'}}\; , \quad E_{A',ABC}:= e^{-H_{A'ABC}} \, e^{ H_{A'} + H_{ABC} } \, , \]
then, from the above identity, we deduce that
\begin{align*}
 & \rho_{AB} \, \rho_{B}^{-1} \rho_{BC} \, \rho_{ABC}^{-1} \\[1.5mm]
 & = \operatorname{tr}_{A'C'} \left[ \rho_{A'AB} \, \rho_{B}^{-1} \rho_{BCC'} \, \rho_{A'ABCC'}^{-1} \, E_{A'ABC, C'} \, E_{A',ABC} \, e^{-H_{A'}-H_{C'}} \right] \, \operatorname{tr}_{A'C'} \left[E_{A'ABC, C'} \, E_{A',ABC} \, e^{-H_{A'}-H_{C'}} \right]^{-1}  \\[1.5mm]
& = \operatorname{tr}_{A'C'} \left[ \rho^{A'}\otimes\rho^{C'} \, \rho_{A'AB} \, \rho_{B}^{-1} \rho_{BCC'} \, \rho_{A'ABCC'}^{-1} \, E_{A'ABC, C'} \, E_{A',ABC} \right] \, \operatorname{tr}_{A'C'} \left[\rho^{A'}\otimes\rho^{C'} \,E_{A'ABC, C'} \, E_{A',ABC}  \right]^{-1} 
\end{align*}
Thus
\begin{align*}
    & \rho_{AB} \, \rho_{B}^{-1} \rho_{BC} \, \rho_{ABC}^{-1} - \1 \\[1.5mm]
  &  \hspace{2.3cm} = \operatorname{tr}_{A'C'} \left[ \rho^{A'}\otimes\rho^{C'} \, Q \,E_{A'ABC, C'} \, E_{A',ABC} \right] \, \operatorname{tr}_{A'C'} \left[\rho^{A'}\otimes\rho^{C'}\, \,E_{A'ABC, C'} \, E_{A',ABC}  \right]^{-1}\,, 
\end{align*}
where
\[ Q := \rho_{A'AB} \, \rho_{B}^{-1} \rho_{BCC'} \, \rho_{A'ABCC'}^{-1} - \1\,. \]
By  \cref{eq:contractiveExpectation}, we have
$$\norm{\operatorname{tr}_{A'C'} \left[ \rho^{A'}\otimes\rho^{C'} \, Q \, E_{A'ABC, C'} \, E_{A',ABC} \right]} \leq \norm{E_{A'ABC, C'}} \norm{E_{A',ABC}} \norm{Q} \leq \mathcal{C}^2 \norm{Q} \, ,$$
and for the term with the inverse we have an upper bound of $ \mathcal{C}^2 $ by \Cref{Lemma:Modified_BoundingPartialTraceInverses}. Therefore,
\[ \left\| \rho_{AB} \, \rho_{B}^{-1} \rho_{BC} \, \rho_{ABC}^{-1} - \1 \right\| \leq \mathcal{C}^4  \,\| Q \| \,. \]
\end{proof}

\section{Extension of results to exponentially-decaying interactions}\label{sec:shortrange}
In this appendix, we extend the main results of this manuscript to the framework of exponentially-decaying interactions. All results presented in \cref{sec:approximate-factorisation} admit extensions to this setting, albeit with some small modifications in the estimates. Here, we collect the generalisations of all these technical results in the context of exponentially-decaying interactions and either refer to a source for proof or give the proof here. Next, we use them to generalise the main results of this manuscript (\cref{theorem:superexponential-decay-BS-CMI}, \cref{thm:estimation_purity} and \cref{thm:learnMPO}) to exponentially-decaying interactions. To increase readability, we will only highlight those parts of the proofs that differ from the finite-range counterpart. 

Consider an interaction on $\Sigma \subseteq \mathbb{Z}$. To define exponentially-decaying interactions, we introduce for each $\lambda>0$ the following notation
\begin{align*}\label{eq:norm_interaction}
\Omega_n &:= \sup_{x\in\Sigma}\sum\{\norm{\Phi(X)}:X\ni x, \diam(X)\ge n\}\\
    \| \Phi\|_{\lambda}&:=  \sum_{n\ge 0} \Omega_n e^{\lambda n} \in [0, \infty]\,.
\end{align*}
We say that  $\Phi$ is \textit{exponentially decaying}, if there exists $\lambda > 0$ such that $\norm{\Phi}_{\lambda} < \infty$. Note that there is a subtle difference in the definition of $\norm{\Phi}_\lambda$ in the papers from which we extract the results presented and utilised below. The above definition is the one from \cite{Garcia.2023} which is conformal with $\vertiii{\cdot}_\lambda$ from \cite{Bluhm.2024} and \cite{Capel.2024} in the sense that, $\vertiii{\Phi}_{\lambda} \le \norm{\Phi}_\lambda \le \vertiii{\Phi}_{\lambda - \varepsilon}$ for all $\varepsilon > 0$.\footnote{For proof of the relation, the reader can consult \cite[Section 9.1]{Capel.2024}.} Hence by requiring $\norm{\Phi}_\lambda < \infty$ we immediately get that $\vertiii{\Phi}_\lambda < \infty$ is satisfied for the same $\lambda$ and we further can replace appearances of the latter with the first.
Note further that the {finite range} interactions are {exponentially decaying}. Given a finite interval $\Lambda \Subset \Sigma$ split into $X$ and $Y$, the expansional of a Hamiltonian on $X,Y$ is defined analogously to Eq. \eqref{def:expantionals}, and the estimates of \cref{prop:estimates_expansionals} extend similarly (see \cite{Garcia.2023}). The important difference is that (ii) presents exponential decay with $\ell$, rather than superexponential and that the result only holds up to some critical temperature $\crit>0$. Here we recall the simplified formulation of \cite[Lemma 31]{Capel.2024}.

\begin{proposition}[{\cite[Theorem 3.1]{Garcia.2023}}]\label{prop:estimates_expansionals_shortrange}
    Let $\Phi$ be an exponentially-decaying interaction which is further translation invariant and let $\beta<\crit$. Then the following hold: 
    \begin{enumerate}
        \item[(i)] There is an absolute constant $\tilde{\mathcal{G}}>1$ depending only on $\lambda$ and $\beta$ such that, for any finite interval $\Lambda = X Y \Subset \mathbb{Z}$ split into two subintervals $X$ and $Y$, we have:
        \begin{equation*}
            \norm{E_{X,Y}} \, , \,\norm{E_{X,Y}^{-1}} \,\, \leq \,\, \tilde{\mathcal{G}} \, .
        \end{equation*}
        \item[(ii)] There are positive constants $\mathcal{K}, \alpha >0$ depending on $\lambda$ and $\beta$ such that if we add two intervals $\widetilde{X}$ and $\widetilde{Y}$ adjacent to $X$ and $Y$, respectively, so that we get a larger interval $\widetilde{X}XY\widetilde{Y}$, then
        \begin{equation*}
            \left\| E_{X,Y}^{-1} - E^{-1}_{\widetilde{X}X,Y\widetilde{Y}} \right\|, \left\| E_{X,Y} - E_{\widetilde{X}X,Y\widetilde{Y}} \right\| \, \leq \, \mathcal{K} e^{-\alpha \ell} \,.
        \end{equation*}
        for any $\ell \in \mathbb{N}$ such that $\ell \, \leq \, |X| \, , \, |Y|$. 
     \end{enumerate}
\end{proposition}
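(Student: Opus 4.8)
Since this proposition is the exponentially-decaying analogue of \cref{prop:estimates_expansionals} and is established in \cite{Garcia.2023}, my plan is to follow Araki's expansional strategy, keeping careful track of where the finite-range structure was used so that it can be replaced by the decay assumption $\norm{\Phi}_\lambda < \infty$. The central object is the interface interaction $W := H_{XY} - H_X - H_Y = \sum\{\Phi(Z) : Z \cap X \neq \emptyset,\ Z \cap Y \neq \emptyset\}$, together with its imaginary-time evolution $W(s) := e^{-sH_{XY}} W e^{sH_{XY}}$ (with $\beta$ absorbed into $H$). Differentiating $E(s) := e^{-sH_{XY}} e^{s(H_X + H_Y)}$ gives the linear equation $E'(s) = -W(s) E(s)$ with $E(0) = \1$ and $E(1) = E_{X,Y}$, so that $E_{X,Y}$ is the time-ordered exponential of $-\int_0^1 W(s)\,ds$; the analogous identity for $E_{X,Y}^{-1}$ follows by solving the adjoint equation.

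For part (i), I would first bound $\norm{W}$ by the boundary sum $\sum_{n \ge 0} \Omega_n$, which is finite since $\norm{\Phi}_\lambda < \infty$. The delicate point is bounding $\norm{W(s)}$ uniformly for $s \in [0,1]$: imaginary-time evolution can amplify norms, and this is exactly where a critical temperature enters. I would invoke the quasi-locality (analyticity) of imaginary-time evolution for exponentially-decaying interactions, valid for $\beta < \crit$, to obtain a uniform bound $\sup_{s} \norm{W(s)} \le M$ depending only on $\lambda$ and $\beta$. A Grönwall estimate then yields $\norm{E_{X,Y}}, \norm{E_{X,Y}^{-1}} \le e^{M} =: \tilde{\mathcal{G}}$.

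For part (ii), I would compare the two time-ordered exponentials generated by $W$ evolved under $H_{XY}$ and under $H_{\widetilde{X} X Y \widetilde{Y}}$. Since both expansionals are built from the same interface sitting at the cut between $X$ and $Y$, their difference comes from two effects, both controlled by the distance $\ell$: first, the truncation of long-range straddling terms, whose total weight is $\le \sum_{n \ge \ell} \Omega_n \le e^{-\lambda \ell} \norm{\Phi}_\lambda$; and second, the difference between evolving $W$ under $H_{XY}$ versus under the enlarged Hamiltonian, which by quasi-locality of the imaginary-time dynamics is exponentially small in the distance $\ell$ from the cut to $\widetilde{X}, \widetilde{Y}$. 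Combining these through a Duhamel (variation-of-constants) comparison of the two Dyson series gives $\norm{E_{X,Y} - E_{\widetilde{X} X, Y \widetilde{Y}}} \le \mathcal{K} e^{-\alpha \ell}$, and similarly for the inverses.

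The main obstacle is the uniform control of $\norm{W(s)}$ in step (i)---equivalently, the analyticity of imaginary-time evolution for exponentially-decaying interactions. Unlike the finite-range case, where a single interface term has bounded support and the Dyson expansion enjoys a $1/n!$ factorial suppression that drives the \emph{superexponential} decay of \cref{prop:estimates_expansionals}(ii), here the tails $e^{-\lambda n}$ of $\Phi$ are what limit the reach across the cut. This prevents the factorial gain and degrades the decay in (ii) from superexponential to merely \emph{exponential}, while simultaneously forcing the restriction $\beta < \crit$ needed to keep the norm of the evolved interface summable.
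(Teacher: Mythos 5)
The paper offers no proof of this statement: it is imported verbatim as Theorem~3.1 of \cite{Garcia.2023} (the exponentially-decaying counterpart of \cref{prop:estimates_expansionals}), so there is nothing internal to compare against. Your sketch correctly reconstructs the strategy of the cited work --- the differential equation $E'(s)=-W(s)E(s)$ for the expansional generated by the imaginary-time-evolved interface term, a Gr\"onwall bound for (i), and a Duhamel comparison plus truncation of long-range straddling terms for (ii) --- and you correctly diagnose both why the threshold $\beta<\crit$ appears and why the decay in (ii) degrades from superexponential to exponential. The one caveat is that the uniform bound $\sup_{s\in[0,1]}\norm{W(s)}\le M$ you invoke in step (i) is not an off-the-shelf input: proving that the complex-time evolution of the (infinitely many, increasingly extended) straddling terms $\Phi(Z)$ remains summable --- i.e.\ that the growth of the evolution constant in $\diam(Z)$ is beaten by the decay $e^{-\lambda\,\diam(Z)}$, which is precisely what fixes $\crit$ --- is the entire technical content of \cite{Garcia.2023}, so as a self-contained argument your proof is a correct plan rather than a complete derivation.
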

Next, note that \cref{lem:interval-norms} is independent of the range of the interactions. Note further that \cref{prop:BoundingPartialTraceInverses} relies on estimates for expansionals and certain contractions thereof which allows us to conclude its proof for exponentially-decaying interactions directly from \cref{lem:interval-norms} and point (i) of \cref{prop:estimates_expansionals_shortrange}. The statement can be found in \cref{prop:BoundingPartialTraceInverses_shortrange}. By similar arguments, we conclude \cref{Lemma:Modified_BoundingPartialTraceInverses} in the exponentially-decaying setting.

\begin{proposition}\label{prop:BoundingPartialTraceInverses_shortrange}
    Under the conditions above, there is an absolute constant $\tilde{\mathcal{C}}>1$ depending only on $\lambda$ and $\beta$ such that
    \begin{align}
       &  \big\| \tr_{B}[\rho^{B}Q] \big\|, \big\| \tr_{B}[\rho^{B}Q]^{-1} \big\| \, \leq \, \tilde{\mathcal{C}} \, \,  , \quad \quad Q \in \{ E_{B,C}^{*} \, , \, E_{B,C}\, , \, E_{A,B}^{*} \, , \, E_{A,B} \} \, , \label{equa:BoundingPartialTraceInverses1_sr}\\[2mm]
      &   \label{equa:BoundingPartialTraceInverses2_sr} \big\| \tr_{AB}[\rho^{AB}Q] \big\| \, , \, \big\| \tr_{AB}[\rho^{AB}Q]^{-1} \big\| \,   \leq \, \tilde{\mathcal{C}} \, \,  , \quad \quad Q \in \{ E_{A,B}^{* \, -1} \, , \, E_{A,B}^{\, -1} \} \, ,\\[2mm]
      &  \label{equa:BoundingPartialTraceInverses3_sr} \big\| \tr_{B}\big[\rho^{B}E_{A,B}^{*} E_{AB,C}^{*}\big] \big\| \, , \, \big\| \tr_{B}\big[\rho^{B}E_{A,B}^{*} E_{AB,C}^{*}\big]^{-1} \big\| \, \leq \, \tilde{\mathcal{C}} \, ,\\[2mm]
      &  \label{equa:BoundingPartialTraceInverses4_sr}   \norm{\tr_{AC}[\rho^{A}\otimes \rho^{C} E_{AB,C} \, E_{A,BC} ]^{-1}}  \, \leq \, \tilde{\mathcal{C}} \, .
    \end{align}
\end{proposition}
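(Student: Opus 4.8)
The plan is to re-run the proofs of \cref{prop:BoundingPartialTraceInverses} and \cref{Lemma:Modified_BoundingPartialTraceInverses} essentially verbatim, substituting the finite-range expansional estimates of \cref{prop:estimates_expansionals} by their exponentially-decaying counterparts in \cref{prop:estimates_expansionals_shortrange}. The crucial structural observation is that \cref{lem:interval-norms} is insensitive to the range of the interaction, and that the contractivity of the conditional expectation $Q \mapsto \tr_X[\rho^X Q]$ recorded in \eqref{eq:contractiveExpectation} holds at every positive temperature. Hence the only inputs that change are the uniform bound $\mathcal{G}$ on expansionals, which is replaced by $\tilde{\mathcal{G}}$, and the decay function $\delta(\ell)$, which is replaced by $\mathcal{K}e^{-\alpha\ell}$; everywhere a constant $\mathcal{C}$ appeared, a new constant $\tilde{\mathcal{C}}$ depending only on $\lambda$ and $\beta$ will appear.

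For the bounds \eqref{equa:BoundingPartialTraceInverses1_sr}--\eqref{equa:BoundingPartialTraceInverses3_sr}, I would note that only point (i) of \cref{prop:estimates_expansionals_shortrange} is required: the arguments of \cite[Section 3.4]{Bluhm.2022} express each of these partial traces and their inverses as a bounded number of conditional expectations applied to products of expansionals and their inverses, all controlled uniformly by $\tilde{\mathcal{G}}$ together with the contractivity of the expectation. Since the decay estimate (point (ii)) is never invoked in these three, they transfer immediately.

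The substantive part is \eqref{equa:BoundingPartialTraceInverses4_sr}, the analogue of \cref{Lemma:Modified_BoundingPartialTraceInverses}. Following the appendix proof, I would first rewrite the partial trace as $e^{H_B/2}F'e^{-H_B/2}$ with $F'=\tr_{AC}[\rho^A\otimes\rho^C F]$ and $F$ a positive product of short-range expansionals $F_{X,Y}=e^{-H_{XY}/2}e^{(H_X+H_Y)/2}$ and their adjoints, bounded together with its inverse by a fixed power of $\tilde{\mathcal{G}}$ via point (i). The locality estimates of \cref{lem:interval-norms} then bound the defect $\|F\|_{I_n}$ of $F$ away from neighbourhoods of the two cuts; the one change is that point (ii) of \cref{prop:estimates_expansionals_shortrange} now supplies an exponential bound $\|F\|_{I_n}\lesssim \tilde{\mathcal{G}}^3\,\mathcal{K}e^{-\alpha n}$ in place of the superexponential $\mathcal{G}^n/(\lfloor n/R\rfloor+1)!$. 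Passing to $F'^{-1}$, decomposing it into pieces $S_n$ supported near the cuts, and controlling the imaginary-time-evolved inverse $e^{H_B/2}F'^{-1}e^{-H_B/2}$ term by term through \cite[Proposition 3.2]{Bluhm.2020} then proceeds as before.

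The hard part will be the convergence of the resulting series. Each piece $S_n$ carries a dimensional factor $d^{\mathcal{O}(n)}$ from its Schmidt decomposition and an imaginary-time factor $\tilde{\mathcal{G}}^{\mathcal{O}(n)}$, so the $n$-th summand is of order $(\tilde{\mathcal{G}}^2 d^6 e^{-\alpha})^n$. In the finite-range case the factorial $1/(\lfloor n/R\rfloor+1)!$ defeats this exponential growth unconditionally, but here convergence demands $\alpha > 2\log\tilde{\mathcal{G}} + 6\log d$, i.e. the expansional decay rate must dominate the combined exponential growth. This is precisely the role of the temperature threshold: I would verify that for $\beta < \crit$ the constants $\tilde{\mathcal{G}}$ and $\alpha$ furnished by \cref{prop:estimates_expansionals_shortrange} satisfy this inequality (shrinking $\crit$ if necessary), after which the geometric series sums to a finite $\tilde{\mathcal{C}}$ depending only on $\lambda$ and $\beta$, concluding the proof.
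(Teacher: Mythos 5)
Your overall strategy coincides with the paper's: the appendix disposes of this proposition in two sentences, asserting that \cref{prop:BoundingPartialTraceInverses} transfers via \cref{lem:interval-norms} and point (i) of \cref{prop:estimates_expansionals_shortrange}, and that \cref{Lemma:Modified_BoundingPartialTraceInverses} transfers ``by similar arguments''. Your handling of \eqref{equa:BoundingPartialTraceInverses1_sr}--\eqref{equa:BoundingPartialTraceInverses3_sr} matches this, and your decision to actually re-run the appendix proof of \cref{Lemma:Modified_BoundingPartialTraceInverses} for \eqref{equa:BoundingPartialTraceInverses4_sr} is more careful than the paper itself. In doing so you have put your finger on the genuinely delicate step: once the factorial $1/(\lfloor n/R\rfloor+1)!$ is replaced by $\mathcal{K}e^{-\alpha n}$, the sum over the local pieces $S_n$ becomes a geometric series with ratio of order $\tilde{\cG}^{2}d^{6}e^{-\alpha}$, and convergence is no longer automatic. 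This observation is correct and is exactly where ``similar arguments'' ceases to be a complete proof.

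The gap is in your proposed resolution. You claim that for $\beta<\crit$ (shrinking $\crit$ if necessary) one can verify $\alpha>2\log\tilde{\cG}+6\log d$. But the decay rate $\alpha$ furnished by point (ii) of \cref{prop:estimates_expansionals_shortrange} is capped, uniformly in $\beta$, by a multiple of the interaction decay parameter $\lambda$: the paper itself records (in the proof of \cref{theorem:superexponential-decay-BS-CMI_shortrange}) that the associated rate is $\lambda-2\beta\Omega_0$, so even in the limit $\beta\to 0$, where $\tilde{\cG}\to 1$, the best one can hope for is $\alpha\lesssim 2\lambda$. The required inequality therefore reduces to a $\beta$-independent condition of the form $\lambda\gtrsim 3\log d$, which is a constraint on the interaction that appears nowhere in the hypotheses and cannot be manufactured by lowering the temperature threshold. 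To close the argument you would need either to add such a hypothesis, to extract a larger decay rate from the references than \cref{prop:estimates_expansionals_shortrange} states, or to rework the bound on $e^{\frac12 H_B}F'^{-1}e^{-\frac12 H_B}$ so that the $d^{\mathcal{O}(n)}$ factor coming from the operator Schmidt decomposition of $S_n$ (and the accompanying $\tilde{\cG}^{\mathcal{O}(n)}$ from the imaginary-time evolution of each tensor factor) is avoided. As written, the final step of your proof of \eqref{equa:BoundingPartialTraceInverses4_sr} does not go through.
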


The next result constitutes a generalisation of \cite[Theorem 5.1]{Bluhm.2022} to the setting of exponentially-decaying interactions. Its proof completely follows that of its finite-range counterpart, and thus we only provide a very brief sketch of it, highlighting the differences.

\begin{theorem}\label{lemma:approximate-factorization-of-Gibbs-state_shortrange}
   With $\Phi$ on $\Z$ an exponentially-decaying interaction that is further translation-invariant and with $\beta<\crit$, there are positive constants $\tilde{K}, \tilde{\alpha}>0$ depending only on $\lambda$ and $\beta$ such that for every $\Lambda \Subset \Z$ split into three subintervals $\Lambda = ABC$, where $B$ shields $A$ from $C$, for its local Gibbs state $\rho^{\Lambda} =: \rho_{ABC}$ it holds that
    \begin{equation}\label{eq:approximate-factorization-of-Gibbs-state_shortrange}
        \norm{\rho_{ABC} \rho_{BC}^{-1} \rho_B \rho_{AB}^{-1} - \1} \le \tilde{K} e^{-\tilde{\alpha} |B|} \, .
    \end{equation}
\end{theorem}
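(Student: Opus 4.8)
The plan is to transcribe the proof of the finite-range version, \cref{lemma:approximate-factorization-of-Gibbs-state} (i.e. \cite[Theorem~5.1]{Bluhm.2022}), replacing each of its technical inputs by the exponentially-decaying analogues already assembled in this appendix. Concretely, I would first rewrite the object of interest $P := \rho_{ABC}\,\rho_{BC}^{-1}\,\rho_B\,\rho_{AB}^{-1}$ entirely in terms of local Gibbs states $\rho^X = e^{-H_X}/Z_X$ and expansionals. Decomposing $e^{-H_{ABC}}$ across the cuts $A|BC$, $AB|C$ and $A|B$ and pulling the $B$-local exponential factors out of the partial traces, each marginal becomes a partition-function ratio times a partial trace of expansionals against $\rho^A,\rho^C$; for instance $\rho_{AB} = \tfrac{Z_C}{Z_{ABC}}\tr_C[E_{AB,C}\rho^C]\,e^{-H_{AB}}$ and $\rho_B = \tfrac{Z_AZ_C}{Z_{ABC}}\tr_{AC}[E_{AB,C}E_{A,B}\,\rho^A\otimes\rho^C]\,e^{-H_B}$. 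Forming the product $P$, all partition functions cancel identically, leaving a product of local exponential factors and of partial traces (and inverses thereof) of expansionals.

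The core step is then a localization argument. Splitting $B = B_1 B_2$ with $\lvert B_1\rvert,\lvert B_2\rvert \ge \ell := \lfloor \lvert B\rvert/2\rfloor$, I would replace the expansionals $E_{A,BC}$, $E_{AB,C}$ reaching across $B$ by the expansionals $E_{A,B_1}$, $E_{B_2,C}$ supported within distance $\ell$ of the interfaces $A|B$ and $B|C$. Each such replacement is controlled by point~(ii) of \cref{prop:estimates_expansionals_shortrange}, which now costs $\cK e^{-\alpha\ell}$ rather than the superexponential $\delta(\ell)$ of the finite-range case. Writing $P-\1$ as a telescoping sum over these substitutions, every summand is the image, under a positive unital (hence contractive, cf. \eqref{eq:contractiveExpectation}) map, of a difference $E_{X,Y}-E_{\widetilde X X, Y\widetilde Y}$ surrounded by factors that are uniformly bounded by point~(i) of \cref{prop:estimates_expansionals_shortrange}, by \cref{prop:BoundingPartialTraceInverses_shortrange}, and by the inverse-partial-trace bound \eqref{equa:BoundingPartialTraceInverses4_sr} (the exponentially-decaying analogue of \cref{Lemma:Modified_BoundingPartialTraceInverses}). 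Since in the fully localized configuration the $B_1$- and $B_2$-parts decouple and the partial-trace factors recombine to reproduce the exact Markov factorization, the localized product matches $\1$, so one reads off $\norm{P-\1}\le \tilde K e^{-\tilde\alpha\ell}$ with $\tilde\alpha$ a fixed fraction of $\alpha$ and $\tilde K$ absorbing the bounded prefactors; the regime where $\lvert B\rvert$ is too small to split is covered trivially, as $\norm{P-\1}$ is then uniformly bounded by a constant depending only on $\lambda,\beta$.

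The main point to verify — rather than a genuine obstacle — is that the finite-range argument nowhere exploits the \emph{factorial} structure of $\delta(\ell)$: it uses the expansional-difference estimate only through its magnitude, so carrying it as the exponential $\cK e^{-\alpha\ell}$ propagates cleanly to an exponential bound on $P-\1$. Two features are then inherited automatically. First, the decay is only exponential, not superexponential, precisely because point~(ii) of \cref{prop:estimates_expansionals_shortrange} is only exponential. Second, the result is restricted to $\beta<\crit$, since that is exactly the regime in which the uniform boundedness of expansionals and of partial-trace inverses holds; in particular the summability of the local decomposition underlying \cref{prop:BoundingPartialTraceInverses_shortrange} (the analogue of the series appearing in the proof of \cref{Lemma:Modified_BoundingPartialTraceInverses}) requires the rate $\alpha$ to dominate the exponential growth $\tilde{\mathcal{G}}^{n}d^{O(n)}$ of the norm prefactors, which is guaranteed below $\crit$.
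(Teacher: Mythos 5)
Your proposal is correct and follows essentially the same route as the paper: both transcribe the finite-range argument of \cite[Theorem 5.1]{Bluhm.2022}, observing that it only uses the expansional-difference estimate through its magnitude, and substitute the exponential bounds of \cref{prop:estimates_expansionals_shortrange} and \cref{prop:BoundingPartialTraceInverses_shortrange} for the superexponential ones, with the restriction $\beta<\crit$ inherited from those inputs. The only (cosmetic) difference is bookkeeping: the paper starts from the compact identity $\rho_{ABC}\rho_{BC}^{-1}\rho_B\rho_{AB}^{-1}=E_{A,BC}\,\tr_A[E_{A,BC}\rho^A]^{-1}\tr_A[\tilde{E}_{A,BC}\rho^A]\,\tilde{E}_{A,BC}^{-1}$ of \cite[Eq.~(19)]{Bluhm.2022} and reduces everything to bounding $\|\tilde{E}_{A,BC}-E_{A,BC}\|$, whereas you re-expand all four marginals into expansionals and telescope over their localizations near the two cuts.
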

\begin{proof}[Sketch of the proof]
    From \cite[Eq. (19)]{Bluhm.2022}, we can rewrite
    \begin{equation*}
        \rho_{ABC} \rho_{BC}^{-1} \rho_B \rho_{AB}^{-1} = E_{A,BC} \tr_A[ E_{A,BC} \rho^A]^{-1} \tr_A[ \tilde{E}_{A,BC} \rho^A] \tilde{E}_{A,BC}^{-1} \, ,
    \end{equation*}
    with 
    \begin{equation*}
        \tilde{E}_{A,BC} = \tr_C[ e^{-H_{ABC}}] \tr_C[e^{-H_{BC}}]^{-1} e^{H_A} \, .
    \end{equation*}
    Now, the conclusion of the proof is a combination of two statements:
    \begin{enumerate}
        \item Statement: The four factors in the right-hand side of the previous expression and their inverses are uniformly bounded. This is a consequence of \cref{prop:estimates_expansionals_shortrange} and \cref{prop:BoundingPartialTraceInverses_shortrange}. The universal constant is given by $\tilde{\mathcal{G}} \, \tilde{\mathcal{C}}^2 $.
        \item Statement: The following inequalities hold,
        \begin{align*}
         \norm{ \tilde{E}_{A,BC} - E_{A,BC} } & \leq \tilde{K}' e^{-\tilde{\alpha}' |B|} \, , \\
         \norm{ \tr_A[ \tilde{E}_{A,BC} \rho^A] - \tr_A[ E_{A,BC} \rho^A] } & \leq \tilde{K}' e^{-\tilde{\alpha}' |B|} \, .
        \end{align*}
        The second inequality follows from the first one by contractivity. For the first one, we split the terms in the same way as in the proof of \cite{Bluhm.2022}, obtaining $\tilde{K}'= 4 \, \mathcal{C} \, \mathcal{K}  \, e^{- \alpha}$ and $\tilde{\alpha}'= \alpha/2 $. 
    \end{enumerate}
    We conclude by taking $ \tilde{\alpha}=\tilde{\alpha}'$ and $\tilde{K}=  8 \, \tilde{\mathcal{C}}^7  \tilde{\mathcal{G}}^3 \mathcal{K}   e^{-\alpha}$.
\end{proof}

Next, one can immediately conclude \cref{Lemma:DPI_mixing_condition} for exponentially-decaying interactions as of Eq. \eqref{equa:BoundingPartialTraceInverses4_sr}. This, jointly with \cref{lemma:approximate-factorization-of-Gibbs-state_shortrange}, allows us to deduce a generalisation of \cref{lemma:approximate-factorization-of-Gibbs-state_shortrange} to the case where the Gibbs state is given on $\Lambda = A'ABCC'$ and but we only compare its marginals on $ABC, AB, B$ and $BC$, respectively. Namely, we can show that, in this context, the following inequality holds:
\begin{equation}\label{eq:approximate-factorization-of-Gibbs-state_longerchain_shortrange}
    \norm{\rho_{ABC} \rho_{BC}^{-1} \rho_B \rho_{AB}^{-1} - \1} \le \tilde{C}^4 \tilde{K} e^{-\tilde{\alpha} |B|} \, .
\end{equation}

To conclude the preliminaries for exponentially-decaying interactions, let us recall that the exponential decay of correlations and local indistinguishability hold at any positive temperature for the considered setting. The proofs of these results can be found in \cite[Theorem 26]{Capel.2024} and  \cite[Theorem 28]{Capel.2024}, respectively, as a consequence of \cite[Theorem 4.4]{Garcia.2023}.

\begin{proposition}[Decay of correlations and local indistinguishability]\label{prop:decaycorr_localind_shortrange}
    For $\Phi$ an exponentially-decaying, translation invariant interaction over $\Z$, the following inequalities hold for universal constants $c,c', \gamma, \gamma' >0$ depending only on $\lambda$ and $\beta$:
      \begin{align*}
        \left|\Tr_{ABC}[\rho^{ABC} O_A \, O_C ] -  \Tr_{ABC}[\rho^{ABC} O_A  ]  \Tr_{ABC} [\rho^{ABC} O_C  ]  \right| & \leq \norm{O_A} \norm{O_C} \, c \, e^{- \alpha |B|} \,  ,\\
       \left| \Tr_{ABC}[\rho^{ABC} O_A] -  \Tr_{AB}[\rho^{AB} O_A]  \right| & \leq \norm{O_A} \, c' \, e^{-\gamma' |B|} \, , \\[1mm]
      \left| \Tr_{ABC}[\rho^{ABC} O_C] -  \Tr_{BC}[\rho^{BC} O_C]  \right| & \leq \norm{O_C} \, c' \, e^{-\gamma' |B|} \, .
    \end{align*}
\end{proposition}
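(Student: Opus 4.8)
The plan is to obtain both statements as near-verbatim transcriptions of the finite-range arguments behind \cref{prop:decay_correlations} and \cref{theorem:local_indistinguishability} (namely \cite[Theorem 6.2]{Bluhm.2022} and \cite[Proposition 7.1]{Bluhm.2022}), in which the finite range of $\Phi$ enters \emph{only} through the expansional estimates of \cref{prop:estimates_expansionals} and the partial-trace bounds of \cref{prop:BoundingPartialTraceInverses}. Replacing these inputs by their exponentially-decaying counterparts, \cref{prop:estimates_expansionals_shortrange} and \cref{prop:BoundingPartialTraceInverses_shortrange}, carries the whole argument through essentially unchanged; the single effect is that the superexponential factor $\delta(\ell)$ is everywhere substituted by the exponential $\mathcal{K}e^{-\alpha\ell}$, which is precisely why the resulting bounds decay only exponentially in $|B|$ and require $\beta<\crit$. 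This is exactly the route taken in \cite[Theorems 26 and 28]{Capel.2024}, which deduce the two statements from the clustering property \cite[Theorem 4.4]{Garcia.2023}; I would therefore invoke those references for the full argument and only indicate the reduction below.

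For the decay of correlations, following the finite-range proof, I would rewrite the connected correlation $\Tr_{ABC}[\rho^{ABC}O_AO_C]-\Tr_{ABC}[\rho^{ABC}O_A]\Tr_{ABC}[\rho^{ABC}O_C]$ entirely through the expansionals $E_{AB,C}$, $E_{A,B}$ and the contracted partial traces appearing in \cref{prop:BoundingPartialTraceInverses_shortrange}. Splitting $B=B_1B_2$ with $|B_1|,|B_2|\ge|B|/2$ and inserting the approximate factorisation across $B$, the dependence on the far region decouples up to the mismatch $\norm{E_{AB,C}-E_{B_2,C}}$ together with its adjoint, which point (ii) of \cref{prop:estimates_expansionals_shortrange} bounds by $\mathcal{K}e^{-\alpha|B|/2}$. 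The surviving prefactors are controlled uniformly by \cref{prop:BoundingPartialTraceInverses_shortrange}, which after absorbing constants yields the claimed exponential bound.

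For local indistinguishability, the two displayed inequalities are interchanged by the symmetry $A\leftrightarrow C$, so it suffices to treat $O_A$. The comparison is between the marginal $\tr_C[\rho^{ABC}]$ and the Gibbs state $\rho^{AB}$; interpolating the boundary interactions linking $C$ to $AB$ and differentiating $\Tr[\rho_s O_A]$ along the interpolation produces a connected correlation between $O_A$, supported on $A$, and a perturbation localised near the $B$--$C$ interface, at distance of order $|B|$ from $A$. Integrating and applying the decay of correlations just established then yields the factor $c'e^{-\gamma'|B|}$, exactly as in \cite[Proposition 7.1]{Bluhm.2022}.

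The main difficulty is purely one of bookkeeping rather than of ideas: one must check that every appearance of the superexponential $\delta(\ell)$ in the finite-range proofs can be weakened to the merely exponential $\mathcal{K}e^{-\alpha\ell}$ without invalidating any telescoping or summation step, and that the temperature restriction $\beta<\crit$ from \cref{prop:estimates_expansionals_shortrange} is the only new hypothesis incurred. Since none of the original steps relies on a summability that exponential decay fails to supply, I expect this verification to go through verbatim, which is what makes the deferral to \cite{Capel.2024,Garcia.2023} legitimate.
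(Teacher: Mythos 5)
Your proposal takes essentially the same route as the paper: the paper gives no self-contained proof of this proposition either, but simply defers to \cite[Theorems 26 and 28]{Capel.2024}, which derive both statements from \cite[Theorem 4.4]{Garcia.2023}, exactly as you do. Your sketches of how the finite-range arguments of \cite[Theorem 6.2]{Bluhm.2022} and \cite[Proposition 7.1]{Bluhm.2022} would adapt (and in particular the interpolation/Duhamel reduction of local indistinguishability to decay of correlations) match the standard strategy.

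One point deserves correction, however. You assert that the temperature restriction $\beta<\crit$ inherited from \cref{prop:estimates_expansionals_shortrange} is a hypothesis of the result. The proposition as stated carries no such restriction, and the paper explicitly remarks just before it that decay of correlations and local indistinguishability ``hold at any positive temperature'' for exponentially-decaying interactions --- in contrast to the expansional estimates and the approximate factorisation of \cref{lemma:approximate-factorization-of-Gibbs-state_shortrange}, which do require $\beta<\crit$. Consequently, your backup route through \cref{prop:estimates_expansionals_shortrange} and \cref{prop:BoundingPartialTraceInverses_shortrange} would only establish the sub-threshold version of the statement, which is strictly weaker than what is claimed (though it would suffice for every downstream use in \cref{sec:shortrange}, since those results already assume $\beta<\crit$). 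To get the proposition at all temperatures you must genuinely rely on the cited clustering result of \cite{Garcia.2023,Capel.2024} rather than on the expansional machinery, so the deferral is not merely a convenience but is doing real work.
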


\subsection{Exponential decay of the BS-CMI for exponentially-decaying interactions}

First, note that most estimates from \cref{lemma:norm-estimates-of-functions-of-Gibbs-states} follow straightforwardly in the context of exponentially-decaying interactions, and only the last one differs. For that, we have as before
\begin{align*}
    \norm{\rho_B^{-1}}\norm{\rho_B} & =  \left\|(\rho^B)^{-1}\tr_{AC}[E_{A,BC}E_{B,C}\rho^A\rho^C]^{-1}\right\|\left\|\rho^B\tr_{AC}[E_{A,BC}E_{B,C}\rho^A\rho^C]\right\| \, ,
\end{align*}
with the difference that now 
\begin{align*}
    \norm{ (\rho^B)^{-1}}\norm{\rho^B} \leq e^{2 \norm{H_B}} \leq e^{2 |B| \beta\norm{\Phi}_{\lambda} } \, .
\end{align*}
Having this in mind, we can show the following bounds for the decays of the various versions of the BS-CMI by replacing the respective bounds in the proof of \cref{theorem:superexponential-decay-BS-CMI}.

\begin{theorem}\label{theorem:superexponential-decay-BS-CMI_shortrange}
    Let $\Phi$ an exponentially-decaying, translation invariant interaction over $\Z$ and $\Lambda \Subset \Z$ any interval split into consecutive parts $\Lambda = A'ABCC'$, with $A'$ and $C'$ possibly empty. Then for the marginal on $ABC$ of its local Gibbs state $\tr_{A'C'}[\rho^{\Lambda}] = \rho_{ABC}$ there exist a critical temperature $\tcrit$ universal constants $c, \alpha , \gamma>0$ independent of $\Lambda$ and only dependent on $\lambda$ and $\beta$ such that 
    \begin{equation}
        \widehat I^{x}_{\rho} (A\mathbin{;}C|B) \le \tilde{c} e^{\tilde{\alpha}|A|} e^{ -\gamma |B| } \quad x \in \{\operatorname{os}, \operatorname{ts}, \operatorname{rev}\} \, . 
    \end{equation}
\end{theorem}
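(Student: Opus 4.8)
The plan is to mirror the proof of \cref{theorem:superexponential-decay-BS-CMI} almost verbatim, replacing each finite-range input by its exponentially-decaying counterpart established earlier in this appendix. The crucial point is that \cref{theorem:upper-bound-DPI-of-BS-entropy}, the upper bound on the BS-DPI, is a purely operator-theoretic statement about conditional expectations and hence insensitive to the range of the interaction. Thus for each $x \in \{\operatorname{os}, \operatorname{ts}, \operatorname{rev}\}$ I would again write $\widehat I^{x}_\rho(A\mathbin{;}C|B)$ as a difference $\widehat D(X \Vert Y) - \widehat D(\cE(X)\Vert\cE(Y))$ with the single conditional expectation $\cE(\cdot) = \tr_C[\cdot]\otimes\pi_C$ and the same choices of $X,Y$ as in the three cases there, and apply \cref{theorem:upper-bound-DPI-of-BS-entropy} to obtain exactly the same product of norm prefactors times the mixing term $\norm{\rho_{ABC}\rho_{BC}^{-1}\rho_B\rho_{AB}^{-1} - \1}$.

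Next I would substitute the two exponentially-decaying inputs. First, the mixing term is controlled by \eqref{eq:approximate-factorization-of-Gibbs-state_longerchain_shortrange}, giving $\norm{\rho_{ABC}\rho_{BC}^{-1}\rho_B\rho_{AB}^{-1} - \1} \le \tilde C^4 \tilde K\, e^{-\alpha_{\mathrm f}|B|}$ in place of the superexponential $\varepsilon(|B|)$, where $\alpha_{\mathrm f}$ is the decay rate of \cref{lemma:approximate-factorization-of-Gibbs-state_shortrange}. Second, the norm prefactors are bounded by the short-range version of \cref{lemma:norm-estimates-of-functions-of-Gibbs-states}: all the $O(1)$ bounds of its first point carry over directly from \cref{prop:estimates_expansionals_shortrange,prop:BoundingPartialTraceInverses_shortrange}, while the one factor growing with the region size now obeys $\norm{\rho_B^{-1}}\norm{\rho_B} \le e^{2|B|\beta\norm{\Phi}_\lambda}$ (and the analogue on $AB$), as recorded in the computation preceding the theorem.

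Collecting these, the one-sided and two-sided cases yield a bound of the form $\tilde c\, e^{\mu(|A|+|B|)}\, e^{-\alpha_{\mathrm f}|B|}$, where the prefactor growth comes from the $(\norm{\rho_{AB}^{-1}}\norm{\rho_{AB}})^{1/2}$-type factors and has rate $\mu$ of order $\beta\norm{\Phi}_\lambda$; the reversed case has the same structure with an extra $(\norm{\rho_B^{-1}}\norm{\rho_B})^{1/2}$, still absorbed into the same uniform rate $\mu$. The only genuine difference from the finite-range argument is the final bookkeeping: there the $|B|$-growth was harmlessly swallowed by the superexponential decay, whereas here I must keep the exponentials separate, obtaining a net decay $e^{-(\alpha_{\mathrm f}-\mu)|B|}$ with residual $|A|$-growth $e^{\mu|A|}$. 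Setting $\tilde\alpha := \mu$ and $\gamma := \alpha_{\mathrm f} - \mu$ then gives the claimed $\tilde c\, e^{\tilde\alpha|A|}e^{-\gamma|B|}$, uniformly over all splittings.

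The main obstacle is guaranteeing $\gamma > 0$. The decay rate $\alpha_{\mathrm f}$ stems from \cref{prop:estimates_expansionals_shortrange}(ii) and degrades as $\beta$ grows, while the growth rate $\mu \sim \beta\norm{\Phi}_\lambda$ increases with $\beta$; in the finite-range setting this tension was absent because of the superexponential decay, but here it is a real constraint. I would therefore define $\tcrit$ as the supremum of inverse temperatures $\beta$ for which both $\beta < \crit$ (so that the expansional estimates and the factorisation hold) and $\alpha_{\mathrm f}(\beta) > \mu(\beta)$ hold simultaneously. For every $\beta < \tcrit$ this secures $\gamma > 0$ and hence the stated exponential decay of all three BS-CMI variants.
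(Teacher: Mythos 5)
Your proposal is correct and follows essentially the same route as the paper: rerun the finite-range argument with the DPI upper bound unchanged, substitute the exponential factorisation bound for the superexponential one, track the $e^{2\beta\norm{\Phi}_\lambda|B|}$ growth from the norm prefactors explicitly, and define $\tcrit$ so that the net rate $\gamma=\tilde\alpha-2\beta\norm{\Phi}_\lambda$ is positive. The paper adds only the observation that $\tilde\alpha=\lambda-2\beta\Omega_0$, which makes it immediate that such a positive $\tcrit$ exists.
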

\begin{proof}
    The bounds themselves are completely analogous to the proof in \cref{theorem:superexponential-decay-BS-CMI}, replacing the superexponentially decaying function $\eps(|B|)$ by $\tilde K e^{-\tilde\alpha|B|}$ from \cref{lemma:approximate-factorization-of-Gibbs-state_shortrange}.
    Now whenever $\tilde \alpha>2\beta\|\Phi\|_{\lambda}$, we obtain a positive decay rate $\gamma=\tilde \alpha-2\beta\|\Phi\|_{\lambda}$.
    Since $\tilde \alpha=\lambda-2\beta\Omega_0$ (see \cite[Section 2.3.2]{Garcia.2023}) we can find $\tcrit$ such that $\gamma>0$ for $\beta<\tcrit$.
\end{proof}

%%%%%%%%%%%%%
%%%%%%%%%%%%%

\subsection{Efficient estimation of the global purity for exponentially-decaying interactions}

The objective of this subsection is to develop the findings about the efficient evaluation of the overall purity of a Gibbs state on a translation-invariant spin chain in the context of interactions that decay exponentially. We will apply the necessary technical results to this context to achieve this. However, most results have similar proofs to the finite-range case, so we will skip those parts for simplicity and only highlight the differences.

First, we can prove the following proposition: the generalisation of \cref{th:purity}.

\begin{proposition}\label{th:purity_shortrange}
    Let $\Phi$ be an exponentially-decaying, translation-invariant interaction. Then, for $\beta<\crit$ there exist positive constants $\tilde c_p, \tilde{\alpha}_p$ depending only on $\lambda$ and $\beta$ such that, for every $\Lambda \Subset \mathbb{Z}$ split as $\Lambda = ABC$ and for $\rho_\Lambda := \rho^\Lambda$ the Gibbs state on $\Lambda$, the following holds.
    \begin{equation}\label{eq:approximate_factorization_shortrange}
        \left| \frac{\Tr_{AB}[\rho_{AB}^2]\Tr_{BC}[\rho_{BC}^2]}{\Tr_\Lambda[\rho_\Lambda^2] \Tr_B[\rho_B^2] } -1\right| \leq \tilde{c}_p e^{- \tilde{\alpha}_p |B|} \, .
    \end{equation}
\end{proposition}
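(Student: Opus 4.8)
The plan is to run the proof of \cref{th:purity} essentially verbatim, replacing each finite-range technical input by its exponentially-decaying counterpart collected above. Concretely, I would fix $\ell\in\N$ with $|B|\ge 3\ell$, split $B=B_1B_2B_3$ with $|B_1|,|B_2|,|B_3|\ge\ell$, and again write the target ratio as $\widehat\lambda_{ABC}^{-1}=\widetilde\lambda_{ABC}^{-1}\,\chi_{ABC}$, where $\widetilde\lambda_{ABC}=\widetilde Z_{ABC}\widetilde Z_B/(\widetilde Z_{AB}\widetilde Z_{BC})$ is built from the partition functions $\widetilde Z_X=\Tr_X[e^{-2H_X}]$ of the \emph{doubled} Hamiltonian, and $\chi_{ABC}$ is the same ratio of expansional-dressed traces as in the finite-range proof. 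The decomposition $|\widehat\lambda_{ABC}^{-1}-1|\le|\widetilde\lambda_{ABC}^{-1}-1|+|\widetilde\lambda_{ABC}^{-1}|\,|\chi_{ABC}-1|$ then reduces everything to two estimates.

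For the first term I would bound $|\widetilde\lambda_{ABC}^{\pm1}|$ uniformly and $|\widetilde\lambda_{ABC}-1|$ by an exponentially small quantity. Both follow exactly as in \cref{th:purity}, but now invoking \cref{prop:estimates_expansionals_shortrange} and \cref{prop:BoundingPartialTraceInverses_shortrange} in place of their finite-range versions: the uniform boundedness comes from point (i) together with the inverse-partial-trace bounds, while the closeness to $1$ comes from point (ii), which in the exponentially-decaying setting yields a factor $\mathcal K e^{-\alpha\ell}$ rather than the superexponential $\delta(\ell)$. For $|\chi_{ABC}-1|$ I would repeat the three replacement estimates (the analogues of the bounds for the traces over $AB$, $BC$ and $B$ in \cref{th:purity}), using point (ii) of \cref{prop:estimates_expansionals_shortrange} to swap each global expansional $E_{AB,C}$, $E_{A,BC}$, and $E_{AB,C}E_{A,B}$ for the corresponding \emph{local} expansional supported near the relevant cut, at a cost of order $e^{-\alpha\ell}$ each; the denominator is kept away from zero by the lower bound of \cref{prop:BoundingPartialTraceInverses_shortrange}, Eq.~\eqref{equa:BoundingPartialTraceInverses4_sr}. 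I would then close the argument by first using local indistinguishability from \cref{prop:decaycorr_localind_shortrange} to replace the traces over $AB$ and $BC$ by traces over $B$, and finally the decay of correlations from the same proposition to factorize the remaining $B$-trace, both again producing errors of order $e^{-\alpha\ell}$. Collecting all contributions and setting $|B|=3\ell$ gives the claimed bound $\tilde c_p e^{-\tilde\alpha_p|B|}$, where the decay is now exponential in $|B|$ rather than superexponential, precisely because point (ii) has only exponential decay here.

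The main obstacle — and the only genuinely new point relative to the finite-range argument — is the temperature bookkeeping. Since the purity is governed by $\rho^2=e^{-2H}/Z^2$, the objects $\widetilde\rho^X$ and $\widetilde\lambda_{ABC}$ live at effective inverse temperature $2\beta$, so the expansional estimates of \cref{prop:estimates_expansionals_shortrange} and \cref{prop:BoundingPartialTraceInverses_shortrange} must be applied for the doubled Hamiltonian, which requires $2\beta<\crit$; by contrast the expansionals entering $\chi_{ABC}$ live at $\beta$, and the decay-of-correlations and local-indistinguishability inputs of \cref{prop:decaycorr_localind_shortrange} hold at \emph{all} positive temperatures and impose no constraint. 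Hence the binding threshold is the one making the doubled-Hamiltonian expansional bounds available, and one should take the critical temperature in the statement to be this (possibly smaller) value; the constants $\tilde c_p,\tilde\alpha_p$ then inherit their dependence on $\lambda$ and $\beta$ from these inputs, with the $2\beta$ dependence tracked through the doubled-Hamiltonian estimates exactly as the finite-range proof tracks the passage from $\beta$.
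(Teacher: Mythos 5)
Your proposal follows the paper's proof essentially verbatim: the same decomposition $|\widehat\lambda_{ABC}^{-1}-1|\le|\widetilde\lambda_{ABC}^{-1}-1|+|\widetilde\lambda_{ABC}^{-1}|\,|\chi_{ABC}-1|$ from \cref{th:purity}, with each finite-range input replaced by its exponentially-decaying counterpart (\cref{prop:estimates_expansionals_shortrange}, \cref{prop:BoundingPartialTraceInverses_shortrange}, \cref{prop:decaycorr_localind_shortrange}), yielding exponential rather than superexponential decay. Your extra observation that the doubled-Hamiltonian objects $\widetilde\rho^X$ and $\widetilde\lambda_{ABC}$ require the expansional estimates at effective inverse temperature $2\beta$, so that the binding threshold may be smaller than $\crit$, is a careful point that the paper's sketch passes over by deferring those bounds to an external lemma without comment.
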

\begin{proof}[Sketch of the proof]
    The proof is analogous to that of \cref{th:purity}. Let us recall that we can estimate the left-hand side of Eq. \eqref{eq:approximate_factorization_shortrange} by:
    \begin{equation*}
        \left|\widetilde{\lambda}_{ABC}^{-1} - 1 \right| + \left| \widetilde{\lambda}_{ABC}^{-1} \right| \left|  \chi_{ABC} - 1 \right| \, ,
    \end{equation*}
    where the terms are defined completely analogously to the finite-range case, but we keep the same notation for simplicity. The estimates of $ \left|\widetilde{\lambda}_{ABC}^{-1} - 1 \right|$ and $\left| \widetilde{\lambda}_{ABC}^{-1} \right| $ were shown in \cite[Lemma 6.1]{Bluhm.2024} to hold exactly in the same way as for finite range. In the finite range case we used a combination of estimates on Araki's expansionals, decay of correlations and local indistinguishability to conclude the decay of $\left|\chi_{ABC} - 1 \right|$. All these properties also hold for exponentially-decaying interactions as a consequence of \cref{prop:estimates_expansionals_shortrange} and \cref{prop:decaycorr_localind_shortrange}, hence we can conclude the proof. 
\end{proof}

Next, note that the global purity of a Gibbs state split into many regions can be estimated efficiently from this proposition in the same way as in \cref{thm:estimation_purity}. Since the proof is the same as for \cref{thm:estimation_purity}, and follows from \cite{Vermersch.2023}, we omit it here.

\begin{theorem}\label{thm:estimation_purity_shortrange}
    Under the conditions of \cref{th:purity_shortrange}, there is an algorithm that outputs an estimate $\hat P_2(\rho_{1:N})$ such that
    \begin{equation} \label{eq:purityerror_shortrange}
        \left \vert \frac{\hat P_2(\rho_{1:N})}{\Tr[\rho_{1:N}^2]}- 1 \right \vert \le \varepsilon,
    \end{equation}
    with a number of samples and classical post-processing cost given by $\textup{poly}(n/\varepsilon)$.
\end{theorem}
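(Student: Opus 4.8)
The plan is to reproduce the proof of \cref{thm:estimation_purity} essentially line by line, the only structural change being that the approximate factorisation of the purity is now supplied by \cref{th:purity_shortrange} in place of \cref{th:purity}. Since that factorisation still decays \emph{exponentially} in $|B|$ (with rate $\tilde\alpha_p$ and prefactor $\tilde c_p$ depending only on $\lambda$ and $\beta<\crit$), every estimate in the finite-range argument transfers unchanged. Concretely, I would partition $\Lambda$ into consecutive regions $A_1,\dots,A_N$ of size $|A_i|=l$ (to be fixed), form the adjacent pairs $\rho_{j:j+1}$, and define the estimator $P_2(\rho_{1:N})=\prod_{j=1}^{N-1}\Tr_{j:j+1}[\rho_{j:j+1}^2]\big/\prod_{j=2}^{N-1}\Tr_j[\rho_j^2]$.

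First I would bound the modelling error $|P_2(\rho_{1:N})/\Tr[\rho_{1:N}^2]-1|$. Applying \cref{th:purity_shortrange} to each of the $N-1$ overlapping blocks and telescoping with the elementary scalar inequality $|xy-1|\le|x|\,|y-1|+|x-1|$ (exactly as in Lemma~5 of \cite{Vermersch.2023}) gives
\begin{equation*}
    \left|\frac{P_2(\rho_{1:N})}{\Tr[\rho_{1:N}^2]}-1\right|\le\left(1+\tilde c_p\,e^{-\tilde\alpha_p l}\right)^{N}-1 \, .
\end{equation*}
The exponential decay is precisely what makes the right-hand side small once $e^{-\tilde\alpha_p l}=\mathcal O(\varepsilon/N)$, i.e.\ for $l=\mathcal O(\log(N/\varepsilon))$, which keeps the per-region Hilbert-space dimension $d^l$ sub-polynomial in $n/\varepsilon$.

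Second, I would account for the statistical error. Each local quantity $\Tr_{j:j+1}[\rho_{j:j+1}^2]$ and $\Tr_j[\rho_j^2]$ is estimated from tomographic reconstructions $\hat\rho_{j:j+1},\hat\rho_j$ obtained via \cref{le:tomography}; a trace-norm error $\delta$ on a marginal yields a multiplicative error of order $d^{2l}\delta$ on its purity, and telescoping the product of the $2N$ such factors (again through the scalar inequality) bounds $|\hat P_2/P_2-1|$ by $(1+d^{2l}\delta)^{2N}-1$. Choosing $d^l\delta=e^{-\Omega(l)}=\mathcal O(1/\poly(N/\varepsilon))$ renders this contribution $\le\varepsilon$ as well; since $l$ is sub-logarithmic, $d^l$ is sub-polynomial, so the sample complexity $\poly(d^l,\delta^{-1})$ per marginal and the elementary classical post-processing are both $\poly(n/\varepsilon)$ after using $N\le n$.

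I expect no genuine obstacle beyond checking that the decay invoked is exponential rather than superexponential: the telescoping argument needs only a summable-in-$l$ rate, so the weaker decay of \cref{th:purity_shortrange} already suffices and produces the same polynomial bounds. The single point requiring care is that \cref{th:purity_shortrange} holds only for $\beta<\crit$, so the statement inherits this temperature restriction, which is anyway already built into its hypotheses.
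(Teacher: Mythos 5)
Your proposal is correct and follows essentially the same route as the paper, which itself omits the proof on the grounds that it is identical to that of \cref{thm:estimation_purity} with \cref{th:purity_shortrange} supplying the (still exponential) factorisation error; your telescoping bounds, choice of $l=\mathcal O(\log(N/\varepsilon))$, tomography error propagation, and the observation that the $\beta<\crit$ restriction is inherited all match the intended argument.
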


\subsection{Learning of Gibbs states via MPO approximations for exponentially-decaying interactions}

In analogy with the previous subsection, the recovery map employed in the reconstruction of the Gibbs state from its marginals for the MPO approximation is defined in the same way for exponentially-decaying interactions as for the finite-range ones. Therefore, the estimate in \cref{thm:entropicMPO} for the MPO approximation is still valid, and we omit its explicit formulation in this context for simplicity. The only change occurs in the analogue of \cref{cor:gibbsMPO}, since the bond dimension is inherited from the decay of the BS-CMI, and thus it is expected to be slightly worse in this case. Indeed, as a consequence of \cref{theorem:superexponential-decay-BS-CMI_shortrange}, we obtain the following. 

\begin{corollary}\label{cor:gibbsMPO_shortrange}
    For a $n$-site marginal of a Gibbs state in one dimension with translation-invariant, exponentially-decaying interaction $\Phi$ and a given accuracy $\eps$, there is an MPO representation of bond dimension
    \begin{equation} \label{eq:bonddim_shortrange}
        D=\exp(2\log(d)\tilde{\cC}_1\log(n/\eps)),
    \end{equation}
    where $\tilde{\cC}_1$ is a constant depending on $\lambda$ and $\beta$ and $\tilde{\alpha}>0$ is obtained from \cref{theorem:superexponential-decay-BS-CMI_shortrange}.  The MPO representation is given by
    \begin{equation*}
        \left\|\left(\bigcirc_{i=1}^{N-1}\cR_i\right)(\rho_{1})-\rho_{1:N}\right\|\le\eps
    \end{equation*}
    choosing the $A_i$ as consecutive regions of at least $l$ spins
    \begin{equation*}
         |A_i|=l\ge \cC_1{\log(n/\eps)} \, . 
    \end{equation*}
\end{corollary}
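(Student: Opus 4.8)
The plan is to reproduce the proof of \cref{cor:gibbsMPO} almost line by line, swapping every finite-range input for its exponentially-decaying analogue; the simplification is that the purely exponential decay of the BS-CMI makes the final inversion elementary, with no Stirling estimate needed. The backbone is \cref{thm:entropicMPO}, whose proof is insensitive to the range of the interaction: for consecutive blocks $A_1,\dots,A_N$ of common size $l$ it already gives
\begin{equation*}
    \left\|\left(\bigcirc_{i=1}^{N-1}\cR_i\right)(\rho_1)-\rho_{1:N}\right\|_1\le\frac{16(N-1)}{\pi}\sup_i\norm{\rho_i^{-1}}\exp\!\big(I_\infty(A_1\ldots A_{i-1}\mathbin{:}A_i)/2\big)\,\widehat I_\rho^{\operatorname{rev}}(A_i\mathbin{;}A_1\ldots A_{i-2}|A_{i-1})^{1/4}.
\end{equation*}
Thus I only need to control the three $i$-dependent factors uniformly in $l$.

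First I would dispose of the two prefactors. The maximal mutual information satisfies the area law $I_\infty(A\mathbin{:}B)\le\cC$ of \cite[Theorem 2]{Scalet.2021} at all positive temperatures and independently of range, so $\exp(I_\infty/2)$ is bounded by a constant. The inverse-marginal norm still grows at most exponentially, $\norm{\rho_i^{-1}}\le\tilde{\cC}\,e^{gl}$, by the exponentially-decaying version of \cref{lemma:norm-estimates-of-functions-of-Gibbs-states} worked out in the preceding subsection (the only change there is the replacement of $\norm{H_B}$ by $|B|\beta\norm{\Phi}_\lambda$). For the decisive factor I would invoke \cref{theorem:superexponential-decay-BS-CMI_shortrange}: for $\beta<\tcrit$,
\begin{equation*}
    \widehat I_\rho^{\operatorname{rev}}(A_i\mathbin{;}A_1\ldots A_{i-2}|A_{i-1})\le\tilde c\,e^{\tilde\alpha l}e^{-\gamma l},
\end{equation*}
since both the conditioned system $A_i$ and the conditioning system $A_{i-1}$ have size $l$. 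Collecting the three bounds, each summand of the recovery error is at most a constant times $e^{(g+\tilde\alpha/4-\gamma/4)l}$, and the whole error at most $N$ times this.

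The step I expect to be the main obstacle is precisely guaranteeing that this net exponent is \emph{negative}, i.e.\ that $\gamma>4g+\tilde\alpha$. In the finite-range setting of \cref{cor:gibbsMPO} this comparison never arose: the factorial in $\varepsilon(\ell)$ overwhelms any fixed exponential growth, so no relation between rates was needed. Here no such slack exists, and the exponential decay rate $\gamma$ of the BS-CMI must be shown to dominate the combined growth coming from $\norm{\rho_i^{-1}}$ (which carries a $\log d$ contribution, since single-block marginals can be near-maximally mixed) and from the $e^{\tilde\alpha|A|}$ prefactor. Since the decay rate $\gamma$ produced by \cref{theorem:superexponential-decay-BS-CMI_shortrange} tends to $\lambda$ as $\beta\to0$ while the growth rates stay controlled, I would lower the threshold temperature (below $\tcrit$ if necessary) until a strictly positive net rate $\gamma'>0$ is secured; the constant $\tilde{\cC}_1$ in the statement is then essentially $1/\gamma'$.

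With such a $\gamma'$ in hand the remainder is routine. Demanding $\tfrac{16N}{\pi}(\mathrm{const})\,e^{-\gamma' l}\le\eps$ and using $N\le n$ is solved by the linear choice $l\ge\tilde{\cC}_1\log(n/\eps)$ — in sharp contrast to \cref{cor:gibbsMPO}, no inversion of a factorial via Stirling is required. Finally, the bond dimension $D=\dim(A_i)^3=d^{3l}$ furnished by \cref{thm:entropicMPO}, evaluated at $l=\tilde{\cC}_1\log(n/\eps)$ and after absorbing the factor $3$ into the constant, becomes $D=\exp(2\log(d)\,\tilde{\cC}_1\log(n/\eps))$, which is exactly the claimed (now merely polynomial in $n/\eps$) bond dimension.
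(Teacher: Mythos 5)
Your overall route is the one the paper intends: the paper gives no explicit proof of this corollary, stating only that it follows from \cref{theorem:superexponential-decay-BS-CMI_shortrange} in analogy with \cref{cor:gibbsMPO}, and your proposal executes exactly that plan (area law for $I_\infty$, exponential bound on $\norm{\rho_i^{-1}}$, exponential BS-CMI decay, then solve linearly for $l$ with no Stirling step). You also correctly isolate the one step where the analogy is not automatic: the exponential decay rate must strictly dominate the accumulated exponential growth, a comparison that the factorial rendered vacuous in the finite-range case.

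The gap is in your resolution of that comparison. The Lipschitz factor $\sup_i\norm{\rho_i^{-1}}$ in \cref{thm:entropicMPO} is at least $d^{l}$ for any $l$-site marginal (a state on a $d^l$-dimensional space has minimal eigenvalue at most $d^{-l}$), so the growth rate $g$ contains an irreducible, temperature-independent summand $\log d$ --- as you yourself note. On the other side, the decay rate produced by \cref{theorem:superexponential-decay-BS-CMI_shortrange} is $\gamma=\lambda-2\beta\Omega_0-2\beta\norm{\Phi}_{\lambda}\le\lambda$ for every $\beta$. Hence the condition you need, $\gamma>4g+\tilde\alpha$, which is at least $\gamma>4\log d$, cannot be achieved by lowering the temperature whenever $\lambda\le 4\log d$: as $\beta\to0$ the available decay rate tends to $\lambda$ while the required threshold tends to $4\log d$. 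Your claim that the growth rates ``stay controlled'' so that a further reduction of $\tcrit$ secures a positive net rate $\gamma'>0$ is therefore not correct as stated. Closing this would require something extra --- an additional hypothesis relating $\lambda$ and $d$, a single-step recovery bound sharper than the fourth root of the BS-CMI, or a Lipschitz estimate that exploits $\norm{\rho_j^{-1/2}X\rho_j^{-1/2}}_1$ directly rather than the cruder $\norm{\rho_j^{-1}}\norm{X}_1$. In fairness, the paper's own one-line justification is silent on this point as well, so you have surfaced a real subtlety; but your proposed fix does not close it.
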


\section{Lifting the upper bound on the DPI of the BS-entropy to channels}
Note that just by restriction of the original Hilbert space to the common support of $X$ and $Y$, one can obtain the same result as \cref{theorem:upper-bound-DPI-of-BS-entropy} for the case that $X$ and $Y$ are not full-rank but share the same kernel. In that case, the inverses are replaced by Moore-Penrose pseudo-inverses. With this at hand, we can now prove the upper bound on the DPI for the BS-entropy for quantum channels instead of conditional expectations.
\begin{corollary}\label{corollary:extending-upper-bound-bs-entropy-to-channels}
    For $X, Y \in \cB(\cH_A)$ having the same support and $\cT:\cB(\cH_A) \to \cB(\cH_B)$ a quantum channel, we find that
    \begin{equation}
        \begin{aligned}
            \widehat{D}(X \Vert Y) - \widehat{D}(\cT(X) \Vert \cT(Y)) &\le \norm{X^{-1/2} Y X^{-1/2}} \norm{X}_1  \norm{\cT(X)^{1/2}} \norm{\cT(X)^{-1/2}}\\
            &\hphantom{\le} \cdot \norm{\cT(Y)^{-1} \cT(X)} \norm{XY^{-1}\cT^*(\cT(Y)\cT(X)^{-1}) - \1}
        \end{aligned}
    \end{equation}
    where inverses are replaced by Moore-Penrose pseudo-inverses in case the operators are not full-rank.
\end{corollary}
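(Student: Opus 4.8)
The plan is to reduce the channel case to the conditional-expectation case of \cref{theorem:upper-bound-DPI-of-BS-entropy} through a Stinespring dilation. By Stinespring's theorem there is an isometry $V\colon\cH_A\to\cH_B\otimes\cH_E$ with $\cT(X)=\Tr_E[VXV^*]$, and $\cE(\cdot):=\Tr_E[\cdot]\otimes\pi_E$ is a conditional expectation on $\cB(\cH_B\otimes\cH_E)$ onto $\cB(\cH_B)\otimes\1_E$. Setting $X'=VXV^*$ and $Y'=VYV^*$, one has $\cE(X')=\cT(X)\otimes\pi_E$ and $\cE(Y')=\cT(Y)\otimes\pi_E$. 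First I would record two invariances of the BS-entropy: conjugation by the isometry $V$ leaves it unchanged, $\widehat D(X\Vert Y)=\widehat D(X'\Vert Y')$, and tensoring both arguments with the fixed full-rank state $\pi_E$ leaves it unchanged, $\widehat D(\cT(X)\otimes\pi_E\Vert\cT(Y)\otimes\pi_E)=\widehat D(\cT(X)\Vert\cT(Y))$. Together these give $\widehat D(X\Vert Y)-\widehat D(\cT(X)\Vert\cT(Y))=\widehat D(X'\Vert Y')-\widehat D(\cE(X')\Vert\cE(Y'))$, so the left-hand side is exactly the quantity controlled by \cref{theorem:upper-bound-DPI-of-BS-entropy}.

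Next I would apply \cref{theorem:upper-bound-DPI-of-BS-entropy} (in the pseudo-inverse form noted before the corollary, since $X'$ and $Y'$ are not full-rank but share the common support $S:=\operatorname{ran}(V)$) to the pair $X',Y'$ and the conditional expectation $\cE$. The operator identities $X'^{\pm1/2}=VX^{\pm1/2}V^*$ and $\tfrac1{t+Z'}=V\tfrac1{t+Z}V^*$ on $S$ (with $Z=X^{1/2}Y^{-1}X^{1/2}$), all valid with Moore--Penrose inverses, let me simplify the five ``prefactor'' norms. Because $V$ is an isometry and the trace norm of a positive operator is preserved under conjugation by $V$, the first two factors become $\norm{X^{-1/2}YX^{-1/2}}$ and $\norm{X}_1$. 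The factors $\norm{\cE(X')^{1/2}}$ and $\norm{\cE(X')^{-1/2}}$ each carry a factor $d_E^{\mp1/2}$ coming from $\pi_E=d_E^{-1}\1_E$, and these cancel in the product, leaving $\norm{\cT(X)^{1/2}}\norm{\cT(X)^{-1/2}}$; likewise $\norm{\cE(Y')^{-1}\cE(X')}=\norm{\cT(Y)^{-1}\cT(X)}$, again because the $\pi_E$ factors cancel.

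The remaining work, and the main obstacle, is the last factor, where the adjoint channel must appear. Here I would use $\cE(Y')\cE(X')^{-1}=\cT(Y)\cT(X)^{-1}\otimes\1_E$ together with the defining identity of the Heisenberg-picture adjoint, $\cT^*(Z)=V^*(Z\otimes\1_E)V$, so that conjugating the dilated recovery operator by $V^*$ and $V$ produces exactly $\cT^*(\cT(Y)\cT(X)^{-1})$ and turns $X'Y'^{-1}=VXY^{-1}V^*$ into $XY^{-1}$. The delicate point is the reference identity: in the dilated space the recovery term is compared against the identity on the common support $S=\operatorname{ran}(V)$, that is against $VV^*$ rather than $\1$ on $\cH_B\otimes\cH_E$, and pulling this back through $V,V^*$ yields $\1$ on $\cH_A$. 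I expect the crux to be showing that the components acting orthogonally to $\operatorname{ran}(V)$ genuinely drop out; this should follow from the fact that in the difference-estimate step (\cref{lemma:difference-estimate}) the relevant operator is always multiplied on the right by $X'^{1/2}=X'^{1/2}VV^*$, which is supported on $S$, so only the compression to $S$ survives. Carrying this bookkeeping through the Hölder estimate of \cref{theorem:upper-bound-DPI-of-BS-entropy} should collapse the last factor to $\norm{XY^{-1}\cT^*(\cT(Y)\cT(X)^{-1})-\1}$, completing the proof.
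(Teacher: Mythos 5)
Your proposal follows essentially the same route as the paper's proof: Stinespring-dilate $\cT$ via an isometry $V$, use isometric invariance and tensor-additivity of the BS-entropy to reduce the channel DPI gap to that of the conditional expectation $\tr_E[\cdot]\otimes\pi_E$ applied to $VXV^*, VYV^*$, invoke \cref{theorem:upper-bound-DPI-of-BS-entropy} in its pseudo-inverse form, simplify the prefactor norms (with the $d_E$ factors from $\pi_E$ cancelling), and identify $\cT^*(\cdot)=V^*(\cdot\otimes\1_E)V$ in the last factor. The only difference is one of emphasis: you explicitly single out the reference-identity bookkeeping (comparing against $VV^*$ versus $\1$ on the dilated space) as the delicate step and sketch why only the compression to $\operatorname{ran}(V)$ survives, whereas the paper dispatches this in one line by appealing to isometric invariance; your treatment is, if anything, more candid about where the care is needed.
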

\begin{proof}
    Following similar arguments as in \cite{Wilde.2018, Bluhm.2020}, we note that by Stinespring's dilation theorem, we can write every quantum channel as a composition:
    \begin{equation}\label{eq:steinsprings-dilation-theorem}
        \cT(\cdot) = \tr_{C}[V \cdot V^*]
    \end{equation}
    where $V: \cH_A \to \cH_B \otimes \cH_C$ is an isometry and $\cH_C$ an auxiliary space. Using the isometric invariance of the BS-entropy and its additivity under tensor products, we get
    \begin{align*}
        \widehat D(X \Vert Y) - \widehat{D}(\cT(X) \Vert \cT(Y)) = \widehat{D}(X_V \Vert Y_V) - \widehat{D}(\pi_C \otimes \tr_C[X_V] \Vert \pi_C \otimes \tr_C[Y_V]) \, ,
    \end{align*}
    where we set $X_V = V X V^*$ and $Y_V = V Y V^*$. Both of the latter operators agree in their support, hence with the considerations brought forward at the beginning of this appendix we utilise \cref{theorem:upper-bound-DPI-of-BS-entropy} to obtain 
    \begin{align*}
         \widehat D(X \Vert Y) - \widehat{D}(\cT(X) \Vert \cT(Y)) &\le \norm{X_V^{-1/2} Y_V X_V^{-1/2}} \norm{X_V}_1 \norm{\cT(X)^{1/2}}\norm{\cT(X)^{-1/2}}\\
         &\hphantom{\le} \cdot \norm{\cT(Y)^{-1} \cT(X)}\norm{X_V Y_V^{-1} (\cT(Y) \cT(X)^{-1})\otimes \1_C - \1}
    \end{align*}
    where we already cancelled constants and used isometric invariance of the Schatten p-norms and Eq. \eqref{eq:steinsprings-dilation-theorem} to simplify the expression. Using again isometric invariance on the remaining terms and identifying $\cT^*$ confirms the claim. 
\end{proof}

\end{document}